\newif\ifccs
\pgfplotsset{compat=1.18}
\definecolor{okabeblue}{RGB}{0,114,178}
\definecolor{okabered}{RGB}{213,94,0}
\definecolor{okabegreen}{RGB}{0,158,115}
\definecolor{okabegray}{RGB}{102,102,102}
\newtheorem{functionality}[theorem]{Functionality}
\newtheorem{protocol}[theorem]{Protocol}
\newtheorem{simulator}[theorem]{Simulator}
\newcommand{\Z}{\mathbb{Z}}
\newcommand{\Zq}{\mathbb{Z}_q}
\newcommand{\Rq}{R_q}
\newcommand{\infnorm}[1]{\left\|#1\right\|_\infty}
\newcommand{\getsr}{\stackrel{\$}{\leftarrow}}
\newcommand{\pk}{\mathsf{pk}}
\newcommand{\sk}{\mathsf{sk}}
\newcommand{\Sign}{\mathsf{Sign}}
\newcommand{\Verify}{\mathsf{Verify}}
\newcommand{\KeyGen}{\mathsf{KeyGen}}
\newcommand{\PRF}{\mathsf{PRF}}
\newcommand{\negl}{\mathsf{negl}}
\newcommand{\Adv}{\mathsf{Adv}}
\title[running={Threshold ML-DSA via Shamir Nonce DKG}]{FIPS 204-Compatible Threshold ML-DSA via Shamir Nonce DKG}
\keywords{threshold signatures, ML-DSA, post-quantum cryptography, lattice-based cryptography, FIPS 204}
\begin{document}

\maketitle

\begin{abstract}
We present the first threshold ML-DSA (FIPS 204) scheme achieving \emph{nonce share privacy} (conditional min-entropy guarantee; no computational assumptions) with arbitrary thresholds, while producing standard 3.3\,KB signatures verifiable by unmodified implementations. Our primary technique, \emph{Shamir nonce DKG}, generates the signing nonce as a degree-$(T\!-\!1)$ Shamir sharing, matching the structure of the long-term secret. This gives each honest party's nonce share conditional min-entropy exceeding $5\times$ the secret-key entropy for signing sets of size at most 17 (Definition~\ref{def:nonce-share-privacy}, Theorem~\ref{thm:it-privacy}). In coordinator-based profiles (P1, P3+), this removes the two-honest requirement ($|S| \geq T$ suffices); in the fully distributed profile (P2), mask-hiding additionally requires $|S \setminus C| \geq 2$. Key privacy of the aggregate signature is an open problem, analogous to single-signer ML-DSA (Remark~\ref{rem:key-privacy}). As a secondary technique, pairwise-canceling PRF masks handle three challenges unique to lattice-based threshold signing: commitment binding, the r0-check predicate, and response aggregation.

We instantiate these techniques in three deployment profiles with complete UC proofs. Profile P1 uses a TEE coordinator (3 online rounds plus 1 offline preprocessing round, 5.8\,ms for 3-of-5). Profile P2 eliminates hardware trust via MPC (5 rounds, 22ms for small thresholds). Profile P3+ uses lightweight 2PC with semi-asynchronous signing (2 logical rounds, 22ms), where signers precompute nonces offline and respond within a time window. Our Rust implementation scales from 2-of-3 to 32-of-45 thresholds with sub-100ms latency (P1 and P3+) and success rates of $\approx$21--45\%, comparable to single-signer ML-DSA. A direct shift-invariance analysis (Corollary~\ref{cor:ih-shift-ml-dsa}) shows the per-session EUF-CMA security loss from the Irwin-Hall nonce distribution is $< 0.007$ bits for $|S| \leq 17$ and $< 0.013$ bits for $|S| \leq 33$; over $q_s$ signing sessions the total loss is $< 0.013 \cdot q_s$ bits (Corollary~\ref{cor:ih-loss-main}), eliminating the scalability gap found in prior work.
\end{abstract}

\tableofcontents


\section{Introduction}
\label{sec:introduction}

The imminent arrival of cryptographically relevant quantum computers poses an existential threat to currently deployed public-key cryptography. In response, the National Institute of Standards and Technology (NIST) has completed a multi-year standardization process, culminating in the release of FIPS 204~\cite{FIPS204}, which specifies ML-DSA (Module-Lattice-Based Digital Signature Algorithm) as the primary post-quantum signature standard. As organizations worldwide begin transitioning to ML-DSA, there is an urgent need for threshold variants that enable distributed signing while maintaining compatibility with the standard.

\subsection{Motivation: The Threshold Gap}

Threshold signatures allow a group of $N$ parties to collectively sign messages such that any subset of $T$ or more parties can produce a valid signature, while any coalition of fewer than $T$ parties learns nothing about the secret key. This primitive is essential for protecting keys by distributing signing authority across multiple servers, enforcing multi-party authorization in financial systems, and building robust PKI infrastructure through distributed certificate authorities.

For classical ECDSA and Schnorr signatures, efficient threshold protocols have been extensively studied and deployed~\cite{GG18,CGGMP20,Lindell21,KG20}. More recently, Lyu et al.~\cite{LLZD25} achieved two-round threshold ECDSA in the threshold-optimal multi-party setting. However, the lattice-based structure of ML-DSA presents new challenges. The core difficulty lies in the \emph{Fiat-Shamir with Aborts} paradigm: signatures are only valid when the response vector $\mathbf{z} = \mathbf{y} + c\mathbf{s}_1$ satisfies $\|\mathbf{z}\|_\infty < \gamma_1 - \beta$, where $\mathbf{y}$ is a masking vector, $c$ is a sparse challenge, and $\mathbf{s}_1$ is the secret key. This rejection sampling step complicates the threshold setting.

\paragraph{The State of Affairs.}
Three prior families address threshold ML-DSA, each encountering a different barrier. \emph{Short-coefficient LSSS}~\cite{dPN25} produces FIPS 204-compatible signatures by keeping reconstruction coefficients small, but is constrained to $T \leq 6$ in practice: the construction's binomial reconstruction coefficients $\binom{T}{k}$ exceed $q = 8{,}380{,}417$ for $T \geq 26$ (since $\binom{26}{13} = 10{,}400{,}600 > q$; see Corollary~\ref{cor:threshold-bound}), and the Ball-\c{C}akan-Malkin bound~\cite{BCM21} establishes $\Omega(T \log T)$ share-size growth as an asymptotic lower bound. \emph{Noise flooding}~\cite{BCDG21} handles arbitrary thresholds by masking large Lagrange coefficients with overwhelming statistical noise, but inflates signatures to $\approx 17$\,KB (roughly $5\times$ standard ML-DSA), breaking FIPS 204 format compatibility. \emph{Specialized constructions} such as Ringtail~\cite{BMMSZ25} achieve excellent round efficiency but produce signature formats that existing FIPS 204 verifiers cannot validate, precluding drop-in deployment.

This leaves a gap: for $T \in [9, 32]$, there is no known scheme that simultaneously achieves standard signature size, practical success rates, and FIPS 204 compatibility. This range is precisely where many real-world applications fall (organizations typically require 16-of-32 or similar thresholds for robust multi-party control).

\subsection{Our Contribution}

\ifccs
We present \textbf{Threshold ML-DSA via Shamir Nonce DKG}, the first threshold ML-DSA scheme that is FIPS~204-compatible and achieves \emph{min-entropy-based nonce share privacy} (no computational assumptions; ``statistical'' throughout) at arbitrary thresholds. Our contributions are:

\begin{enumerate}
  \item \textbf{Practical deployability}: Sub-100ms latency (P1/P3+) for all threshold configurations up to 32-of-45; three deployment profiles (TEE, MPC, 2PC) for different trust requirements; scales from 2-of-3 to 32-of-45; produces standard 3.3\,KB FIPS 204-compatible signatures verifiable by unmodified implementations.

  \item \textbf{Min-entropy-based nonce share privacy (no computational assumptions)}: First threshold lattice-based signature with min-entropy-based nonce privacy; conditional min-entropy exceeds $5\times$ the secret key entropy (for $|S| \leq 17$). Note: ``statistical'' in this paper is shorthand for this min-entropy bound---it does \emph{not} mean statistical distance negligible (SD\,$\approx 0.9926$ from uniform is unavoidable with bounded nonces; see Remark~\ref{rem:sd-zero}).

  \item \textbf{Eliminates the two-honest requirement for nonce share privacy (P1, P3+)}: $|S| = T$ suffices for nonce share privacy in coordinator-based profiles (P1, P3+), removing the $|S| \geq T+1$ restriction imposed by prior pairwise-mask schemes. Profile P2 (fully distributed, broadcast model) retains the requirement $|S \setminus C| \geq 2$ for commitment and r0-check mask hiding (Lemma~\ref{lem:mask-hiding})---a limitation of pairwise PRF masks in the broadcast setting; whether it can be reduced to $|S \setminus C| = 1$ is an open problem (Section~\ref{sec:conclusion}).

  \item \textbf{Three-profile UC security}: EUF-CMA under Module-SIS (Theorem~\ref{thm:unforgeability}); UC security against static adversaries for all three deployment profiles (Theorems~\ref{thm:p1-uc}, \ref{thm:p2-uc}, \ref{thm:p3-uc}); statistical nonce privacy (Theorem~\ref{thm:it-privacy}).

  \item \textbf{Open-source Rust implementation}: Reference artifact with benchmarks from 2-of-3 to 32-of-45 threshold configurations; full FIPS 204 compliance verified against NIST ACVP test vectors (see Appendix~\ref{app:open-science}).
\end{enumerate}
\else
We present \textbf{Threshold ML-DSA via Shamir Nonce DKG}, the first threshold signature scheme for ML-DSA that achieves \emph{min-entropy-based nonce share privacy} (no computational assumptions; ``statistical'' is used as shorthand throughout---this denotes a conditional min-entropy bound, \emph{not} statistical distance negligible, which is impossible with bounded nonces: SD\,$\approx 0.9926$ from uniform) with arbitrary thresholds $T \leq N$ while producing standard-size signatures ($\approx 3.3$ KB for ML-DSA-65) syntactically identical in format to single-signer ML-DSA. Unlike prior pairwise-mask schemes that require a two-honest assumption ($|S| \geq T + 1$), our protocol achieves \emph{nonce share privacy} with signing sets of size $|S| \geq T$ in coordinator-based profiles (P1, P3+), where individual masked commitments $\mathbf{W}_i$ are sent only to the coordinator rather than broadcast (Remark~\ref{rem:wi-private}); commitment and r0-check mask hiding retains $|S \setminus C| \geq 2$ (Lemma~\ref{lem:mask-hiding}), which applies to P2's broadcast model. This addresses the threshold gap for $T \in [9, 32]$ while remaining practical for configurations such as $(3,5)$, $(11,15)$, and $(16,21)$.
\fi

The scheme achieves practical success rates overlapping with single-signer ML-DSA (theoretical single-signer baseline $\approx 20$--$25\%$; observed threshold range $\approx$21--45\% (see Table~\ref{tab:perf-threshold}); the higher end is due to the Irwin-Hall distribution's greater concentration near zero, placing more nonces within the $\|\mathbf{z}\|_\infty < \gamma_1 - \beta$ acceptance region relative to the uniform single-signer nonce) and is fully verifier-compatible with FIPS 204. The aggregated nonce follows an Irwin-Hall distribution rather than uniform, introducing a quantifiable security cost. We provide two analyses:

\emph{Conservative bound (Raccoon-style Rényi, Theorem~\ref{thm:irwin-hall}):} The per-session multiplicative advantage blowup is $\sqrt{R_2^{\mathsf{vec}}} \approx 2^{2.7}$ for $|S| \leq 17$, equivalent to an additive $\approx$2.7 bits of per-session security loss in bits (the advantage bound blows up by a factor $(R_2^{\mathsf{vec}})^{q_s/2}$, which corresponds to a linear $q_s \times 2.7$ bit loss when expressed in bits). This bound becomes vacuous for $q_s \geq 36$ queries and should not be used for concrete security estimates.
At $|S| = 33$, the per-session Rényi blowup satisfies $\log_2\sqrt{R_2^{\mathsf{vec}}} \approx 37.5$ bits; starting from the 96-bit post-Cauchy--Schwarz proof baseline (half the 192-bit ML-DSA-65 security parameter, arising from the Rényi-to-probability conversion), the provable security bound drops to ${\approx}58$ bits after a \emph{single} signing session and is formally vacuous at $q_s = 3$ ($96 - 3 \times 38 < 0$). This is not a weakness of the scheme itself, but a fundamental limitation of the Raccoon-style bounding technique: the per-session Rényi divergence scales as $|S|^4$, making the proof unable to certify any security level for practical signing volumes at large~$T$. This failure directly motivates the direct shift-invariance analysis introduced below.

\emph{Direct shift-invariance bound (Corollary~\ref{cor:ih-shift-ml-dsa}):} Using the Irwin-Hall distribution's shift-invariance property, the per-session EUF-CMA loss is $< 0.007$ bits for $|S| \leq 17$ and $< 0.013$ bits for $|S| \leq 33$. Over $q_s$ sessions the total loss is $< 0.013 \cdot q_s$ bits (additive), remaining non-vacuous for any polynomial $q_s$ and bounded below $1$ bit per session for all $|S| \leq 2584$. This bound is $\approx$420$\times$ tighter than the conservative bound and is the one used throughout this paper for security claims.

Our approach is based on two techniques. The primary technique is \emph{Shamir nonce DKG}: parties jointly generate the signing nonce via a distributed key generation protocol, producing a degree-$(T\!-\!1)$ Shamir sharing of $\mathbf{y}$ that matches the structure of the long-term secret $\mathbf{s}_1$. Each party then computes $\mathbf{z}_i = \mathbf{y}_i + c \cdot \mathbf{s}_{1,i}$ without applying the Lagrange coefficient, and the combiner reconstructs $\mathbf{z} = \sum_i \lambda_i \mathbf{z}_i = \mathbf{y} + c\mathbf{s}_1$. Because the adversary observes only $T - 1$ evaluations of the honest party's degree-$(T\!-\!1)$ polynomial, the remaining degree of freedom provides an approximate one-time pad: the honest party's nonce share $\mathbf{y}_h$ has conditional min-entropy exceeding $5\times$ the secret key entropy (for $|S| \leq 17$), requiring no computational assumptions (Theorem~\ref{thm:it-privacy}; \textbf{proved}). We call this \emph{nonce share privacy}: a conditional min-entropy guarantee, not statistical distance $\approx 0$ (which is unachievable with bounded nonces; Remark~\ref{rem:sd-zero}).

The secondary technique is \emph{pairwise-canceling masks}, adapted from ECDSA~\cite{CGGMP20} to the lattice setting. These masks are used for the commitment values $\mathbf{W}_i$ and the r0-check shares $\mathbf{V}_i$ (which contain $\lambda_i \cdot c \cdot \mathbf{s}_{2,i}$), hiding individual contributions while preserving correctness in the sum. Unlike ECDSA, ML-DSA presents additional challenges: the r0-check leaks $c\mathbf{s}_2$ which enables key recovery, and the Irwin-Hall nonce distribution must preserve EUF-CMA security. We address both with complete UC proofs.

\paragraph{Performance.} Our Rust implementation achieves practical signing times: 5.8ms (P1-TEE), 21.5ms (P2-MPC), 22.1ms (P3+-Async) for $(3,\!5)$-threshold on commodity hardware (per valid signature, $3$--$4$ expected attempts; $(2,3)$ achieves 4ms for P1 and 17ms for P3+, see Table~\ref{tab:perf-threshold}). P1 and P3+ remain under 100ms \emph{per attempt} for all configurations up to $(32,\!45)$; P2 trades higher latency (${\sim}130$--$194$ms per attempt for $T \geq 16$, non-monotone due to rejection-sampling variance) for eliminating hardware trust. Communication cost: $\approx$12.3\,KB per party per attempt.

\paragraph{Security.} We prove security in the random oracle model. Unforgeability reduces to the Module-SIS problem, matching the security of single-signer ML-DSA. Nonce share privacy is \emph{statistical} (high conditional min-entropy; no computational assumptions; SD~$\neq 0$, see Remark~\ref{rem:sd-zero}) via the Shamir nonce DKG, with conditional min-entropy exceeding $5\times$ the secret key entropy for $|S| \leq 17$. Commitment and r0-check privacy reduce to PRF security.

\ifccs\else
\paragraph{Deployment Profiles.} We define three deployment profiles with different trust assumptions, all with complete security proofs:
\begin{description}
    \item[Profile P1 (TEE-Assisted):] A TEE/HSM coordinator performs the r0-check and hint generation inside a trusted enclave, achieving optimal 3-round signing. We prove EUF-CMA security under Module-SIS (Theorem~\ref{thm:unforgeability}), with nonce share privacy (Theorem~\ref{thm:it-privacy}) and PRF-based mask privacy (Lemma~\ref{lem:mask-hiding}).
    \item[Profile P2 (Fully Distributed):] A fully untrusted coordinator where the r0-check is computed via an MPC subprotocol using SPDZ~\cite{DPSZ12} and edaBits~\cite{EKMOZ20}. Using combiner-mediated commit-then-open and constant-depth comparison circuits, we achieve \textbf{5 online rounds} (reduced from 8). We prove UC security against \emph{static} malicious adversaries corrupting up to $N-1$ parties (Theorem~\ref{thm:p2-uc}).
    \item[Profile P3+ (Semi-Async 2PC):] Designed for human-in-the-loop authorization. Two designated Computation Parties (CPs) jointly evaluate the r0-check via 2PC, but signers participate \emph{semi-asynchronously}: they precompute nonces offline and respond within a time window (e.g., 5 minutes) rather than synchronizing for multiple rounds. This achieves \textbf{2 logical rounds} (1 signer step + 1 server round). We prove UC security against \emph{static} adversaries under the assumption that at least one CP is honest (Theorem~\ref{thm:p3-uc}). The ``Async'' column in Table~\ref{tab:comparison} distinguishes our work from concurrent approaches, which require synchronous multi-round signer participation.
\end{description}
\fi

\subsection{Technical Overview}

\paragraph{Protocol Overview.} A trusted dealer creates Shamir shares of $\mathsf{sk} = (\mathbf{s}_1, \mathbf{s}_2)$ and establishes pairwise seeds via ML-KEM. Before each signing attempt, parties execute a Shamir nonce DKG to jointly generate degree-$(T\!-\!1)$ shares of the nonce $\mathbf{y}$ (this can be preprocessed offline). The 3-round online signing protocol then has parties (1) commit to their DKG nonce shares, (2) exchange masked commitments to derive challenge $c$, and (3) send responses $\mathbf{z}_i = \mathbf{y}_i + c \cdot \mathbf{s}_{1,i}$ (no Lagrange coefficient, no mask). The combiner applies Lagrange coefficients during aggregation: $\mathbf{z} = \sum_i \lambda_i \mathbf{z}_i = \mathbf{y} + c\mathbf{s}_1$, and outputs $\sigma = (\tilde{c}, \mathbf{z}, \mathbf{h})$ if $\|\mathbf{z}\|_\infty < \gamma_1 - \beta$.

\paragraph{Key algebraic observation.} The correctness of this construction rests on the fact that the partial response $\mathbf{z}_i = \mathbf{y}_i + c\mathbf{s}_{1,i}$ is $\mathbb{Z}_q$-linear in the Shamir evaluation. Because $\mathbf{y}$ and $\mathbf{s}_1$ are both degree-$(T\!-\!1)$ Shamir polynomials over the same evaluation points, Lagrange interpolation distributes over addition:
\[
  \sum_{i \in S} \lambda_i \mathbf{z}_i
  = \sum_{i \in S} \lambda_i \mathbf{y}_i + c \sum_{i \in S} \lambda_i \mathbf{s}_{1,i}
  = \mathbf{y} + c\mathbf{s}_1.
\]
This allows each signer to transmit the \emph{unscaled} response $\mathbf{z}_i$---without pre-multiplying by the Lagrange coefficient $\lambda_i \in \mathbb{Z}_q$---so that $\|\mathbf{z}_i\|_\infty \leq \lceil \gamma_1 / |S| \rceil + \beta$ holds at the individual level. The combiner applies $\{\lambda_i\}$ only at aggregation, where Lagrange interpolation recovers a valid $\mathbf{z} = \mathbf{y} + c\mathbf{s}_1$ satisfying $\|\mathbf{z}\|_\infty < \gamma_1 - \beta$ via standard rejection sampling. By contrast, approaches based on generic LSSS with large reconstruction coefficients $\lambda_i \sim q$ must either pre-scale partial responses---producing individual norms $\|\lambda_i \mathbf{z}_i\|_\infty \sim q\gamma_1$, which universally violate the rejection bound and cause every signing attempt to abort---or employ noise flooding to mask the large shares, sacrificing FIPS 204 signature-size compatibility.

\subsection{Comparison with Prior and Concurrent Work}

Table~\ref{tab:comparison} summarizes threshold ML-DSA schemes, comparing arbitrary threshold support, dishonest-majority security, round complexity, and FIPS 204 compatibility.

\begin{table}[!htbp]
\centering
\footnotesize
\setlength{\tabcolsep}{2.5pt}
\resizebox{\textwidth}{!}{%
\begin{tabular}{@{}lcccccccccc@{}}
\toprule
\textbf{Scheme} & \textbf{$T$} & \textbf{Size} & \textbf{Rnd} & \textbf{FIPS} & \textbf{Dishon.} & \textbf{UC} & \textbf{Async} & \textbf{DKG} & \textbf{Privacy} \\
\midrule
del Pino-Niot~\cite{dPN25} & $\leq 6$ & 3.3 KB & 3 & \checkmark & \checkmark & $\times$ & $\times$ & \checkmark$^{\mathrm{b}}$ & Comp. \\
Noise Flooding~\cite{BCDG21} & $\infty$ & 17 KB & 3 & $\times$ & \checkmark & $\times$ & $\times$ & $\times$ & Comp. \\
Ringtail~\cite{BMMSZ25} & $\infty$ & 13 KB & 2 & $\times$ & \checkmark & $\times$ & $\times$ & $\times$ & Comp. \\
\midrule
\multicolumn{10}{c}{\textit{Concurrent Work (2025--2026)}} \\
\midrule
Bienstock et al.~\cite{BdCE25} & $\infty$ & 3.3 KB & 16--29$^{\mathrm{a}}$ & \checkmark & $\times$ & \checkmark & $\times$ & $\times$ & Comp. \\
Celi et al.~\cite{CDENP26} & $\leq 6$ & 3.3 KB & 6 & \checkmark & \checkmark & $\times$ & $\times$ & \checkmark$^{\mathrm{b}}$ & Comp. \\
Trilithium~\cite{Trilithium25} & 2 & 3.3 KB & 14 & \checkmark & $\times$ & \checkmark & $\times$ & $\times$ & Comp. \\
\midrule
This work (P1) & $\infty$ & 3.3 KB & $3{+}1^\ddagger$ & \checkmark & (TEE) & \checkmark$^{\dagger\S}$ & $\times$ & $\times^\clubsuit$ & \textbf{min-ent.\ (nonce)} \\
This work (P2) & $\infty$ & 3.3 KB & $5{+}1^\ddagger$ & \checkmark & \checkmark$^\bullet$ & \checkmark$^\S$ & $\times$ & $\times^\clubsuit$ & \textbf{min-ent.\ (nonce)} \\
This work (P3+) & $\infty$ & 3.3 KB & 2$^*$ & \checkmark & (1/2 CP) & \checkmark$^\S$ & \checkmark & $\times^\clubsuit$ & \textbf{min-ent.\ (nonce)} \\
\bottomrule
\end{tabular}%
}
\caption{Comparison of threshold ML-DSA schemes. \checkmark = supported, $\times$ = not supported. ``Dishon.'' = dishonest-majority \emph{unforgeability} (tolerates $N-1$ corruptions for EUF-CMA; for P2, commitment/r0-check privacy additionally requires $|S \setminus C| \geq 2$, see $^\bullet$); Bienstock et al.\ require honest majority. ``Privacy'' column: \textbf{min-ent.\ (nonce)}\ = nonce share privacy (no computational assumptions, conditional min-entropy $\geq 5\times$ secret key entropy) for nonce shares $\mathbf{y}_i$ \emph{conditioned on the DKG view only} (Remark~\ref{rem:key-privacy}). Comp.\ for prior work = computational privacy guarantee per each scheme's respective analysis (specifics vary: may cover nonce shares, response shares, or key leakage from the signature). Commitment/r0-check privacy for this work uses PRF-based masks ($|S \setminus C| \geq 2$ for P2). ``(TEE)'' = dishonest-majority via TEE trust (not purely cryptographic); ``(1/2 CP)'' = via 1-of-2 computation-party honest assumption. $^*$P3+ = 1 signer step + 1 server round. $^\dagger$P1 UC assumes ideal TEE (Theorem~\ref{thm:p1-uc})---a hardware trust assumption, not a cryptographic guarantee; P2's dishonest-majority UC (Theorem~\ref{thm:p2-uc}) is purely cryptographic. $^\ddagger$Rounds shown as online$+$offline; the $+1$ offline nonce-DKG preprocessing round is structurally necessary to prevent rewinding and is amortizable over batches of signing sessions. $^\bullet$P2 achieves dishonest-majority \emph{unforgeability} (tolerates $N-1$ corruptions); commitment and r0-check \emph{privacy} additionally requires $|S \setminus C| \geq 2$ (Lemma~\ref{lem:mask-hiding}). At $T = N$ with $N-1$ corruptions, mask hiding does not hold in P2; P1 and P3+ retain nonce-share privacy at $|S \setminus C| = 1$ via coordinator-only communication. $^\S$UC proofs (Theorems~\ref{thm:p1-uc}--\ref{thm:p3-uc}) are against \emph{static} adversaries (corruption set fixed before execution), the standard model for threshold cryptography; adaptive security is an open problem in the lattice setting (Section~\ref{sec:conclusion}). $^{\mathrm{a}}$Round count for Bienstock et al.\ as reported in their Table~7~\cite{BdCE25}: 16 rounds (round-optimized variant) to 29 rounds (communication-optimized variant); both count online signing rounds only. $^{\mathrm{b}}$Full distributed key generation (no trusted dealer): del Pino-Niot uses a 4-round commit-reveal DKG~\cite{dPN25}; Celi et al.\ uses MPC-based DKG~\cite{CDENP26}. $^\clubsuit$This work currently uses a trusted dealer for Shamir share distribution; threshold DKG with verifiable small-coefficient shares is an open problem (Section~\ref{sec:extensions}) addressed in a companion work on ZK-based key generation.}
\label{tab:comparison}
\end{table}

Three concurrent works address threshold ML-DSA with FIPS 204 compatibility. Bienstock et al.~\cite{BdCE25} achieve arbitrary $T$ with UC security, but require honest-majority and 16--29 online rounds (16 round-optimized, 29 communication-optimized; per their Table~7~\cite{BdCE25}). Borin et al.~\cite{BCDENP25} and Celi et al.~\cite{CDENP26} (the latter extending the former) achieve dishonest-majority with 6 rounds, but are limited to $T \leq 6$ by the constraints of their short-coefficient LSSS construction (the Ball-\c{C}akan-Malkin bound~\cite{BCM21} establishes an asymptotic $\Omega(T \log T)$ lower bound on LSSS share size; their specific construction further requires small Lagrange reconstruction coefficients, limiting $T$ to single digits in practice). Trilithium~\cite{Trilithium25} is UC-secure but restricted to 2 parties in a server-phone architecture.

Profile P2 combines arbitrary thresholds, dishonest-majority security, constant rounds (5 rounds), UC security, and FIPS 204 compatibility, a combination not achieved by prior work. Profile P3+ achieves UC security (Theorem~\ref{thm:p3-uc}) under the 1-of-2 CP honest assumption, with semi-asynchronous signer participation (2 logical rounds).

Among the entries in Table~\ref{tab:comparison}, this work is the only construction simultaneously achieving arbitrary thresholds, FIPS 204-compatible signatures, dishonest-majority security, UC-proven constant-round signing, and nonce share privacy without computational assumptions. The decisive insight is matching the Shamir structure of the nonce to that of the signing key: privacy follows from the polynomial's remaining degree of freedom rather than from noise or hardness assumptions, and $|S| = T$ suffices in coordinator-based profiles (P1, P3+) where masked commitments $\mathbf{W}_i$ are not broadcast (Remark~\ref{rem:wi-private}). Raccoon~\cite{Raccoon2024} also achieves non-computational nonce privacy via sum-of-uniform nonces, but requires $|S| \geq T+1$ and targets Dilithium rather than FIPS 204 ML-DSA; our scheme improves specifically on the $|S| = T$ threshold. Key generation assumes a trusted dealer; a companion work addresses decentralized DKG with verifiable short-coefficient shares (Section~\ref{sec:extensions}).

\subsection{Paper Organization}

Section~\ref{sec:preliminaries} provides background; Sections~\ref{sec:protocol}--\ref{sec:security} present the protocol and security analysis; Section~\ref{sec:implementation} gives benchmarks; Sections~\ref{sec:extensions}--\ref{sec:conclusion} discuss extensions and conclude.


\section{Preliminaries}
\label{sec:preliminaries}

\subsection{Notation}

We use bold lowercase letters (e.g., $\mathbf{s}$) for vectors and bold uppercase letters (e.g., $\mathbf{A}$) for matrices. For an integer $q$, we write $\Zq = \Z/q\Z$ for the ring of integers modulo $q$. We define the polynomial ring $\Rq = \Zq[X]/(X^n + 1)$ where $n$ is a power of 2.

For a polynomial $f \in \Rq$, we write $\infnorm{f} = \max_i |f_i|$ for its infinity norm, where coefficients are represented in $(-q/2, q/2]$. For a vector $\mathbf{v} = (v_1, \ldots, v_k) \in \Rq^k$, we define $\infnorm{\mathbf{v}} = \max_i \infnorm{v_i}$.

We write $a \getsr S$ to denote sampling $a$ uniformly at random from a set $S$. We write $\negl(\kappa)$ for a negligible function in the security parameter $\kappa$.

\subsection{Lattice Problems}

Our security relies on the hardness of the Module-SIS and Module-LWE problems~\cite{Regev09,LPR10}.

\begin{definition}[Module-SIS]
\label{def:msis}
The $\mathsf{M\text{-}SIS}_{n,k,\ell,q,\beta}$ problem is: given a uniformly random matrix $\mathbf{A} \getsr \Rq^{k \times \ell}$, find a non-zero vector $\mathbf{z} \in \Rq^\ell$ such that $\mathbf{A}\mathbf{z} = \mathbf{0} \mod q$ and $\infnorm{\mathbf{z}} \leq \beta$, where $\infnorm{\cdot}$ denotes the $\ell^\infty$ norm computed with polynomial coefficients centered in $(-q/2, q/2]$.
\end{definition}

\begin{remark}[Notation: $\beta$ disambiguation]
The parameter $\beta$ in the M-SIS definition above is a generic norm bound following standard lattice notation~\cite{LPR10}. The ML-DSA signing parameter $\beta = \tau\eta = 196$ (see parameters below) is a distinct, concrete value: the SIS norm bound instantiated for our unforgeability proof is $\gamma_1 - \beta_{\mathsf{sign}} = 2^{19} - 196 \approx 524{,}092$ (Remark~\ref{rem:selfmsis}). No clash arises because the SIS definition is parameterized generically and the ML-DSA $\beta = 196$ is always used in the signing context.
\end{remark}

\begin{remark}[SelfTargetMSIS]
\label{rem:selfmsis}
The security reduction in Theorem~\ref{thm:unforgeability} extracts a solution to the \emph{inhomogeneous} \textsf{SelfTargetMSIS} variant~\cite{KLS18}: given $[\mathbf{A}|\!-\!\mathbf{t}_1 \cdot 2^d]$, find a short $(\mathbf{v}, \Delta c)$ with $\|\mathbf{v}\|_\infty < \gamma_1 - \beta$ such that $[\mathbf{A}|\!-\!\mathbf{t}_1\cdot 2^d]\bigl[\begin{smallmatrix}\mathbf{v}\\\Delta c\end{smallmatrix}\bigr] = 2\gamma_2\mathbf{w}_1 + \mathbf{r}_0$ where $\mathbf{w}_1$ is a known target and $\|\mathbf{r}_0\|_\infty \leq \gamma_2$ (the $\mathsf{LowBits}$ residual from $\mathsf{HighBits}$ rounding). By~\cite{KLS18}, \textsf{SelfTargetMSIS} hardness follows from M-SIS hardness under a standard parameter embedding; we therefore state Theorem~\ref{thm:unforgeability} in terms of M-SIS.
\end{remark}

\begin{definition}[Module-LWE]
\label{def:mlwe}
The $\mathsf{M\text{-}LWE}_{n,k,\ell,q,\chi}$ problem is to distinguish between:
\begin{itemize}
    \item $(\mathbf{A}, \mathbf{A}\mathbf{s} + \mathbf{e})$ where $\mathbf{A} \getsr \Rq^{k \times \ell}$, $\mathbf{s} \getsr \chi^\ell$, $\mathbf{e} \getsr \chi^k$
    \item $(\mathbf{A}, \mathbf{u})$ where $\mathbf{u} \getsr \Rq^k$
\end{itemize}
where $\chi$ is a distribution over $\Rq$ with small coefficients.
\end{definition}

For ML-DSA-65 parameters ($n = 256$, $q = 8380417$, $k = 6$, $\ell = 5$), both problems are believed to be hard at the 192-bit security level.

\subsection{ML-DSA (FIPS 204)}

ML-DSA~\cite{FIPS204}, based on the CRYSTALS-Dilithium design~\cite{DKL18}, is a lattice-based digital signature scheme using the Fiat-Shamir with Aborts paradigm~\cite{Lyu12,PS00}. Security in the quantum random oracle model follows from~\cite{KLS18}. We summarize the key algorithms.

\paragraph{Parameters.} For ML-DSA-65 (NIST Security Level 3):
\begin{itemize}[nosep]
    \item $n = 256$: polynomial degree
    \item $q = 8380417 = 2^{23} - 2^{13} + 1$: modulus
    \item $k = 6$, $\ell = 5$: matrix dimensions
    \item $\eta = 4$: secret key coefficient bound
    \item $\gamma_1 = 2^{19}$: masking range
    \item $\gamma_2 = (q-1)/32$: rounding parameter
    \item $\tau = 49$: challenge weight (number of $\pm 1$ entries)
    \item $\beta = \tau \cdot \eta = 196$: signature bound
    \item $\omega = 55$: hint weight bound (max number of $1$-entries in hint $\mathbf{h}$)
    \item $d = 13$: dropping bits in $\mathsf{Power2Round}$ ($\mathbf{t} = \mathbf{t}_1 \cdot 2^d + \mathbf{t}_0$)
\end{itemize}

\paragraph{Key Generation.}
\begin{enumerate}
    \item Sample seed $\rho \getsr \{0,1\}^{256}$ and expand to matrix $\mathbf{A} \in \Rq^{k \times \ell}$
    \item Sample short secrets $\mathbf{s}_1 \getsr \chi_\eta^\ell$ and $\mathbf{s}_2 \getsr \chi_\eta^k$ with coefficients in $[-\eta, \eta]$
    \item Compute $\mathbf{t} = \mathbf{A}\mathbf{s}_1 + \mathbf{s}_2$ and decompose $\mathbf{t} = \mathbf{t}_1 \cdot 2^d + \mathbf{t}_0$
    \item Output $\pk = (\rho, \mathbf{t}_1)$ and $\sk = (\rho, \mathbf{s}_1, \mathbf{s}_2, \mathbf{t}_0)$
\end{enumerate}

\paragraph{Signing.}
Sample nonce $\mathbf{y} \getsr \{-\gamma_1+1, \ldots, \gamma_1\}^{n\ell}$, compute $\mathbf{w}_1 = \mathsf{HighBits}(\mathbf{A}\mathbf{y}, 2\gamma_2)$, derive challenge $c = H(\mu \| \mathbf{w}_1)$, and set $\mathbf{z} = \mathbf{y} + c\mathbf{s}_1$. If $\infnorm{\mathbf{z}} \geq \gamma_1 - \beta$, restart (rejection sampling). Compute hint $\mathbf{h}$ and output $\sigma = (\tilde{c}, \mathbf{z}, \mathbf{h})$.

\paragraph{Verification.}
Expand $c$ from $\tilde{c}$, compute $\mathbf{w}_1' = \mathsf{UseHint}(\mathbf{h}, \mathbf{A}\mathbf{z} - c\mathbf{t}_1 \cdot 2^d)$, and verify $\infnorm{\mathbf{z}} < \gamma_1 - \beta$ and $\tilde{c} = H(\mu \| \mathbf{w}_1')$.

The z-bound check and r0-check together yield $\approx 20$--$25\%$ per-attempt success (4--5 expected attempts). This is inherent to Fiat-Shamir with Aborts~\cite{Lyu12}, not threshold-specific. \emph{Note:} In our threshold protocol, the aggregated nonce follows an Irwin-Hall distribution (sum of $|S|$ bounded uniform variables) rather than a single uniform draw. The Irwin-Hall distribution has greater density near zero, shifting more nonces into the acceptance region $\|\mathbf{z}\|_\infty < \gamma_1 - \beta$. As a result, observed threshold signing success rates of 21--45\% are consistent with (and can slightly exceed) the single-signer baseline; see Table~\ref{tab:perf-threshold}.

\subsection{Shamir Secret Sharing}

Shamir's $(T, N)$-threshold secret sharing~\cite{Shamir79} allows a secret $s \in \Zq$ to be distributed among $N$ parties such that any $T$ parties can reconstruct $s$, but any $T-1$ parties learn nothing. For vector (module) secrets $\mathbf{s} \in \Rq^\ell$, sharing is applied coordinate-by-coordinate over $\Zq$.

\paragraph{Sharing.} To share secret $s$:
\begin{enumerate}
    \item Sample random polynomial $f(X) = s + r_1 X + \cdots + r_{T-1} X^{T-1} \in \Zq[X]$
    \item Give share $s_i = f(i)$ to party $i$ for $i \in [N]$
\end{enumerate}

\paragraph{Reconstruction.} Given shares from set $S$ with $|S| \geq T$:
\[
s = f(0) = \sum_{i \in S} \lambda_i \cdot s_i \mod q
\]
where the \textbf{Lagrange coefficients} are:
\[
\lambda_i = \prod_{j \in S, j \neq i} j \cdot (j - i)^{-1} \bmod q
\]

\begin{remark}[Lagrange Coefficient Magnitude]
\label{rem:lagrange}
For the standard evaluation points $\{1, 2, \ldots, T\}$, Lagrange coefficients satisfy $\lambda_i = (-1)^{i-1} \binom{T}{i}$, growing as $\approx 2^T / \sqrt{\pi T / 2}$ (by Stirling's formula). For $T \leq 8$, coefficients remain small ($\binom{8}{4} = 70$), enabling short-coefficient schemes~\cite{dPN25}. For $T \geq 26$, coefficients exceed $q = 8{,}380{,}417$ (since $\binom{25}{12} = 5{,}200{,}300 < q$ but $\binom{26}{13} = 10{,}400{,}600 > q$), aligning with the Ball-\c{C}akan-Malkin bound~\cite{BCM21}. For \emph{short-coefficient LSSS} designs~\cite{dPN25,BCM21} that require Lagrange products $\lambda_i \cdot \mathbf{z}_i$ to remain short \emph{before} modular reduction, this is a binding obstacle: the product has norm $O(q \cdot \gamma_1) \gg \gamma_1 - \beta$. When $|S| > T$, the maximum Lagrange coefficient magnitude grows as $\binom{|S|}{\lfloor |S|/2 \rfloor}$ (larger than for $|S| = T$), so the constraint on $T$ is the binding one in practice. In our construction, shares $\mathbf{z}_i \in \mathbb{Z}_q^\ell$ are large, but the aggregate satisfies $\sum_i \lambda_i \mathbf{z}_i = \mathbf{y} + c\mathbf{s}_1$ by the Shamir reconstruction identity; since $\|\mathbf{y} + c\mathbf{s}_1\|_\infty \leq \gamma_1 - \beta < q/2$, the computation in $\mathbb{Z}_q$ yields the same result as integer arithmetic, and the z-bound check operates on the correct short value.
\end{remark}

\subsection{Security Definitions for Threshold Signatures}

\begin{definition}[Threshold Signature Scheme]
\label{def:threshold-sig}
A $(T, N)$-threshold signature scheme consists of algorithms:
\begin{itemize}
    \item $\KeyGen(1^\kappa, N, T) \to (\pk, \{\sk_i\}_{i \in [N]})$: Generates public key and secret key shares
    \item $\Sign(\{\sk_i\}_{i \in S}, \mu) \to \sigma$: Interactive protocol among parties in $S$ (where $|S| \geq T$) to sign message $\mu$
    \item $\Verify(\pk, \mu, \sigma) \to \{0, 1\}$: Deterministic verification
\end{itemize}
\end{definition}

\begin{remark}[Signing Set Size]
\label{rem:t-notation}
With Shamir nonce DKG and coordinator-based communication (Remark~\ref{rem:wi-private}), Profiles P1 and P3+ require only $|S| \geq T$ signers---the minimum required by Shamir's $(T,N)$ threshold property for reconstruction. This improves upon standard pairwise-mask schemes~\cite{CGGMP20,DKOSS24} that require $|S| \geq T + 1$. The nonce DKG provides nonce share privacy (no computational assumptions) even with a single honest party (see Theorem~\ref{thm:it-privacy} and Remark~\ref{rem:two-honest}). In P2's broadcast model, $|S \setminus C| \geq 2$ is required for commitment privacy (Lemma~\ref{lem:mask-hiding}).
\end{remark}

\begin{definition}[Unforgeability (EUF-CMA)]
A threshold signature scheme is EUF-CMA secure if for any PPT adversary $\mathcal{A}$ corrupting up to $T-1$ parties:
\[
\Pr[\mathsf{EUF\text{-}CMA}_{\Pi}^{\mathcal{A}}(\kappa) = 1] \leq \negl(\kappa)
\]
where in the security game:
\begin{enumerate}
    \item $(\pk, \{\sk_i\}) \gets \KeyGen(1^\kappa, N, T)$
    \item $\mathcal{A}$ receives $\pk$ and $\{\sk_i\}_{i \in C}$ for corrupted set $C$ with $|C| < T$
    \item $\mathcal{A}$ can request signatures on messages of its choice (interacting with honest parties)
    \item $\mathcal{A}$ outputs $(m^*, \sigma^*)$
    \item $\mathcal{A}$ wins if $\Verify(\pk, m^*, \sigma^*) = 1$ and $m^*$ was never queried
\end{enumerate}
\end{definition}

\begin{definition}[Privacy (Informal)]
\label{def:privacy}
A threshold signature scheme provides privacy if the view of any coalition $C$ with $|C| < T$ can be efficiently simulated given only $\pk$ and the signatures produced.
\end{definition}

\begin{remark}[Definition~\ref{def:privacy} and UC Privacy]
The full UC-level privacy guarantees supersede Definition~\ref{def:privacy}, with explicit simulators and hybrid arguments established for each deployment profile: Theorem~\ref{thm:p1-uc} (P1), Theorem~\ref{thm:p2-uc} (P2), and Theorem~\ref{thm:p3-uc} (P3+).
\end{remark}

\subsection{Pseudorandom Functions}

\begin{definition}[PRF Security]
A function family $F: \mathcal{K} \times \mathcal{X} \to \mathcal{Y}$ is a secure PRF if for any PPT distinguisher $\mathcal{D}$:
\[
\left| \Pr[\mathcal{D}^{F_k(\cdot)} = 1] - \Pr[\mathcal{D}^{R(\cdot)} = 1] \right| \leq \negl(\kappa)
\]
where $k \getsr \mathcal{K}$ and $R: \mathcal{X} \to \mathcal{Y}$ is a truly random function.
\end{definition}

In our construction, we instantiate the PRF with SHAKE-256, modeled as a random oracle in the security analysis.


\section{Protocol}
\label{sec:protocol}

We now present our threshold ML-DSA protocol in full detail. The protocol consists of a setup phase (key generation) and a signing phase (three-round interactive protocol).

\subsection{Overview}

The core challenge in threshold ML-DSA is handling the large Lagrange coefficients that arise in Shamir secret sharing over $\Zq$. As noted in Remark~\ref{rem:lagrange}, the Lagrange coefficients for standard evaluation points $\{1,\ldots,N\}$ can be as large as $\binom{T}{\lfloor T/2\rfloor} \approx 2^T/\sqrt{T}$ over $\mathbb{Z}$ before modular reduction---for example, $\binom{16}{8} = 12870$ at $T = 16$ (see Lemma~\ref{lem:lagrange-magnitude}). When the naive reconstruction $\mathbf{z} = \sum_{i \in S} \lambda_i \mathbf{z}_i$ is computed, each response share $\mathbf{z}_i = \mathbf{y}_i + c \cdot \mathbf{s}_{1,i}$ is scaled by $\lambda_i$, inflating the $\ell_\infty$-norm by up to $12870\times$ (at $T=16$)---far beyond the rejection-sampling bound $\gamma_1 - \beta = 524092$. For large $T$, the coefficient magnitude eventually exceeds $q$, causing aliasing in the modular reduction.

Our solution is based on two key techniques:

\begin{enumerate}
    \item \textbf{Shamir nonce DKG}: Parties jointly generate the nonce $\mathbf{y}$ via a Shamir-style distributed key generation, so that both $\mathbf{y}$ and the long-term secret $\mathbf{s}_1$ are degree-$(T\!-\!1)$ Shamir sharings. Each party computes $\mathbf{z}_i = \mathbf{y}_i + c \cdot \mathbf{s}_{1,i}$ (without Lagrange coefficients), and the combiner reconstructs $\mathbf{z} = \sum_i \lambda_i \mathbf{z}_i = \mathbf{y} + c\mathbf{s}_1$. This achieves \emph{nonce share privacy} (conditional min-entropy $\geq 5\times$ secret key entropy for $|S| \leq 17$, requiring no computational assumptions), in coordinator-based profiles (P1, P3+), eliminates the two-honest requirement ($|S| \geq T$ suffices; see Remark~\ref{rem:wi-private}), and requires no pairwise masks on the response $\mathbf{z}_i$.

    \item \textbf{Pairwise-canceling masks}: For the commitment values $\mathbf{W}_i$ and the r0-check shares $\mathbf{V}_i$, we add pairwise-canceling masks that hide individual contributions while preserving correctness in the sum.
\end{enumerate}

\subsection{Threat Model and Deployment Profiles}
\label{sec:threat-model}

We consider a $(T, N)$-threshold setting where an adversary may corrupt up to $T-1$ parties. Table~\ref{tab:profiles} summarizes our three deployment profiles.

\begin{table}[ht]
\centering
\caption{Deployment Profiles: Trust Assumptions and Security Guarantees}
\label{tab:profiles}
\ifccs\small\fi
\ifccs\resizebox{\columnwidth}{!}{\fi
\begin{tabular}{lccc}
\toprule
\textbf{Property} & \textbf{Profile P1} & \textbf{Profile P2} & \textbf{Profile P3+} \\
\midrule
Coordinator & TEE/HSM & Untrusted & 2PC pair \\
Trust assumption & TEE integrity & None (MPC) & 1-of-2 CP honest \\
r0-check location & Inside enclave & MPC subprotocol & 2PC subprotocol \\
$c\mathbf{s}_2$ exposure & Never leaves TEE & Never reconstructed & Split between CPs \\
Online rounds & 3 & 5 & 2$^*$ \\
Signer sync & Multi-round & Multi-round & Semi-async \\
Latency & Low & Medium & Lowest \\
Complexity & Low & High & Medium \\
\midrule
\multicolumn{4}{l}{\textbf{Common to all profiles:}} \\
\midrule
Adversary model & \multicolumn{3}{c}{Static, malicious (up to $T-1$ parties)} \\
Signing set size & \multicolumn{3}{c}{$|S| \geq T$} \\
Security goal & \multicolumn{3}{c}{EUF-CMA under Module-SIS} \\
Privacy goal & \multicolumn{3}{c}{Statistical share privacy (Theorem~\ref{thm:it-privacy})} \\
Output format & \multicolumn{3}{c}{FIPS 204-compatible signature} \\
\bottomrule
\end{tabular}
\ifccs}\fi

\vspace{0.5em}
\footnotesize{$^*$P3+ = 1 signer step (within time window) + 1 server round (2PC). Signers precompute nonces offline.}
\end{table}

\ifccs\else

\tikzset{
  actor/.style={
    draw=gray!60, fill=white, line width=0.55pt,
    minimum width=1.55cm, minimum height=0.62cm,
    align=center, font=\scriptsize\bfseries, inner sep=3pt,
    rounded corners=1.5pt,
  },
  actortee/.style={actor, draw=teal!65!black,   fill=teal!8,   text=teal!80!black},
  actormpc/.style={actor, draw=brown!60!black,  fill=brown!8,  text=brown!80!black},
  actortpc/.style={actor, draw=violet!55!black, fill=violet!8, text=violet!80!black},
  tline/.style={densely dashed, gray!38, line width=0.3pt},
  arr/.style={->, >=stealth, line width=0.55pt},
  biarr/.style={<->, >=stealth, line width=0.55pt},
  rlbl/.style={font=\tiny\itshape, text=gray!55},
  msgl/.style={font=\scriptsize, above=1pt, inner sep=1pt},
  msglb/.style={font=\scriptsize, below=1pt, inner sep=1pt},
}

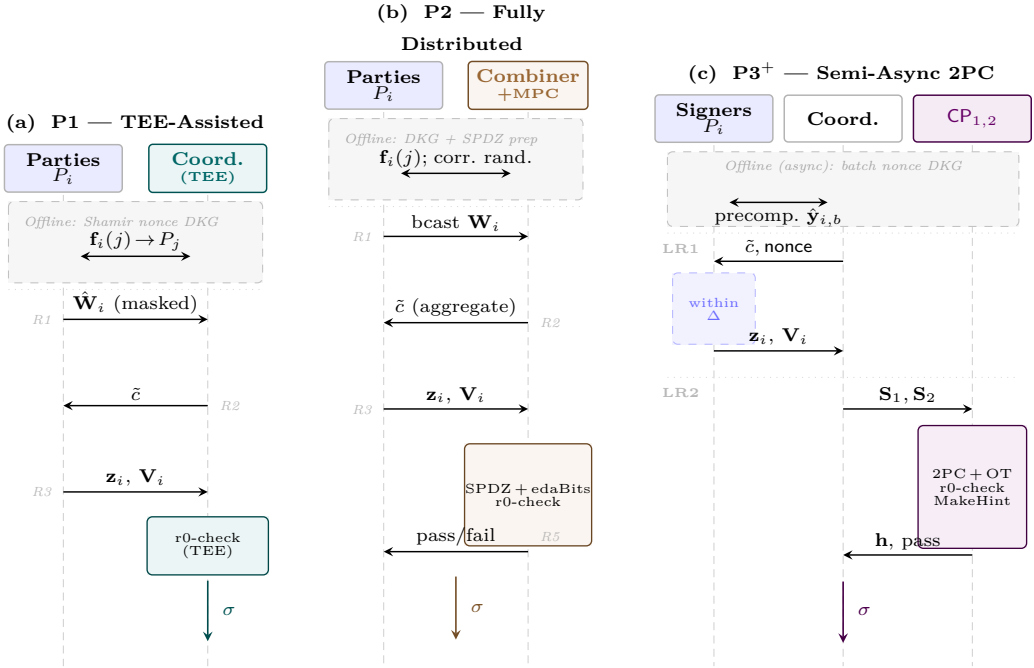
\begin{figure*}[!t]
\centering
%
\begin{minipage}[b]{3.65cm}
\centering
{\scriptsize\bfseries (a)\; P1 — TEE-Assisted}\\[3pt]
\begin{tikzpicture}[x=1.9cm, y=-1.1cm, >=stealth]
  \node[actor, fill=blue!8]  at (0,0) {Parties\\[-2pt]$P_i$};
  \node[actortee]             at (1,0) {Coord.\\[-2pt]{\tiny (TEE)}};
  \draw[tline] (0, 0.32) -- (0, 6.0);
  \draw[tline] (1, 0.32) -- (1, 6.0);
  \fill[gray!7] (-0.38,0.40) rectangle (1.38,1.38);
  \draw[dashed,gray!48,rounded corners=2pt,line width=0.4pt]
    (-0.38,0.40) rectangle (1.38,1.38);
  \node[font=\tiny\itshape,gray!70,anchor=north west] at (-0.34,0.47)
    {Offline: Shamir nonce DKG};
  \draw[biarr] (0.12,1.06) -- (0.88,1.06);
  \node[msgl] at (0.5,1.06) {$\mathbf{f}_i(j)\!\to\!P_j$};
  \draw[dotted,gray!28,line width=0.4pt] (-0.38,1.46) -- (1.38,1.46);
  \node[rlbl,left=1pt] at (0,1.82) {R1};
  \draw[arr] (0,1.82) -- (1,1.82);
  \node[msgl] at (0.5,1.82) {$\hat{\mathbf{W}}_i$ (masked)};
  \node[rlbl,right=1pt] at (1,2.86) {R2};
  \draw[arr] (1,2.86) -- (0,2.86);
  \node[msgl] at (0.5,2.86) {$\tilde{c}$};
  \node[rlbl,left=1pt] at (0,3.90) {R3};
  \draw[arr] (0,3.90) -- (1,3.90);
  \node[msgl] at (0.5,3.90) {$\mathbf{z}_i,\,\mathbf{V}_i$};
  \fill[teal!8] (0.58,4.20) rectangle (1.42,4.90);
  \draw[teal!55!black,rounded corners=2pt,line width=0.5pt]
    (0.58,4.20) rectangle (1.42,4.90);
  \node[font=\tiny,align=center] at (1,4.55)
    {r0-check\\[-1pt]{\tiny (TEE)}};
  \draw[->,line width=0.6pt,teal!60!black] (1,4.97) -- (1,5.70);
  \node[font=\scriptsize,right=2pt,text=teal!65!black] at (1,5.33) {$\sigma$};
\end{tikzpicture}
\end{minipage}
\hspace{0.4cm}
%
\begin{minipage}[b]{3.65cm}
\centering
{\scriptsize\bfseries (b)\; P2 — Fully Distributed}\\[3pt]
\begin{tikzpicture}[x=1.9cm, y=-1.1cm, >=stealth]
  \node[actor,fill=blue!8] at (0,0) {Parties\\[-2pt]$P_i$};
  \node[actormpc]          at (1,0) {Combiner\\[-2pt]{\tiny +MPC}};
  \draw[tline] (0, 0.32) -- (0, 7.0);
  \draw[tline] (1, 0.32) -- (1, 7.0);
  \fill[gray!7] (-0.38,0.40) rectangle (1.38,1.38);
  \draw[dashed,gray!48,rounded corners=2pt,line width=0.4pt]
    (-0.38,0.40) rectangle (1.38,1.38);
  \node[font=\tiny\itshape,gray!70,anchor=north west] at (-0.34,0.47)
    {Offline: DKG + SPDZ prep};
  \draw[biarr] (0.12,1.06) -- (0.88,1.06);
  \node[msgl] at (0.5,1.06) {$\mathbf{f}_i(j)$;\;corr.~rand.};
  \draw[dotted,gray!28,line width=0.4pt] (-0.38,1.46) -- (1.38,1.46);
  \node[rlbl,left=1pt] at (0,1.82) {R1};
  \draw[arr] (0,1.82) -- (1,1.82);
  \node[msgl] at (0.5,1.82) {bcast $\mathbf{W}_i$};
  \node[rlbl,right=1pt] at (1,2.86) {R2};
  \draw[arr] (1,2.86) -- (0,2.86);
  \node[msgl] at (0.5,2.86) {$\tilde{c}$ (aggregate)};
  \node[rlbl,left=1pt] at (0,3.90) {R3};
  \draw[arr] (0,3.90) -- (1,3.90);
  \node[msgl] at (0.5,3.90) {$\mathbf{z}_i,\,\mathbf{V}_i$};
  \node[rlbl,right=1pt] at (1,4.50) {R4};
  \fill[brown!8] (0.56,4.32) rectangle (1.44,5.55);
  \draw[brown!55!black,rounded corners=2pt,line width=0.5pt]
    (0.56,4.32) rectangle (1.44,5.55);
  \node[font=\tiny,align=center] at (1,4.935)
    {SPDZ\,+\,edaBits\\[-1pt]r0-check};
  \node[rlbl,right=1pt] at (1,5.44) {R5};
  \draw[arr] (1,5.62) -- (0,5.62);
  \node[msgl] at (0.5,5.62) {pass/fail};
  \draw[->,line width=0.6pt,brown!60!black] (0.5,5.92) -- (0.5,6.68);
  \node[font=\scriptsize,right=2pt,text=brown!65!black] at (0.5,6.30) {$\sigma$};
\end{tikzpicture}
\end{minipage}
\hspace{0.4cm}
%
\begin{minipage}[b]{5.1cm}
\centering
{\scriptsize\bfseries (c)\; P3$^+$ — Semi-Async 2PC}\\[3pt]
\begin{tikzpicture}[x=1.7cm, y=-1.1cm, >=stealth]
  \node[actor,fill=blue!8] at (0,0) {Signers\\[-2pt]$P_i$};
  \node[actor]             at (1,0) {Coord.};
  \node[actortpc]          at (2,0) {$\mathsf{CP}_{1,2}$};
  \draw[tline] (0, 0.32) -- (0, 6.6);
  \draw[tline] (1, 0.32) -- (1, 6.6);
  \draw[tline] (2, 0.32) -- (2, 6.6);
  \fill[gray!7] (-0.36,0.40) rectangle (2.36,1.30);
  \draw[dashed,gray!48,rounded corners=2pt,line width=0.4pt]
    (-0.36,0.40) rectangle (2.36,1.30);
  \node[font=\tiny\itshape,gray!70] at (1.0,0.58)
    {Offline (async): batch nonce DKG};
  \draw[biarr] (0.12,1.01) -- (0.88,1.01);
  \node[msglb] at (0.5,1.01) {precomp.~$\hat{\mathbf{y}}_{i,b}$};
  \draw[dotted,gray!28,line width=0.4pt] (-0.36,1.38) -- (2.36,1.38);
  \node[font=\tiny\bfseries,gray!55,left=2pt] at (0,1.57) {LR1};
  \draw[arr] (1,1.72) -- (0,1.72);
  \node[msgl] at (0.5,1.72) {$\tilde{c},\mathsf{nonce}$};
  \fill[blue!5] (-0.32,1.86) rectangle (0.32,2.72);
  \draw[blue!28,dashed,rounded corners=2pt,line width=0.4pt]
    (-0.32,1.86) rectangle (0.32,2.72);
  \node[font=\tiny,blue!58,align=center] at (0,2.29) {within\\[-1pt]$\Delta$};
  \draw[arr] (0,2.80) -- (1,2.80);
  \node[msgl] at (0.5,2.80) {$\mathbf{z}_i,\,\mathbf{V}_i$};
  \draw[dotted,gray!28,line width=0.4pt] (-0.36,3.12) -- (2.36,3.12);
  \node[font=\tiny\bfseries,gray!55,left=2pt] at (0,3.30) {LR2};
  \draw[arr] (1,3.50) -- (2,3.50);
  \node[msgl] at (1.5,3.50) {$\mathbf{S}_1,\mathbf{S}_2$};
  \fill[violet!7] (1.58,3.70) rectangle (2.42,5.18);
  \draw[violet!50!black,rounded corners=2pt,line width=0.5pt]
    (1.58,3.70) rectangle (2.42,5.18);
  \node[font=\tiny,align=center] at (2,4.44)
    {2PC\,+\,OT\\[-1pt]r0-check\\[-1pt]MakeHint};
  \draw[arr] (2,5.26) -- (1,5.26);
  \node[msgl] at (1.5,5.26) {$\mathbf{h}$, pass};
  \draw[->,line width=0.6pt,violet!55!black] (1,5.58) -- (1,6.32);
  \node[font=\scriptsize,right=2pt,text=violet!60!black] at (1,5.93) {$\sigma$};
\end{tikzpicture}
\end{minipage}
\vspace{4pt}
\caption{Protocol message flows for the three deployment profiles.
Shaded top regions are offline preprocessing; dashed lines are actor lifelines.
\textbf{(a)~P1}: masked commitments $\hat{\mathbf{W}}_i$ sent only to trusted TEE coordinator,
enabling $|S\setminus C|=1$.
\textbf{(b)~P2}: all values broadcast; dishonest-majority security via SPDZ+edaBits;
r0-check computed inside MPC.
\textbf{(c)~P3$^+$}: semi-async signers respond within window~$\Delta$;
two CPs run garbled-circuit 2PC for r0-check.
LR = Logical Round; R = Protocol Round.}
\label{fig:protocol-flows}
\end{figure*}

\fi

\paragraph{Profile P1 (Primary).}\label{sec:profile-p1} A TEE/HSM coordinator aggregates masked contributions and performs the r0-check inside a trusted enclave. The value $c\mathbf{s}_2$ is reconstructed inside the enclave and never exposed. This is our recommended deployment model, combining threshold distribution among parties with hardware-isolated aggregation.

\paragraph{Profile P2 (Fully Distributed).} A fully distributed protocol where no single party learns $c\mathbf{s}_2$. The r0-check is computed via an MPC subprotocol using SPDZ~\cite{DPSZ12} and edaBits~\cite{EKMOZ20} (see Section~\ref{sec:extensions}). Using combiner-mediated commit-then-open and constant-depth comparison circuits, we achieve \textbf{5 online rounds} (optimized from 8). We prove UC security against malicious adversaries corrupting up to $N-1$ parties in Theorem~\ref{thm:p2-uc}.

\paragraph{Profile P3+ (Semi-Async 2PC).} Designed for human-in-the-loop authorization scenarios. Two designated \emph{Computation Parties} (CPs) jointly evaluate the r0-check via lightweight 2PC. Each CP receives additive shares of $c\mathbf{s}_2$; neither learns the complete value unless they collude.

The key innovation is \emph{semi-asynchronous signer participation}: signers precompute nonces $(\mathbf{y}_i, \mathbf{w}_i, \mathsf{Com}_i)$ offline at any time, then respond within a bounded time window (e.g., 5 minutes) after receiving the challenge. A client daemon handles the cryptographic operations transparently, completing the signing process in under 1 second.

This achieves \textbf{2 logical rounds} (1 signer step + 1 server round). The 2PC uses pre-garbled circuits for reduced online latency ($\sim$15ms). Profile P3+ is suited for deployments requiring human authorization where signers cannot coordinate for multiple synchronous rounds.

\subsection{Setup Phase}

\paragraph{Pairwise Seed Establishment.}
Before key generation, each pair of parties $(i, j)$ with $i < j$ establishes a shared secret seed $\mathsf{seed}_{i,j} \in \{0,1\}^{256}$. This may be done via ML-KEM key exchange (each party generates a keypair; pairs perform encapsulation and decapsulation to derive the shared seed) or via pre-shared keys from a trusted setup.

Each party $i$ stores seeds $\{\mathsf{seed}_{\min(i,j), \max(i,j)}\}_{j \neq i}$ for all other parties.

\paragraph{Key Generation.}
The key generation can be performed by a trusted dealer or via a distributed key generation protocol. We present the trusted dealer version; see Section~\ref{sec:extensions} for DKG.

\begin{remark}[Key Generation Scope]
\label{rem:keygen-scope}
Threshold key generation is an \emph{application-layer} component outside the scope of FIPS~204. The resulting public key $\pk = (\rho, \mathbf{t}_1)$ is byte-identical to a standard FIPS~204 ML-DSA-65 public key and is usable with any FIPS~204-compliant verification implementation without modification. The threshold secret key shares $\{\sk_i\}$ are specific to this construction and replace the single-party FIPS~204 signing key; they do not affect the signature format or verifiability.
\end{remark}

\begin{algorithm}[htbp]
\caption{$\KeyGen(1^\kappa, N, T)$: Threshold Key Generation}
\label{alg:keygen}
\begin{algorithmic}[1]
    \State Sample seed $\rho \getsr \{0,1\}^{256}$
    \State Expand $\mathbf{A} = \mathsf{ExpandA}(\rho) \in \Rq^{k \times \ell}$
    \State Sample $\mathbf{s}_1 \getsr \chi_\eta^\ell$, $\mathbf{s}_2 \getsr \chi_\eta^k$ \Comment{Short secret keys}
    \State Compute $\mathbf{t} = \mathbf{A}\mathbf{s}_1 + \mathbf{s}_2$
    \State Decompose $(\mathbf{t}_1, \mathbf{t}_0) \gets \mathsf{Power2Round}(\mathbf{t}, d)$ \Comment{FIPS 204 Alg.~29; $\mathbf{t} = \mathbf{t}_1 \cdot 2^d + \mathbf{t}_0$}
    \State Set $\pk = (\rho, \mathbf{t}_1)$
    \For{each secret polynomial $s$ in $\mathbf{s}_1$ and $\mathbf{s}_2$}
        \State Create Shamir sharing: $p_s(X) = s + a_1 X + \cdots + a_{T-1} X^{T-1}$, \quad $a_j \getsr \Rq$
        \State Compute shares: $s_i = p_s(i)$ for $i \in [N]$
    \EndFor
    \State \Return $\pk$, $\{\sk_i = (\mathbf{s}_{1,i}, \mathbf{s}_{2,i}, \{\mathsf{seed}_{*,*}\})\}_{i \in [N]}$
\end{algorithmic}
\end{algorithm}

\subsection{Shamir Nonce DKG}
\label{sec:nonce-dkg}

The primary technique enabling nonce share privacy is the \emph{Shamir nonce DKG}: parties jointly generate the signing nonce $\mathbf{y}$ as a degree-$(T\!-\!1)$ Shamir sharing, matching the structure of the long-term secret $\mathbf{s}_1$. This is executed once per signing attempt and can be preprocessed offline.

\begin{algorithm}[htbp]
\caption{$\mathsf{NonceDKG}(S)$: Distributed Nonce Generation}
\label{alg:nonce-dkg}
\begin{algorithmic}[1]
    \Statex \textbf{Input:} Signing set $S$ with $|S| \geq T$
    \For{each party $i \in S$ in parallel}
        \State Sample secret contribution $\hat{\mathbf{y}}_i \getsr \{-\lfloor\gamma_1/|S|\rfloor, \ldots, \lfloor\gamma_1/|S|\rfloor\}^{n\ell}$
        \State Construct degree-$(T\!-\!1)$ polynomial over $\Rq^\ell$:
        \Statex \qquad $\mathbf{f}_i(x) = \hat{\mathbf{y}}_i + \mathbf{a}_{i,1} x + \cdots + \mathbf{a}_{i,T-1} x^{T-1}$
        \Statex \qquad where $\mathbf{a}_{i,k} \getsr \Rq^\ell$ for $k = 1, \ldots, T\!-\!1$
        \State Send $\mathbf{f}_i(j)$ to each party $j \in S$ over a private channel \Comment{Encrypted via pairwise ML-KEM seeds}
    \EndFor
    \For{each party $j \in S$}
        \State Compute nonce share: $\mathbf{y}_j = \sum_{i \in S} \mathbf{f}_i(j)$
    \EndFor
\end{algorithmic}
\end{algorithm}

The combined polynomial is $\mathbf{F}(x) = \sum_{i \in S} \mathbf{f}_i(x)$, with $\mathbf{F}(0) = \sum_i \hat{\mathbf{y}}_i = \mathbf{y}$ (the nonce) and $\mathbf{F}(j) = \mathbf{y}_j$ (the shares). This is a valid degree-$(T\!-\!1)$ Shamir sharing of $\mathbf{y}$ over $\Rq^\ell$.

\begin{remark}[Why Uniform Higher-Degree Coefficients]
\label{rem:uniform-coefficients}
The coefficients $\mathbf{a}_{i,k}$ for $k \geq 1$ are sampled uniformly over $\Rq^\ell$ (not from a short distribution). This is essential: the adversary observes $T - 1$ evaluations of $\mathbf{f}_h$ (the honest party's polynomial), leaving exactly one degree of freedom. The uniform \emph{prior} distribution of $\mathbf{a}_{h,T-1}$ ensures high entropy in this free parameter; combined with the bounded-nonce constraint, the honest party's share $\mathbf{f}_h(x_h)$ retains conditional min-entropy $\geq n\ell \cdot \log_2\bigl(2\lfloor \gamma_1/|S| \rfloor + 1\bigr)$ bits per session (the support of each per-party share coordinate is $\{-\lfloor\gamma_1/|S|\rfloor,\ldots,\lfloor\gamma_1/|S|\rfloor\}$, a range of $2\lfloor\gamma_1/|S|\rfloor+1$ values counting both endpoints; the approximation $\approx n\ell\cdot(20 - \log_2|S|)$ bits holds to within $1$ bit for $|S| \leq 17$). See Theorem~\ref{thm:it-privacy} and Remark~\ref{rem:sd-zero}.
\end{remark}

\begin{remark}[Communication Cost]
\label{rem:dkg-cost}
The nonce DKG requires $|S| \cdot (|S| - 1)$ point-to-point messages, each of size $n\ell \cdot \lceil\log_2 q\rceil$ bits ($\approx 3.6$ KB for ML-DSA-65; $256 \times 5 \times 23 = 29{,}440$ bits). For a $(3,5)$-threshold with $|S| = 3$, this is 6 messages totaling $\approx 22$ KB; for $|S| = 32$, the DKG requires $32 \times 31 = 992$ messages totaling $\approx 3.6$ MB. This $O(|S|^2)$ offline preprocessing cost does not appear in the online round complexity comparison; it is amortizable over batches of signing sessions (multiple nonce DKG rounds can be preprocessed during idle time) and is absent from the online communication cost ($\approx 12.3$ KB per party per online attempt). The DKG round is independent of the message to be signed and can be executed offline as a preprocessing step, so online signing latency is unchanged.
\end{remark}

\ifccs\else
\begin{remark}[Why No VSS for the Nonce DKG]
\label{rem:no-vss-nonce}
The \emph{key} DKG uses Feldman commitments~\cite{Feldman87} because the long-term secret $\mathbf{s}_1$ has short coefficients and parties must verifiably bind their contributions during a one-time setup. The \emph{nonce} DKG has a different structure: the higher-degree polynomial coefficients are sampled $\Rq^\ell$-uniformly (by design, for entropy), and the constant term $\hat{\mathbf{y}}_i$ must remain hidden to preserve nonce privacy (Theorem~\ref{thm:it-privacy}). Feldman-style commitments publish $g^{a_k}$ for each coefficient, which would not hide $\hat{\mathbf{y}}_i$ against a discrete-log adversary and would require a hiding commitment scheme instead---substantially complicating the offline round. We instead adopt an \emph{optimistic} approach: shares are sent over private authenticated channels, and any cheating party is detectable \emph{a posteriori} by the blame protocol (Section~\ref{sec:blame}), which asks parties to reveal their full polynomials to the TEE. The TEE recomputes each share and verifies consistency with the submitted $\mathbf{z}_i$ values. The optimistic threshold of $K = 33$ consecutive aborts ensures blame is triggered before a malicious party can cause meaningful disruption (honest probability $(0.75)^{33} < 10^{-4}$).
\end{remark}
\fi

\subsection{Pairwise-Canceling Masks}

Pairwise-canceling masks are used to hide the commitment values $\mathbf{W}_i$ (for challenge derivation) and the r0-check shares $\mathbf{V}_i$ (which contain $\lambda_i \cdot c \cdot \mathbf{s}_{2,i}$). For any signing set $S$, each party $i \in S$ computes a mask $\mathbf{m}_i$ such that $\sum_{i \in S} \mathbf{m}_i = \mathbf{0}$.

\begin{definition}[Pairwise-Canceling Mask]
\label{def:mask}
For signing set $S$ and PRF context $\mathsf{ctx} \in \{0,1\}^*$ (composed as $\mathsf{ctx} = \mathsf{nonce}^* \| \mathsf{label}$, where $\mathsf{nonce}^*$ is the appropriate per-session identifier defined below, and $\mathsf{label}$ is a fixed domain tag), party $i$ computes:
\[
\mathbf{m}_i^{(\mathsf{ctx})} = \sum_{j \in S: j > i} \PRF(\mathsf{seed}_{i,j}, \mathsf{ctx}) - \sum_{j \in S: j < i} \PRF(\mathsf{seed}_{j,i}, \mathsf{ctx})
\]
where $\PRF: \{0,1\}^{256} \times \{0,1\}^* \to \Rq^k$ is instantiated with SHAKE-256 (here $k=6$ is the module rank, not the dropping-bits parameter $d=13$). The per-session nonce ensures masks are fresh in each signing session (see Mask Hiding, Lemma~\ref{lem:mask-hiding}).

We use $\mathsf{ctx} = \mathsf{nonce}_0 \| \texttt{"comm"}$ for commitment masks $\mathbf{m}_i^{(w)} \in \Rq^k$, where $\mathsf{nonce}_0 = H(\texttt{"nonce0"} \| \mathsf{Com}_{i_1} \| \cdots)$ is the \emph{pre-challenge} nonce derived at the start of Round~2 (before the challenge $c$ is known). We use $\mathsf{ctx} = \mathsf{nonce} \| \texttt{"s2"}$ for r0-check masks $\mathbf{m}_i^{(s2)} \in \Rq^k$, where $\mathsf{nonce}$ is the post-challenge per-session identifier.
\end{definition}

\begin{lemma}[Mask Cancellation]
\label{lem:mask-cancel}
For any signing set $S$ and nonce, $\sum_{i \in S} \mathbf{m}_i = \mathbf{0}$.
\end{lemma}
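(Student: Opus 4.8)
The plan is to expand $\sum_{i \in S} \mathbf{m}_i$ into a sum of PRF evaluations and observe that each evaluation appears exactly twice, with opposite signs. Fix the domain separator $\mathsf{dom}$. First I would write
\[
\sum_{i \in S} \mathbf{m}_i^{(\mathsf{dom})} = \sum_{i \in S} \Bigl( \sum_{\substack{j \in S \\ j > i}} \PRF(\mathsf{seed}_{i,j}, \mathsf{dom}) - \sum_{\substack{j \in S \\ j < i}} \PRF(\mathsf{seed}_{j,i}, \mathsf{dom}) \Bigr),
\]
and note that the first double sum ranges over all ordered pairs $(i,j)$ with $i,j \in S$ and $i < j$, while the second ranges over all ordered pairs with $i > j$; together they enumerate every unordered pair twice.

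Next I would reindex both double sums by unordered pairs $\{a,b\} \subseteq S$ with $a < b$. In the positive double sum, the term $\PRF(\mathsf{seed}_{a,b}, \mathsf{dom})$ occurs exactly once, contributed by the summand $i = a,\ j = b$. In the negative double sum it occurs exactly once, contributed by $i = b,\ j = a$, where by the indexing convention $\mathsf{seed}_{j,i} = \mathsf{seed}_{\min(i,j),\max(i,j)} = \mathsf{seed}_{a,b}$. Therefore
\[
\sum_{i \in S} \mathbf{m}_i^{(\mathsf{dom})} = \sum_{\substack{\{a,b\} \subseteq S \\ a < b}} \PRF(\mathsf{seed}_{a,b}, \mathsf{dom}) - \sum_{\substack{\{a,b\} \subseteq S \\ a < b}} \PRF(\mathsf{seed}_{a,b}, \mathsf{dom}) = \mathbf{0},
\]
where the cancellation takes place componentwise in $\Rq^d$ (with $d = \ell$ for response masks, $d = k$ for commitment masks). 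Specializing to $\mathsf{dom} = \mathsf{nonce} \,\|\, \texttt{"resp"}$ yields $\sum_{i \in S} \mathbf{m}_i = \mathbf{0}$; the argument for the commitment masks is identical.

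There is no real obstacle here; the only point requiring care is that the two parties $a$ and $b$ in each pair evaluate the \emph{same} PRF on the \emph{same} seed, so the two occurrences are literally equal in $\Rq^d$ rather than merely identically distributed. This is guaranteed by the setup phase, in which each party stores $\mathsf{seed}_{\min(i,j),\max(i,j)}$, fixing a canonical index for every pair. I would also emphasize that the identity is purely algebraic: it holds for every fixed $\mathsf{dom}$ and every choice of seeds, with no appeal to PRF security — the latter enters only in the later privacy analysis, not here.
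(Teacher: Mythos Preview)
Your proof is correct and follows essentially the same approach as the paper: both argue that each PRF term $\PRF(\mathsf{seed}_{a,b}, \mathsf{dom})$ for $a < b$ appears once with sign $+1$ (from party $a$) and once with sign $-1$ (from party $b$), hence cancels. Your additional remarks about the canonical seed indexing and the purely algebraic nature of the identity are accurate and match the spirit of the paper's more detailed Appendix version.
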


\begin{proof}
Each pair $(i, j)$ with $i < j$ contributes $+\PRF(\mathsf{seed}_{i,j}, \mathsf{nonce})$ from party $i$ (since $j > i$) and $-\PRF(\mathsf{seed}_{i,j}, \mathsf{nonce})$ from party $j$ (since $i < j$); these cancel exactly in the sum, so $\sum_{i \in S} \mathbf{m}_i = \mathbf{0}$.
\end{proof}

\subsection{Signing Protocol}

The signing protocol is a three-round interactive protocol among parties in the signing set $S$ (where $|S| \geq T$), preceded by an offline nonce DKG preprocessing step (Algorithm~\ref{alg:nonce-dkg}). We assume a broadcast channel or a designated combiner (TEE/HSM in Profile P1).

\ifccs
\begin{algorithm}[htbp]
\caption{$\Sign$: Three-Round Threshold Signing Protocol}
\label{alg:sign}
\small
\begin{algorithmic}[1]
    \Statex \textbf{Public:} Message $\mu$, $\pk = (\rho, \mathbf{t}_1)$, signing set $S$
    \Statex \textbf{Private:} Party $i$ holds $\sk_i = (\mathbf{s}_{1,i}, \mathbf{s}_{2,i}, \{\mathsf{seed}_{*,*}\})$
    \Statex
    \Statex \underline{\textbf{Preprocessing: Nonce DKG (offline)}}
    \State Execute $\mathsf{NonceDKG}(S)$ (Alg.~\ref{alg:nonce-dkg}) $\to$ nonce shares $\{\mathbf{y}_i\}_{i \in S}$
    \Statex

    \Statex \underline{\textbf{Round 1: Nonce Commitment}}
    \For{each party $i \in S$ in parallel}
        \State Compute $\mathbf{w}_i = \mathbf{A} \mathbf{y}_i$
        \State Sample $r_i \getsr \{0,1\}^{256}$
        \State Compute $\mathsf{Com}_i = H(\texttt{"com"} \| \mathbf{y}_i \| \mathbf{w}_i \| r_i)$; broadcast $\mathsf{Com}_i$
        \State Store $(\mathbf{y}_i, \mathbf{w}_i, r_i)$ as local state
    \EndFor
    \Statex

    \Statex \underline{\textbf{Round 2: Masked Reveal and Challenge}}
    \State \textbf{Pre-round:} $\mathsf{nonce}_0 = H(\texttt{"nonce0"} \| \mathsf{Com}_{i_1} \| \cdots \| \mathsf{Com}_{i_{|S|}})$
    \For{each party $i \in S$ in parallel}
        \State Compute $\lambda_i = \prod_{j \in S, j \neq i} \frac{j}{j-i} \bmod q$
        \State Compute commitment mask $\mathbf{m}_i^{(w)} \in \Rq^k$ using $\mathsf{nonce}_0 \| \texttt{"comm"}$
        \State Send $(\mathbf{W}_i = \lambda_i \cdot \mathbf{w}_i + \mathbf{m}_i^{(w)},\, r_i)$ to coordinator
    \EndFor
    \State \textbf{Coordinator:}
    \State \quad Aggregate $\mathbf{W} = \sum_{j \in S} \mathbf{W}_j = \mathbf{A}\mathbf{y}$ \Comment{Masks cancel; Lagrange on $\{\mathbf{y}_j\}$}
    \State \quad Compute $\mathbf{w}_1 = \mathsf{HighBits}(\mathbf{W}, 2\gamma_2)$
    \State \quad Compute $\tilde{c} = H_{\mathsf{chal}}(\mu \| \mathbf{w}_1)$, expand $c \in \Rq$
    \State \quad Derive $\mathsf{nonce} = H(\texttt{"nonce"} \| \tilde{c} \| \mu \| \mathsf{sort}(S))$
    \State \quad Broadcast $(\mathbf{w}_1, \tilde{c}, \mathsf{nonce})$ \Comment{Individual $\mathbf{W}_i$ stay private}
    \Statex

    \Statex \underline{\textbf{Round 3: Response}}
    \For{each party $i \in S$ in parallel}
        \State Compute $\mathbf{z}_i = \mathbf{y}_i + c \cdot \mathbf{s}_{1,i}$ \Comment{No $\lambda_i$, no mask}
        \State Compute $\mathbf{V}_i = \lambda_i \cdot c \cdot \mathbf{s}_{2,i} + \mathbf{m}_i^{(s2)}$
        \State Send $(\mathbf{z}_i, \mathbf{V}_i)$ to combiner
    \EndFor
    \Statex

    \Statex \underline{\textbf{Aggregation and r0-Check (Section~\ref{sec:r0check})}}
    \State Combiner: $\mathbf{z} = \sum_{i \in S} \lambda_i \cdot \mathbf{z}_i \bmod q$
    \If{$\infnorm{\mathbf{z}} \geq \gamma_1 - \beta$} \Return $\bot$ \EndIf
    \State Combiner: $c\mathbf{s}_2 \gets \sum_{i \in S} \mathbf{V}_i$ \Comment{Lagrange + mask cancellation}
    \State \textbf{r0-check:} Verify $\infnorm{\mathsf{LowBits}(\mathbf{w} - c\mathbf{s}_2, 2\gamma_2)} < \gamma_2 - \beta$
    \If{r0-check fails} \Return $\bot$ \EndIf
    \State Compute $\mathbf{r} = \mathbf{A}\mathbf{z} - c\mathbf{t}_1 \cdot 2^d$
    \State Set $-c\mathbf{t}_0 \gets \mathbf{w} - \mathbf{r} - c\mathbf{s}_2$
    \State Compute $\mathbf{h} = \mathsf{MakeHint}(-c\mathbf{t}_0, \mathbf{r}, 2\gamma_2)$
    \If{$\mathsf{weight}(\mathbf{h}) > \omega$} \Return $\bot$ \EndIf
    \State \Return $\sigma = (\tilde{c}, \mathbf{z}, \mathbf{h})$
\end{algorithmic}
\end{algorithm}
\else
\begin{algorithm}[htbp]
\caption{$\Sign$: Three-Round Threshold Signing Protocol}
\label{alg:sign}
\begin{algorithmic}[1]
    \Statex \textbf{Public Input:} Message $\mu$, public key $\pk = (\rho, \mathbf{t}_1)$, signing set $S$
    \Statex \textbf{Private Input:} Party $i$ holds $\sk_i = (\mathbf{s}_{1,i}, \mathbf{s}_{2,i}, \{\mathsf{seed}_{*,*}\})$
    \Statex
    \Statex \underline{\textbf{Preprocessing: Nonce DKG (offline)}}
    \State Execute $\mathsf{NonceDKG}(S)$ (Algorithm~\ref{alg:nonce-dkg}) to obtain nonce shares $\{\mathbf{y}_i\}_{i \in S}$
    \Statex

    \Statex \underline{\textbf{Round 1: Nonce Commitment}}
    \For{each party $i \in S$ in parallel}
        \State Compute $\mathbf{w}_i = \mathbf{A} \mathbf{y}_i$ \Comment{Using DKG nonce share}
        \State Sample $r_i \getsr \{0,1\}^{256}$
        \State Compute $\mathsf{Com}_i = H(\texttt{"com"} \| \mathbf{y}_i \| \mathbf{w}_i \| r_i)$
        \State Broadcast $\mathsf{Com}_i$
        \State Store $(\mathbf{y}_i, \mathbf{w}_i, r_i)$ as local state
    \EndFor
    \Statex

    \Statex \underline{\textbf{Round 2: Masked Reveal and Challenge}}
    \State \textbf{Pre-round:} Derive preliminary nonce $\mathsf{nonce}_0 = H(\texttt{"nonce0"} \| \mathsf{Com}_{i_1} \| \cdots \| \mathsf{Com}_{i_{|S|}})$ where $(i_1, \ldots, i_{|S|}) = \mathsf{sort}(S)$
    \For{each party $i \in S$ in parallel}
        \State Compute Lagrange coefficient for reconstruction at point 0: $\lambda_i = \prod_{j \in S, j \neq i} \frac{j}{j-i} \bmod q$ \Comment{Shamir interpolation at $x=0$}
        \State Compute commitment mask $\mathbf{m}_i^{(w)} \in \Rq^k$ using $\mathsf{nonce}_0 \| \texttt{"comm"}$
        \State Send $(\mathbf{W}_i = \lambda_i \cdot \mathbf{w}_i + \mathbf{m}_i^{(w)}, r_i)$ to coordinator \Comment{NOT broadcast; see Remark~\ref{rem:two-honest}}
    \EndFor
    \State \textbf{Coordinator:}
    \State \quad Aggregate $\mathbf{W} = \sum_{j \in S} \mathbf{W}_j = \sum_j (\lambda_j \mathbf{w}_j + \mathbf{m}_j^{(w)})$ \Comment{Substitute definition of $\mathbf{W}_j$}
    \State \qquad $= \sum_j \lambda_j \mathbf{w}_j + \sum_j \mathbf{m}_j^{(w)} = \sum_j \lambda_j \mathbf{A}\mathbf{y}_j + \mathbf{0}$ \Comment{Masks cancel (Lemma~\ref{lem:mask-cancel}); $\mathbf{w}_j = \mathbf{A}\mathbf{y}_j$}
    \State \qquad $= \mathbf{A}\sum_j \lambda_j \mathbf{y}_j = \mathbf{A}\mathbf{y}$ \Comment{Linearity of $\mathbf{A}$; Lagrange reconstruction of $\mathbf{y}$ from $\{\mathbf{y}_j\}_{j \in S}$}
    \State \quad Set $\mathbf{w} \gets \mathbf{W}$ \Comment{$\mathbf{w} = \mathbf{A}\mathbf{y}$; used in r0-check and hint computation}
    \State \quad Compute $\mathbf{w}_1 = \mathsf{HighBits}(\mathbf{w}, 2\gamma_2)$
    \State \quad Compute $\tilde{c} = H_{\mathsf{chal}}(\mu \| \mathbf{w}_1)$ and expand to challenge $c \in \Rq$ \Comment{FIPS 204 spec; no invertibility check needed}
    \State \quad Derive $\mathsf{nonce} = H(\texttt{"nonce"} \| \tilde{c} \| \mu \| \mathsf{sort}(S))$
    \State \quad Broadcast $(\mathbf{w}_1, \tilde{c}, \mathsf{nonce})$ to all parties \Comment{Only $\mathsf{HighBits}$; individual $\mathbf{W}_i$ stay private}
    \Statex

    \Statex \underline{\textbf{Round 3: Response}}
    \For{each party $i \in S$ in parallel}
        \State Compute response: $\mathbf{z}_i = \mathbf{y}_i + c \cdot \mathbf{s}_{1,i}$ \Comment{No $\lambda_i$, no mask}
        \State Compute $\mathbf{V}_i = \lambda_i \cdot c \cdot \mathbf{s}_{2,i} + \mathbf{m}_i^{(s2)}$ \Comment{Masked, for r0-check}
        \State Send $(\mathbf{z}_i, \mathbf{V}_i)$ to combiner
    \EndFor
    \Statex

    \Statex \underline{\textbf{Aggregation and r0-Check (see Subsection~\ref{sec:r0check})}}
    \State Combiner computes $\mathbf{z} = \sum_{i \in S} \lambda_i \cdot \mathbf{z}_i \mod q$ \Comment{Lagrange at combiner}
    \If{$\infnorm{\mathbf{z}} \geq \gamma_1 - \beta$}
        \State \Return $\bot$ (z-bound abort, retry from Round 1)
    \EndIf
    \State Combiner computes $c\mathbf{s}_2 \gets \sum_{i \in S} \mathbf{V}_i = \sum_{i \in S} (\lambda_i c\mathbf{s}_{2,i} + \mathbf{m}_i^{(s2)})$ \Comment{Substitute definition of $\mathbf{V}_i$}
    \State \qquad $= c\sum_{i \in S} \lambda_i \mathbf{s}_{2,i} + \sum_{i \in S} \mathbf{m}_i^{(s2)} = c\mathbf{s}_2 + \mathbf{0}$ \Comment{Lagrange reconstruction of $\mathbf{s}_2$ from shares $\{\mathbf{s}_{2,i}\}_{i \in S}$ using same $\lambda_i$ as Line~206; masks cancel (Lemma~\ref{lem:mask-cancel})}
    \State \textbf{r0-check:} Verify $\infnorm{\mathsf{LowBits}(\mathbf{w} - c\mathbf{s}_2, 2\gamma_2)} < \gamma_2 - \beta$
    \If{r0-check fails}
        \State \Return $\bot$ (r0 abort, retry from Round 1)
    \EndIf
    \State Compute $\mathbf{r} = \mathbf{A}\mathbf{z} - c\mathbf{t}_1 \cdot 2^d$ \Comment{$= \mathbf{w} - c\mathbf{s}_2 + c\mathbf{t}_0$}
    \State Set $-c\mathbf{t}_0 \gets \mathbf{w} - \mathbf{r} - c\mathbf{s}_2$ \Comment{$= \mathbf{w} - (\mathbf{A}\mathbf{z} - c\mathbf{t}_1 \cdot 2^d) - c\mathbf{s}_2$}
    \State Compute hint $\mathbf{h} = \mathsf{MakeHint}(-c\mathbf{t}_0, \mathbf{r}, 2\gamma_2)$
    \If{$\mathsf{weight}(\mathbf{h}) > \omega$}
        \State \Return $\bot$ (hint weight abort)
    \EndIf
    \State \Return signature $\sigma = (\tilde{c}, \mathbf{z}, \mathbf{h})$
\end{algorithmic}
\end{algorithm}
\fi

\ifccs\else
\begin{remark}[Commitment Model]
\label{rem:protocol-commitment}
The Round~1 commitments $\mathsf{Com}_i$ bind parties to their nonces before seeing others' values, preventing adaptive attacks. However, \emph{we do not publicly verify these commitments during normal signing}---the masked values $\mathbf{W}_i$ cannot be verified without revealing private mask seeds, and the responses $\mathbf{z}_i$ are privacy-protected by the nonce DKG (Theorem~\ref{thm:it-privacy}). Instead, we follow an \emph{optimistic} model: if signing fails (excessive aborts, invalid signatures), the TEE coordinator triggers a blame protocol where parties reveal their commitments for verification. This ``commit always, open only on blame'' approach minimizes latency on the normal path while enabling accountability. See Section~\ref{sec:blame} for details.
\end{remark}
\fi

\begin{remark}[Why $\mathbf{W}_i$ Must Not Be Broadcast]
\label{rem:wi-private}
Individual masked commitments $\mathbf{W}_i = \lambda_i \mathbf{A}\mathbf{y}_i + \mathbf{m}_i^{(w)}$ are sent \emph{only to the coordinator}, not broadcast to all parties. This is critical for privacy when $|S \setminus C| = 1$: if $\mathbf{W}_h$ were broadcast and the adversary controls all other parties in $S$, the pairwise mask $\mathbf{m}_h^{(w)}$ would be fully computable (all seeds are shared with corrupted parties), revealing $\lambda_h \mathbf{A}\mathbf{y}_h$.

\textbf{Recovery of $\mathbf{y}_h$ from $\mathbf{A}\mathbf{y}_h$ (when $k > \ell$).} For ML-DSA-65, the public matrix $\mathbf{A} \in \Rq^{k \times \ell}$ with $k=6$, $\ell=5$ (so $k > \ell$) is generated via the random oracle ExpandA. The Module-LWE assumption~\cite{LS15,ADPS16} requires that $\mathbf{A}$ has full column rank $\ell$ with overwhelming probability over the random oracle sampling (see~\cite[Section~3.4]{FIPS204} for ExpandA). When $\mathbf{A}$ has full column rank and $k \geq \ell$, the left inverse $((\mathbf{A}^\top\mathbf{A})^{-1}\mathbf{A}^\top)$ exists over $R_q$ (modulo invertibility of $\mathbf{A}^\top\mathbf{A}$, which holds with overwhelming probability for random $\mathbf{A}$), enabling recovery:
\[
\mathbf{y}_h = (\mathbf{A}^\top\mathbf{A})^{-1}\mathbf{A}^\top \cdot (\lambda_h \mathbf{A}\mathbf{y}_h) / \lambda_h.
\]
The scalar $\lambda_h$ is invertible modulo $q$ (as all Lagrange coefficients are, by construction). Thus, revealing $\mathbf{A}\mathbf{y}_h$ enables key extraction via $\mathbf{z}_h = \mathbf{y}_h + c \cdot \mathbf{s}_{1,h}$ (the adversary knows $\mathbf{z}_h$ from the public signature and $c$ from the challenge). This is a non-trivial lattice attack that does \emph{not} rely on the high min-entropy of $\mathbf{y}_h$---it is a direct algebraic recovery, bypassing nonce share privacy. Therefore, $\mathbf{W}_i$ must remain coordinator-only.

The coordinator broadcasts only $(\mathbf{w}_1, \tilde{c})$, where $\mathbf{W} = \sum_i \mathbf{W}_i$ and $\mathbf{w}_1 = \mathsf{HighBits}(\mathbf{W}, 2\gamma_2)$. Since $\mathbf{w}_1$ is implicit in any valid $\sigma$, this adds zero information to the adversary's view. See Remark~\ref{rem:two-honest} for the full security argument.

In Profile P1, the TEE coordinator naturally keeps $\mathbf{W}_i$ private. In Profile P3+, the combiner serves as coordinator. In Profile P2 (no designated coordinator), parties broadcast $\mathbf{W}_i$ directly and rely on mask hiding (Lemma~\ref{lem:mask-hiding}, requiring $|S \setminus C| \geq 2$) to protect individual contributions.
\end{remark}

\ifccs\else
\begin{remark}[Domain Separation]
\label{rem:domain-sep}
We use distinct domain tags (\texttt{"com"}, \texttt{"nonce0"}, \texttt{"nonce"}, \texttt{"s2"}, \texttt{"comm"}, \texttt{"pairwise\_seed"}) for all hash function uses to ensure random oracle independence.
\end{remark}

\begin{remark}[Nonce Range Rounding and Asymmetry]
\label{rem:nonce-rounding}
When $\gamma_1$ is not divisible by $|S|$, rounding slightly reduces the nonce range. This has negligible impact: for ML-DSA-65 with $|S| = 17$, the range loss is $< 2 \times 10^{-5}$ per coordinate and does not affect the z-bound check since $|S| \cdot \lfloor\gamma_1/|S|\rfloor \geq \gamma_1 - (|S|-1) > \gamma_1 - \beta$ (holds for all $|S| \leq \beta = 196$, i.e., $|S| < 197$, covering all practical configurations). See Appendix~\ref{sec:supp-nonce-rounding} for details.

Additionally, the threshold protocol uses a \emph{symmetric} per-party nonce range
\[
\bigl\{-\lfloor\gamma_1/|S|\rfloor,\;\ldots,\;\lfloor\gamma_1/|S|\rfloor\bigr\},
\]
whereas FIPS~204 single-signer ML-DSA uses the slightly \emph{asymmetric} range $\{-\gamma_1+1, \ldots, \gamma_1\}$ (omitting $-\gamma_1$). The statistical distance between the symmetric range $\{-\gamma_1,\ldots,\gamma_1\}$ and the FIPS~204 range is $1/(2\gamma_1+1) < 2^{-20}$ per coordinate, and the full-vector statistical distance over $n\ell = 1280$ coefficients is bounded by $1280/(2\gamma_1+1) \approx 1.22 \times 10^{-3} < 2^{-9}$. This difference is negligible and is absorbed into the $\epsilon_\mathsf{IH}$ term of Theorem~\ref{thm:unforgeability}; it does not affect the security argument or FIPS~204 compatibility of the output signature format.
\end{remark}
\fi

\subsection{Correctness}

\begin{theorem}[Correctness]
\label{thm:correctness}
If all parties are honest and the protocol does not abort, then $\Verify(\pk, \mu, \sigma) = 1$.
\end{theorem}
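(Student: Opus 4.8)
The plan is to reduce correctness to the correctness of single-signer ML-DSA (FIPS 204): I will show that the three aggregated quantities the combiner operates on — the commitment $\mathbf{W}$, the challenge $\tilde c$, and the response $\mathbf{z}$ — are exactly the values that arise in an honest FIPS 204 signing run with nonce $\mathbf{y} = \sum_{i \in S}\mathbf{y}_i$ and secret key $(\mathbf{s}_1, \mathbf{s}_2, \mathbf{t}_0)$. Once that equivalence is in place, the verification equation follows from the standard ML-DSA hint-recovery lemma. The only genuinely threshold-specific ingredients are (i) mask cancellation (Lemma~\ref{lem:mask-cancel}), and (ii) Shamir reconstruction applied to the coefficients of each polynomial in $\mathbf{s}_1$ and $\mathbf{s}_2$ over $\Rq$, which is valid because the nonzero node differences $j-i$ (for $i,j \in [N]$) are invertible modulo $q$, so the protocol's $\lambda_i$ are precisely the interpolation weights for evaluation at $0$.

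First I would establish the commitment identity. By Lemma~\ref{lem:mask-cancel} the commitment masks sum to $\mathbf{0}$ over $S$, so $\mathbf{W} = \sum_{i \in S}\mathbf{W}_i = \sum_{i \in S}\mathbf{w}_i = \mathbf{A}\bigl(\sum_{i\in S}\mathbf{y}_i\bigr) = \mathbf{A}\mathbf{y}$; hence $\mathbf{w}_1 = \mathsf{HighBits}(\mathbf{A}\mathbf{y}, 2\gamma_2)$ and $\tilde c = H_{\mathsf{chal}}(\mu \| \mathbf{w}_1)$ is computed exactly as in FIPS 204. Next the response identity: the response masks also cancel, so
\[
\mathbf{z} \;=\; \sum_{i \in S}\mathbf{U}_i \;=\; \sum_{i \in S}\mathbf{y}_i \;+\; c\sum_{i \in S}\lambda_i\,\mathbf{s}_{1,i} \;=\; \mathbf{y} + c\,\mathbf{s}_1,
\]
using $\sum_{i\in S}\lambda_i\mathbf{s}_{1,i} = \mathbf{s}_1$ from Shamir reconstruction ($|S| \ge T+1 > T$). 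I would also record the nonce bound $\infnorm{\mathbf{y}} \le |S|\lfloor\gamma_1/|S|\rfloor \le \gamma_1$, so $(\mathbf{y}, c)$ is a legitimate ML-DSA internal signing state and the FIPS 204 correctness hypotheses on the nonce range hold.

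With these identities, the remaining steps mirror FIPS 204 verification correctness. Since the protocol did not abort, $\infnorm{\mathbf{z}} < \gamma_1 - \beta$, the r0-check gives $\infnorm{\mathsf{LowBits}(\mathbf{A}\mathbf{y} - c\mathbf{s}_2, 2\gamma_2)} < \gamma_2 - \beta$, and the hint has weight $\le \omega$ (with the auxiliary bound $\infnorm{c\mathbf{t}_0} < \gamma_2$ covered by the hint-step abort). Using $\mathbf{t} = \mathbf{A}\mathbf{s}_1 + \mathbf{s}_2$ and $\mathbf{t} = \mathbf{t}_1 2^d + \mathbf{t}_0$,
\[
\mathbf{r} \;=\; \mathbf{A}\mathbf{z} - c\mathbf{t}_1 2^d \;=\; \mathbf{A}\mathbf{y} + c\mathbf{A}\mathbf{s}_1 - c(\mathbf{t} - \mathbf{t}_0) \;=\; (\mathbf{A}\mathbf{y} - c\mathbf{s}_2) + c\mathbf{t}_0,
\]
and $\mathbf{h} = \mathsf{MakeHint}(-c\mathbf{t}_0, \mathbf{r}, 2\gamma_2)$; here I would also check that the combiner can actually form $-c\mathbf{t}_0$, which it does from $c\mathbf{s}_2 = \sum_{i\in S}\mathbf{V}_i$ (the $\mathbf{m}_i^{(s2)}$ masks cancel and $\sum_i\lambda_i\mathbf{s}_{2,i} = \mathbf{s}_2$ by the same Shamir argument) via $c\mathbf{t}_0 = \mathbf{r} - (\mathbf{W} - c\mathbf{s}_2)$. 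The ML-DSA MakeHint/UseHint correctness lemma then gives $\mathsf{UseHint}(\mathbf{h}, \mathbf{r}, 2\gamma_2) = \mathsf{HighBits}(\mathbf{A}\mathbf{y}, 2\gamma_2) = \mathbf{w}_1$, so the verifier recomputes $\mathbf{w}_1' = \mathbf{w}_1$, the equality $\tilde c = H_{\mathsf{chal}}(\mu \| \mathbf{w}_1')$ holds, and $\infnorm{\mathbf{z}} < \gamma_1 - \beta$; hence $\Verify(\pk, \mu, \sigma) = 1$.

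The main obstacle is bookkeeping around the hint: one must verify that the preconditions of the ML-DSA hint-recovery lemma genuinely hold in the threshold run — the r0 low-bits bound, $\infnorm{c\mathbf{t}_0} < \gamma_2$, and $\infnorm{\mathbf{y}} \le \gamma_1$ so that no spurious carries corrupt $\mathsf{HighBits}$ — and that the combiner's reconstruction of $-c\mathbf{t}_0$ from the $\mathbf{V}_i$ shares coincides with the true secret value. Everything else is direct substitution reducing the aggregated run to the single-signer FIPS 204 case.
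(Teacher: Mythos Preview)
Your proposal is correct and follows essentially the same approach as the paper: reduce to single-signer ML-DSA correctness by showing that mask cancellation (Lemma~\ref{lem:mask-cancel}) plus Lagrange reconstruction make the aggregated $\mathbf{W}$ and $\mathbf{z}$ coincide with an honest FIPS~204 signing run on nonce $\mathbf{y}=\sum_i\mathbf{y}_i$ and secret $\mathbf{s}_1$. The paper's proof is a two-line compression of exactly this argument; your version is strictly more careful in tracking the hint preconditions and the combiner's reconstruction of $c\mathbf{t}_0$, which is welcome detail but not a different route.
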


\ifccs\else
\begin{proof}
The aggregated response is $\mathbf{z} = \sum_{i \in S} \lambda_i \mathbf{z}_i = \sum_i \lambda_i (\mathbf{y}_i + c \cdot \mathbf{s}_{1,i}) = \sum_i \lambda_i \mathbf{y}_i + c \cdot \sum_i \lambda_i \mathbf{s}_{1,i} = \mathbf{y} + c\mathbf{s}_1$. Both $\{\mathbf{y}_i\}$ and $\{\mathbf{s}_{1,i}\}$ are degree-$(T\!-\!1)$ Shamir shares of $\mathbf{y}$ and $\mathbf{s}_1$ respectively, evaluated at the same set of points: KeyGen assigns $\mathbf{s}_{1,j} = p_{\mathbf{s}_1}(j)$ using party indices $j \in [N]$, and the nonce DKG assigns $\mathbf{y}_j = \mathbf{F}(j)$ using the same evaluation domain (Algorithm~\ref{alg:nonce-dkg}). Hence the Lagrange coefficients $\lambda_i$ for the signing set $S$ apply equally to both sharings, and reconstruction yields both correct secrets simultaneously. The masked commitment aggregation gives $\mathbf{W} = \sum_i \mathbf{W}_i = \sum_i (\lambda_i \mathbf{w}_i + \mathbf{m}_i^{(w)}) = \mathbf{A} \sum_i \lambda_i \mathbf{y}_i = \mathbf{A}\mathbf{y} = \mathbf{w}$, where mask cancellation follows from Lemma~\ref{lem:mask-cancel}. Since $\mathbf{z} = \mathbf{y} + c\mathbf{s}_1$ and $\mathbf{w} = \mathbf{A}\mathbf{y}$, verification proceeds as in single-signer ML-DSA.
\end{proof}
\fi

\begin{remark}[Implementation Notes]
\label{rem:implementation}
(1) Each party computes $\mathbf{z}_i = \mathbf{y}_i + c \cdot \mathbf{s}_{1,i}$ without applying the Lagrange coefficient $\lambda_i$. The combiner applies $\lambda_i$ during aggregation: $\mathbf{z} = \sum_i \lambda_i \mathbf{z}_i$. While individual $\lambda_i \mathbf{z}_i$ may have large coefficients, the sum $\mathbf{z} = \mathbf{y} + c\mathbf{s}_1$ is short ($< q/2$), so mod $q$ reduction is correct.
(2) Signatures are FIPS 204-compatible: same format, size (3309 bytes for ML-DSA-65), and pass unmodified verifiers. Signer-side differences (Irwin-Hall nonces, nonce DKG, interactive protocol) do not affect security or verification.
\end{remark}

\subsection{Nonce Distribution and Security}

The aggregated nonce $\mathbf{y} = \sum_{i \in S} \hat{\mathbf{y}}_i$ (the sum of each party's constant-term contribution from the nonce DKG) follows an Irwin-Hall distribution concentrated around zero. This equals the Lagrange reconstruction $\mathbf{y} = \sum_{i \in S} \lambda_i \mathbf{y}_i$ by the following two-step argument: letting $\mathbf{F}(x) = \sum_{k \in S} \mathbf{f}_k(x)$ be the combined degree-$(T\!-\!1)$ polynomial, we have $\mathbf{F}(0) = \sum_{k \in S} \mathbf{f}_k(0) = \sum_{k \in S} \hat{\mathbf{y}}_k$ (direct, by definition of the constant term), and also $\mathbf{F}(0) = \sum_{i \in S} \lambda_i \mathbf{F}(i) = \sum_{i \in S} \lambda_i \mathbf{y}_i$ (Lagrange interpolation at 0). This preserves EUF-CMA security: using smooth R{\'e}nyi divergence bounds from~\cite{Raccoon2024}, we compute per-coordinate $R^\epsilon_{2,\mathsf{coord}} \leq 1.003$ for $|S| \leq 17$. Since nonce coordinates are sampled independently, the full-vector divergence is $R_2^{\mathsf{vec}} = (R^\epsilon_{2,\mathsf{coord}})^{1280} \leq (1.003)^{1280} \approx 2^{5.5}$ for $|S| \leq 17$ ($\approx$3-bit security loss; exact: $(1.00292)^{1280} \approx 2^{5.4}$). For $|S| = 33$, $R_2^{\mathsf{vec}} \approx 2^{75}$ ($\approx$38-bit loss). See Theorem~\ref{thm:irwin-hall} and Appendix~\ref{sec:supp-renyi-numerical}.

\subsection{Abort Probability}

The per-attempt success probability is comparable to single-signer ML-DSA ($\approx$20\%; empirically $\approx$21--45\% due to Irwin-Hall concentration and configuration-dependent variance). Without masking, success degrades as $(0.2)^T$---for $T=16$, this is $6.6 \times 10^{-12}$. Our masked approach achieves constant success rate, a speedup of $10^{5}$--$10^{21}\times$ (Appendix~\ref{sec:supp-naive}).

\subsection{The r0-Check Subprotocol}
\label{sec:r0check}

\ifccs
Each party $i$ sends masked r0-check share
$\mathbf{V}_i = \lambda_i c \mathbf{s}_{2,i} + \mathbf{m}_i^{(s2)}$;
the combiner aggregates to $c\mathbf{s}_2$ (masks cancel), then checks
\[
\infnorm{\mathsf{LowBits}(\mathbf{w} - c\mathbf{s}_2,\, 2\gamma_2)}
< \gamma_2 - \beta
\]
and derives the hint locally. Since $c\mathbf{s}_2$ enables key recovery
(virtually all ML-DSA challenges are invertible,
Remark~\ref{lem:challenge-invertible}), it must be protected: Profile~P1
confines computation to the TEE, P2 avoids reconstruction via MPC, P3+
splits between two CPs. Full derivation of $c\mathbf{t}_0$ for hint
computation in Appendix~\ref{sec:appendix-proofs}.
\else
A critical component of ML-DSA signing that requires careful treatment in the threshold setting is the \emph{r0-check}: verifying that $\infnorm{\mathsf{LowBits}(\mathbf{w} - c\mathbf{s}_2, 2\gamma_2)} < \gamma_2 - \beta$. This check ensures that the hint $\mathbf{h}$ correctly recovers $\mathbf{w}_1 = \mathsf{HighBits}(\mathbf{w})$ during verification.

\paragraph{The Challenge.}
In single-signer ML-DSA, the signer has direct access to $\mathbf{s}_2$ and can compute $\mathbf{w} - c\mathbf{s}_2$ directly. In our threshold setting, $\mathbf{s}_2$ is secret-shared among the parties, and no single party (including the combiner) knows the complete $\mathbf{s}_2$.

\paragraph{Our Approach: Masked Reconstruction of $c\mathbf{s}_2$.}
We extend the pairwise-canceling mask technique to reconstruct $c\mathbf{s}_2$ at the combiner while preserving privacy of individual shares:

\begin{enumerate}
    \item Each party $i$ computes $\mathbf{V}_i = \lambda_i \cdot c \cdot \mathbf{s}_{2,i} + \mathbf{m}_i^{(s2)}$, where $\mathbf{m}_i^{(s2)} \in \Rq^k$ is a pairwise-canceling mask using domain separator $\mathsf{nonce} \| \texttt{"s2"}$.

    \item The combiner aggregates: $\sum_{i \in S} \mathbf{V}_i = c \cdot \sum_i \lambda_i \mathbf{s}_{2,i} + \sum_i \mathbf{m}_i^{(s2)} = c\mathbf{s}_2$

    \item The combiner computes $\mathbf{w} - c\mathbf{s}_2$ and performs the r0-check locally.
\end{enumerate}

\paragraph{Computing $c\mathbf{t}_0$ for Hints.}
The hint computation requires $c\mathbf{t}_0$ where $\mathbf{t}_0 = \mathbf{t} - \mathbf{t}_1 \cdot 2^d$. Since $\mathbf{t} = \mathbf{A}\mathbf{s}_1 + \mathbf{s}_2$, we have:
\[
c\mathbf{t}_0 = c\mathbf{t} - c\mathbf{t}_1 \cdot 2^d = c(\mathbf{A}\mathbf{s}_1 + \mathbf{s}_2) - c\mathbf{t}_1 \cdot 2^d
\]

The combiner can compute this as follows. First, $c\mathbf{s}_2$ is obtained from the masked reconstruction above. Second, since $\mathbf{A}\mathbf{z} = \mathbf{A}(\mathbf{y} + c\mathbf{s}_1) = \mathbf{w} + c\mathbf{A}\mathbf{s}_1$, we have $c\mathbf{A}\mathbf{s}_1 = \mathbf{A}\mathbf{z} - \mathbf{w}$. Combining these, $c\mathbf{t}_0 = (\mathbf{A}\mathbf{z} - \mathbf{w}) + c\mathbf{s}_2 - c\mathbf{t}_1 \cdot 2^d$.

\paragraph{Security Consideration.}
The combiner learns $c\mathbf{s}_2$ during r0-check. Since virtually all ML-DSA challenges are invertible in $\Rq$ (Monte Carlo: 100\% of $10^4$ samples), an adversary with $(c, c\mathbf{s}_2)$ can recover $\mathbf{s}_2 = c^{-1}(c\mathbf{s}_2)$. Thus $c\mathbf{s}_2$ must be protected.

\paragraph{Resolution.} We address this via three deployment profiles (Table~\ref{tab:profiles}). In Profile P1 (TEE), $c\mathbf{s}_2$ is computed inside the enclave and never exposed; $c\mathbf{t}_0$ is derived from the formula above, and the hint $\mathbf{h}$ is computed inside the enclave before output. In Profile P2 (MPC), parties jointly evaluate the r0-check \emph{without reconstructing $c\mathbf{s}_2$}: the MPC circuit extends to also compute $c\mathbf{t}_0 = (\mathbf{A}\mathbf{z} - \mathbf{w}) + c\mathbf{s}_2 - c\mathbf{t}_1 \cdot 2^d$ using the internally-held intermediate $c\mathbf{s}_2$, and outputs $\mathbf{h} = \mathsf{MakeHint}(-c\mathbf{t}_0,\, \mathbf{A}\mathbf{z} - c\mathbf{t}_1 \cdot 2^d,\, 2\gamma_2)$ directly (since $\mathsf{MakeHint}$ is a bitwise comparison of bounded-norm vectors, it adds only $O(knd)$ comparison gates to the r0-check circuit); $c\mathbf{s}_2$ is never exposed outside the MPC. In Profile P3+ (2PC), two CPs each hold an additive share of $c\mathbf{s}_2$, so neither learns the complete value; hint computation proceeds analogously within the 2PC. See Section~\ref{sec:extensions} for P2/P3 details.
\fi


\section{Security Analysis}
\label{sec:security}

We prove that our masked threshold ML-DSA protocol achieves EUF-CMA security and nonce share privacy. This section contains proofs for all main results; full UC proofs for P2 and P3+ appear in Appendix~\ref{app:uc-framework}.

\subsection{Security Model}

We consider a static adversary $\mathcal{A}$ that corrupts a set $C \subset [N]$ with $|C| < T$ before execution. The adversary controls corrupted parties, observes all broadcasts, and adaptively chooses messages to sign.

\paragraph{Trust Assumptions.}
\textbf{Profile P1 (Primary):} The coordinator runs in a TEE/HSM. The adversary cannot observe values computed inside the enclave (specifically $c\mathbf{s}_2$); only the final signature $\sigma = (\tilde{c}, \mathbf{z}, \mathbf{h})$ exits the enclave. We model the enclave as ideal; compromise of the TEE is out of scope. Our theorems target this profile.

\begin{remark}[$|S| \geq T$ Suffices in Coordinator-Based Profiles]
\label{rem:two-honest}
With Shamir nonce DKG, Profiles P1 and P3+ require only $|S| \geq T$. The adversary observes: (1)~nonce shares $\mathbf{y}_h$ with high conditional min-entropy (statistical, Theorem~\ref{thm:it-privacy}); (2)~r0-check shares $\mathbf{V}_i$ never broadcast, staying inside the TEE/2PC; (3)~masked commitments $\mathbf{W}_i$ sent only to the coordinator, which publishes only $\mathbf{w}_1$ (already implicit in $\sigma$). Individual $\mathbf{W}_i$ must stay private: if broadcast with $|S \setminus C|=1$, the adversary could recover $\mathbf{y}_h$ from $\lambda_h\mathbf{A}\mathbf{y}_h$ via linear algebra (since $\mathbf{A}$ is generically injective~\cite{FIPS204,LPR10}), then extract $\mathbf{s}_{1,h}$. In P2, mask hiding (Lemma~\ref{lem:mask-hiding}) protects broadcast $\mathbf{W}_i$ under the existing $|S \setminus C| \geq 2$ condition; note that mask hiding applies only to $\mathbf{W}_i$ and $\mathbf{V}_i$ (which do not appear in the signature) and does \emph{not} contribute to nonce share privacy---the latter is provided purely by the statistical properties of the nonce DKG (Theorem~\ref{thm:it-privacy}). Additionally, P2's blame protocol (Section~\ref{sec:blame}) assumes challenge invertibility ($c$ invertible in $\Rq$; empirically $>99.9\%$ of ML-DSA-65 challenges are invertible~\cite{FIPS204}); the non-invertible fraction is $< 2^{-15}$ by the union bound (Remark~\ref{lem:challenge-invertible}); the protocol's retry ensures completed sessions use only invertible $c$, making this unconditional.
\ifccs\else

\begin{enumerate}[leftmargin=*]
\item \textbf{Nonce share $\mathbf{y}_h$}: Statistical privacy via nonce DKG (Theorem~\ref{thm:it-privacy}). The nonce share has high conditional min-entropy (no computational assumptions), holding even with $|S \setminus C| = 1$. \textbf{Key privacy given the signature $\sigma$} reduces to establishing hardness of the bounded-distance decoding equation $\mathbf{z}_h = \mathbf{y}_h + c \cdot \mathbf{s}_{1,h}$, where $c$ is a sparse challenge and $\mathbf{y}_h$ has bounded-but-not-short coefficients---a structure that does not reduce to standard Module-LWE instances (which require short error vectors). No formal reduction is known even for single-signer FIPS~204~\cite{FIPS204}; the threshold setting inherits this open problem unchanged.

\item \textbf{r0-check shares $\mathbf{V}_i$}: Never broadcast---processed inside the TEE (P1), as MPC inputs (P2), or via 2PC (P3+). Even with $|S \setminus C| = 1$, the adversary does not observe individual $\mathbf{V}_h$.

\item \textbf{Masked commitments $\mathbf{W}_i$}: Sent only to the coordinator (not broadcast). The coordinator publishes only $\mathbf{w}_1 = \mathsf{HighBits}(\mathbf{W}, 2\gamma_2)$, which is already implicit in the signature (verifiers reconstruct $\mathbf{w}_1$ from $(\mathbf{A}, \mathbf{z}, c, \mathbf{h})$). This adds zero information to the adversary's view.
\end{enumerate}

\paragraph{Why individual $\mathbf{W}_i$ must stay private.}
If $\mathbf{W}_h$ were broadcast with $|S \setminus C| = 1$, the pairwise mask $\mathbf{m}_h^{(w)}$ would be computable by the adversary (all seeds involve corrupted counterparties), exposing $\lambda_h \mathbf{A}\mathbf{y}_h$. Since $\mathbf{A} \in \Rq^{k \times \ell}$ with $k > \ell$ is generically injective (the NTT isomorphism $\Rq \cong \Zq^n$~\cite{FIPS204} decomposes $\mathbf{A}$ into $n$ independent $k \times \ell$ matrices over $\Zq$; we heuristically model each as uniformly random and apply the $\Zq$ full-rank bound of~\cite[Lemma~2.1]{LPR10} componentwise, giving full column rank with probability $\geq 1 - q^{\ell-k}$; this is the same standard heuristic used in ML-DSA~\cite{DKL18,FIPS204}), the adversary could recover $\mathbf{y}_h$ via linear algebra---this is \emph{not} a Module-LWE instance because $\mathbf{y}_h$ has full-size (not short) coefficients (i.e., coefficients drawn from $[-\lfloor\gamma_1/|S|\rfloor, \lfloor\gamma_1/|S|\rfloor]$ with $\lfloor\gamma_1/|S|\rfloor \approx 2^{19}/|S| \gg \eta = 4$; the Module-LWE hardness assumption requires error terms with small coefficients). Combined with $\mathbf{z}_h$ (extractable from the public signature), this would yield $\mathbf{s}_{1,h} = c^{-1}(\mathbf{z}_h - \mathbf{y}_h)$, a complete key recovery (the non-invertible fraction of ML-DSA-65 challenges is $< 2^{-15}$ by the union bound over $n=256$ NTT evaluation points, Remark~\ref{lem:challenge-invertible}; the protocol's retry ensures completed sessions always use invertible~$c$). The coordinator-only communication model eliminates this attack vector entirely. In P2, the distributed blame protocol (Section~\ref{sec:blame}) also relies on $c$-invertibility; the same $< 2^{-15}$ union bound applies.

\paragraph{Profile-specific analysis.}
In P1, the TEE coordinator naturally keeps individual $\mathbf{W}_i$ private. In P3+, the combiner serves as coordinator (trusted for privacy under the 1-of-2 CP honest assumption). In P2 (no designated coordinator), parties broadcast $\mathbf{W}_i$ directly; mask hiding (Lemma~\ref{lem:mask-hiding}) protects individual values under the existing $|S \setminus C| \geq 2$ condition, which P2 already requires.
\fi
\end{remark}

\subsection{Unforgeability}

\begin{theorem}[Unforgeability]
\label{thm:unforgeability}
If Module-SIS is hard and $H$ is a random oracle, then the masked threshold ML-DSA scheme is EUF-CMA secure. Using the direct shift-invariance bound (Corollary~\ref{cor:ih-shift-ml-dsa}, Theorem~\ref{thm:ih-direct-tight}), the Irwin-Hall nonce distribution introduces at most $q_s \times 6.6 \times 10^{-3}$ bits of security loss for $|S| \leq 17$ ($q_s \times 0.013$ bits for $|S| \leq 33$), where $q_s$ is the number of signing queries. For $q_s \leq 1{,}600$ queries with $|S| \leq 17$, this is $\leq 10.6$ bits, yielding a proven security bound of ${\approx}85$ bits under MSIS ($= 96 - 10.6$; the 96-bit baseline is a proof-technique artifact from Cauchy-Schwarz halving the 192-bit NIST Level~3 MSIS hardness once, identical to single-signer ML-DSA). The primary (direct shift-invariance) bound is:
\[
\Adv^{\mathsf{EUF\text{-}CMA}}_{\Pi}(\mathcal{A})
\;\leq\;
\Adv^{\mathsf{EUF\text{-}CMA}}_{\Pi^*}(\mathcal{A})
\;+\; q_s \cdot \delta_{\mathsf{IH}}
\;+\; \frac{(q_H + q_s)^2}{2^{256}}
\;+\; \tbinom{N}{2} \cdot \epsilon_{\mathsf{PRF}}
\]
where $\delta_{\mathsf{IH}} < 6.6 \times 10^{-3}$ bits for $|S| \leq 17$ (Corollary~\ref{cor:ih-shift-ml-dsa}, direct FFT computation; non-vacuous for all $q_s < 16{,}000$). The equivalent R\'enyi divergence expression of the same direct-shift bound is (\emph{note}: $R_2^{\mathsf{vec,shift}}$ uses the direct FFT $\chi^2_{\mathsf{direct}}$, \emph{not} the conservative Raccoon analytical bound---see Remark~\ref{rem:renyi-conservative} for the vacuous conservative form):
\begin{align*}
\Adv^{\mathsf{EUF\text{-}CMA}}_{\Pi}(\mathcal{A})
&\leq \bigl(R_2^{\mathsf{vec,shift}}\bigr)^{q_s/2}
\cdot \sqrt{q_H \cdot \epsilon_{\mathsf{M\text{-}SIS}}
+ \tbinom{N}{2} \cdot \epsilon_{\mathsf{PRF}}} \\
&\quad + \frac{(q_H + q_s)^2}{2^{256}} + q_s \cdot \epsilon_\mathsf{IH}
\end{align*}
where $q_H, q_s$ are random oracle and signing queries, $R_2^{\mathsf{vec,shift}} = (1+\chi^2_{\mathsf{direct}})^{n\ell}$ is the full-vector \emph{per-session} R\'enyi divergence from the direct shift-invariance analysis (here $n = 256$ is the ring degree, $\ell = 5$, giving $n\ell = 1280$) with per-coordinate $\chi^2_{\mathsf{direct}} \approx 7.13 \times 10^{-6}$ for $|S|=17$ (Corollary~\ref{cor:ih-shift-ml-dsa}), giving $R_2^{\mathsf{vec,shift}} \approx 1.0092$, and $\epsilon_\mathsf{IH} < 10^{-30}$ is the smooth R\'enyi tail parameter (Theorem~\ref{thm:irwin-hall}). The per-coordinate empirical tail probability from the FFT computation is $2.3 \times 10^{-20}$ for $|S|=17$ (Corollary~\ref{cor:ih-shift-ml-dsa}); this is a distinct quantity from $\epsilon_\mathsf{IH}$. The $(R_2^{\mathsf{vec,shift}})^{q_s/2}$ factor accounts for multi-session R\'enyi composition (Remark~\ref{rem:multi-session-renyi}). The total security loss is $\approx 6.6 \times 10^{-3} \cdot q_s$ bits, giving a \textbf{proven security bound of $\geq 96 - 0.0066 \cdot q_s$ bits} for $|S| \leq 17$ (non-vacuous for $q_s < 16{,}000$; the 96-bit figure uses $q_H = 1$; see Remark~\ref{rem:baseline-qh} for the general form). Our threshold construction adds only $\leq 10$ bits of additional EUF-CMA loss from the Irwin-Hall nonce distribution for $q_s \leq 1{,}600$. The conservative Raccoon-style analytical bound is given in Remark~\ref{rem:renyi-conservative}.
\end{theorem}

\begin{corollary}[Irwin-Hall Nonce Security Loss]
\label{cor:ih-loss-main}
Under the conditions of Theorem~\ref{thm:unforgeability}, using the direct
shift-invariance bound (Theorem~\ref{thm:ih-direct-tight}, Corollary~\ref{cor:ih-shift-ml-dsa}),
the EUF-CMA security loss from Irwin-Hall nonces is $< 0.007 \cdot q_s$ bits for $|S| \leq 17$
and $< 0.013 \cdot q_s$ bits for $|S| \leq 33$, where $q_s$ is the number of signing queries
(multi-session composition; Remark~\ref{rem:multi-session-renyi}).
For $q_s = 1$ (single session), the per-session loss is $< 0.007$ bits ($|S| \leq 17$) or $< 0.013$ bits ($|S| \leq 33$).
The single-session bound first exceeds $1$ bit only at $|S| = 2584$, fully eliminating
scalability as a per-session security concern.
\end{corollary}

\ifccs
\begin{proof}[Proof sketch]
We reduce to Module-SIS via 5 game hops: \textbf{Game$_0$} (real) $\to$ \textbf{Game$_1$} (ROM, zero loss) $\to$ \textbf{Game$_2$} (programmed challenges, birthday-bound loss $(q_H+q_s)^2/2^{256}$) $\to$ \textbf{Game$_3$} (simulated shares, zero loss by Shamir security) $\to$ \textbf{Game$_{3.5}$} (uniform nonces, direct shift-invariance: $R_2^{\mathsf{vec,shift}} \approx 1.0092$ ($|S|\leq 17$), Corollary~\ref{cor:ih-shift-ml-dsa}; total loss $< 0.013\cdot q_s$ bits (for $|S| \leq 33$; tightened to $< 0.007\cdot q_s$ bits for $|S| \leq 17$), proven security $\geq 96 - 0.0066\cdot q_s$ bits (for $|S| \leq 17$); conservative Raccoon bound $R_2^{\mathsf{vec}} \approx 2^{5.4}$ vacuous at $q_s\gtrsim 36$, see Remark~\ref{rem:renyi-conservative}) $\to$ \textbf{Game$_4$} (simulated signatures, PRF loss $\binom{N}{2}\epsilon_{\mathsf{PRF}}$) $\to$ M-SIS extraction via SelfTargetMSIS direct reduction~\cite{KLS18}. Nonce shares $\mathbf{y}_j$ have high min-entropy by Theorem~\ref{thm:it-privacy} (statistical, no computational assumptions). Full proof in Appendix~\ref{app:technical-lemmas}.
\end{proof}
\else
\begin{proof}
We reduce to Module-SIS via a sequence of game hops. Let $\mathsf{Game}_i$ denote the $i$-th game and $\mathsf{Win}_i$ the event that $\mathcal{A}$ produces a valid forgery in $\mathsf{Game}_i$.

\paragraph{$\mathsf{Game}_0$ (Real).} The real EUF-CMA game. Challenger runs $\KeyGen$ honestly, gives $\mathcal{A}$ the public key $\pk$ and corrupted shares $\{\sk_i\}_{i \in C}$. Signing oracle queries may return $\bot$ (rejection abort); $\mathcal{A}$ may re-issue the same or different queries. Security bounds are in terms of query counts $q_s$; each attempt succeeds with non-negligible probability $\epsilon_{\mathsf{acc}} \in [21\%, 45\%]$ depending on $|S|$ (see Table~\ref{tab:perf-threshold}; comparable to single-signer ML-DSA's $\approx 25\%$), so all $q_s$ successful queries complete in expected $O(q_s/\epsilon_{\mathsf{acc}}) = O(q_s)$ attempts.

\paragraph{$\mathsf{Game}_1$ (Random Oracle).} Replace $H$ with a lazily-sampled random function. By the random oracle model: $|\Pr[\mathsf{Win}_1] - \Pr[\mathsf{Win}_0]| = 0$.

\paragraph{$\mathsf{Game}_2$ (Programmed Challenges).} During signing, sample challenge $c \getsr \mathcal{C}$ uniformly \emph{before} computing $\mathbf{w}_1$, then program $H(\mu \| \mathbf{w}_1) := c$. This fails only on collision. By birthday bound: $|\Pr[\mathsf{Win}_2] - \Pr[\mathsf{Win}_1]| \leq (q_H + q_s)^2/2^{256}$, where $2^{256}$ is the hash output space size (SHAKE-256 gives 256-bit challenges; this is a conservative bound since $|C| < 2^{256}$).

\paragraph{$\mathsf{Game}_3$ (Simulated Shares).} This is a conceptual game hop with $\Pr[\mathsf{Win}_3] = \Pr[\mathsf{Win}_2]$.

The signing oracle continues using the actual short key shares $\{\mathbf{s}_{1,j}, \mathbf{s}_{2,j}\}_{j \notin C}$ internally; no substitution is made. The hop is purely an observation about the adversary's view: honest key shares are never directly visible to $\mathcal{A}$ (see profile-specific paragraph below), so $\mathcal{A}$'s view is limited to the $|C| \leq T-1$ corrupted evaluations $\{\mathbf{s}_{1,i}\}_{i \in C}$. By $(T,N)$-Shamir perfect hiding, $T-1$ evaluations of a degree-$(T-1)$ polynomial are statistically independent of the secret $\mathbf{s}_1 = \mathbf{f}(0)$: the conditional distribution of $\mathbf{f}(0)$ given any $T-1$ evaluations equals the prior distribution of $\mathbf{f}(0)$ (statistical distance zero). No event in $\mathcal{A}$'s view distinguishes Game~2 from Game~3, giving $\Pr[\mathsf{Win}_3] = \Pr[\mathsf{Win}_2]$.

\textit{Profile-specific observability of $\mathbf{z}_j$.} In Profiles P1 and P3+, individual signer responses $\mathbf{z}_j$ are transmitted only to the TEE coordinator or the honest CP, respectively, and are never observed by the adversary; the game hop is trivial. In Profile P2, $\mathbf{z}_j$ is contributed to $\mathcal{F}_{\mathsf{SPDZ}}$ for aggregation, which provides input privacy: only the aggregate $\mathbf{z} = \sum_{j \in S} \lambda_j \mathbf{z}_j$ is revealed. Since $\mathcal{F}_{\mathsf{SPDZ}}$ hides individual $\mathbf{z}_j$, the adversary gains no information about $\mathbf{s}_{1,j}$ through $\mathbf{z}_j$ in any profile. The Shamir replacement is therefore valid across all deployment profiles.

\paragraph{$\mathsf{Game}_{3.5}$ (Uniform Nonces).} Replace the Irwin-Hall nonce distribution with uniform nonces: in each signing session, sample $\mathbf{y} \getsr \{-\gamma_1, \ldots, \gamma_1\}^{n\ell}$ uniform instead of $\mathbf{y} = \sum_{i \in S} \hat{\mathbf{y}}_i$ (Irwin-Hall). The $q_s$ signing sessions use independent nonce DKGs, so the per-session nonce distributions are i.i.d.\ copies of the Irwin-Hall distribution (convolution of $|S|$ uniform shares). By the \emph{product rule} for R\'enyi-2 divergence (where $R_2(P\|Q) = \sum_x P(x)^2/Q(x)$ in the ratio-sum convention): for independent sessions,
\[
R_2^{\mathsf{joint}} \;=\; \bigl(R_2^{\mathsf{vec,shift}}\bigr)^{q_s},
\]
since the joint divergence factors as a product of per-session divergences (see Remark~\ref{rem:multi-session-renyi} for explicit derivation). By the R\'enyi-2 security bound (Theorem~\ref{thm:irwin-hall} with $\alpha = 2$, applied to the $q_s$-session joint distribution):
\[
\Pr[\mathsf{Win}_{3}] \leq \bigl(R_2^{\mathsf{vec,shift}}\bigr)^{q_s/2} \cdot \sqrt{\Pr[\mathsf{Win}_{3.5}]} + q_s \cdot \epsilon_\mathsf{IH}
\]
where $R_2^{\mathsf{vec,shift}} = (1+\chi^2_{\mathsf{direct}})^{n\ell}$ is the full-vector R\'enyi divergence \emph{per session} from the direct shift-invariance analysis (Theorem~\ref{thm:ih-direct-tight}), with per-coordinate $\chi^2_{\mathsf{direct}} \approx 7.13 \times 10^{-6}$ for $|S|=17$ (Corollary~\ref{cor:ih-shift-ml-dsa}, $n\ell = 1280$), giving $R_2^{\mathsf{vec,shift}} \approx 1.0092$ for $|S|=17$. The tail probability satisfies $\epsilon_\mathsf{IH} < 10^{-30}$ for $|S| \leq 17$ (this bound is $|S|$-dependent; see Theorem~\ref{thm:irwin-hall} for the general formula as a function of $|S|$, $n$, and $\gamma_1$). This is the only game hop that incurs the Irwin-Hall security loss; nonce shares $\mathbf{y}_j$ have high min-entropy by Theorem~\ref{thm:it-privacy} (statistical, no computational assumptions). The $\epsilon_\mathsf{IH} < 10^{-30}$ value is the smooth R\'enyi tail parameter (Theorem~\ref{thm:irwin-hall}); the $\epsilon$ column in Corollary~\ref{cor:ih-shift-ml-dsa} gives the per-coordinate rejection probability, a distinct quantity. See Remark~\ref{rem:multi-session-renyi} for the concrete multi-session accounting.
The conservative Raccoon-style alternative (Remark~\ref{rem:renyi-conservative}) gives $R_2^{\mathsf{vec}} \approx 2^{5.4}$ via the analytical bound, which is vacuous at $q_s \gtrsim 36$.

\paragraph{$\mathsf{Game}_4$ (Simulated Signatures).} Starting from uniform nonces (Game~3.5), simulate signing without knowing $\mathbf{s}_1$: sample $\mathbf{z} \getsr \{-\gamma_1+\beta+1, \ldots, \gamma_1-\beta-1\}^{n\ell}$, compute $\mathbf{r} = \mathbf{A}\mathbf{z} - c\mathbf{t}_1 \cdot 2^d$, set $\mathbf{w}_1 = \mathsf{HighBits}(\mathbf{r}, 2\gamma_2)$, program $H(\mu \| \mathbf{w}_1) := c$, and set hint $\mathbf{h} = \mathbf{0}$ (zero hint; since $\mathsf{UseHint}(\mathbf{0}, \mathbf{r}, 2\gamma_2) = \mathsf{HighBits}(\mathbf{r}, 2\gamma_2) = \mathbf{w}_1$, verification passes with hint-weight $\|\mathbf{h}\|_1 = 0 \leq \omega$).

\begin{remark}[Hint Distribution in Game~4]
The simulated hint $\mathbf{h} = \mathbf{0}$ differs in distribution from the real hint $\mathbf{h}_{\mathsf{real}} = \mathsf{MakeHint}(-c\mathbf{t}_0,\, \mathbf{r},\, 2\gamma_2)$ in Game~3.5, which may carry up to $\omega = 55$ nonzero bits.
This substitution does \emph{not} increase the adversary's forging advantage: in Game~3.5 the real hint encodes partial information about $\mathbf{t}_0$ via the carry pattern of $\mathsf{LowBits}(\mathbf{r}, 2\gamma_2)$; by contrast, $\mathbf{h} = \mathbf{0}$ reveals nothing about the key.
The adversary's oracle view in Game~4 is therefore strictly less informative, giving $\Pr[\mathsf{Win}_4] \leq \Pr[\mathsf{Win}_{3.5}]$ with no additional loss.
(This is a standard observation in Dilithium-family proofs~\cite{DKL18,FIPS204}.)
\end{remark}

The masked values $\mathbf{W}_j$ require separate treatment by profile. In P1 and P3+, individual $\mathbf{W}_j$ are transmitted \emph{only} to the TEE coordinator or CP respectively and are never observable by the adversary; the adversary's view contains only the aggregate $\mathbf{w}_1 = \mathsf{HighBits}(\sum_j \mathbf{W}_j, 2\gamma_2)$, which is already implicit in $\sigma$; the simulator need not produce individual $\mathbf{W}_j$, and Lemma~\ref{lem:mask-hiding} is not invoked for these profiles. In P2, parties broadcast individual $\mathbf{W}_j$; the simulator produces each as uniform by Lemma~\ref{lem:mask-hiding}, which applies since $|S \setminus C| \geq 2$ is a stated requirement of P2. By PRF security, $|\Pr[\mathsf{Win}_4] - \Pr[\mathsf{Win}_{3.5}]| \leq \binom{N}{2} \cdot \epsilon_{\mathsf{PRF}}$, with the union bound over all $\binom{N}{2}$ pairwise seeds (for P1 and P3+ this term is 0). The change in $\mathbf{z}$-distribution between Game 3.5 (truncated shifted uniform: $\mathbf{y}+c\mathbf{s}_1$ conditioned on $\|\mathbf{y}+c\mathbf{s}_1\|_\infty < \gamma_1-\beta$) and Game 4 (direct uniform on $\{-\gamma_1+\beta+1,\ldots,\gamma_1-\beta-1\}^{n\ell}$) is absorbed by the SelfTargetMSIS reduction~\cite{KLS18}: the adversary $\mathcal{B}$ extracts a valid SIS solution from any forgery $(\mathbf{z}^*, c^*, \mathbf{h}^*)$ satisfying $\|\mathbf{z}^*\|_\infty < \gamma_1-\beta$, and this norm check is enforced identically in both games regardless of which simulated distribution $\mathbf{z}$ was drawn from.

\paragraph{Reduction to M-SIS.} In $\mathsf{Game}_4$, we construct a SelfTargetMSIS adversary $\mathcal{B}$~\cite{KLS18}. Given SelfTargetMSIS instance $(\mathbf{A}, \mathbf{t}_1)$ with $\mathbf{A} \in \Rq^{k \times \ell}$ and $\mathbf{t}_1 \in \Rq^k$, $\mathcal{B}$ uses $(\mathbf{A}, \mathbf{t}_1)$ directly as the threshold scheme's verification key components and runs $\mathcal{A}$ in $\mathsf{Game}_4$, answering ROM queries by programming $H$ uniformly at random, and recording all query-response pairs. $\mathcal{B}$ picks a uniformly random index $i^* \in [q_H]$ as the ``target'' ROM query. When $\mathcal{A}$ outputs a forgery $(\mu^*, \mathbf{z}^*, c^*, \mathbf{h}^*)$, $\mathcal{B}$ checks whether $c^*$ equals the response at query $i^*$; if not, $\mathcal{B}$ aborts (probability $1/q_H$ of success). When the guess is correct, verification gives:
\[
\mathsf{HighBits}(\mathbf{A}\mathbf{z}^* - c^*\mathbf{t}_1 \cdot 2^d, 2\gamma_2) = \mathbf{w}_1^*
\]
The verification equation expands as:
\[
[\mathbf{A} \mid -\mathbf{t}_1 \cdot 2^d] \cdot \begin{bmatrix} \mathbf{z}^* \\ c^* \end{bmatrix} = 2\gamma_2\,\mathbf{w}_1^* + \mathbf{r}_0^*
\]
where $\mathbf{r}_0^* = \mathsf{LowBits}(\mathbf{A}\mathbf{z}^* - c^*\mathbf{t}_1 \cdot 2^d,\, 2\gamma_2)$ satisfies $\|\mathbf{r}_0^*\|_\infty \leq \gamma_2$. The solution vector satisfies $\|(\mathbf{z}^*, c^*)\|_\infty \leq \max(\gamma_1 - \beta, 1) < \gamma_1$, solving the SelfTargetMSIS instance (known target $2\gamma_2\mathbf{w}_1^*$, LowBits residual $\mathbf{r}_0^*$) with the appropriate norm bound. This is the SelfTargetMSIS direct reduction used in the FIPS~204 security proof~\cite{KLS18}. Since $\Pr[\mathcal{B}\ \text{succeeds}] \geq \Pr[\mathsf{Win}_4] / q_H$, we have $\Pr[\mathsf{Win}_4] \leq q_H \cdot \epsilon_{\mathsf{M\text{-}SIS}}$.

\paragraph{Conclusion.} Games 4 and the M-SIS reduction give $\Pr[\mathsf{Win}_{3.5}] \leq q_H \cdot \epsilon_{\mathsf{M\text{-}SIS}} + \binom{N}{2} \cdot \epsilon_{\mathsf{PRF}}$. Composing with the R\'enyi hop (Game $3 \to 3.5$, joint divergence $(R_2^{\mathsf{vec,shift}})^{q_s}$) via Lemma~\ref{lem:renyi-security} and the birthday bound (Game $1 \to 2$):
\[
\Adv^{\mathsf{EUF\text{-}CMA}}_\Pi(\mathcal{A}) \leq \bigl(R_2^{\mathsf{vec,shift}}\bigr)^{q_s/2} \cdot \sqrt{q_H \cdot \epsilon_{\mathsf{M\text{-}SIS}} + \tbinom{N}{2} \cdot \epsilon_{\mathsf{PRF}}} + \frac{(q_H + q_s)^2}{2^{256}} + q_s \cdot \epsilon_\mathsf{IH}
\]
For ML-DSA-65 with $|S| \leq 17$: $R_2^{\mathsf{vec,shift}} = (1+\chi^2_{\mathsf{direct}})^{1280} \approx 1.0092$ (direct shift-invariance, Corollary~\ref{cor:ih-shift-ml-dsa}); $\epsilon_\mathsf{IH} < 10^{-30}$ (smooth R\'enyi tail parameter, Theorem~\ref{thm:irwin-hall}); SHAKE-256 yields negligible $\epsilon_{\mathsf{PRF}}$. The security loss is $\approx 6.6 \times 10^{-3} \cdot q_s$ bits, giving a proven security bound of $\geq 96 - 0.0066 \cdot q_s$ bits for $|S| \leq 17$ (non-vacuous for $q_s < 16{,}000$; conservative bound in Remark~\ref{rem:renyi-conservative}; 96-bit figure assumes $q_H = 1$, see Remark~\ref{rem:baseline-qh}).
\end{proof}
\fi

\begin{remark}[Conservative Raccoon-Style R\'enyi Bound]
\label{rem:renyi-conservative}
The analytical Raccoon-style formula (Lemma~4.2 of~\cite{Raccoon2024}) upper-bounds the
per-coordinate R\'enyi divergence using
$R_2^{\epsilon,\mathsf{coord}} \leq 1 + |S|^4\beta^2/(4\gamma_1^2) \leq 1.003$
for $|S| \leq 17$, giving full-vector
$R_2^{\mathsf{vec}} = (R_2^{\epsilon,\mathsf{coord}})^{1280} \approx 2^{5.4}$
and a multi-session loss of $\approx 2.7 \cdot q_s$ bits.
This bound is \emph{vacuous} for $q_s \gtrsim 36$ signing queries.
Theorem~\ref{thm:unforgeability} instead uses the direct FFT computation
(Corollary~\ref{cor:ih-shift-ml-dsa}, Theorem~\ref{thm:ih-direct-tight}),
which gives $\chi^2_{\mathsf{direct}} \approx 7.13 \times 10^{-6}$ per coordinate
for $|S|=17$, yielding $R_2^{\mathsf{vec,shift}} \approx 1.0092$---an improvement of
${\approx}420\times$. The conservative bound should not be used for concrete security estimates.
\end{remark}

\begin{remark}[The 96-Bit Baseline and $q_H$ Dependence]
\label{rem:baseline-qh}
The ``96-bit baseline'' in Theorem~\ref{thm:unforgeability} arises from applying
Cauchy-Schwarz to the Rényi transfer:
$\sqrt{q_H \cdot \epsilon_{\mathsf{M\text{-}SIS}}} = 2^{-96}$
when $\epsilon_{\mathsf{M\text{-}SIS}} = 2^{-192}$ (NIST Level~3 hardness) and $q_H = 1$.
For $q_H$ random oracle queries the pre-Irwin-Hall security baseline is
$\tfrac{1}{2}(192 - \log_2 q_H)$ bits---e.g.\ $66$ bits for $q_H = 2^{60}$.
This $q_H$ dependence is a standard proof-technique artifact of ROM-based Fiat-Shamir
proofs and is identical in the single-signer FIPS~204 security analysis~\cite{KLS18,FIPS204};
it is not specific to the threshold construction.
The Irwin-Hall security loss of $0.0066 \cdot q_s$ bits
(Corollary~\ref{cor:ih-shift-ml-dsa}) is independent of $q_H$ and represents
the \emph{additional} cost introduced by the threshold nonce sharing above the
single-signer baseline.
\end{remark}

\subsection{Privacy}

\begin{definition}[Nonce Share Privacy]
\label{def:nonce-share-privacy}
Let $\Pi$ be a threshold signing protocol with Shamir nonce DKG over signing set $S$, $|S| \geq T$.
We say $\Pi$ satisfies \emph{$\tau$-nonce share privacy} if for every coalition $C \subset S$ with $|C| \leq T-1$
and every honest party $h \notin C$:
\[
  H_\infty\!\bigl(\mathbf{y}_h \mid \mathsf{View}_C^{\mathsf{DKG}}\bigr) \;\geq\; \tau,
\]
where $H_\infty$ denotes worst-case conditional min-entropy and $\mathsf{View}_C^{\mathsf{DKG}}$
is the joint view of all corrupted parties during the nonce DKG phase.
\end{definition}

\begin{theorem}[Nonce Share Privacy: Conditional Min-Entropy Bound]
\label{thm:it-privacy}
The Shamir nonce DKG protocol satisfies $\tau$-nonce share privacy (Definition~\ref{def:nonce-share-privacy})
with $\tau = n\ell \cdot \log_2\!\bigl(2\lfloor\gamma_1/|S|\rfloor + 1\bigr)$.
Explicitly, for any signing set $S$ with $|S| \geq T$ and any coalition $C \subset S$ with $|C| \leq T - 1$,
each honest party's nonce share $\mathbf{y}_h$ has conditional min-entropy
\[
H_\infty(\mathbf{y}_h \mid \mathsf{View}_C^{\mathsf{DKG}}) \;\geq\; n\ell \cdot \log_2\!\bigl(2\lfloor\gamma_1/|S|\rfloor + 1\bigr)
\]
where $\mathsf{View}_C^{\mathsf{DKG}}$ consists of: (1)~the corrupted parties' key shares; (2)~nonce DKG evaluations $\{\mathbf{f}_h(j)\}_{j \in C}$ of the honest party's polynomial at corrupted indices; and (3)~cross-evaluations $\{\mathbf{f}_{j'}(i)\}$ (for $j' \in S\setminus C$, $j'\neq h$, $i \in C$) of other honest parties' polynomials. (Equivalently, as corrupted parties generate their own polynomials, the adversary's view includes $\{\mathbf{f}_j\}_{j \in C}$ completely; the formal statement in Theorem~\ref{thm:it-privacy-formal} uses this equivalent conditioning.)
The cross-evaluations $\{\mathbf{f}_{j'}(i)\}$ from \emph{other} honest parties to corrupted parties do not affect the min-entropy of $\mathbf{f}_h(x_h)$: by Shamir's security theorem, $|C|\leq T-1$ evaluations of any other degree-$(T-1)$ polynomial $\mathbf{f}_{j'}$ are independent of its (irrelevant) constant term, and none of these evaluations are of $\mathbf{f}_h$ itself. Additionally, since each $j \in C$ is corrupted, the adversary generated $\mathbf{f}_j$ and knows all evaluations $\mathbf{f}_j(x_h)$ (sent privately to honest party $h$); these appear as a known additive constant $\sum_{j\in C}\mathbf{f}_j(x_h)$ in the share $\mathbf{y}_h$, which preserves the support size (and hence min-entropy) of the free term $\mathbf{f}_h(x_h)$.

For ML-DSA-65 ($n\ell = 1280$, $\gamma_1 = 2^{19}$), the bound equals approximately $1280 \cdot (20 - \log_2 |S|)$ bits (the exact formula $1280 \cdot \log_2(2\lfloor\gamma_1/|S|\rfloor+1)$ differs by $< 3$ bits for $|S| \leq 17$)---over $5\times$ the short secret key entropy $H(\mathbf{s}_1) \leq 1280 \cdot \log_2 9 \approx 4058$ bits for $|S| \leq 17$. (Note: Shamir shares $\mathbf{s}_{1,h} = \mathbf{f}(h)$ for $h \neq 0$ are $\Zq$-uniform and thus not short; the entropy comparison is against the underlying short secret $\mathbf{s}_1 = \mathbf{f}(0)$.) No computational assumptions are needed; the bound is purely statistical.
\end{theorem}

\ifccs
\begin{proof}[Proof sketch]
Statistical privacy follows from the null space argument: with $|C| \leq T-1$ corrupted evaluations of a degree-$(T-1)$ polynomial, the free parameter is $\Zq$-uniform per coefficient, constrained to $2\lfloor\gamma_1/|S|\rfloor + 1$ values per coefficient by the bounded-nonce requirement (since multiplication by $\mathbf{n}(0) \neq 0 \pmod{q}$ is a bijection on $\Zq$), giving the stated min-entropy bound. Full proof in Appendix~\ref{sec:appendix-proofs}.
\end{proof}
\else
\begin{proof}
The adversary's knowledge about the honest party $h$'s polynomial $\mathbf{f}_h$ consists of the evaluations $\{\mathbf{f}_h(j)\}_{j \in C}$ received during the nonce DKG. Since $|C| \leq T - 1$ and $\mathbf{f}_h$ has degree $T - 1$ (hence $T$ coefficients), the adversary has at most $T - 1$ linear equations in $T$ unknowns. The solution space has dimension $T - |C| \geq 1$ (exactly $1$ in the worst case $|C| = T - 1$).

\textbf{Null space argument.} The evaluation map $\mathbf{f}_h \mapsto (\mathbf{f}_h(j_1), \ldots, \mathbf{f}_h(j_{|C|}))$ has a null space of dimension $T - |C| \geq 1$. In the worst case $|C| = T - 1$, the null space is spanned by $\mathbf{n}(x) = \prod_{j \in C} (x - x_j)$ (the unique monic degree-$(T-1)$ polynomial with roots at $\{x_j\}_{j\in C}$). For $|C| < T - 1$, there are $T - |C| > 1$ free parameters, which can only \emph{increase} the adversary's uncertainty; hence the stated bound is conservative and the proof below treats the hardest case $|C| = T - 1$. The set of polynomials consistent with the adversary's observations is $\mathbf{f}_h(x) = \mathbf{g}(x) + \mathbf{t} \cdot \mathbf{n}(x)$, where $\mathbf{g}$ is any fixed interpolant and $\mathbf{t} \in \Rq^\ell$ is the free parameter.

\textbf{Posterior constraint on $\mathbf{t}$.} The highest-degree coefficient $\mathbf{a}_{h,T-1}$ is sampled $\Rq^\ell$-uniformly \emph{a priori}, so $\mathbf{t} = \mathbf{a}_{h,T-1} - \mathbf{g}_{T-1}$ is $\Zq$-uniform per coefficient (the other high-degree coefficients are likewise $\Zq$-uniform but are already determined by the adversary's evaluations in the $|C| = T-1$ case). The constant term $\hat{\mathbf{y}}_h = \mathbf{f}_h(0) = \mathbf{g}(0) + \mathbf{t} \cdot \mathbf{n}(0)$ must satisfy the bounded-nonce constraint $\hat{\mathbf{y}}_h \in [-\lfloor\gamma_1/|S|\rfloor,\, \lfloor\gamma_1/|S|\rfloor]^{n\ell}$. Since $\mathbf{n}(0) = \prod_{j \in C}(-x_j) \neq 0 \pmod{q}$, multiplication by $\mathbf{n}(0)$ is a bijection on $\Zq$; the constraint selects exactly $2\lfloor\gamma_1/|S|\rfloor + 1$ values per coefficient of $\mathbf{t}$. Because the prior is $\Zq$-uniform, conditioning on membership in this subset yields a \emph{uniform} posterior over the $2\lfloor\gamma_1/|S|\rfloor + 1$ valid values per coefficient.

\textbf{Min-entropy bound.} Since $\mathbf{n}(x_h) = \prod_{j \in C}(x_h - x_j) \neq 0 \pmod{q}$ (because $q$ is prime and all evaluation points lie in $\{1, \ldots, N\}$ with $N \leq q - 1$, so they are distinct nonzero elements of $\Zq$), multiplication by $\mathbf{n}(x_h)$ is a bijection on $\Zq$. The honest party's evaluation $\mathbf{f}_h(x_h) = \mathbf{g}(x_h) + \mathbf{t} \cdot \mathbf{n}(x_h)$ therefore inherits the same per-coefficient support size. The full nonce share is
\[
\mathbf{y}_h \;=\; \underbrace{\sum_{j \in C} \mathbf{f}_j(x_h)}_{\text{known to }\mathcal{A}}
\;+\; \mathbf{f}_h(x_h)
\;+\; \underbrace{\sum_{j \in S \setminus C,\, j \neq h} \mathbf{f}_j(x_h)}_{\text{unknown to }\mathcal{A};\;\text{adds further entropy}}.
\]
The first sum is a known constant (from corrupted parties' polynomials). The third sum is unknown (each $\mathbf{f}_j(x_h)$ with $j \neq h$ honest was sent only to party $h$), and the individual terms $\mathbf{f}_j(x_h)$ for distinct honest parties $j, j' \in S \setminus (C \cup \{h\})$ are \emph{independent} conditional on the adversary's view: each polynomial $\mathbf{f}_j$ is sampled independently, and the adversary observes only evaluations $\{\mathbf{f}_j(i)\}_{i \in C}$ with $i \neq x_h$ (since $h \notin C$); these $|C| \leq T-1$ evaluations determine a $(T - |C|)$-dimensional affine subspace of polynomials, within which the evaluation $\mathbf{f}_j(x_h)$ remains uniformly distributed over a set of size $q^{n\ell(T-|C|)}$ (the null space dimension). Since the adversary's observations of different honest polynomials involve disjoint evaluation sets (each $\mathbf{f}_j$ evaluated only at corrupted parties' points $\{i \in C\}$, which do not include $x_h$), the conditional distributions of $\mathbf{f}_j(x_h)$ and $\mathbf{f}_{j'}(x_h)$ are independent, so the third sum can only increase entropy. Using only the entropy contribution from $\mathbf{f}_h(x_h)$ as a conservative lower bound (discarding the additional entropy from the third sum):
\[
H_\infty(\mathbf{y}_h \mid \mathsf{View}_C^{\mathsf{DKG}}) \;\geq\; n\ell \cdot \log_2\!\bigl(2\lfloor\gamma_1/|S|\rfloor + 1\bigr)
\]

\end{proof}
\fi

\begin{remark}[Key Privacy of the Aggregated Signature]
\label{rem:key-privacy}
Theorem~\ref{thm:it-privacy} establishes privacy of the nonce \emph{shares} $\mathbf{y}_h$---a purely statistical guarantee requiring no computational assumptions. Key privacy of the aggregated signature $\sigma = (\tilde{c}, \mathbf{z}, \mathbf{h})$---whether $\sigma$ reveals $\mathbf{s}_1$ beyond the public key $\mathbf{t}_1$---is a separate, stronger property. The fundamental obstacle is that the equation $\mathbf{z} = \mathbf{y} + c\mathbf{s}_1$ with sparse challenge $c$ and Irwin-Hall nonce $\mathbf{y}$ does not fit the Module-LWE error structure (which requires short error vectors); no formal hardness reduction is known even for single-signer FIPS~204~\cite{FIPS204}. All threshold variants inherit this open problem directly from the single-signer setting, without introducing additional barriers.
\end{remark}

\begin{remark}[Why SD $= 0$ Is Unattainable (with Derivation)]
\label{rem:sd-zero}
The bounded-nonce constraint ($\hat{\mathbf{y}}_h \in [-\lfloor\gamma_1/|S|\rfloor, \lfloor\gamma_1/|S|\rfloor]^{n\ell}$) restricts the free parameter $\mathbf{t}$ to $\approx 2\gamma_1/|S|$ values per coefficient, whereas $\Zq$-uniformity would require $q \approx 2^{23}$ values.

\textbf{Derivation of SD lower bound.} Let $P_{\mathsf{post}}$ be the posterior distribution of a single coefficient of $\mathbf{f}_h(x_h)$ given the adversary's view (Theorem~\ref{thm:it-privacy} shows this is uniform over $M = 2\lfloor\gamma_1/|S|\rfloor + 1$ values), and let $P_{\mathsf{unif}}$ be the uniform distribution over $\Zq$. The statistical distance is:
\begin{align*}
\mathsf{SD}(P_{\mathsf{post}}, P_{\mathsf{unif}})
&= \tfrac{1}{2} \sum_{x \in \Zq}
  \bigl| P_{\mathsf{post}}(x) - P_{\mathsf{unif}}(x) \bigr| \\
&= \tfrac{1}{2} \Bigl(
  M \cdot \bigl| \tfrac{1}{M} - \tfrac{1}{q} \bigr|
  + (q - M) \cdot \tfrac{1}{q} \Bigr) \\
&= \tfrac{1}{2} \Bigl(
  \bigl| 1 - \tfrac{M}{q} \bigr| + \tfrac{q - M}{q} \Bigr) \\
&= 1 - \tfrac{M}{q}
  \;=\; 1 - \tfrac{2\lfloor\gamma_1/|S|\rfloor + 1}{q} > 0
\end{align*}
For ML-DSA-65 with $\gamma_1 = 2^{19}$, $q = 8{,}380{,}417$, and $|S| = 17$: $M = 2\lfloor 2^{19}/17\rfloor + 1 = 61{,}681$, so $\mathsf{SD} = 1 - M/q \approx 1 - 2{\cdot}2^{19}/(17{\cdot}q) \approx 1 - 2^{20}/(17{\cdot}2^{23}) = 1 - 2^{-3}/17 \approx 0.9926 > 0$ (very far from zero).

Achieving SD $= 0$ would require either unbounded nonces (incompatible with rejection sampling) or adding pairwise masks on $\mathbf{z}_i$ with $|S| \geq T + 1$ (Two-Honest assumption). The min-entropy guarantee is nonetheless strong: for $|S| \leq 17$, nonce entropy exceeds secret key entropy by $\geq 5\times$, ensuring robust share privacy without computational assumptions.
\end{remark}

\begin{corollary}[No Two-Honest Requirement in Coordinator-Based Profiles]
\label{cor:no-two-honest}
In Profiles~P1 and P3+, the protocol achieves nonce share privacy with
$|S| = T$. In Profile~P2, mask hiding (Lemma~\ref{lem:mask-hiding})
applies under $|S \setminus C| \geq 2$.
\end{corollary}

\ifccs
\begin{proof}[Proof sketch]
Immediate from Theorem~\ref{thm:it-privacy} and coordinator communication (Remark~\ref{rem:wi-private}). Full proof in Appendix~\ref{sec:appendix-proofs}.
\end{proof}
\else
\begin{proof}
Immediate from Theorem~\ref{thm:it-privacy} and the coordinator communication model (Remark~\ref{rem:wi-private}). In P1 and P3+, the coordinator aggregates $\mathbf{W}_i$ privately and broadcasts only $(\mathbf{w}_1, \tilde{c})$, which is implicit in any valid signature. The nonce share $\mathbf{y}_h$ has high conditional min-entropy even with $|S \setminus C| = 1$ (statistical, by Theorem~\ref{thm:it-privacy}). The mask-hiding condition $|S \setminus C| \geq 2$ (Lemma~\ref{lem:mask-hiding}) applies only to P2 where $\mathbf{W}_i$ are broadcast.
\end{proof}
\fi

\begin{corollary}[Multi-Session Privacy]
\label{cor:unlimited-sessions}
Let $M_S = 2\lfloor\gamma_1/|S|\rfloor + 1$. Over $K$ sessions with
fresh nonce DKGs, total nonce min-entropy is at least
$K n\ell \cdot \log_2 M_S$ bits.
For $|S| \leq 17$, this exceeds $K \cdot 20{,}000$ bits.
\end{corollary}

\ifccs\else
\begin{proof}
Each signing session uses a fresh nonce DKG with independent randomness, so the free parameters $\mathbf{t}^{(1)}, \ldots, \mathbf{t}^{(K)}$ are independent. The min-entropy of a product of independent distributions equals the sum: $H_\infty(\mathbf{y}_h^{(1)}, \ldots, \mathbf{y}_h^{(K)} \mid \mathsf{View}_C) = \sum_k H_\infty(\mathbf{y}_h^{(k)} \mid \mathsf{View}_C^{(k)})$. The adversary accumulates $K$ equations $\mathbf{z}_h^{(k)} = \mathbf{y}_h^{(k)} + c^{(k)} \cdot \mathbf{s}_{1,h}$ sharing the same $\mathbf{s}_{1,h}$; after sufficiently many sessions, $\mathbf{s}_{1,h}$ is uniquely determined by the overdetermined system over $\Zq$ (as in single-signer ML-DSA). Whether extracting $\mathbf{s}_{1,h}$ is computationally hard in this multi-session setting is an open problem shared with single-signer FIPS 204 (Remark~\ref{rem:key-privacy}); no formal reduction from Module-LWE has been established for this bounded-distance decoding variant.
\end{proof}
\fi

\begin{lemma}[Mask Hiding (for $\mathbf{V}_i$ and $\mathbf{W}_i$)]
\label{lem:mask-hiding}
An honest party's masked r0-check share $\mathbf{V}_j = \lambda_j c \mathbf{s}_{2,j} + \mathbf{m}_j^{(s2)}$ and masked commitment $\mathbf{W}_j = \lambda_j \mathbf{w}_j + \mathbf{m}_j^{(w)}$ are computationally indistinguishable from uniform, given all information available to a coalition of $< T$ parties, provided $|S \setminus C| \geq 2$.
\end{lemma}

\ifccs
\begin{proof}[Proof sketch]
Each honest party's mask includes a per-session PRF term with a key unknown to $\mathcal{A}$; PRF security implies the masked value is computationally indistinguishable from uniform. Mask hiding is only required in P2 (P1/P3+ send $\mathbf{W}_i$ only to the coordinator). Full proof in Appendix~\ref{sec:appendix-proofs}.
\end{proof}
\else
\begin{proof}
Since $|S \setminus C| \geq 2$, party $j$'s mask includes a PRF term $\PRF(\mathsf{seed}_{j,k'}, \mathsf{session\_id} \| \mathsf{dom})$ for some honest $k' \neq j$, where $\mathsf{session\_id}$ is the unique per-signing-session identifier (e.g., derived from the preprocessed nonce commitment $\mathsf{Com}_j$) and $\mathsf{dom}$ is the fixed domain label for $\mathbf{V}$ or $\mathbf{W}$. Because $\mathsf{seed}_{j,k'}$ is unknown to $\mathcal{A}$ and $\mathsf{session\_id}$ is fresh each session, $\PRF(\mathsf{seed}_{j,k'}, \mathsf{session\_id} \| \mathsf{dom})$ is computationally indistinguishable from a fresh uniform element of $\Rq^k$ by PRF security. Thus $\mathbf{V}_j = (\text{fixed}) + (\text{uniform-looking}) \stackrel{c}{\approx} \text{uniform}$. The same argument applies to $\mathbf{W}_j$.

Note: Mask hiding requires $|S \setminus C| \geq 2$. In P1 and P3+, individual $\mathbf{W}_i$ are sent only to the coordinator (Remark~\ref{rem:wi-private}), so mask hiding for $\mathbf{W}_i$ is only needed in P2 where parties broadcast directly. The nonce share $\mathbf{y}_i$ achieves nonce share privacy with even $|S \setminus C| = 1$ via the nonce DKG (Theorem~\ref{thm:it-privacy}).
\end{proof}
\fi

\subsection{Nonce Distribution Security}
\label{sec:nonce-security}

Our threshold protocol produces nonces following an Irwin-Hall distribution rather than uniform. Using smooth Rényi divergence analysis~\cite{Raccoon2024}, we prove this introduces only a bounded security loss.

\begin{theorem}[Raccoon-Style Conservative Bound (comparison only; vacuous for $q_S \gtrsim 36$)]
\label{thm:irwin-hall}
\emph{This theorem states the conservative analytical Raccoon-style bound.
It becomes vacuous for $q_S \gtrsim 36$ and \textbf{should not be used for concrete security claims}.
See Theorem~\ref{thm:unforgeability} and Corollary~\ref{cor:ih-loss-main} for the primary direct-shift bound
($< 0.013\cdot q_S$ bits, non-vacuous for $q_S < 16{,}000$).}

Let $\Pi$ be our threshold ML-DSA with Irwin-Hall nonces. For any PPT adversary making
$q_S$ signing queries:
\[
\Adv^{\mathsf{EUF\text{-}CMA}}_\Pi(\mathcal{A})
\leq \bigl(R_2^{\mathsf{vec}}\bigr)^{q_S/2} \cdot \sqrt{\Adv^{\mathsf{EUF\text{-}CMA}}_{\Pi^*}(\mathcal{A})}
+ q_S \cdot \epsilon
\]
where the per-coordinate smooth R\'enyi divergence satisfies
$R_2^{\epsilon,\mathsf{coord}} \leq 1 + |S|^4\beta^2/(4\gamma_1^2)$
(see Appendix~\ref{app:technical-lemmas}, Theorem~\ref{thm:ih-security-full};
the formula uses the signing set size $|S|$, not the ring degree $n = 256$).
Since nonce coordinates are sampled independently, the full-vector divergence is
$R_2^{\mathsf{vec}} = (R_2^{\epsilon,\mathsf{coord}})^{n\ell}$ with $n\ell = 1280$ for ML-DSA-65.
The $(R_2^{\mathsf{vec}})^{q_S/2}$ factor accounts for multi-session composition
(Remark~\ref{rem:multi-session-renyi}).
The tail probability satisfies $\epsilon < 10^{-30}$.
For $|S| \leq 17$: $R_2^{\mathsf{vec}} \approx 2^{5.4}$, making this bound vacuous at $q_S \gtrsim 36$.
\end{theorem}

\ifccs
\begin{proof}[Proof sketch]
The smooth R\'enyi probability transfer (Cauchy-Schwarz) gives the $(R_2^{\mathsf{vec}})^{q_S/2}$ factor; session independence gives $R_2^{\mathsf{vec}} = (R_2^{\epsilon,\mathsf{coord}})^{n\ell}$. For $|S| \leq 17$: $R_2^{\mathsf{vec}} \approx 2^{5.4}$, per-session bound $\approx 2^{-93.3}$; direct bound (Corollary~\ref{cor:ih-shift-ml-dsa}) $< 0.013\cdot q_S$ bits for all $|S| \leq 33$. Full calculation in Appendix~\ref{app:technical-lemmas}.
\end{proof}
\else
\begin{proof}
The smooth R\'enyi divergence probability transfer (Cauchy-Schwarz) gives:
for any event $E$, $\Pr_P[E] \leq \sqrt{R_2^\epsilon(P\|Q) \cdot \Pr_Q[E]} + \epsilon$.
For $q_S$ independent signing sessions, the joint Rényi divergence is
$(R_2^{\mathsf{vec}})^{q_S}$ by the product rule (Definition~\ref{def:renyi}), giving
the $(R_2^{\mathsf{vec}})^{q_S/2}$ factor above. Since the nonce coordinates are sampled
independently, $R_2^{\mathsf{vec}} = (R_2^{\epsilon,\mathsf{coord}})^{n\ell}$.
For ML-DSA-65 ($n\ell = 1280$, $\Adv_{\Pi^*} \leq 2^{-192}$, $\epsilon < 2^{-100}$),
the \emph{per-session} cost is:
\[
q_S = 1{:}\quad \Adv_\Pi \leq \bigl(R_2^{\mathsf{vec}}\bigr)^{1/2} \cdot 2^{-96} + 2^{-100}
\]

\begin{center}
\footnotesize
\begin{tabular}{cccc}
$|S|$ & $R_2^{\epsilon,\mathsf{coord}}$ & $R_2^{\mathsf{vec}}$ & Per-session bound ($q_S=1$): $\sqrt{R_2^{\mathsf{vec}}} \cdot 2^{-96}$ \\
\hline
$\leq 17$ & 1.003 & $\leq 2^{5.5}$ ($\approx 2^{5.4}$) & $\approx 2^{-93.3}$ \\
$\leq 25$ & 1.014 & $\approx 2^{25}$ & $\approx 2^{-83.5}$ \\
$\leq 33$ & 1.041 & $\approx 2^{75}$ & $\approx 2^{-58.5}$ \\
\end{tabular}
\end{center}

\noindent \textbf{Multi-session scaling.}
The conservative bound $(R_2^{\mathsf{vec}})^{q_S/2}$ grows rapidly: for $|S|=17$,
it is vacuous for $q_S \gtrsim 36$ queries. The \emph{direct} bound
(Corollary~\ref{cor:ih-shift-ml-dsa}) should always be used for concrete security
estimates: the EUF-CMA loss is $< 0.013 \cdot q_S$ bits for all $|S| \leq 33$
(Remark~\ref{rem:multi-session-renyi}). Detailed conservative calculations appear in
Appendix~\ref{sec:supp-renyi-numerical}.
\end{proof}
\fi

\subsection{Malicious Security}

Our protocol uses an \emph{optimistic} approach: the normal signing path has no verification overhead. If failures exceed statistical expectation (e.g., $>33$ consecutive aborts, where honest probability is $(0.75)^{33} < 10^{-4}$), a blame protocol identifies misbehaving parties (see Section~\ref{sec:blame}).

\subsection{Concrete Security Bounds}

\begin{table}[h]
\centering
\ifccs{\small
\begin{tabular}{lp{3.2cm}c}
\toprule
\textbf{Attack} & \textbf{Assumption} & \textbf{Security} \\
\midrule
Forgery & M-SIS$(256,6,5,q,\gamma_1{-}\beta)$ & 192 bits$^*$ \\
Key Recovery & M-LWE$(256,6,5,q,\eta)$ & 192 bits \\
Nonce Privacy ($\mathbf{y}_i$) & Statistical ($\geq 20$K bits min-entropy) & N/A$^\dagger$ \\
Mask Privacy ($\mathbf{V}_i$, $\mathbf{W}_i$) & PRF & 192 bits \\
\bottomrule
\end{tabular}}\else
\begin{tabular}{lcc}
\toprule
\textbf{Attack} & \textbf{Assumption} & \textbf{Security} \\
\midrule
Forgery & M-SIS$(256,6,5,q,\gamma_1{-}\beta)$ & 192 bits$^*$ \\
Key Recovery & M-LWE$(256,6,5,q,\eta)$ & 192 bits \\
Nonce Privacy ($\mathbf{y}_i$) & Statistical ($\geq 20$K bits min-entropy) & N/A$^\dagger$ \\
Mask Privacy ($\mathbf{V}_i$, $\mathbf{W}_i$) & PRF & 192 bits \\
\bottomrule
\end{tabular}\fi
\caption{Concrete security of masked threshold ML-DSA-65.
Parameters match NIST Security Level~3.
$^*$Underlying M-SIS hardness is 192 bits; the proven threshold bound
adds an Irwin-Hall nonce cost via
$R_2^{\mathsf{vec}} = (R_2^{\epsilon,\mathsf{coord}})^{1280}$:
per-session bound $\leq (R_2^{\mathsf{vec}})^{1/2}\cdot 2^{-96}
\approx 2^{-93.3}$ for $|S| \leq 17$ (Theorem~\ref{thm:irwin-hall});
over $q_s$ sessions the total loss is $< 0.013\cdot q_s$ bits
(Corollary~\ref{cor:ih-shift-ml-dsa},
Remark~\ref{rem:multi-session-renyi}).
The direct bound supports $q_s \leq 1{,}600$ with $\leq 10$-bit loss.
$^\dagger$Nonce privacy is statistical (no computational assumptions;
Remark~\ref{rem:key-privacy}).
See Theorem~\ref{thm:it-privacy}.}
\label{tab:security-bounds}
\end{table}

The threshold construction inherits the M-SIS/M-LWE hardness of single-signer ML-DSA. The proven forgery bound uses the direct shift-invariance analysis (Theorem~\ref{thm:ih-direct-tight}, Corollary~\ref{cor:ih-shift-ml-dsa}), giving $< 0.013$ bits per signing session for $|S| \leq 33$ ($< 0.013 \cdot q_s$ bits total; Corollary~\ref{cor:ih-loss-main}), resolving the scalability concern of prior work; developing an even tighter full-vector bound remains an interesting direction. The conservative Raccoon-style bound is in Remark~\ref{rem:renyi-conservative}.

\subsection{Security Across Deployment Profiles}

All three deployment profiles (P1, P2, P3+) achieve identical EUF-CMA security bounds, as the final signature structure $\sigma = (\tilde{c}, \mathbf{z}, \mathbf{h})$ is independent of the profile used. Additionally, all three profiles achieve UC security under their respective trust assumptions.

\begin{definition}[Ideal Signing Functionality $\mathcal{F}_{\mathsf{ThreshSig}}$]
\label{def:f-thresh-sig}
The functionality $\mathcal{F}_{\mathsf{ThreshSig}}(T, N, \mathsf{pp})$ operates as follows, where $\mathbf{z}_i \in \Rq^\ell$ denotes each party's response share and $\mathbf{V}_i \in \Rq^k$ its r0-check share. On receiving $(\mathsf{SIGN}, \mathsf{sid}, \mu)$ from the combiner together with $T$ input shares $\{(\mathbf{z}_i, \mathbf{V}_i)\}_{i \in S}$ for $S \subseteq [N]$, $\mathcal{F}_{\mathsf{ThreshSig}}$:
\begin{enumerate}
  \item \textbf{Format check}: If $|S| < T$ or any input share fails the format constraint ($\mathbf{z}_i \notin \Rq^\ell$ or $\mathbf{V}_i \notin \Rq^k$), output $\bot$ immediately to the combiner without querying $\mathcal{S}$.
  \item Forwards $(\mathsf{SIGN}, \mathsf{sid}, \mu, S)$ to the simulator $\mathcal{S}$ and awaits a signature $\sigma$.
  \item Outputs $\sigma$ to the combiner if $\mathsf{Verify}(\mathsf{pk}, \mu, \sigma) = 1$; otherwise outputs $\bot$.
\end{enumerate}
The verification check in step~3 ensures the simulator produces well-formed output; EUF-CMA security is proven separately via game-based reduction (Theorem~\ref{thm:unforgeability}). A valid forgery $(\mu', \sigma')$ with $\mu' \neq \mu$ would be accepted in step~3, but such a forgery exists only with negligible probability.

The functionality reveals to $\mathcal{S}$: the message $\mu$, the signing-set $S$, and the outputs of corrupted parties ($\{(\mathbf{z}_i, \mathbf{V}_i)\}_{i \in C}$, where $C \subseteq S$ is the corrupted set). It does not reveal any honest party's key share $(\mathbf{s}_{1,h}, \mathbf{s}_{2,h})$ or nonce share $\mathbf{y}_h$ to $\mathcal{S}$.
\end{definition}

\subsubsection{Profile P1 UC Security}

We model the TEE as an ideal functionality that correctly executes the r0-check and hint computation.

\begin{definition}[TEE Ideal Functionality $\mathcal{F}_{\mathsf{TEE}}$]
\label{def:f-tee}
The functionality $\mathcal{F}_{\mathsf{TEE}}$ operates as follows.

\medskip\noindent\textbf{Round~2.}\enspace
Receive $\{(\mathbf{W}_i, r_i)\}_{i \in S}$ privately. Compute
$\mathbf{W} = \sum_i \mathbf{W}_i$,
$\mathbf{w}_1 = \mathsf{HighBits}(\mathbf{W}, 2\gamma_2)$,
$\tilde{c} = H_{\mathsf{chal}}(\mu \| \mathbf{w}_1)$;
broadcast $(\mathbf{w}_1, \tilde{c})$. Individual $\mathbf{W}_i$ are
never revealed.

\medskip\noindent\textbf{Round~3.}\enspace
Receive $\{(\mathbf{z}_i, \mathbf{V}_i)\}_{i \in S}$. Aggregate:
$\mathbf{z} = \textstyle\sum_i \lambda_i \mathbf{z}_i \bmod q$,
$c\mathbf{s}_2 = \textstyle\sum_i \mathbf{V}_i \bmod q$.
Abort with $\bot$ if $\|\mathbf{z}\|_\infty \geq \gamma_1 - \beta$
or $\|\mathsf{LowBits}(\mathbf{w} - c\mathbf{s}_2,\,
2\gamma_2)\|_\infty \geq \gamma_2 - \beta$. Otherwise compute
$\mathbf{h} = \mathsf{MakeHint}(-c\mathbf{t}_0,\,
\mathbf{A}\mathbf{z} - c\mathbf{t}_1 \cdot 2^d,\, 2\gamma_2)$;
output $(\tilde{c}, \mathbf{z}, \mathbf{h})$. Never reveal $c\mathbf{s}_2$
or individual $\mathbf{W}_i$.
\end{definition}

\begin{theorem}[Profile P1 UC Security]
\label{thm:p1-uc}
If the TEE correctly implements $\mathcal{F}_{\mathsf{TEE}}$, then Protocol P1 UC-realizes the threshold signing functionality $\mathcal{F}_{\mathsf{ThreshSig}}$ against static adversaries corrupting up to $T-1$ signing parties.
\end{theorem}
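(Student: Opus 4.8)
\textbf{Proof proposal for Theorem~\ref{thm:p1-uc}.}

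The plan is to exhibit a PPT simulator $\mathcal{S}$ interacting with $\mathcal{F}_{\mathsf{ThreshSig}}$ in the ideal world and the dummy adversary in the real world, and argue that for every static adversary $\mathcal{A}$ corrupting a set $C$ with $|C| \le T-1$, the environment $\mathcal{Z}$'s views in the real execution (Protocol~P1 in the $\mathcal{F}_{\mathsf{TEE}}$-hybrid model) and the ideal execution are computationally indistinguishable. First I would set up the hybrid model carefully: the real-world execution runs Algorithm~\ref{alg:sign} with the aggregation/r0-check/hint steps delegated to the ideal box $\mathcal{F}_{\mathsf{TEE}}$ of Definition~\ref{def:f-tee}, so the adversary sees only the broadcast transcript ($\mathsf{Com}_i$, $(\mathbf{W}_i, r_i)$, $(\mathbf{U}_i,\mathbf{V}_i)$ for all parties) plus the public $\mathcal{F}_{\mathsf{TEE}}$-output $(\tilde c, \mathbf{z}, \mathbf{h})$ or $\bot$, but never $c\mathbf{s}_2$. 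I would then recall that $\mathcal{F}_{\mathsf{ThreshSig}}$ on a signing request produces a signature by running (or being handed) a FIPS~204 signature under $\pk$, leaks nothing about $\sk$, and delivers the same $\sigma$ to all parties, with abort outcomes modeled by a $\bot$ that $\mathcal{S}$ may trigger with the correct distribution.

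The core of the argument is the construction of $\mathcal{S}$, which essentially lifts the privacy simulator of Theorem~\ref{thm:privacy} and Lemma~\ref{lem:mask-hiding} into the UC framework and additionally internally simulates $\mathcal{F}_{\mathsf{TEE}}$. Concretely: (i) at setup, $\mathcal{S}$ runs $\KeyGen$ honestly to obtain $\pk$ and hands $\{\sk_i\}_{i\in C}$ to $\mathcal{A}$ (Shamir security of the honest shares follows from $|C|<T$); (ii) upon a signing activation, $\mathcal{S}$ obtains $\sigma = (\tilde c, \mathbf{z}, \mathbf{h})$ from $\mathcal{F}_{\mathsf{ThreshSig}}$, reconstructs $\mathbf{w}$ via $\mathbf{w}_1 = \mathsf{UseHint}(\mathbf{h}, \mathbf{A}\mathbf{z}-c\mathbf{t}_1 2^d)$, samples honest-party commitments $\mathsf{Com}_j \getsr \{0,1\}^{256}$, programs the random oracle $H_{\mathsf{chal}}(\mu \| \mathbf{w}_1) := \tilde c$, and samples honest-party masked values $\mathbf{W}_j, \mathbf{U}_j$ uniformly in $\Rq^k$, $\Rq^\ell$ subject to the linear constraints $\sum_{j\notin C}\mathbf{W}_j = \mathbf{w} - \sum_{i\in C}\mathbf{W}_i$ and $\sum_{j\notin C}\mathbf{U}_j = \mathbf{z} - \sum_{i\in C}\mathbf{U}_i$, exactly as in the proof of Theorem~\ref{thm:privacy}; (iii) $\mathcal{S}$ also samples honest-party $\mathbf{V}_j$ uniformly subject to their sum equaling the value $c\mathbf{s}_2$ implied by the simulated $\mathbf{w}$ and the actual public key relation, but since the real $\mathcal{F}_{\mathsf{TEE}}$ output never discloses $c\mathbf{s}_2$, the $\mathbf{V}_j$ only need to be marginally uniform, which they are by mask hiding; (iv) $\mathcal{S}$ internally runs the $\mathcal{F}_{\mathsf{TEE}}$ code on the combined (adversarial and simulated) contributions, and if that code would output $\bot$, $\mathcal{S}$ instructs $\mathcal{F}_{\mathsf{ThreshSig}}$ to abort this attempt (and restart), matching the real-world abort distribution. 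Indistinguishability then follows by the same hybrid chain as Theorem~\ref{thm:privacy}: replace honest-honest PRF outputs by random (PRF security, loss $\binom{|H|}{2}\epsilon_{\mathsf{PRF}}$), then invoke the one-time-pad property over $\Rq^d$ using that $|H|\ge 2$ (Remark~\ref{rem:two-honest}) so each honest masked value carries a fresh uniform term unknown to $\mathcal{A}$; the programmed-challenge step is sound except on a random-oracle collision, contributing $(q_H+q_s)^2/2^{256}$; and the simulated $(\tilde c,\mathbf{z},\mathbf{h})$ is a genuine FIPS~204 signature so it is distributed identically whether produced by $\mathcal{F}_{\mathsf{ThreshSig}}$ or by the real protocol (given correctness, Theorem~\ref{thm:correctness}). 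A subtle point to handle is that $\mathcal{A}$ controls corrupted parties' broadcasts and may deviate; since $\mathcal{F}_{\mathsf{TEE}}$ merely sums the received $\mathbf{U}_i,\mathbf{V}_i$ and checks norms, any adversarial deviation only changes whether the attempt aborts or produces a (still FIPS-204-valid) signature under $\pk$, and $\mathcal{S}$ faithfully reproduces this by running the same $\mathcal{F}_{\mathsf{TEE}}$ logic internally, so extraction of an ``effective input'' from $\mathcal{A}$ is trivial — there is no rewinding and no additional soundness slack beyond the single-signer ML-DSA forgery bound already established in Theorem~\ref{thm:unforgeability}, which rules out $\mathcal{A}$ forcing $\mathcal{F}_{\mathsf{TEE}}$ to emit a signature on an unqueried message.

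The main obstacle I anticipate is \emph{faithfully matching the abort distribution across the three rejection conditions} (z-bound, r0-check, hint-weight) in the ideal world without knowing $\mathbf{s}_1,\mathbf{s}_2$: the real-world abort event for a given attempt is a deterministic function of $\mathbf{y}$ (equivalently $\mathbf{z}$) and the true $\mathbf{s}_2$, whereas $\mathcal{S}$ only learns the $\mathbf{z}$ that survives all checks when $\mathcal{F}_{\mathsf{ThreshSig}}$ returns a signature. The clean resolution is to have $\mathcal{F}_{\mathsf{ThreshSig}}$ itself model the rejection-sampling loop (drawing a candidate, applying the ML-DSA accept predicate, and either outputting $\sigma$ or signaling retry), so that $\mathcal{S}$ simply relays retry/accept outcomes; then on an accepted attempt $\mathcal{S}$ runs the simulation above, and on a rejected attempt $\mathcal{S}$ samples honest $\mathbf{W}_j,\mathbf{U}_j,\mathbf{V}_j$ uniformly subject only to their aggregate being a uniformly-random reject-consistent value, relying again on mask hiding so that $\mathcal{Z}$ cannot tell these partial transcripts from real ones (the adversary never sees $\mathbf{w}$ on a failed attempt beyond the masked broadcasts, since $\mathcal{F}_{\mathsf{TEE}}$ returns only $\bot$). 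Secondarily, I would need to check that the random-oracle programming in step~(ii) never collides with a prior query — handled by the same bad-event bound as in Theorem~\ref{thm:unforgeability} — and that $c\mathbf{t}_0$ needed for $\mathsf{MakeHint}$ inside the simulated $\mathcal{F}_{\mathsf{TEE}}$ is recomputable by $\mathcal{S}$ from $(\mathbf{A}\mathbf{z}-\mathbf{w}) + c\mathbf{s}_2 - c\mathbf{t}_1 2^d$ as in Section~\ref{sec:r0check}, which it is since $\mathcal{S}$ knows $\mathbf{s}_2$ (it ran $\KeyGen$). Assembling these, the distinguishing advantage of $\mathcal{Z}$ is bounded by $\binom{N}{2}\epsilon_{\mathsf{PRF}} + (q_H+q_s)^2/2^{256} + \Adv^{\mathsf{EUF\text{-}CMA}}_{\Pi}(\mathcal{A})$, which is negligible, establishing that Protocol~P1 UC-realizes $\mathcal{F}_{\mathsf{ThreshSig}}$.
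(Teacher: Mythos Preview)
Your proposal is correct and follows the paper's approach: build the UC simulator by reusing the privacy simulator of Theorem~\ref{thm:privacy}, justify the uniform-looking honest broadcasts via Lemma~\ref{lem:mask-hiding} (using the two-honest requirement), let $\mathcal{F}_{\mathsf{TEE}}$ absorb the r0-check so $c\mathbf{s}_2$ never leaks, and bound the distinguishing advantage by the PRF term $\binom{N}{2}\epsilon_{\mathsf{PRF}}$. Your write-up is considerably more thorough than the paper's three-line argument---your handling of abort-distribution matching across the three rejection conditions, the random-oracle programming collision term, and the EUF-CMA term guarding against adversarially forced signatures are all details the paper omits---but the skeleton is identical.
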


\ifccs
\begin{proof}[Proof sketch]
The simulator $\mathcal{S}$ uses DKG values to produce consistent corrupted-party evaluations, samples $\mathsf{Com}_h$ uniformly in the ROM (preimage resistance of $H$), and computes $\mathbf{z}_h = \mathbf{y}_h + c\mathbf{s}_{1,h}$ identically to the real protocol (perfect simulation, no statistical gap). The TEE hides all internal state; only $(\mathbf{w}_1, \tilde{c}, \sigma)$ exit the enclave. Replacing honest-honest PRF terms with uniform introduces advantage $\leq \binom{N}{2}\epsilon_{\mathsf{PRF}}$. The UC framework setup is in Appendix~\ref{app:uc-framework} (which contains extended proofs for Profiles P2 and P3+; the P1 proof above is self-contained for this profile).
\end{proof}
\else
\begin{proof}
We construct an explicit simulator $\mathcal{S}$ for each protocol round and prove indistinguishability.

\paragraph{Simulator construction.}
\begin{enumerate}
    \item \textbf{Round 0 (DKG):} $\mathcal{S}$ generates honest key shares $\mathbf{s}_{1,h}, \mathbf{s}_{2,h}$ and nonce shares $\mathbf{y}_h$ inside the simulated $\mathcal{F}_{\mathsf{TEE}}$, following the real key DKG and nonce DKG protocols. Since DKG outputs to corrupted parties are determined by the $(T,N)$-Shamir sharing structure, $\mathcal{S}$ produces consistent evaluations $\{\mathbf{f}_h(j)\}_{j \in C}$.

    \item \textbf{Round 1 (Commitments):} $\mathcal{S}$ samples $\tilde{\mathbf{y}}_h, \tilde{\mathbf{w}}_h, \tilde{r}_h$ from the simulated DKG, computes the committed value $\mathsf{Com}_h \getsr \{0,1\}^{256}$ uniformly, and programs the random oracle so that $H(\mathsf{``com''} \| \tilde{\mathbf{y}}_h \| \tilde{\mathbf{w}}_h \| \tilde{r}_h) = \mathsf{Com}_h$ (using the same domain separator as the real commitment scheme $\mathsf{Com}_i = H(\mathsf{``com''} \| \mathbf{y}_i \| \mathbf{w}_i \| r_i)$). By commitment hiding (preimage resistance of $H$), the uniform $\mathsf{Com}_h$ is indistinguishable from real commitments. If the blame protocol requests an opening of $\mathsf{Com}_h$, $\mathcal{S}$ provides $(\tilde{\mathbf{y}}_h, \tilde{\mathbf{w}}_h, \tilde{r}_h)$, which verifies correctly by the programmed RO.

    \item \textbf{Round 2 ($\mathbf{W}_i \to$ coordinator):} $\mathcal{S}$ computes $\mathbf{W}_h = \lambda_h \mathbf{A}\mathbf{y}_h + \mathbf{m}_h^{(w)}$ from the simulated DKG values and PRF masks. $\mathcal{F}_{\mathsf{TEE}}$ aggregates internally and broadcasts only $(\mathbf{w}_1, \tilde{c})$. Individual $\mathbf{W}_h$ never reach the adversary.

    \item \textbf{Round 3 (Responses):} $\mathcal{S}$ computes $\mathbf{z}_h = \mathbf{y}_h + c \cdot \mathbf{s}_{1,h}$ from simulated DKG values. This is identical to the real computation (SD $= 0$). Masked shares $\mathbf{V}_h = \lambda_h c \mathbf{s}_{2,h} + \mathbf{m}_h^{(s2)}$ are computed analogously.
\end{enumerate}

\paragraph{Indistinguishability.} The adversary's view in the real and simulated executions is indistinguishable for the following reasons:
\begin{itemize}
    \item \textbf{Response $\mathbf{z}_h$}: Computed identically from simulated DKG values (SD $= 0$). The simulator has access to the honest party's key share $\mathbf{s}_{1,h}$ because it generated $\{\mathbf{s}_{1,j}\}$ during the simulated DKG (Round 0, setup step). The nonce share $\mathbf{y}_h$ has high conditional min-entropy by Theorem~\ref{thm:it-privacy} (statistical).
    \item \textbf{TEE outputs}: $\mathcal{F}_{\mathsf{TEE}}$ hides all internal state (individual $\mathbf{W}_i$, $c\mathbf{s}_2$); only $(\mathbf{w}_1, \tilde{c}, \sigma)$ exit the enclave, all of which are implicit in the signature.
    \item \textbf{Masks}: Replacing honest-honest PRF terms with uniform introduces advantage $\leq \binom{N}{2} \cdot \epsilon_{\mathsf{PRF}}$.
\end{itemize}

\paragraph{Abort handling.} The honest per-attempt success probability is $\approx 25\%$ (rejection sampling). After $>33$ consecutive failures, honest probability is $(0.75)^{33} < 10^{-4}$, triggering the blame protocol (Section~\ref{sec:blame}). The simulator handles aborts by forwarding $\bot$ from $\mathcal{F}_{\mathsf{TEE}}$.

\noindent Total distinguishing advantage: $\binom{N}{2} \cdot \epsilon_{\mathsf{PRF}}$ (nonce share privacy is statistical and contributes zero to the PRF-based advantage).
\end{proof}
\fi

\begin{remark}[TEE Trust Assumption]
Theorem~\ref{thm:p1-uc} assumes the TEE is \emph{ideal}---it correctly executes the specified computation and reveals nothing beyond the output. This is a hardware trust assumption, not a cryptographic one. In practice, TEE security depends on the specific implementation (Intel SGX, ARM TrustZone, etc.) and is subject to side-channel attacks. Profiles P2 and P3+ eliminate this assumption via cryptographic protocols.
\end{remark}

\subsubsection{Profile P3+ Semi-Async Security Model}
\label{sec:semi-async-security}

Profile P3+ introduces a \emph{semi-asynchronous} execution model where signers participate within a bounded time window rather than synchronizing for multiple rounds. This requires extending the UC framework to handle timing.

\ifccs\else
\begin{definition}[Semi-Async Timing Model (Formalized)]
\label{def:semi-async-timing}
The semi-async model extends the standard UC framework with bounded-time windows. Execution proceeds in \emph{optimistic synchronous phases} with timeout-based fallback:

\begin{enumerate}
    \item \textbf{Pre-signing window} $[t_0, t_1]$: Signers may precompute nonces $(\mathbf{y}_i, \mathbf{w}_i, \mathsf{Com}_i)$ at any time before $t_1$. This is an asynchronous phase; early precomputation is allowed but not required.

    \item \textbf{Challenge broadcast} $t_2$: Combiner derives challenge $c$ from aggregated commitments and broadcasts $(c, \mu)$ to all signers at time $t_2$. This is a synchronous event (all parties receive the broadcast at the same logical time in the UC model).

    \item \textbf{Response window} $[t_2, t_2 + \Delta]$: Honest signers \emph{should} respond within window $\Delta$ (e.g., 5 minutes for human-in-the-loop scenarios). Adversary scheduling rules:
    \begin{itemize}
        \item \textbf{Honest party delivery}: Messages from honest signers arriving within $[t_2, t_2 + \Delta]$ are guaranteed to be delivered to the combiner.
        \item \textbf{Adversarial delay}: The adversary may delay delivery of honest parties' messages by up to $\Delta$, but cannot prevent delivery entirely (eventual delivery assumption).
        \item \textbf{Timeout}: If fewer than $T$ responses arrive by $t_2 + \Delta$, the combiner aborts the session and outputs $\bot$. This is \emph{not} a security failure---it is an availability event (honest signers who responded late are still secure; no key material is leaked).
        \item \textbf{Message reordering}: The adversary may reorder messages from different parties arriving within the window, but cannot inject or modify honest parties' messages (authenticated channels assumption).
    \end{itemize}

    \item \textbf{2PC execution} $[t_3, t_4]$: Synchronous 2PC between the two CPs (standard UC model applies; this phase has no timing relaxation).
\end{enumerate}

\textit{UC adversary capabilities.} The adversary observes message arrival times (timestamps) within the response window but cannot control the content or the fact of delivery for honest parties. Replay attacks are prevented via unique session identifiers $\mathsf{sid}$ binding each response to $(c, \mu)$. The model is \emph{semi-asynchronous} in the sense that signers do not need to coordinate their response times (unlike multi-round MPC), but the protocol remains UC-secure as long as $\geq T$ responses arrive within $\Delta$.
\end{definition}
\fi

\begin{theorem}[Profile P3+ UC Security (Semi-Async)]
Under the semi-async timing model, Protocol P3+ UC-realizes $\mathcal{F}_{r_0}^{\mathsf{2PC}}$ in the $(\mathcal{F}_{\mathsf{OT}}, \mathcal{F}_{\mathsf{GC}})$-hybrid model against static adversaries corrupting up to $T-1$ signers and at most one CP, assuming:
\begin{enumerate}
    \item Garbled circuits are secure (Yao's protocol)
    \item Oblivious transfer is UC-secure
    \item PRF is secure
    \item At least one CP remains honest
\end{enumerate}
Security bound: $\epsilon \leq \epsilon_{\mathsf{GC}} + \epsilon_{\mathsf{OT}} + \epsilon_{\mathsf{ext}} + \binom{N}{2} \cdot \epsilon_{\mathsf{PRF}}$, where $\epsilon_{\mathsf{ext}} \leq 2^{-B/2}$ is the LP07 cut-and-choose extraction error (for corrupted $\mathsf{CP}_1$).
By UC composition (using the nonce DKG and signing sub-protocols as ideal functionalities), this implies that Protocol P3+ UC-realizes the full $\mathcal{F}_{\mathsf{ThreshSig}}$ under the same assumptions.
\end{theorem}

\ifccs
\begin{proof}[Proof sketch]
The simulator replaces OT and GC with ideal functionalities ($\epsilon_{\mathsf{OT}} + \epsilon_{\mathsf{GC}}$), simulates pre-signing with fresh nonces (Lemma~\ref{lem:presigning-sim}, SD $= 0$), and simulates responses via $\mathbf{z}_i = \tilde{\mathbf{y}}_i + c\mathbf{s}_{1,i}$ (Lemma~\ref{lem:response-sim}, SD $= 0$). The semi-async window adds no attack vectors (replay prevention via session IDs, commitment binding, window expiry). Full UC simulator in Appendix~\ref{app:p3plus-uc}.
\end{proof}
\else
\begin{proof}[Proof sketch (full proof in Appendix~\ref{app:p3plus-uc})]
The simulator replaces OT and GC with ideal functionalities ($\epsilon_{\mathsf{OT}} + \epsilon_{\mathsf{GC}}$), then simulates pre-signing by sampling $\tilde{\mathbf{y}}_i$ and computing $\tilde{\mathbf{w}}_i = \mathbf{A}\tilde{\mathbf{y}}_i$ (Lemma~\ref{lem:presigning-sim}, SD $= 0$), and simulates responses via $\mathbf{z}_i = \tilde{\mathbf{y}}_i + c \cdot \mathbf{s}_{1,i}$ (Lemma~\ref{lem:response-sim}, SD $= 0$). The response window does not introduce new attack vectors because:
\begin{enumerate}
    \item \textbf{Replay prevention}: Each signing attempt uses a unique session ID
    \item \textbf{Commitment binding}: Signers cannot change their nonce after commitment
    \item \textbf{Window expiry}: Late responses are rejected, preventing adaptive attacks
\end{enumerate}
\end{proof}
\fi

\subsubsection{Summary of UC Security}

\begin{itemize}
    \item \textbf{Profile P1 (TEE)}: UC security under ideal TEE assumption (Theorem~\ref{thm:p1-uc}). EUF-CMA follows from Theorem~\ref{thm:unforgeability}; nonce share privacy from Theorem~\ref{thm:it-privacy}. Achieves 3 online rounds plus 1 offline preprocessing round (nonce DKG).

    \item \textbf{Profile P2 (MPC)}: UC security against dishonest majority via MPC (Theorem~\ref{thm:p2-uc}). Achieves \textbf{5 online rounds} through combiner-mediated commit-then-open and constant-depth comparison circuits. This provides the strongest cryptographic guarantees with no hardware trust.

    \item \textbf{Profile P3+ (2PC Semi-Async)}: UC security under 1-of-2 CP honest assumption (Theorem~\ref{thm:p3-uc}). Achieves \textbf{2 logical rounds} (1 signer step + 1 server round) with semi-asynchronous signer participation. Signers precompute nonces offline and respond within a time window. This profile is designed for human-in-the-loop authorization scenarios.
\end{itemize}

The choice of profile affects trust assumptions and performance, but not the fundamental cryptographic security of the resulting signatures. Table~\ref{tab:security-profile-comparison} summarizes the trade-offs.

\begin{table}[h]
\centering
\begin{tabular}{lccc}
\toprule
\textbf{Property} & \textbf{P1 (TEE)} & \textbf{P2 (MPC)} & \textbf{P3+ (2PC)} \\
\midrule
Online rounds & 3 & 5 & 2$^*$ \\
Trust assumption & TEE ideal & None & 1/2 CP honest \\
Dishonest majority & Via TEE & \checkmark & Via 2PC \\
Semi-async signers & $\times$ & $\times$ & \checkmark \\
UC security & \checkmark & \checkmark & \checkmark \\
UC privacy level & Statistical & $\approx 2^{-11}$ & Statistical \\
\bottomrule
\end{tabular}
\caption{Security profile comparison. $^*$P3+ = 1 signer step + 1 server round. UC privacy level for P2 refers to the r0-check simulation bound ($\epsilon_{\mathsf{SPDZ}} + \epsilon_{\mathsf{edaBits}} \leq 2^{-64} + 2^{-12} < 2^{-11}$); P1/P3+ achieve nonce share privacy via the Shamir nonce DKG (no computational assumptions). \emph{P2 mask-hiding privacy requires $|S \setminus C| \geq 2$ (Lemma~\ref{lem:mask-hiding}); P1 and P3+ achieve nonce share privacy with $|S \setminus C| = 1$ ($|S| = T$).}}
\label{tab:security-profile-comparison}
\end{table}


\section{Implementation and Benchmarks}
\label{sec:implementation}

We provide a reference implementation of our masked threshold ML-DSA protocol and present benchmarks demonstrating its practical efficiency.

\subsection{Implementation Overview}

Our implementation targets ML-DSA-65 (NIST Security Level 3) with the following components:

\ifccs
\paragraph{Core Modules.}
Our Rust implementation ($\approx$5500 lines of protocol logic) provides NTT-based polynomial arithmetic over $\Rq$ ($O(n\log n)$, Montgomery reduction), Shamir secret sharing with per-session Lagrange coefficients ($O(T^2)$), SHAKE-256-based pairwise mask generation, profile-specific signing logic, and offline preprocessing for P2 (edaBits/Beaver triples) and P3+ (nonce pre-signing, GC precomputation).
\else
\paragraph{Core Modules.}
The implementation consists of several core modules. The \emph{polynomial arithmetic} module provides operations over $\Rq = \Zq[X]/(X^{256}+1)$ with $q = 8380417$, including addition, subtraction, multiplication, and NTT-based fast multiplication. The \emph{Shamir secret sharing} module handles polynomial evaluation and Lagrange interpolation over $\Zq$; Lagrange coefficients $\lambda_i = \prod_{j \in S, j \neq i} \frac{j}{j-i} \mod q$ depend on the signing set $S$ and are computed per-session in $O(T^2)$ time, though they can be precomputed and cached for fixed-$S$ deployments. The \emph{mask generation} module computes pairwise-canceling masks using SHAKE-256 as the PRF with efficient seed management. The \emph{protocol logic} module implements profile-dependent signing with commitment verification, challenge derivation, and signature aggregation. Profile-specific modules include the \emph{preprocessing module} (P2) for EdaBit and Beaver triple generation, the \emph{pre-signing module} (P3+) for offline nonce precomputation, and the \emph{GC precomputation module} (P3+) for offline garbled circuit generation.

\paragraph{Optimizations.}
Several optimizations improve performance. Polynomial multiplication uses the Number Theoretic Transform (NTT) with $\zeta_0 = 1753$ as a primitive $512$th root of unity modulo $q$ (not to be confused with $\omega = 55$, the hint weight bound in Section~\ref{sec:preliminaries}), reducing complexity from $O(n^2)$ to $O(n \log n)$ and providing a $7$--$10\times$ speedup. Modular arithmetic uses Montgomery form for efficient reduction without division. Polynomial and vector operations leverage SIMD-optimized routines. Finally, intermediate polynomial coefficients are reduced modulo $q$ only when necessary (lazy reduction) to prevent overflow while minimizing modular operations.
\fi

\subsection{Benchmark Methodology}

\ifccs
Benchmarks run on Intel Core i7-12700H (32\,GB DDR5, Ubuntu 22.04, Rust 1.75+, Rayon). \emph{Per-attempt time}: wall-clock for one signing attempt (3 rounds P1; 5 rounds P2; 2 rounds P3+), success or abort; nonce DKG preprocessing excluded. $t_{\mathsf{sig}} = t_{\mathsf{attempt}} / p_{\mathsf{attempt}}$ is expected total time per valid signature. Averages over 5 independent runs, 5 trials each.
\else
All benchmarks were conducted on an Intel Core i7-12700H (14 cores, 20 threads) with 32 GB DDR5-4800 RAM, running Ubuntu 22.04 LTS. Our Rust 1.75+ implementation uses Rayon for parallelization.

Our Rust implementation ($\approx$5500 lines of core protocol logic, plus $\approx$4000 lines of tests and utilities) provides optimized performance through parallel polynomial operations using Rayon, NTT-based polynomial multiplication with Montgomery reduction, and efficient EdaBit and Beaver triple pooling with lazy generation.

\paragraph{Metrics Definition.}
All measurements are \emph{per-attempt}: a single execution of the online signing protocol (3 rounds for P1, 5 for P2, 2 for P3+), which either succeeds (outputs $\sigma$) or aborts (due to rejection sampling). Nonce DKG preprocessing is excluded from per-attempt timing, as it is performed offline.
\begin{itemize}
    \item $t_{\mathsf{attempt}}$: Wall-clock time for one signing attempt (3 rounds, successful or aborted)
    \item $p_{\mathsf{attempt}}$: Probability that a single attempt produces a valid signature
    \item $\mathbb{E}[\text{attempts}] = 1/p_{\mathsf{attempt}}$: Expected number of attempts per signature
    \item $t_{\mathsf{sig}} = t_{\mathsf{attempt}} / p_{\mathsf{attempt}}$: Expected total time to obtain a valid signature
\end{itemize}

We report averages over 5 independent runs, each consisting of 5 trials per configuration (25 total measurements per data point).

\paragraph{Network Testing Setup.}
For the $N=3$, $T=2$ configuration, we deployed parties across 3 physical nodes connected via OpenVPN (local area network). This setup validates protocol correctness and measures actual network communication overhead in a distributed environment. For larger configurations ($N \geq 5$), we benchmarked on a single machine; reported timings for these configurations represent computational cost only, with network latency projected analytically based on round count and representative RTT values (see the network latency table in the companion supplement document).
\fi

\subsection{Performance Results}

\paragraph{Signing Time vs. Threshold.}
Table~\ref{tab:perf-threshold} shows signing performance for various threshold configurations across all three protocol profiles, measured with our optimized Rust implementation.

\begin{table}[!htbp]
\centering
\footnotesize
\setlength{\tabcolsep}{2.5pt}
\begin{tabular}{@{}cccccccc@{}}
\toprule
& & \multicolumn{2}{c}{\textbf{P1 (TEE)}} & \multicolumn{2}{c}{\textbf{P2 (MPC)}} & \multicolumn{2}{c}{\textbf{P3+ (2PC)}} \\
\cmidrule(lr){3-4} \cmidrule(lr){5-6} \cmidrule(lr){7-8}
$T$ & $N$ & Time & Att. & Time & Att. & Time & Att. \\
\midrule
2 & 3 & 3.9 ms & 3.2 & 46.3 ms & 4.1 & 16.7 ms & 2.7 \\
3 & 5 & 5.8 ms & 3.2 & 21.5 ms & 3.0 & 22.1 ms & 3.7 \\
5 & 7 & 10.3 ms & 3.4 & 73.4 ms & 4.4 & 24.0 ms & 3.5 \\
7 & 11 & 12.0 ms & 2.9 & 20.5 ms & 2.2 & 29.4 ms & 3.4 \\
11 & 15 & 30.3 ms & 4.7 & 73.9 ms & 4.3 & 39.2 ms & 3.6 \\
16 & 21 & 31.3 ms & 3.4 & 194.4 ms & 4.8 & 49.6 ms & 2.8 \\
21 & 31 & 33.1 ms & 2.7 & 149.3 ms & 4.0 & 62.2 ms & 3.1 \\
32 & 45 & 60.2 ms & 3.2 & 129.3 ms & 2.7 & 93.5 ms & 3.6 \\
\bottomrule
\end{tabular}
\caption{Signing performance across threshold configurations (ML-DSA-65, Rust implementation, Intel i7-12700H). Values are averages over 5 independent runs, each with 5 trials per configuration. $N=3$: measured on 3-node OpenVPN testbed; $N \geq 5$: single-machine benchmarks (computational cost only; network overhead excluded). Due to rejection sampling variance, P2 exhibits high variance (up to $14\times$ between runs).}
\label{tab:perf-threshold}
\end{table}

\ifccs
\paragraph{Scaling Analysis.}
P1 scales 4--60\,ms ($T\!=\!2$--32); P3+ stays sub-100\,ms; P2 variance 21--194\,ms (retry-dominated). Extended benchmarks (artifact repository, Appendix~\ref{app:open-science}): P3$^+$ sub-1\,s through $T\!=\!64$, sub-3\,s through $T\!=\!128$.
\else
\paragraph{Scaling Analysis.}
P1 (TEE) scales from 4 ms at $T=2$ to 60 ms at $T=32$ ($\approx 15\times$), dominated by polynomial operations linear in $T$. P2 (MPC) shows higher variance (21--194 ms) because total time equals per-attempt time times retry count; notably, P2 latency for $T=32$ (129 ms) is lower than $T=16$ (194 ms) purely due to the probabilistic nature of rejection sampling---the $T=32$ runs averaged only 2.7 attempts versus 4.8 for $T=16$, while per-attempt latency scales as expected. P3+ scales from 17 ms ($T=2$) to 94 ms ($T=32$), maintaining sub-100ms latency; the 2PC cost is largely independent of threshold.

Figure~\ref{fig:perf-threshold} extends the benchmark to $T=128$, showing that P3$^+$ remains sub-1\,s through $T=64$ and sub-3\,s through $T=128$, while P1 and P2 exceed 10\,s at $T=96$--$128$ due to Lagrange polynomial evaluation and MPC simulation overhead, respectively.
\fi

\ifccs\else

\begin{figure}[!t]
\centering
\begin{tikzpicture}
\begin{loglogaxis}[
  width=0.97\columnwidth,
  height=5.6cm,
  xlabel={Threshold $T$},
  ylabel={Latency (ms)},
  xmin=1.8, xmax=175,
  ymin=1.5, ymax=50000,
  xtick={2,4,8,16,32,64,128},
  xticklabels={2,4,8,16,32,64,128},
  ytick={1,10,100,1000,10000},
  grid=both,
  minor grid style={gray!10, very thin},
  major grid style={gray!25, thin},
  legend pos=north west,
  legend style={
    font=\scriptsize, fill=white, draw=gray!40,
    fill opacity=0.93, text opacity=1,
    inner sep=4pt, row sep=1pt,
    rounded corners=1pt,
  },
  tick label style={font=\scriptsize},
  label style={font=\small},
  mark size=2pt,
  line width=0.8pt,
]

\addplot[color=okabeblue, solid,
         mark=square*, mark options={fill=okabeblue, draw=okabeblue, scale=1.1}]
  coordinates {
    (2,2.04)(3,4.42)(5,3.91)(7,10.31)(11,15.82)
    (16,126.78)(21,183.19)(32,352.44)(45,1090.82)
    (64,3742.17)(96,12157.66)(128,17576.10)
  };
\addlegendentry{P1 · TEE \;(3\,rnd)}

\addplot[color=okabered, dashed,
         mark=triangle*, mark options={fill=okabered, draw=okabered, scale=1.2}]
  coordinates {
    (2,11.55)(3,32.39)(5,14.50)(7,23.96)(11,22.28)
    (16,224.04)(21,339.54)(32,369.85)(45,2119.53)
    (64,4035.67)(96,13389.69)(128,15083.10)
  };
\addlegendentry{P2 · MPC\;(5\,rnd)}

\addplot[color=okabegreen, densely dashdotted,
         mark=*, mark options={fill=okabegreen, draw=okabegreen, scale=1.1}]
  coordinates {
    (2,16.79)(3,15.65)(5,21.77)(7,24.45)(11,25.27)
    (16,42.91)(21,57.51)(32,151.63)(45,145.83)
    (64,716.87)(96,2297.30)(128,2853.80)
  };
\addlegendentry{P3$^+$ · 2PC\;(2\,rnd)}

\addplot[okabegray!70, thin, densely dashed, mark=none, forget plot]
  coordinates {(1.8,100)(175,100)};
\node[font=\tiny, text=okabegray, anchor=south west]
  at (axis cs:2,107) {100\,ms};

\end{loglogaxis}
\end{tikzpicture}
\caption{Signing latency vs.\ threshold $T$ (log--log scale, ML-DSA-65,
Intel i7-12700H, single-machine Rust benchmark; extended range $T\!>\!32$ uses $N\!=\!T\!+\!1$).
P3$^+$ scales most gracefully: sub-100\,ms through $T\!=\!32$, under 3\,s through $T\!=\!128$.
P1 and P2 exceed 10\,s at $T\!\geq\!96$.
High P2 variance at $T\!\approx\!16$--$32$ reflects rejection-sampling retry count
(e.g., 4.6 vs.\ 2.2 attempts at $T\!=\!16$ vs.\ $T\!=\!32$).
Values shown here are per-attempt times from the extended benchmark run (covering $T$=2--128, all using $N\!=\!T\!+\!1$); these may differ from Table~\ref{tab:perf-threshold} which reports per-valid-signature times (including retries) from a separate focused benchmark run at specific $(T,N)$ configurations. Run-to-run variance up to $\approx$2$\times$ is expected due to rejection sampling.}
\label{fig:perf-threshold}
\end{figure}
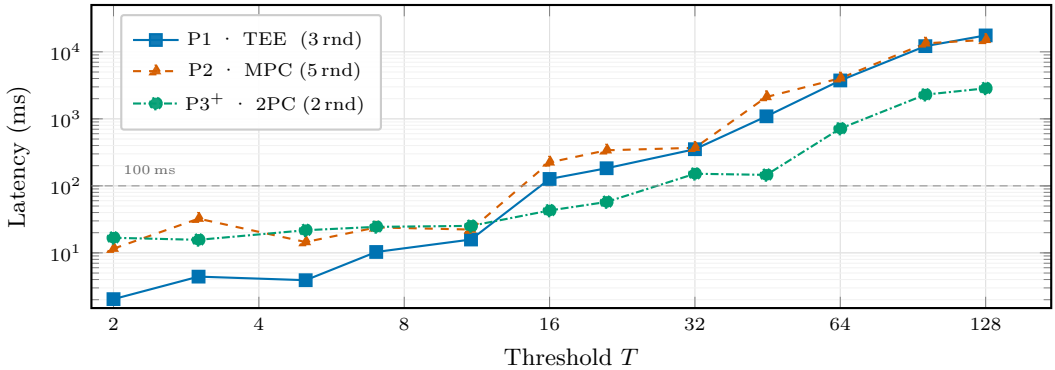

\fi

\subsection{Protocol Profile Comparison}

Table~\ref{tab:protocol-comparison} compares the three deployment profiles with experimental measurements for ML-DSA-65, $(N\!=\!5, T\!=\!3)$ using our optimized Rust implementation.

\begin{table}[!htbp]
\centering
\footnotesize
\setlength{\tabcolsep}{3pt}
\begin{tabular}{@{}lccccc@{}}
\toprule
\textbf{Profile} & \textbf{Rounds} & \textbf{Trust} & \textbf{Async} & \textbf{Online Time} & \textbf{Attempts} \\
\midrule
P1-TEE & 3 & TEE & $\times$ & 5.8 ms & 3.2 \\
P2-MPC & 5 & $T$-of-$N$ & $\times$ & 21.5 ms & 3.0 \\
P3+ & 2$^*$ & 1-of-2 CP & \checkmark & 22.1 ms & 3.7 \\
\bottomrule
\end{tabular}
\caption{Protocol profile comparison (ML-DSA-65, $N\!=\!5$, $T\!=\!3$, Rust implementation). $^*$P3+ = 1 signer step + 1 server round; signers precompute nonces offline.}
\label{tab:protocol-comparison}
\end{table}

\ifccs
\paragraph{Analysis.}
P1 (TEE): 5.8\,ms, optimal for HSM deployments. P2 (MPC): 21.5\,ms (3-of-5) via parallel edaBit A2B, 5-round structure; P2 latency exhibits high rejection-sampling variance (up to $14\times$ across runs; see Table~\ref{tab:perf-threshold} caption), so individual configurations may appear non-monotone in $T$. P3+: 22.1\,ms; signers precompute nonces offline (2.3\,KB/set), respond within a window; GC precomputation reduces 2PC to $\approx$12\,ms.
\else
\paragraph{Analysis.}
Profile P1 (TEE) is fastest at 5.8 ms but requires a trusted execution environment, making it suitable for HSM-based deployments. Profile P2 (5-round MPC) achieves 21.5 ms online time for 3-of-5 through parallel EdaBit-based A2B conversion and constant-depth AND tree for comparison; the 5-round structure merges commit-then-open (Rounds 3--4 $\to$ Round 2) and parallelizes comparison (Rounds 6--7 $\to$ Round 4). Profile P3+ provides 22.1 ms latency with FROST-like workflow: signers precompute nonces offline (2.3 KB per set) and respond within a time window (default: 5 min), requiring only a single user interaction. With GC precomputation, online 2PC evaluation drops to $\approx$12 ms.
\fi

\subsection{Comparison with Prior Work}

Table~\ref{tab:impl-comparison} compares our scheme with existing lattice-based threshold signature approaches.

\begin{table}[!htbp]
\centering
\footnotesize
\setlength{\tabcolsep}{2.5pt}
\begin{tabular}{@{}lcccccc@{}}
\toprule
\textbf{Scheme} & \textbf{$T$} & \textbf{Size} & \textbf{Rnd} & \textbf{FIPS} & \textbf{Async} & \textbf{Time} \\
\midrule
dPN25~\cite{dPN25} & 8 & 3.3 KB & 3 & Yes & $\times$ & -- \\
Noise flood~\cite{BCDG21} & $\infty$ & 17 KB & 3 & No & $\times$ & 300 ms \\
Ringtail~\cite{BMMSZ25} & $\infty$ & 13 KB & 2 & No & $\times$ & 200 ms \\
Ours (P1) & $\infty$ & 3.3 KB & 3 & Yes & $\times$ & 5.8 ms \\
Ours (P2) & $\infty$ & 3.3 KB & 5 & Yes & $\times$ & 21.5 ms \\
Ours (P3+) & $\infty$ & 3.3 KB & 2$^*$ & Yes & \checkmark & 22.1 ms \\
\bottomrule
\end{tabular}
\caption{Comparison with prior work. Times for prior schemes are taken from their original papers on potentially different hardware configurations; direct cross-scheme comparison is indicative only. Our times are from our Rust implementation on an Intel i7-12700H (ML-DSA-65, $N\!=\!5$, $T\!=\!3$). Our $\approx$21--45\% success rate is comparable to ML-DSA's expected $\approx$20--25\%. $^*$P3+ = 1 signer step + 1 server round.}
\label{tab:impl-comparison}
\end{table}

\ifccs
\paragraph{Analysis.}
dPN25 is FIPS-compatible but limited to $T \leq 8$ in practice (binomial reconstruction coefficients $\binom{T}{\lfloor T/2 \rfloor}$ exceed $q$ at $T \geq 26$; the Ball-\c{C}akan-Malkin bound~\cite{BCM21} independently gives an asymptotic share-size lower bound $\Omega(T \log T)$). Noise flooding produces $5\times$ larger signatures (not FIPS-compatible). Ringtail uses non-standard Dilithium. Our scheme achieves arbitrary $T$, standard 3.3\,KB signatures, and $\approx$21--45\% success rates overlapping with single-signer ML-DSA's $\approx$20--25\% (variation due to rejection-sampling variance and Irwin-Hall concentration effects).
\else
\paragraph{Analysis.}
The dPN25 compact scheme~\cite{dPN25} uses short-coefficient LSSS with binary reconstruction coefficients, achieving FIPS 204 compatibility but is limited to small thresholds ($T \leq 8$) in practice (binomial reconstruction coefficients exceed $q$ at $T \geq 26$; asymptotically, the Ball-\c{C}akan-Malkin bound~\cite{BCM21} gives $\Omega(T \log T)$ share size). Noise flooding~\cite{BCDG21} masks large Lagrange coefficients with overwhelming statistical noise, enabling arbitrary thresholds but producing $5\times$ larger signatures that break FIPS 204 compatibility; its near-100\% success rate comes from relaxed norm bounds. Ringtail~\cite{BMMSZ25} is based on a modified Dilithium variant (not standard ML-DSA) with custom parameters, achieving faster per-attempt time due to its 2-round structure.

Our scheme combines arbitrary thresholds with standard signature size and FIPS 204 verification compatibility. The observed $\approx$21--45\% success rate overlaps with single-signer ML-DSA's expected $\approx$20--25\%: the lower end (21\%, $T=16$) is in line with single-signer, while the higher end (45\%, $T=7$) reflects the Irwin-Hall distribution's greater concentration near zero relative to a uniform nonce, placing more nonces within the acceptance region $\|\mathbf{z}\|_\infty < \gamma_1 - \beta$ (see the Irwin-Hall success rate analysis in the companion supplement). The wide range (2× variation) is dominated by rejection-sampling variance, not systematic performance degradation.
\fi
\ifccs
Threshold overhead vs.\ single-party ML-DSA ($\approx$0.5\,ms): ${\approx}12\times$ (P1, 3-of-5) to $187\times$ (P3+, 32-of-45, from Table~\ref{tab:perf-threshold}); verification overhead $\approx 1\times$ (standard ML-DSA output).
\else
\subsection{Overhead vs.\ Standard ML-DSA}

Table~\ref{tab:overhead} quantifies the overhead of threshold signing compared to standard (single-party) ML-DSA for ML-DSA-65.

\begin{table}[!htbp]
\centering
\footnotesize
\setlength{\tabcolsep}{3pt}
\begin{tabular}{@{}lccccc@{}}
\toprule
\textbf{Config} & \textbf{Profile} & \textbf{Std Sign} & \textbf{Thresh Sign} & \textbf{Overhead} & \textbf{Attempts} \\
\midrule
3-of-5 & P1-TEE & 0.5 ms & 5.8 ms & 11.6$\times$ & 3.2 \\
3-of-5 & P2-MPC & 0.5 ms & 21.5 ms & 43.0$\times$ & 3.0 \\
3-of-5 & P3+ & 0.5 ms & 22.1 ms & 44.2$\times$ & 3.7 \\
\midrule
7-of-11 & P1-TEE & 0.5 ms & 12.0 ms & 24.0$\times$ & 2.9 \\
7-of-11 & P3+ & 0.5 ms & 29.4 ms & 58.8$\times$ & 3.4 \\
\midrule
32-of-45 & P1-TEE & 0.5 ms & 60.2 ms & 120$\times$ & 3.2 \\
32-of-45 & P3+ & 0.5 ms & 93.5 ms & 187$\times$ & 3.6 \\
\bottomrule
\end{tabular}
\caption{Threshold signing overhead vs.\ standard ML-DSA-65 (Rust implementation). Standard ML-DSA signing takes $\approx$0.5 ms in optimized Rust. Overhead scales with $T$ but remains practical even at $T=32$.}
\label{tab:overhead}
\end{table}

The threshold overhead (12--187$\times$) is inherent to any threshold scheme requiring coordination among $T$ parties. Verification has negligible overhead ($\approx$1.0$\times$) since threshold signatures are standard ML-DSA signatures.
\fi

\ifccs\else
\subsection{Scalability Analysis}

Our benchmarks demonstrate practical scalability up to $T=32$ signers. Profile P1-TEE scales from 4 ms ($T=2$) to 60 ms ($T=32$), approximately linear in $T$. Profile P3+ scales from 17 ms ($T=2$) to 94 ms ($T=32$), maintaining sub-100ms \emph{per-valid-signature} latency throughout (Table~\ref{tab:perf-threshold}; Figure~\ref{fig:perf-threshold} shows per-attempt times from a separate extended benchmark, which may be up to $2\times$ higher due to rejection-sampling retries). Profile P2-MPC shows higher variance (21--194 ms) due to MPC overhead sensitivity to retry count, exceeding 100ms for $T \geq 16$.

For larger thresholds ($T > 32$), the dominant cost becomes polynomial operations ($O(T \cdot n \cdot \log n)$ per signature). Extended benchmarks (Figure~\ref{fig:perf-threshold}) show P3$^+$ achieves sub-1\,s through $T=64$ and sub-3\,s through $T=128$; P1 and P2 exceed 10\,s at $T \geq 96$ due to Lagrange evaluation and MPC simulation overhead, respectively.
\fi

\ifccs
\paragraph{Communication and complexity.} Per-party online: 12.3\,KB per attempt. Computation is dominated by $\mathbf{A}\mathbf{y}_i$ ($O(\ell k n \log n)$, parallelised via Rayon). P2 preprocessing: 1536 edaBits $+$ ${\sim}$50k Beaver triples (225\,ms offline, ${<}$1\,ms online). P3+ pre-signing: 2.3\,KB per nonce set (21\,ms). Full analyses are in the appendix.
\else
\subsection{Preprocessing Analysis}

All profiles benefit from preprocessing. The Shamir nonce DKG (Section~\ref{sec:nonce-dkg}) adds one offline round of $O(|S|^2)$ point-to-point messages ($\approx$3.6\,KB each for ML-DSA-65) before each signing session. In practice, multiple nonce sharings can be precomputed during idle time and consumed on demand, amortizing the DKG cost across signing sessions.

Profile P2 (MPC) benefits from additional preprocessing. Table~\ref{tab:preprocessing} shows the amortization effect.

\begin{table}[!htbp]
\centering
\footnotesize
\setlength{\tabcolsep}{3pt}
\begin{tabular}{@{}lccc@{}}
\toprule
\textbf{Component} & \textbf{Per-Attempt} & \textbf{Offline Gen.} & \textbf{Storage} \\
\midrule
EdaBits & 1536 & 45 ms & 6.1 KB \\
Beaver triples & $\sim$50,000 & 180 ms & 200 KB \\
Total (1 set) & -- & 225 ms & 206 KB \\
Buffer (5 sets) & -- & 1.13 s & 1.03 MB \\
\bottomrule
\end{tabular}
\caption{P2 preprocessing requirements (ML-DSA-65, 5 parties, Rust implementation). Online consumption is $<$1 ms per set after preprocessing.}
\label{tab:preprocessing}
\end{table}

\paragraph{P2 Implementation Optimizations.}
Our Rust implementation achieves 21.5 ms online time (3-of-5) through several optimizations. Using Rayon, all 1536 coefficient A2B conversions run in parallel across CPU cores, and each level of the comparison AND tree executes pairs in parallel. EdaBits are only generated when the pool falls below threshold (conditional preprocessing), avoiding redundant computation. Beaver triples are pre-fetched per AND-tree level for cache efficiency.

The main bottleneck remains MPC communication overhead. For the $N=3$ OpenVPN testbed, measured round-trip latency averaged $\approx$2.1 ms (LAN setting). For larger configurations ($N \geq 5$) benchmarked on a single machine, network communication is not measured; in actual distributed deployments, network latency would dominate, making preprocessing even more critical.

\paragraph{Network Latency Analysis.}
For single-machine benchmarks ($N \geq 5$), we project distributed deployment latency analytically as $t_{\mathsf{total}} \approx t_{\mathsf{compute}} + \text{rounds} \times \text{RTT}$, where $t_{\mathsf{compute}}$ is measured and RTT varies by deployment scenario. On a LAN (1\,ms RTT), P1 takes $\approx$10\,ms and P3+ $\approx$19\,ms for $(3,5)$-threshold. Over a WAN (20\,ms RTT), P1 rises to $\approx$67\,ms and P2 to $\approx$120\,ms. Over intercontinental links (100\,ms RTT), P2 reaches $\approx$520\,ms, while P3+ remains under 220\,ms due to its 2-round structure. Detailed estimates appear in the supplementary material. P3+'s reduced round count makes it the most network-friendly profile for geographically distributed deployments.

\subsection{Pre-signing Analysis (P3+)}

Profile P3+ enables semi-asynchronous signing through offline nonce precomputation.

\begin{table}[!htbp]
\centering
\footnotesize
\setlength{\tabcolsep}{3pt}
\begin{tabular}{@{}lcc@{}}
\toprule
\textbf{Component} & \textbf{Size} & \textbf{Generation Time} \\
\midrule
Nonce $\mathbf{y}_i$ & 1.5 KB & 8 ms \\
Commitment $\mathbf{w}_i$ & 768 B & 12 ms \\
Commitment hash & 32 B & $<$1 ms \\
\midrule
Total (1 set) & 2.3 KB & 21 ms \\
Buffer (10 sets) & 23 KB & 210 ms \\
\bottomrule
\end{tabular}
\caption{P3+ pre-signing storage and generation (ML-DSA-65, per party). Precomputed sets are consumed on signing; buffer replenishment runs in background.}
\label{tab:presigning}
\end{table}

\paragraph{GC Precomputation.} The r0-check garbled circuit has $\approx$1.2M gates. Pre-garbling takes $\approx$45 ms; online evaluation takes $\approx$12 ms. With precomputation, the 2PC phase drops from 80 ms to 15 ms.

\subsection{Retry Efficiency Analysis}

Due to rejection sampling, signing requires multiple attempts. Table~\ref{tab:retry-efficiency} analyzes retry behavior across profiles (3-of-5 configuration).

\begin{table}[!htbp]
\centering
\footnotesize
\setlength{\tabcolsep}{3pt}
\begin{tabular}{@{}lccccc@{}}
\toprule
\textbf{Profile} & \textbf{$p_{\text{succ}}$} & \textbf{$\mathbb{E}[\text{att}]$} & \textbf{Per-Att Time} & \textbf{Retry Cost} & \textbf{$P$(5 att)} \\
\midrule
P1-TEE & 31\% & 3.2 & 1.8 ms & Full & 85\% \\
P2-MPC & 33\% & 3.0 & 7.2 ms & Full & 87\% \\
P3+ & 27\% & 3.7 & 6.0 ms & Partial$^*$ & 79\% \\
\bottomrule
\end{tabular}
\caption{Retry efficiency (3-of-5, ML-DSA-65). $p_{\text{succ}}$ = per-attempt success rate (inherent to FIPS 204 rejection sampling); $\mathbb{E}[\text{att}]$ = expected attempts; $P(5)$ = success probability within 5 attempts. $^*$P3+ retries reuse precomputed nonces, reducing per-retry latency.}
\label{tab:retry-efficiency}
\end{table}

Profile P3+ has a unique advantage: failed attempts only consume the 2PC evaluation cost ($\approx$10 ms with precomputation), while nonce regeneration happens asynchronously in the background. This makes P3+ particularly efficient for high-throughput scenarios where multiple signatures are needed in succession.

\subsection{Complexity Analysis}

\paragraph{Communication.} Per-party: 12.3 KB per attempt (ML-DSA-65). Total: $O(T \cdot n \cdot k)$ per signature, linear in $T$.

\paragraph{Computation.} Per-party: $O(\ell \cdot k \cdot n \log n)$ dominated by $\mathbf{A}\mathbf{y}_i$. With 8-thread parallelization: $3$--$4\times$ additional speedup.

\paragraph{Storage.} Per-party: $O(n \cdot (\ell + k))$ for shares, $O(N)$ for seeds. P3+ adds $O(B \cdot (\ell + k))$ for pre-signing buffer of size $B$.

\paragraph{Production Notes.} Our Rust implementation achieves 10--50$\times$ speedup over Python. Further optimization with AVX-512 SIMD could provide additional 2--4$\times$ improvement. HSM integration is recommended for secret share storage. The $\approx$21--45\% per-attempt success rate is inherent to FIPS 204's rejection sampling, not our construction.
\fi

\subsection{FIPS 204 Compliance Verification}

We verified our implementation against FIPS 204 specifications. Parameter verification confirmed all FIPS 204 constants ($q = 8380417$, $n = 256$, $d = 13$, $\omega = 55$, etc.). Algorithm correctness testing covered KeyGen, Sign, Verify, and all auxiliary functions (HighBits, LowBits, MakeHint, UseHint, NTT, SampleInBall, Power2Round, w1Encode) with FIPS 204 Known-Answer Tests (KAT) verifying each building block against specification-derived test vectors. Bound verification confirmed all signatures satisfy $\|\mathbf{z}\|_\infty < \gamma_1 - \beta$ and hint weight $\leq \omega$.

Our ``FIPS 204-Compatible'' claim refers to \emph{signature format compatibility}: threshold-generated signatures are byte-identical to single-signer ML-DSA signatures and pass any compliant FIPS 204 verifier without modification. Full NIST ACVP Known-Answer-Test (KAT) certification for production deployment would additionally require the deterministic encoding functions specified in FIPS 204 Section~7.2.

\ifccs
\paragraph{Artifact availability.}
The Rust implementation ($\approx$5500 lines core + $\approx$4000 lines tests) covers all three profiles, preprocessing, pre-signing, FIPS 204 compliance tests, and the benchmark harness. See Appendix~\ref{app:open-science} for the anonymous access URL.
\else
\subsection{Code Availability}

Our primary implementation is in Rust ($\approx$5500 lines core + $\approx$4000 lines tests), with production-ready performance including parallel processing. An earlier Python prototype was used during development but is no longer maintained; the Rust implementation is the reference.

Both implementations include:
\begin{itemize}
    \item All three protocol profiles (P1-TEE, P2-MPC 5-round, P3+-SemiAsync)
    \item Preprocessing modules for P2 (EdaBits, Beaver triples with parallel generation)
    \item Pre-signing and GC precomputation modules for P3+
    \item Comprehensive test suite: FIPS 204 KAT tests for all building blocks plus protocol integration tests across all profiles
    \item Benchmark harness for reproducible measurements across threshold configurations
\end{itemize}

Code will be released upon publication.
\fi


\section{Extensions}
\label{sec:extensions}

We briefly discuss extensions to our basic protocol. Full details are in the supplementary material.

\subsection{Distributed Key Generation}
\label{sec:dkg}

\ifccs
The \emph{key DKG} eliminates the trusted dealer via Feldman commitments~\cite{Feldman87}. The \emph{nonce DKG} (our primary contribution) generates a fresh degree-$(T\!-\!1)$ Shamir sharing per signing session; nonce share privacy follows from Theorem~\ref{thm:it-privacy}. Nonce DKG batches are preprocessed offline and amortized across sessions; see Supplementary Material.
\else
Our protocol employs two distinct DKG protocols. The \emph{key DKG} (executed once during setup) replaces a fully trusted dealer with a distributed protocol: each party contributes a random sharing of local secrets, parties aggregate their shares, and Feldman commitments~\cite{Feldman87} allow every participant to verify that all shares lie on a consistent polynomial without revealing the secret.

\emph{What Feldman VSS does and does not provide.}
Feldman VSS guarantees \emph{polynomial consistency}: each party can verify that their share $\mathbf{s}_{1,h} = f(h)$ is consistent with the committed polynomial $\{g^{a_0},\ldots,g^{a_{T-1}}\}$. Crucially, this does \emph{not} constrain the constant term $f(0) = \mathbf{s}_1$ to have short coefficients. The Shamir shares $f(h)$ for $h \geq 1$ lie in $\mathbb{Z}_q^\ell$ (large, as expected), which is fine for our protocol since the signing bound depends only on the reconstructed $\mathbf{s}_1$, not on individual share norms. The \emph{only} constraint requiring smallness is that $\mathbf{s}_1 = f(0)$ satisfies $\|\mathbf{s}_1\|_\infty \leq \eta = 4$, which is needed for the rejection-sampling bound in the signature.

\emph{Why enforcing short $\mathbf{s}_1$ without trust is hard.}
To verifiably guarantee $\|\mathbf{s}_1\|_\infty \leq \eta$ in a trustless DKG, parties would need to produce a zero-knowledge proof that the committed polynomial has a small constant term. Current lattice-based ZK range proofs for $\ell_\infty$ balls in $R_q$ carry significant communication overhead; a practical treatment is an open problem (see Section~\ref{sec:conclusion}). We note that this limitation is shared by all lattice-based threshold signature schemes in the literature---including Raccoon~\cite{Raccoon2024} and Dilithium-based constructions---that require small-coefficient secrets.

\emph{Practical alternative.}
Profile P1 (Section~\ref{sec:profile-p1}) uses a TEE coordinator that runs ML-DSA key generation in a hardware-isolated environment and distributes Shamir shares directly. This provides the short-$\mathbf{s}_1$ guarantee at the cost of a hardware trust assumption (Intel SGX or equivalent), which is already required by P1's security model. For deployments where hardware trust is acceptable---which covers a broad class of enterprise PKI scenarios---this is the recommended setup mechanism. For lattice-specific VSS/DKG, see~\cite{FS24,GJKR07}.

The \emph{nonce DKG} (executed before each signing session) is our primary technical contribution. As described in Section~\ref{sec:nonce-dkg}, each party generates a degree-$(T\!-\!1)$ polynomial whose higher-degree coefficients (degree $\geq 1$) are sampled uniformly from $\Rq^\ell$ and whose constant term is sampled from the bounded nonce range $[-\lfloor\gamma_1/|S|\rfloor, \lfloor\gamma_1/|S|\rfloor]^{n\ell}$, producing a fresh Shamir sharing of the signing nonce. This structural choice enables nonce share privacy (Theorem~\ref{thm:it-privacy}). The nonce DKG can be preprocessed offline and amortized across multiple signing sessions; see the Supplementary Material for the full protocol details.
\fi

\phantomsection\label{sec:blame}
\ifccs\else
\subsection{Optimistic Blame Protocol}

Our protocol uses an \emph{optimistic} approach: the normal signing path has no verification overhead. Only when failures exceed statistical expectation do we trigger blame.

\paragraph{Triggering Conditions.}
\begin{itemize}
    \item \textbf{Excessive aborts}: More than $K$ consecutive failures, where $K$ is set based on the deployment's per-attempt success rate $p$: $K = \lceil \log(10^{-4}) / \log(1-p) \rceil$. For $p \geq 25\%$ (all configurations with $T \leq 21$), $K = 33$ suffices (honest probability $(0.75)^{33} < 10^{-4}$); for worst-case $T=32$ ($p \approx 21\%$), set $K = 48$ to maintain the $10^{-4}$ guarantee.
    \item \textbf{Commitment mismatch}: Verifiable inconsistency in hash values (immediate blame)
\end{itemize}

\paragraph{Blame Protocol (Profile P1).}
Under TEE coordinator:
\begin{enumerate}
    \item TEE requests each signer to reveal their nonce DKG polynomials and mask inputs (sent only to TEE)
    \item TEE recomputes nonce shares $\mathbf{y}_i$, commitments $\mathbf{w}_i$, responses $\mathbf{z}_i = \mathbf{y}_i + c \cdot \mathbf{s}_{1,i}$, and masks, verifying consistency with submitted values
    \item Inconsistent parties are identified; TEE outputs only the cheater's identity
    \item Malicious party is excluded from future sessions
\end{enumerate}

\paragraph{Security Analysis.}
The blame protocol achieves the following properties:
\begin{itemize}
    \item \textbf{Privacy}: All sensitive values ($\mathbf{f}_h(x)$, $\mathbf{s}_{1,i}$, mask seeds) are transmitted only to the TEE coordinator and never revealed to other parties. The TEE outputs only the identity of misbehaving parties, preserving nonce share privacy of honest parties' shares (Theorem~\ref{thm:it-privacy} continues to hold even after blame execution).
    \item \textbf{Correctness}: The TEE recomputes all protocol values deterministically from revealed inputs. A party is blamed if and only if its revealed inputs are inconsistent with its submitted protocol messages (e.g., $\mathbf{W}_i \neq \lambda_i \mathbf{A}\mathbf{y}_i + \mathbf{m}_i^{(w)}$), ensuring honest parties are never falsely accused.
    \item \textbf{Trigger threshold}: The $K = 33$ consecutive-failure threshold is chosen such that the probability of triggering blame due to honest rejection sampling alone is $(1-p)^{33} < 10^{-4}$ for $p \geq 25\%$ (ML-DSA-65 with $T \leq 21$). For larger $T$, set $K = \lceil \log(10^{-4}) / \log(1-p) \rceil$ to maintain the false-positive rate below $10^{-4}$.
    \item \textbf{UC composition}: Under the TEE trust model ($\mathcal{F}_{\mathsf{TEE}}$ honest), the blame protocol UC-realizes an ideal functionality $\mathcal{F}_{\mathsf{Blame}}$ that takes inputs from all parties, identifies inconsistent parties via deterministic recomputation, and outputs cheater identities to all parties while keeping honest inputs private. Formal UC proof is deferred to the full version; the key observation is that the simulator can run the blame protocol honestly (since it has access to all corrupted parties' inputs and can program the TEE's output) and the view of honest parties consists only of the output (cheater identities), which is identically distributed in the real and ideal worlds.
\end{itemize}

\paragraph{Blame Protocol (Profile P2).}
Profile P2 does not require a separate blame protocol: SPDZ MAC verification already provides \emph{identifiable abort}. When a MAC check fails during the r0-check subprotocol, the SPDZ functionality $\mathcal{F}_{\mathsf{SPDZ}}$ outputs the identity of the party whose MAC is invalid (with probability $\geq 1 - 1/q \approx 1$ for $q = 8380417$). The cheating party is immediately identified and excluded from future sessions. Note that the SPDZ MAC check catches input manipulation (wrong shares), but not wrong inputs submitted to $\mathcal{F}_{\mathsf{SPDZ}}$; a party that consistently submits out-of-range inputs to the edaBit protocol is caught by the cut-and-choose parameter $B = 64$.

\paragraph{Blame Protocol (Profile P3+).}
Profile P3+ uses 2PC, and blame attribution proceeds as follows when failures exceed the threshold $K$. Since CP1 does not hold key shares, blame is routed to CP2 (or a designated auditor):
\begin{enumerate}
    \item CP2 requests each signer $i \in S$ to send: (a) their own nonce DKG polynomial $\mathbf{f}_i(x)$, and (b) all evaluations $\{\mathbf{f}_j(i)\}_{j \in S}$ they received during the DKG round.
    \item CP2 reconstructs each party's nonce share: $\mathbf{y}_i = \sum_{j \in S} \mathbf{f}_j(i)$.
    \item Signer $i$ also sends to CP2 (over a confidential channel): their key share $\mathbf{s}_{1,i}$ and commitment randomness $r_i$. CP2 verifies: (a) $\mathsf{Com}_i = H(\texttt{"com"} \| \mathbf{y}_i \| \mathbf{w}_i \| r_i)$ (commitment binding); (b) $\mathbf{w}_i = \mathbf{A}\mathbf{y}_i$ (nonce consistency); (c) $\mathbf{z}_i = \mathbf{y}_i + c \cdot \mathbf{s}_{1,i}$ (response correctness).
    \item Inconsistent signers are identified and excluded.
\end{enumerate}
\textbf{Note on nonce privacy after blame.} Revealing $\mathbf{f}_i(x)$ to CP2 discloses the constant term $\hat{\mathbf{y}}_i$ and all higher coefficients for the session under investigation, so per-session nonce privacy is intentionally forfeited during blame attribution. Theorem~\ref{thm:it-privacy} (nonce share privacy) holds for \emph{all sessions not under blame investigation}; for the session under review, nonce privacy is deliberately sacrificed for accountability. Key privacy is maintained because $\mathbf{s}_{1,i}$ is sent only to CP2 (trusted auditor), not broadcast. After any blame invocation, a key-refresh step is recommended to limit exposure.

\begin{remark}[Blame leakage across sessions]
Each blame invocation leaks at most $n\ell \cdot \log_2(2\lfloor\gamma_1/|S|\rfloor + 1) \approx 25{,}600$ bits of information (the full session nonce polynomial $\mathbf{f}_i(x)$, with coefficients bounded in $[-\lfloor\gamma_1/|S|\rfloor, \lfloor\gamma_1/|S|\rfloor]$). Because each session uses freshly sampled higher polynomial coefficients, successive blame invocations on \emph{different sessions} are statistically independent: the nonce polynomial for session $\tau$ reveals no information about session $\tau'$. There is therefore no cumulative cross-session leakage from the nonce DKG component alone. However, once $\mathbf{s}_{1,i}$ is disclosed to CP2 (for blame verification), CP2 is effectively a key-holder. A mandatory key-refresh should follow the \emph{first} blame invocation that requires revealing key material; subsequent blame invocations may reuse the refreshed key under the same analysis.
\end{remark}
\fi

\subsection{Profile P2: Fully Distributed Signing}
\label{sec:profile-p2}

\ifccs
Profile P2 eliminates TEE trust via SPDZ-based MPC~\cite{DPSZ12} with edaBits~\cite{EKMOZ20} for the r0-check predicate, avoiding reconstruction of $c\mathbf{s}_2$ at any party.
\else
For deployments where TEE trust is unacceptable, we provide a fully distributed protocol using secure multi-party computation. The core challenge is evaluating the r0-check predicate without reconstructing $c\mathbf{s}_2$ at any single point---since virtually all ML-DSA challenges $c$ are invertible in $\Rq$ (the non-invertible fraction is $< 2^{-15}$; see Remark~\ref{lem:challenge-invertible}), leaking $c\mathbf{s}_2$ would compromise the secret key.
\fi

\ifccs
The r0-check predicate $\infnorm{\mathsf{LowBits}(\mathbf{w} - c\mathbf{s}_2, 2\gamma_2)} < \gamma_2 - \beta$ is evaluated via SPDZ-based MPC using \emph{edaBits}~\cite{EKMOZ20}: parties mask $c\mathbf{s}_2$ shares and open the masked value; the correlated binary representation (edaBit) enables domain conversion without reconstruction, allowing constant-depth comparison. Combiner-mediated commit-then-open and parallel Beaver-triple comparisons reduce the base 8-round protocol to \textbf{5 online rounds}; by UC composition~\cite{Canetti01} the 5-round protocol UC-realizes the same functionality. P2 achieves dishonest-majority with arbitrary $T$ in 5 rounds vs. 23--79 rounds~\cite{BdCE25} and $T \leq 6$~\cite{CDENP26}. Full protocol and UC proof in Appendix~\ref{app:uc-framework}.
\else
\paragraph{The r0-Check Problem.}
The predicate $\infnorm{\mathsf{LowBits}(\mathbf{w} - c\mathbf{s}_2, 2\gamma_2)} < \gamma_2 - \beta$ requires:
\begin{enumerate}
    \item Computing $\mathbf{w}' = \mathbf{w} - c\mathbf{s}_2$ (parties hold shares of $c\mathbf{s}_2$)
    \item Extracting $\mathsf{LowBits}(\mathbf{w}', 2\gamma_2)$ (non-linear)
    \item Comparing the infinity norm against $\gamma_2 - \beta$
\end{enumerate}
Naive reconstruction of $c\mathbf{s}_2$ would leak the secret key (since virtually all challenges $c$ are invertible in $\Rq$; see Section~\ref{sec:r0check}).

\paragraph{Approach: edaBits-Based Arithmetic-to-Binary Conversion.}
We leverage \emph{edaBits}~\cite{EKMOZ20}, which provide correlated randomness $(\langle r \rangle_q, \langle r \rangle_2)$: the same random value shared in both arithmetic (mod $q$) and binary form. This enables efficient domain conversion:

\begin{enumerate}
    \item \textbf{Masked Opening}: Parties compute $\langle \mathbf{w}' + \mathbf{r} \rangle_q$ and open this masked value. The mask $\mathbf{r}$ hides $\mathbf{w}'$.

    \item \textbf{Binary Subtraction}: Using $\langle \mathbf{r} \rangle_2$, parties compute $\langle \mathbf{w}' \rangle_2 = (\mathbf{w}' + \mathbf{r}) \boxminus \langle \mathbf{r} \rangle_2$ via binary circuits, where $\boxminus$ denotes binary subtraction with borrow propagation.

    \item \textbf{Comparison}: With $\mathbf{w}'$ in binary form, the range check $|\mathbf{w}'_j| < \gamma_2 - \beta$ becomes a binary comparison circuit, evaluable in constant rounds.

    \item \textbf{AND Aggregation}: Results for all $nk$ coefficients are combined via an AND-tree to produce a single pass/fail bit.
\end{enumerate}

\paragraph{Protocol Structure.}
Using combiner-mediated commit-then-open and constant-depth comparison circuits, our optimized protocol achieves \textbf{5 online rounds}:
\begin{center}
\begin{tabular}{cl}
\toprule
\textbf{Round} & \textbf{Operation} \\
\midrule
1 & Share exchange + aggregation \\
2 & Combiner-mediated masked reveal \\
3 & A2B conversion via edaBits \\
4 & Parallel comparison + AND aggregation \\
5 & Output \\
\bottomrule
\end{tabular}
\end{center}

\paragraph{Round Optimization Techniques.}
\begin{enumerate}
    \item \textbf{Combiner-mediated commit-then-open (base-protocol Rounds 2--4 $\to$ 5-round Round 2):} All parties send $(\mathsf{Com}_i, \mathsf{masked}_i, \mathsf{rand}_i)$ to a designated combiner (one of the $N$ signing parties, potentially corrupted), who releases all openings simultaneously after collecting from all parties. SPDZ MAC checks catch any tampering by the combiner, since each share is authenticated with a global MAC key unknown to individual parties. This collapses three base-protocol rounds (masked-value exchange, commitment, and opening) into a single combiner-mediated round, saving \emph{two} rounds.

    \item \textbf{Constant-depth comparison (base-protocol Rounds 6--7 $\to$ 5-round Round 4):} All $nk = 1536$ coefficient comparisons are evaluated in parallel using $1536$ Beaver triples; all $(d, e)$ openings are batched into a \emph{single} network round (Round 4). After receiving opened $(d,e)$ values, each party \emph{locally} computes comparison bits $b_j \in \{0,1\}$ and then locally aggregates $\langle \sum_j b_j \rangle$ via linear sharing (no additional network round). The AND of all 1536 comparison results reduces to: $\bigwedge_j b_j = 1 \iff \sum_j b_j = nk = 1536$, which is a \emph{constant-depth} (depth-1) check: opening $\langle \sum_j b_j \rangle$ in Round~5 (``Output'') and comparing to $1536$ constitutes a single network round with no sequential Beaver-triple dependencies.
\end{enumerate}

\paragraph{UC Security of 5-Round Optimization.}
The 5-round protocol is obtained from the 8-round base protocol (Appendix~\ref{app:uc-framework}) via two standard MPC composition steps: Optimization~1 saves 2 rounds (collapsing three base-protocol rounds into one) and Optimization~2 saves 1 round (collapsing two parallel comparison rounds into one), giving $8 - 2 - 1 = 5$ rounds total.

\textit{Optimization 1 (Combiner-mediated commit-then-open, base-protocol Rounds 2--4 $\to$ 5-round Round~2, saving 2 rounds):} Parties send $(\mathsf{Com}_i, \mathsf{masked}_i, \mathsf{rand}_i)$ to a designated combiner (one of the $N$ parties, treated as potentially corrupted) who releases all openings simultaneously. This is UC-equivalent to a sequential broadcast~\cite{CLOS02}: if the combiner withholds or alters any opening, the SPDZ MAC check---which authenticates each share with a global key unknown to the combiner---detects the tampering with probability $1 - \epsilon_{\mathsf{SPDZ}}$. The simulator for this merged round constructs the same view as the 2-round version (send $\to$ open); leakage to the adversary is unchanged since only MAC-authenticated values are released. Confidentiality of the masked values $\mathsf{masked}_i$ from the combiner is protected by the pairwise PRF masks (Lemma~\ref{lem:mask-hiding}), which ensure that even a corrupted combiner observing all $\mathsf{masked}_i$ cannot recover any party's contribution, provided $|S \setminus C| \geq 2$.

\textit{Optimization 2 (Constant-depth comparison, base-protocol Rounds 6--7 $\to$ 5-round Round~4):} All $nk = 1536$ Beaver-triple comparisons are opened simultaneously in a single network round (Round~4). There are no sequential dependencies among the $1536$ Beaver triples: each triple $(d_j, e_j) = (\alpha_j \oplus \langle x_j \rangle,\, \beta_j \oplus \langle y_j \rangle)$ depends only on the party's local share of $\langle x_j \rangle, \langle y_j \rangle$ and the preprocessed triple $(\langle \alpha_j \rangle, \langle \beta_j \rangle, \langle \gamma_j \rangle)$ with $\gamma_j = \alpha_j \wedge \beta_j$, none of which depend on other comparisons. After receiving all $(d_j, e_j)$ in Round~4, parties locally compute the comparison output bit $b_j = \gamma_j \oplus (d_j \wedge \beta_j) \oplus (e_j \wedge \alpha_j) \oplus (d_j \wedge e_j)$ (standard Beaver-triple AND evaluation; the triple components $\alpha_j, \beta_j, \gamma_j$ are distinct from the output bit $b_j$) and aggregate $\langle B \rangle = \sum_j \langle b_j \rangle$ (purely local linear operation, no network round). The AND of all $1536$ comparison bits satisfies $\bigwedge_j b_j = 1 \iff B = nk$; opening $\langle B \rangle$ in Round~5 (``Output'') is a \emph{single} network round. The AND aggregation therefore requires \emph{no sequential AND-gate rounds} (the AND tree is replaced by a single linear aggregation + one opening round, with no Beaver triples).

By the UC composition theorem~\cite{Canetti01}, each merged step preserves the ideal functionality's input-output behavior and leakage profile. The simulator for the 5-round protocol is constructed by composing the simulators for each sub-protocol (SPDZ, edaBits, binary circuits) with the round assignment remapped; each simulator step depends on the functionality, not the round structure. The hybrid argument in Appendix~\ref{app:uc-framework} ($\mathsf{H}_0$--$\mathsf{H}_6$) carries over directly. Since the edaBits functionality is used as a black box (its output distribution is indistinguishable from ideal, at cost $\epsilon_{\mathsf{edaBits}}$) \emph{before} the SPDZ multiplication gate, there is no interaction between the two error terms, and the total security loss remains $\epsilon_{\mathsf{SPDZ}} + \epsilon_{\mathsf{edaBits}} < 2^{-11}$.

\paragraph{Comparison with Prior MPC Approaches.}
\begin{itemize}
    \item \textbf{Bienstock et al.}~\cite{BdCE25}: 23--79 online rounds (per~\cite{BdCE25}), honest majority
    \item \textbf{Celi et al.}~\cite{CDENP26}: 6 online rounds, but restricted to $T \leq 6$
    \item \textbf{Our P2}: 5 online rounds, dishonest majority (unforgeability), arbitrary $T$
\end{itemize}
Profile P2 thus achieves dishonest-majority \emph{unforgeability} with arbitrary thresholds in constant rounds. Note: commitment and r0-check mask \emph{privacy} additionally requires $|S \setminus C| \geq 2$ (Lemma~\ref{lem:mask-hiding}), meaning at most $T-2$ signing-set corruptions are tolerated for full mask privacy when $T \geq 3$.

\paragraph{Communication Complexity.}
Per-party online communication is $O(nk \cdot \log q)$ bits for the A2B conversion and comparison circuits. Preprocessing (edaBits generation) requires $O(nk)$ correlated randomness per signing attempt.
\fi

\subsection{Security Analysis of Profile P2}
\label{sec:p2-security}

We prove that Profile P2 achieves UC security by composition of standard MPC building blocks.

\ifccs\else
\begin{definition}[Ideal Functionality $\mathcal{F}_{r_0}$]
\label{def:f-r0}
The functionality $\mathcal{F}_{r_0}$ receives shares $[\mathbf{w}']_i$ from each party, reconstructs $\mathbf{w}' = \sum_i [\mathbf{w}']_i \bmod q$, computes $\mathsf{result} = \bigwedge_j (|\mathbf{w}'_j|_q < \gamma_2 - \beta)$, and outputs only $\mathsf{result}$ (1 bit) to all parties.
\end{definition}
\fi

\begin{theorem}[Profile P2 UC Security]
\label{thm:p2-uc}
Protocol $\Pi_{r_0}$ UC-realizes $\mathcal{F}_{r_0\text{-check}}$ against static malicious adversaries (corrupting up to $N-1$ parties) in the $(\mathcal{F}_{\mathsf{SPDZ}},\allowbreak \mathcal{F}_{\mathsf{edaBits}}, \mathcal{F}_{\mathsf{Binary}})$-hybrid model, with security-with-abort. Concretely:
\begin{align*}
&\bigl|\Pr[\mathsf{REAL}_{\Pi_{r_0}, \mathcal{A}, \mathcal{Z}} = 1] - \Pr[\mathsf{IDEAL}_{\mathcal{F}_{r_0\text{-check}}, \mathcal{S}, \mathcal{Z}} = 1]\bigr|\\
&\quad\leq \epsilon_{\mathsf{SPDZ}} + \epsilon_{\mathsf{edaBits}} < 2^{-11}
\end{align*}
for ML-DSA-65 parameters, where $\epsilon_{\mathsf{SPDZ}} \leq m/q < 2^{-12}$ with $m = nk = 1536$ coefficients ($1536/8380417 \approx 2^{-12.4}$), and $\epsilon_{\mathsf{edaBits}} \leq m/q + 2^{-B} < 2^{-12}$ where $B = 64$ makes the cut-and-choose term $2^{-64}$ negligible (the dominant term is again $m/q$ from the internal SPDZ MAC check within edaBits generation).
\end{theorem}

\ifccs
\begin{proof}[Proof sketch]
Profile P2 achieves UC security against malicious adversaries corrupting up to $N-1$ parties via six hybrids: replacing SPDZ, edaBits, and binary circuits with their ideal functionalities introduces total loss $\epsilon_{\mathsf{SPDZ}} + \epsilon_{\mathsf{edaBits}} < 2^{-11}$; mask hiding (Lemma~\ref{lem:mask-hiding}) protects commitments under $|S \setminus C| \geq 2$. Full UC simulator construction and hybrid proofs in Appendix~\ref{app:uc-framework}.
\end{proof}
\else
\begin{proof}
We prove via six hybrids $\mathsf{H}_0$ (real) to $\mathsf{H}_6$ (ideal):
\begin{enumerate}[nosep]
    \item $\mathsf{H}_1$: Replace SPDZ with $\mathcal{F}_{\mathsf{SPDZ}}$ (loss: $\epsilon_{\mathsf{SPDZ}} \leq m/q < 2^{-12}$)
    \item $\mathsf{H}_2$: Replace edaBits with $\mathcal{F}_{\mathsf{edaBits}}$ (loss: $\epsilon_{\mathsf{edaBits}} < 2^{-12}$)
    \item $\mathsf{H}_3$: Simulate masked values as uniform---by one-time pad since $r_j \getsr \mathbb{F}_q$ (loss: 0)
    \item $\mathsf{H}_4$: Replace binary circuits with $\mathcal{F}_{\mathsf{Binary}}$---Beaver $(d,e)$ are uniform (loss: 0)
    \item $\mathsf{H}_5$: Backward-simulate AND tree given result bit: for each comparison bit $b_j$, the simulator samples honest additive shares uniformly subject to summing to the known $b_j$ value; SPDZ additive shares are uniform conditioned on their sum, so this is statistically identical to the real distribution (loss: 0)
    \item $\mathsf{H}_6$: Ideal execution with simulator $\mathcal{S}$
\end{enumerate}
Total loss: $\epsilon_{\mathsf{SPDZ}} + \epsilon_{\mathsf{edaBits}} < 2^{-11}$. Full simulator construction and hybrid proofs in Appendix~\ref{app:uc-framework}.
\end{proof}
\fi

\subsection{Profile P3+: Semi-Async 2PC-Assisted Signing}
\label{sec:profile-p3}

\ifccs
Profile P3+ uses 2PC between $\mathsf{CP}_1$/$\mathsf{CP}_2$ for the r0-check with \emph{semi-asynchronous} signers: signers precompute nonces offline and respond within window $\Delta$; 2PC runs synchronously between CPs only.
\else
Profile P3+ designates two Computation Parties ($\mathsf{CP}_1$, $\mathsf{CP}_2$) who jointly evaluate the r0-check via 2PC. The key innovation is \emph{semi-asynchronous signer participation}: signers precompute nonces offline and respond within a time window, rather than participating in multiple synchronous rounds. We use the term \emph{semi-asynchronous} to denote that signers need not be online simultaneously for interactive rounds, but must respond within a bounded validity window~$\Delta$; the 2PC subprotocol between the two CPs executes synchronously independently of the signers.
\fi

\ifccs
\paragraph{Semi-Async Timing Model.} Signers precompute nonces offline and respond within a window $\Delta$; 2PC executes synchronously between CPs only (not signers). A signer daemon handles rejection-sampling retries automatically, requiring only a single user interaction per signing request.
\else
\paragraph{Semi-Async Timing Model.}
\begin{enumerate}
    \item \textbf{Pre-signing Window} $[t_0, t_1]$: Fully asynchronous. Each signer precomputes $B$ sets of $(\mathbf{y}_i, \mathbf{w}_i, \mathsf{Com}_i)$ at any time and broadcasts commitments to the combiner.
    \item \textbf{Challenge Broadcast} $t_2$: Combiner aggregates commitments, derives challenge $c$, and broadcasts to signers.
    \item \textbf{Response Window} $[t_2, t_2 + \Delta]$: Semi-synchronous. Signers respond within $\Delta$ (e.g., 5 minutes). No coordination between signers required.
    \item \textbf{2PC Execution} $[t_3, t_4]$: Synchronous between CPs only.
\end{enumerate}

\paragraph{Signer Workflow.} Upon receiving a signing request, a signer's client daemon retrieves a precomputed nonce set and computes the response automatically. Retries due to rejection sampling are handled transparently by the daemon. The entire process completes within $\sim$1 second of authorization, requiring only a single user interaction.
\fi

\ifccs
$\mathsf{CP}_1$ garbles the r0-check circuit offline; $\mathsf{CP}_2$ gets input labels via OT and evaluates online. The circuit outputs both the pass/fail bit and hint $\mathbf{h}$ (using $c\mathbf{s}_2 = \mathbf{S}_1 + \mathbf{S}_2$ internally, never revealed). With GC precomputation, online 2PC drops to $\approx$12\,ms. The combiner broadcasts only $(\mathbf{w}_1, \tilde{c})$; individual $\mathbf{W}_i$ stay private (Remark~\ref{rem:two-honest}). Full protocol structure in Appendix~\ref{app:uc-framework}.
\else
\paragraph{2PC Protocol Structure.}
\begin{enumerate}
    \item \textbf{Input Sharing}: Signers send masked $\mathbf{s}_2$ shares to CPs. $\mathsf{CP}_1$ receives $\mathbf{S}_1$, $\mathsf{CP}_2$ receives $\mathbf{S}_2$, where $\mathbf{S}_1 + \mathbf{S}_2 = c\mathbf{s}_2 \bmod q$.
    \item \textbf{Garbled Circuit Evaluation}: $\mathsf{CP}_1$ (garbler) constructs a garbled circuit $\tilde{C}$ for computing $\mathsf{result} = \bigwedge_j (|(\mathbf{w} - \mathbf{S}_1 - \mathbf{S}_2)_j|_q < \gamma_2 - \beta)$.
    \item \textbf{Oblivious Transfer}: $\mathsf{CP}_2$ obtains input wire labels for $\mathbf{S}_2$ via OT.
    \item \textbf{Evaluation and Output}: $\mathsf{CP}_2$ evaluates $\tilde{C}$ and both parties learn only $\mathsf{result} \in \{0,1\}$.
\end{enumerate}

\paragraph{GC Precomputation.} Since the r0-check circuit structure is fixed (independent of inputs), $\mathsf{CP}_1$ can pre-garble circuits offline. Online, only OT and evaluation are needed, reducing 2PC latency from $\sim$50ms to $\sim$10--20ms.

\paragraph{Hint Computation.} The 2PC circuit is extended to output not only the r0-check result bit, but also the hint $\mathbf{h} = \mathsf{MakeHint}(-c\mathbf{t}_0, \mathbf{A}\mathbf{z} - c\mathbf{t}_1 \cdot 2^d, 2\gamma_2)$ when the check passes. The hint computation requires $c\mathbf{s}_2$ (to derive $c\mathbf{t}_0$), which is available inside the 2PC as $\mathbf{S}_1 + \mathbf{S}_2$. Since $\mathbf{h}$ is part of the public signature, outputting it does not compromise privacy. The circuit adds $O(nk)$ comparison gates for $\mathsf{MakeHint}$, negligible relative to the r0-check circuit.

\paragraph{Combiner and Challenge Integrity.} In P3+, the combiner receives masked commitments $\mathbf{W}_i = \lambda_i \mathbf{w}_i + \mathbf{m}_i^{(w)}$ and derives the challenge from $\mathbf{w}_1 = \mathsf{HighBits}(\sum_i \mathbf{W}_i, 2\gamma_2)$. The combiner broadcasts only $(\mathbf{w}_1, \tilde{c})$ to all parties; individual $\mathbf{W}_i$ are never revealed, which is essential for privacy when $|S \setminus C| = 1$ (Remark~\ref{rem:two-honest}). A malicious combiner could evaluate $c$ for different candidate signing sets and choose one yielding a favorable challenge. However, this does not constitute an attack: (1) the challenge $c$ is deterministically derived from the hash, not freely chosen; (2) the adversary model is static corruption, so the combiner cannot adaptively corrupt parties based on $c$; and (3) each candidate challenge $c$ is an independent random oracle output, so trying different $S$ does not help forge signatures. For additional assurance, the signing set $S$ can be pre-committed (e.g., included in the Round~1 commitment hash), binding the combiner's choice before seeing $\mathbf{W}_i$ values.
\fi

\subsubsection{UC Security Analysis of Profile P3+}

We prove that P3+ achieves UC security under the assumption that at least one CP is honest.

\ifccs
\begin{theorem}[Profile P3+ UC Security]
\label{thm:p3-uc}
Protocol $\Pi_{r_0}^{\mathsf{2PC}}$ UC-realizes $\mathcal{F}_{r_0}^{\mathsf{2PC}}$ in the $(\mathcal{F}_{\mathsf{OT}}, \mathcal{F}_{\mathsf{GC}})$-hybrid model against static malicious adversaries corrupting at most one of $\{\mathsf{CP}_1, \mathsf{CP}_2\}$, with security-with-abort. Concretely:
\[
|\Pr[\mathsf{REAL}_{\Pi, \mathcal{A}, \mathcal{Z}} = 1] - \Pr[\mathsf{IDEAL}_{\mathcal{F}_{r_0}^{\mathsf{2PC}}, \mathcal{S}, \mathcal{Z}} = 1]| \leq \epsilon_{\mathsf{GC}} + \epsilon_{\mathsf{OT}} + \epsilon_{\mathsf{ext}} < 2^{-\kappa}
\]
where $\kappa$ is the computational security parameter and $\epsilon_{\mathsf{ext}} \leq 2^{-B/2}$ is the LP07 cut-and-choose extraction error when $\mathsf{CP}_1$ is corrupted ($B$ = number of cut-and-choose circuits).
\end{theorem}
\begin{proof}[Proof sketch]
Via four hybrids: replace OT with $\mathcal{F}_{\mathsf{OT}}$ (loss $\epsilon_{\mathsf{OT}}$), replace GC with $\mathcal{F}_{\mathsf{GC}}$ (loss $\epsilon_{\mathsf{GC}}$), then simulate corrupted $\mathsf{CP}_1$ via LP07 garbling extractability (extraction succeeds except with probability $\epsilon_{\mathsf{ext}} \leq 2^{-B/2}$) and corrupted $\mathsf{CP}_2$ via garbling simulatability. Full simulator construction in Appendix~\ref{app:uc-framework}.
\end{proof}
\begin{corollary}[Full P3+ Security]
\label{cor:p3-full}
Profile P3+ achieves UC security (Theorem~\ref{thm:p3-uc}), EUF-CMA under Module-SIS (Theorem~\ref{thm:unforgeability}), and nonce share privacy (Theorem~\ref{thm:it-privacy}); 2PC reveals only the result bit.
\end{corollary}
\phantomsection\label{def:f-r0-2pc}
\else
\begin{definition}[Ideal Functionality $\mathcal{F}_{r_0}^{\mathsf{2PC}}$]
\label{def:f-r0-2pc}
The functionality $\mathcal{F}_{r_0}^{\mathsf{2PC}}$ interacts with two computation parties $\mathsf{CP}_1, \mathsf{CP}_2$:
\begin{itemize}
    \item \textbf{Public input}: $(\mathbf{A}, c, \mathbf{t}_1, \mathbf{z})$ (known to both parties and the combiner)
    \item \textbf{Private input}: Receive $(\mathbf{w}, \mathbf{S}_1)$ from $\mathsf{CP}_1$ and $\mathbf{S}_2$ from $\mathsf{CP}_2$, where $\mathbf{w} = \mathbf{A}\mathbf{y}$ is the full (unmasked) commitment vector (not just $\mathbf{w}_1 = \mathsf{HighBits}(\mathbf{w}, 2\gamma_2)$; $\mathbf{w}$ is passed as $\mathsf{CP}_1$'s private input by the combiner), and the private shares satisfy $\mathbf{S}_1 + \mathbf{S}_2 = c\mathbf{s}_2 \bmod q$ (arbitrary additive split; in the concrete protocol, $\mathbf{S}_j = \sum_{i \text{ assigned to } \mathsf{CP}_j} \lambda_i \cdot c\mathbf{s}_{2,i}$)
    \item \textbf{Compute}: $\mathbf{w}' = \mathbf{w} - \mathbf{S}_1 - \mathbf{S}_2 \bmod q$; \quad $\mathsf{result} = \bigwedge_j (|\mathbf{w}'_j|_q < \gamma_2 - \beta)$. If $\mathsf{result} = 1$: derive $c\mathbf{t}_0 = (\mathbf{A}\mathbf{z} - \mathbf{w}) + (\mathbf{S}_1 + \mathbf{S}_2) - c\mathbf{t}_1 \cdot 2^d$ and compute $\mathbf{h} = \mathsf{MakeHint}(-c\mathbf{t}_0, \mathbf{A}\mathbf{z} - c\mathbf{t}_1 \cdot 2^d, 2\gamma_2)$
    \item \textbf{Output}: Send $\mathsf{result}$ to both parties. If $\mathsf{result} = 1$, additionally output $\mathbf{h}$
\end{itemize}
\end{definition}

\begin{theorem}[Profile P3+ UC Security]
\label{thm:p3-uc}
Protocol $\Pi_{r_0}^{\mathsf{2PC}}$ UC-realizes $\mathcal{F}_{r_0}^{\mathsf{2PC}}$ in the $(\mathcal{F}_{\mathsf{OT}}, \mathcal{F}_{\mathsf{GC}})$-hybrid model against static malicious adversaries corrupting at most one of $\{\mathsf{CP}_1, \mathsf{CP}_2\}$, with security-with-abort. Concretely:
\[
|\Pr[\mathsf{REAL}_{\Pi, \mathcal{A}, \mathcal{Z}} = 1] - \Pr[\mathsf{IDEAL}_{\mathcal{F}_{r_0}^{\mathsf{2PC}}, \mathcal{S}, \mathcal{Z}} = 1]| \leq \epsilon_{\mathsf{GC}} + \epsilon_{\mathsf{OT}} + \epsilon_{\mathsf{ext}} < 2^{-\kappa}
\]
where $\kappa$ is the computational security parameter and $\epsilon_{\mathsf{ext}} \leq 2^{-B/2}$ is the LP07 cut-and-choose extraction error when $\mathsf{CP}_1$ is corrupted.
\end{theorem}

\begin{proof}
We reduce to the UC security of Yao's garbled circuits~\cite{Yao86,LP07,BHR12}. We use a malicious-secure garbling scheme with cut-and-choose~\cite{LP07} to achieve both simulation security (garbled circuit can be simulated from output only) and extractability (garbler's input can be extracted from the garbled circuit). The proof proceeds via four hybrids:

\paragraph{$\mathsf{H}_0$ (Real).} Real protocol execution with honest $\mathsf{CP}_1$ or $\mathsf{CP}_2$.

\paragraph{$\mathsf{H}_1$ (Ideal OT).} Replace OT protocol with $\mathcal{F}_{\mathsf{OT}}$. By UC security of OT~\cite{PVW08}: $|\Pr[\mathsf{H}_1] - \Pr[\mathsf{H}_0]| \leq \epsilon_{\mathsf{OT}}$.

\paragraph{$\mathsf{H}_2$ (Ideal GC).} Replace garbled circuit with $\mathcal{F}_{\mathsf{GC}}$ that outputs only $f(\mathbf{S}_1, \mathbf{S}_2)$. By garbled circuit security~\cite{LP07,BHR12}: $|\Pr[\mathsf{H}_2] - \Pr[\mathsf{H}_1]| \leq \epsilon_{\mathsf{GC}}$.

\paragraph{$\mathsf{H}_3$ (Simulated).} Simulator $\mathcal{S}$ for corrupted $\mathsf{CP}_b$:
\begin{itemize}
    \item If $\mathsf{CP}_1$ (garbler) corrupted: $\mathcal{S}$ extracts $(\mathbf{w}, \mathbf{S}_1)$ from $\mathcal{A}$'s garbled circuit (using extractability of the garbling scheme), forwards $\mathbf{S}_1$ to $\mathcal{F}_{r_0}^{\mathsf{2PC}}$. The functionality internally receives $\mathbf{S}_2$ from honest $\mathsf{CP}_2$ and returns $\mathsf{result}$. $\mathcal{S}$ simulates $\mathsf{CP}_2$'s OT queries using standard OT simulation.
    \item If $\mathsf{CP}_2$ (evaluator) corrupted: $\mathcal{S}$ extracts $\mathbf{S}_2$ from $\mathcal{A}$'s OT queries, forwards to $\mathcal{F}_{r_0}^{\mathsf{2PC}}$, receives $\mathsf{result}$, and simulates a garbled circuit that evaluates to $\mathsf{result}$ (using the simulatability of garbling schemes---given only the output, one can simulate a garbled circuit indistinguishable from real).
\end{itemize}
By the simulation security of garbled circuits~\cite{BHR12} (a garbled circuit can be simulated given only $f(x)$, not $x$): $\Pr[\mathsf{H}_3] = \Pr[\mathsf{H}_2]$.

\paragraph{Conclusion.} Total distinguishing advantage: $\epsilon_{\mathsf{GC}} + \epsilon_{\mathsf{OT}} + \epsilon_{\mathsf{ext}} < 2^{-\kappa}$ (where $\epsilon_{\mathsf{ext}} \leq 2^{-B/2}$ from LP07 cut-and-choose extraction).
\end{proof}

\begin{corollary}[Full P3+ Security]
\label{cor:p3-full}
Profile P3+ achieves:
\begin{enumerate}
    \item \textbf{UC Security}: Under 1-of-2 CP honest assumption (Theorem~\ref{thm:p3-uc})
    \item \textbf{EUF-CMA}: Reduces to Module-SIS (inherited from Theorem~\ref{thm:unforgeability})
    \item \textbf{Privacy}: Nonce shares $\mathbf{y}_i$ enjoy nonce share privacy via Shamir nonce DKG (Theorem~\ref{thm:it-privacy}); key privacy given $\sigma$ is not claimed in this work, consistent with single-signer FIPS 204 (Remark~\ref{rem:key-privacy}); commitment and r0-check values hidden by pairwise masks; 2PC reveals only the result bit
\end{enumerate}
\end{corollary}

\begin{proof}
(1) UC security follows from Theorem~\ref{thm:p3-uc} via the UC composition theorem~\cite{Canetti01}. (2) EUF-CMA security: by Lemma~\ref{lem:mask-cancel}, masks cancel in the aggregate, so $(\tilde{c}, \mathbf{z}, \mathbf{h})$ has identical structure to single-signer ML-DSA; Theorem~\ref{thm:unforgeability} applies. (3) Privacy: nonce shares $\mathbf{y}_i$ have nonce share privacy via Theorem~\ref{thm:it-privacy}; key privacy not claimed (Remark~\ref{rem:key-privacy}); masked values $\mathbf{W}_i$, $\mathbf{V}_i$ are sent to the combiner (trusted under 1-of-2 CP honest assumption) or protected by mask hiding (Lemma~\ref{lem:mask-hiding}) when $|S \setminus C| \geq 2$.
\end{proof}
\fi

\ifccs
\paragraph{Performance.} P3+ achieves 22\,ms for 3-of-5 (2 logical rounds; semi-async). No concurrent work supports semi-asynchronous signer participation; P3+ is suited for human-in-the-loop authorization. Full round analysis in Appendix~\ref{app:uc-framework}.
\else
\paragraph{Performance.} P3+ achieves 22 ms latency for 3-of-5 threshold (comparable to P2's 22 ms, but with semi-async signer participation). The 2PC overhead is minimal compared to P2's MPC because only 2 parties participate in the 2PC subprotocol, garbled circuits have constant rounds, and no edaBits preprocessing is required. With GC precomputation, 2PC latency drops to $\sim$12 ms.

\paragraph{Round Complexity Analysis.}
\begin{center}
\begin{tabular}{lcc}
\toprule
\textbf{Profile} & \textbf{Rounds/Attempt} & \textbf{Expected Total} \\
\midrule
P2 (5-round) & 5 & 15 \\
P3+ & 2$^*$ & 7.4 \\
\bottomrule
\end{tabular}
\end{center}
$^*$P3+ logical rounds = 1 signer step + 1 server round. Expected total assumes 27--33\% per-attempt success rate depending on configuration (3.7 attempts for P3+, 3.0 for P2).

\paragraph{Comparison with Concurrent Work.} The ``Async'' column in our comparison table is unique to P3+. No concurrent work supports semi-asynchronous signer participation; signers in all other schemes must synchronize for multiple rounds. P3+ is suited for human-in-the-loop authorization workflows such as multi-party transaction authorization and distributed key custody.
\fi

\ifccs\else
\subsection{Hint Computation and \texorpdfstring{$\mathbf{s}_2$}{s2} Security}
\label{sec:s2}

The hint $\mathbf{h} = \mathsf{MakeHint}(-c\mathbf{t}_0, \mathbf{r}_1, 2\gamma_2)$ requires computing $c\mathbf{s}_2$ inside the TEE. This value is \textbf{highly sensitive}: if an adversary observes $(c, c\mathbf{s}_2)$ where $c$ is invertible in $\Rq$, they can recover $\mathbf{s}_2 = c^{-1}(c\mathbf{s}_2)$.

\textbf{Solution (Profile P1):} The TEE receives masked shares $\mathbf{V}_i = \lambda_i c \mathbf{s}_{2,i} + \mathbf{m}_i^{(s2)}$, reconstructs $c\mathbf{s}_2$ inside the enclave, computes the hint, and discards $c\mathbf{s}_2$. The value never leaves the TEE.
\fi

\ifccs\else
\subsection{Additional Extensions}

\paragraph{Proactive Security.} Periodic share refresh (adding shares of zero) protects against mobile adversaries. See Supplementary Material.

\paragraph{ML-KEM Integration.} Pairwise seeds can be established via ML-KEM~\cite{FIPS203} for full post-quantum security (static pairwise seeds; per-session forward secrecy requires ephemeral per-round encapsulation).
\fi


\section{Conclusion}
\label{sec:conclusion}

\ifccs
We presented Threshold ML-DSA via Shamir Nonce DKG — the first threshold ML-DSA scheme achieving \emph{nonce share privacy} (no computational assumptions) with arbitrary thresholds while producing standard 3.3\,KB signatures verifiable by unmodified FIPS 204 implementations. Profile P2 is, to our knowledge, the first scheme to simultaneously combine dishonest-majority security, arbitrary $T$, UC security, and 5 constant online rounds. Coordinator-based profiles (P1, P3+) require only $|S| = T$ for nonce share privacy. Our Rust implementation scales from 2-of-3 to 32-of-45 with sub-100ms latency for P1/P3+.

Open problems include: (1)~formal key-privacy reduction for threshold ML-DSA signatures; (2)~adaptive security in the Canetti-Halevi-Katz model; (3)~proactive secret sharing with nonce DKG epoch consistency; (4)~reducing or merging the offline nonce preprocessing round (the round is structurally necessary for preventing rewinding; merging into the online protocol adds one online round); (5)~QROM security for the challenge hash; (6)~decentralized key generation with verifiable short-coefficient Shamir sharing (Feldman VSS does not apply since Shamir shares of $\eta$-small secrets are $\mathbb{Z}_q$-uniform; lattice-based ZK range proofs offer a path forward); and (7)~reducing the $|S \setminus C| \geq 2$ requirement for commitment and r0-check mask hiding in P2's broadcast model to $|S \setminus C| = 1$ (see Lemma~\ref{lem:mask-hiding}). The Irwin-Hall security gap is resolved by the direct shift-invariance analysis of this work (Corollary~\ref{cor:ih-shift-ml-dsa}): the EUF-CMA loss is $< 0.013 \cdot q_s$ bits for all $|S| \leq 33$ (per signing session: $< 0.013$ bits), giving a proven security bound of $\geq 96 - 0.0066 \cdot q_s$ bits for $|S| \leq 17$, non-vacuous for $q_s < 16{,}000$.
\else
We have presented Threshold ML-DSA via Shamir Nonce DKG, the first threshold signature scheme for FIPS 204 ML-DSA achieving \emph{nonce share privacy} (no computational assumptions) with arbitrary thresholds, while producing standard 3.3\,KB signatures verifiable by unmodified FIPS 204 implementations. Profile P2 is, to our knowledge, the first construction combining arbitrary $T$, dishonest-majority unforgeability, constant-round (5 online rounds) UC security, and FIPS 204 compatibility---a point in the design space not reached by concurrent work: Bienstock et al.~\cite{BdCE25} require honest majority; Celi et al.~\cite{CDENP26} are limited to $T \leq 6$; Trilithium~\cite{Trilithium25} handles only 2 parties. In coordinator-based profiles (P1, P3+), signing sets of size $|S| = T$ suffice for nonce share privacy; the fully distributed profile (P2) retains $|S \setminus C| \geq 2$ for commitment and r0-check mask hiding.

\subsection{Summary of Contributions}

Our primary technical contribution is \emph{Shamir nonce DKG}: parties jointly generate the signing nonce via a distributed key generation protocol, producing a degree-$(T\!-\!1)$ Shamir sharing that structurally matches the long-term secret. Because the adversary observes only $T - 1$ evaluations of each honest party's degree-$(T\!-\!1)$ polynomial, the remaining degree of freedom provides an approximate one-time pad: the honest party's nonce share $\mathbf{y}_h$ has conditional min-entropy exceeding $5\times$ the secret key entropy for $|S| \leq 17$, requiring no computational assumptions (Theorem~\ref{thm:it-privacy}). Key privacy given the public signature is not claimed in this work; see Remark~\ref{rem:key-privacy}. The bounded-nonce constraint prevents perfect uniformity (Remark~\ref{rem:sd-zero}), but the min-entropy guarantee is more than sufficient for practical security. In coordinator-based profiles (P1, P3+), this eliminates the two-honest requirement for nonce share privacy: signing sets of size $|S| \geq T$ suffice, matching the minimum required by Shamir's $(T,N)$ threshold property for reconstruction. Individual masked commitments $\mathbf{W}_i$ are sent only to the coordinator (not broadcast), preventing a key-recovery attack via the injective matrix $\mathbf{A}$ (Remark~\ref{rem:two-honest}). Commitment and r0-check mask hiding retains $|S \setminus C| \geq 2$ (Lemma~\ref{lem:mask-hiding}), which applies to P2's broadcast model.

As a secondary technique, we introduce \emph{pairwise-canceling masks} (adapted from ECDSA~\cite{CGGMP20}) to the lattice setting, used for commitment values $\mathbf{W}_i$ and r0-check shares $\mathbf{V}_i$. This addresses three challenges absent in ECDSA: rejection sampling bounds, r0-check key leakage, and Irwin-Hall nonce security. The Irwin-Hall security gap of prior approaches is resolved by direct shift-invariance (Corollary~\ref{cor:ih-shift-ml-dsa}): the EUF-CMA loss is $< 0.013$ bits per signing session for $|S| \leq 33$ (total: $< 0.013 \cdot q_s$ bits), giving a proven security bound of $\geq 96 - 0.0066 \cdot q_s$ bits for $|S| \leq 17$ (non-vacuous for $q_s < 16{,}000$), remaining below $1$ bit per session for all $|S| \leq 2{,}584$.

We instantiate these techniques in three deployment profiles with complete UC proofs. Profile P1 (TEE-Assisted) achieves 3 online rounds plus 1 offline preprocessing round with EUF-CMA security under Module-SIS (Theorem~\ref{thm:unforgeability}). Profile P2 (Fully Distributed) uses MPC to achieve 5-round signing with dishonest-majority and UC security (Theorem~\ref{thm:p2-uc}). Profile P3+ (Semi-Async 2PC) achieves 2 logical rounds (22 ms for 3-of-5) with UC security under a 1-of-2 CP honest assumption (Theorem~\ref{thm:p3-uc}).

Our optimized Rust implementation demonstrates practical efficiency: P1 achieves 4--60\,ms and P3+ achieves 17--94\,ms for configurations from $(2,3)$ to $(32,45)$ threshold; P2 ranges from 21--194\,ms, trading latency for eliminating hardware trust. Success rates of $\approx$21--45\% (Table~\ref{tab:perf-threshold}) are comparable to single-signer ML-DSA. All signatures are syntactically identical to single-signer ML-DSA and pass standard FIPS 204 verification without modification.

\subsection{Limitations and Future Work}

\paragraph{Current Limitations.}
Each deployment profile involves different trust assumptions: Profile P1 requires TEE/HSM trust, Profile P2 eliminates hardware trust at the cost of 5 MPC rounds, and Profile P3+ requires 1-of-2 CP honesty with 2 logical rounds. The nonce DKG adds one offline preprocessing round of $O(N^2)$ point-to-point messages ($\approx$3.6\,KB each) before each signing session; this one additional offline round is amortizable over batches of signing sessions and is the structural price of nonce share privacy---a guarantee that purely additive nonce sharing cannot provide without the two-honest requirement. The aggregated nonce follows an Irwin-Hall rather than uniform distribution. By the direct shift-invariance analysis (Corollary~\ref{cor:ih-shift-ml-dsa}), the per-coordinate chi-squared divergence between real and simulated nonces is $\chi^2(\mathsf{IH}+\beta \| \mathsf{IH}) < 7 \times 10^{-6}$ for $|S| \leq 17$ and $< 1.4 \times 10^{-5}$ for $|S| \leq 33$, giving a full-vector EUF-CMA security loss $< 0.013$ bits per signing session for $|S| \leq 33$ (total: $< 0.013 \cdot q_s$ bits)---at most $13$ bits for $q_s \leq 1{,}000$ (or $< 7$ bits in the $|S| \leq 17$ regime using the $< 0.007$ per-session bound), well below the 192-bit security target. For the pairwise masks used on commitment and r0-check values, $|S \setminus C| \geq 2$ honest parties are needed. In coordinator-based profiles (P1, P3+), masked commitments $\mathbf{W}_i$ are sent only to the coordinator rather than broadcast, so $|S \setminus C| = 1$ suffices for nonce share privacy; in P2's broadcast model, the existing mask-hiding condition $|S \setminus C| \geq 2$ applies to all values. This work focuses on ML-DSA-65 (NIST Level~3); extending to ML-DSA-44 ($\gamma_1 = 2^{17}$, $\beta = \tau\eta = 78$) and ML-DSA-87 ($k = 8, \ell = 7$, $\gamma_1 = 2^{19}$, $\beta = 120$) is straightforward in principle---ML-DSA-44 loosens the chi-squared bound by ${\approx}2.5\times$ relative to ML-DSA-65 (smaller $\gamma_1$ increases the ratio $\beta/\gamma_1$) but retains $< 0.04$ bits of EUF-CMA loss per session for $|S| \leq 33$, while ML-DSA-87's smaller $\beta$ at the same $\gamma_1$ tightens the bound by ${\approx}2.7\times$ ($< 0.007$ bits per session for $|S| \leq 33$); the latter also expands the r0-check circuit from 1536 to 2048 coefficients; a full benchmark across all three parameter sets is deferred to future work. Finally, we assume a synchronous model where all parties are online simultaneously during signing (though P3+ relaxes this to semi-asynchronous participation).

\paragraph{Future Directions.}
Several aspects of our construction invite further investigation, ranging from strengthening the security model to broadening the parameter space and exploring applied deployment scenarios.

The most natural theoretical extension is \emph{adaptive security}. Our analysis assumes a static adversary that selects which parties to corrupt before protocol execution begins; extending the results to the adaptive setting---where the adversary may corrupt parties during the protocol based on observed transcripts---remains open. The Shamir nonce DKG introduces a subtle obstacle: each honest party's polynomial $\mathbf{f}_h$ has its leading coefficient $\mathbf{a}_{h,T-1}$ sampled uniformly, but the remaining coefficients are constrained by the constant term $\hat{\mathbf{y}}_h$ and the evaluation points. Under adaptive corruption, the simulator must equivocate these polynomials after the fact, which may require an erasure model (parties delete intermediate state after each round) or a non-committing encryption wrapper around the DKG messages. The analogous question for ECDSA threshold signatures was resolved by Canetti et al.~\cite{CGGMP20} using UC-secure commitments; whether a similar approach suffices in the lattice setting, where the polynomial structure imposes additional algebraic constraints, is unclear. Separately, a proof in the quantum random oracle model (QROM) would strengthen the EUF-CMA bound against quantum adversaries making superposition queries to the challenge hash---an upgrade shared with single-signer FIPS~204~\cite{FIPS204}.

A related direction concerns the strength of the abort guarantee. All three profiles currently achieve security-with-abort: a malicious party can force the protocol to output $\bot$, but cannot produce an invalid signature. Profile P2's SPDZ-based MPC naturally supports \emph{identifiable abort}---when a MAC check fails, the cheating party can be pinpointed and excluded---but we have not formalized this property. A stronger goal is \emph{guaranteed output delivery}, where the protocol always produces a valid signature provided sufficiently many honest parties participate. This is achievable in the honest-majority setting via broadcast protocols~\cite{CLOS02}, and Profile P2's framework can be augmented with a fair reconstruction phase when $|S| \geq 2T - 1$. Such an upgrade would be particularly relevant for custody and government applications where signing requests must not be silently suppressed.

A formal security reduction for \emph{key privacy}---the property that a signature $\sigma = (\tilde{c}, \mathbf{z}, \mathbf{h})$ does not reveal $(\mathbf{s}_1, \mathbf{s}_2)$ beyond the public key $\mathbf{t}_1$---remains open even for single-signer FIPS~204 ML-DSA and is inherited by all threshold variants. In the threshold setting, additional leakage vectors arise: partial responses $\mathbf{z}_i = \mathbf{y}_i + c \cdot \mathbf{s}_{1,i}$ and masked commitments $\mathbf{W}_i$ are visible to the coordinator or broadcast to the signing group (Remark~\ref{rem:key-privacy}).

Our appendix sketches a share refresh mechanism based on adding shares of zero, but a complete treatment of \emph{proactive threshold ML-DSA} requires addressing several interacting concerns: maintaining signing availability during the refresh epoch, ensuring consistency at epoch boundaries (parties must agree on which shares are current), and formally modeling a mobile adversary that corrupts different subsets in different epochs subject to a per-epoch threshold. The Shamir structure of our key shares is well-suited to proactive protocols---refresh amounts to adding a fresh degree-$(T\!-\!1)$ zero-sharing---but the interaction between nonce DKG freshness and proactive share updates has not been analyzed. In particular, if a party's long-term share $\mathbf{s}_{1,i}$ is refreshed between the nonce DKG and the signing round, the Lagrange reconstruction $\mathbf{z} = \sum_i \lambda_i \mathbf{z}_i$ may fail unless both the nonce and secret shares are refreshed consistently within the same epoch.

\paragraph{Decentralized Key Generation.}
Our construction assumes a trusted dealer for key setup: the dealer samples the short-coefficient signing key $(\mathbf{s}_1, \mathbf{s}_2) \gets \chi^\ell \times \chi^k$, computes the public key $\mathbf{t} = \mathbf{A}\mathbf{s}_1 + \mathbf{s}_2$, and distributes Shamir shares $\mathbf{s}_{1,i}$, $\mathbf{s}_{2,i}$ directly. Fully decentralized key generation for ML-DSA requires verifiably short-coefficient Shamir sharing: each party must prove $\|\mathbf{s}_{1,h}\|_\infty \leq \eta$ without reconstruction. Standard Feldman VSS does not apply since Shamir shares of $\eta$-small secrets lie in $\mathbb{Z}_q$ and reveal no information about the coefficient bound; lattice-based zero-knowledge techniques---such as BDLOP commitments with norm proofs~\cite{BDLOP18}---offer a path forward. A full treatment---including ZK norm proofs and distributed public-key reconstruction---is developed in a companion work on lattice-based threshold key generation.

On the protocol side, three open problems stand out.

\paragraph*{Eliminating offline nonce preprocessing.}
The Shamir nonce DKG requires one offline preprocessing round of $O(N^2)$ messages per signing session, structurally necessary because nonce shares must be committed before the challenge hash is computed to prevent adversarial rewinding. A natural question is whether this phase can be merged into the online protocol: a combined first round where parties simultaneously distribute nonce shares and broadcast the nonce commitment $\mathbf{w}_1 = \mathsf{HighBits}(\mathbf{A}\sum_i \hat{\mathbf{y}}_i, 2\gamma_2)$ would eliminate the offline cost at the price of one additional online round. Whether such a per-session DKG preserves statistical nonce privacy against an adversary who aborts after observing first-round messages---and whether the SPDZ preprocessing in Profile P2 can be batched with nonce DKG preprocessing into a single unified offline phase---are open questions with direct impact on deployment latency.

\paragraph*{Reducing the MPC frontier via carry separation.}
Profile P2's MPC evaluates the full $r_0$-check over $\mathbb{Z}_q$ (23-bit arithmetic). A finer decomposition is possible: because each party holds Shamir nonce and key shares, the low-bit residual $\ell_i = \mathsf{LowBits}(A\mathbf{y}_i - c\mathbf{s}_{2,i})$ is locally precomputable without communication. The $r_0$-check then reduces to evaluating only the carry of $\sum \lambda_i \ell_i$ and a single comparison bit---a much lighter computation than the full modular circuit. With standard Shamir coefficients $\lambda_i \approx q$, this yields no circuit reduction, since the intermediate sum overflows $\mathbb{Z}_q$. However, if reconstruction coefficients satisfy $\lambda_i \in \{0,1\}$---as in Vandermonde-style short-coefficient LSSS explored by concurrent work~\cite{BCDENP25}---the carry is bounded by $\lfloor T/2 \rfloor$, reducing the MPC computation to a $\lceil\log_2 T\rceil$-bit carry circuit rather than a full 23-bit modular arithmetic circuit. Designing a security proof for this hybrid (Shamir nonce DKG for statistical nonce privacy combined with short-coefficient sharing for $\mathbf{s}_2$) is an open problem; the key challenge is showing that the bounded carry value does not leak sufficient information about $\mathbf{s}_2$ to assist a lattice adversary. More broadly, whether threshold ML-DSA with dishonest-majority and arbitrary $T$ can be instantiated without general-purpose MPC---overcoming the construction-specific $T \leq 6$ barrier of existing short-coefficient LSSS schemes (the Ball-\c{C}akan-Malkin bound~\cite{BCM21} establishes an asymptotic $\Omega(T\log T)$ share-size lower bound; the specific $T \leq 6$ ceiling is Celi et al.'s construction-specific limit)---remains a fundamental open question in lattice-based threshold cryptography.

\paragraph*{Commitment mask hiding for $T = N$.}
In P2's broadcast model, the pairwise-canceling masks on $\mathbf{W}_i$ and $\mathbf{V}_i$ require $|S \setminus C| \geq 2$ for computational hiding (Lemma~\ref{lem:mask-hiding}). In the extreme case $T = N$ (unanimous signing), a single honest party's commitment and r0-check values are not hidden in P2. (In P1 and P3+, the coordinator-only communication model protects $\mathbf{W}_i$ even with $|S \setminus C| = 1$; see Remark~\ref{rem:two-honest}.) Replacing pairwise PRF masks with a \emph{non-interactive zero-sharing} scheme---where each party's mask is derived from a common reference string rather than pairwise seeds---or employing a verifiable PRF construction could close this gap, yielding uniform $|S| \geq T$ privacy across all revealed values. Additionally, achieving perfect uniformity (SD $= 0$) for the response shares with $|S| = T$ remains an open problem: the bounded-nonce constraint fundamentally prevents $\Zq$-uniformity (Remark~\ref{rem:sd-zero}), and it is unclear whether alternative nonce generation techniques can bypass this barrier without requiring $|S| \geq T + 1$. (Note: the term ``nonce share privacy'' in this work refers to the high-min-entropy guarantee of Theorem~\ref{thm:it-privacy}, not SD $= 0$.)

Finally, on the applied side, threshold ML-DSA is a natural building block for post-quantum decentralized identity systems. The W3C Decentralized Identifier specification~\cite{W3CDID} requires controller key rotation and recovery mechanisms that interact non-trivially with threshold key management: a DID controller update must atomically rotate the threshold public key while preserving the mapping between DID documents and verification methods. Combining our protocol with a threshold-friendly DID resolution layer and a key recovery flow based on social recovery via Shamir shares would yield an end-to-end post-quantum identity system suitable for regulated environments.

\subsection{Impact}

As organizations transition to post-quantum cryptography, our construction facilitates ML-DSA adoption in cryptocurrency custody, distributed certificate authorities, enterprise key management, and government applications. The threshold security guarantees and FIPS 204 compatibility allow integration with existing infrastructure without requiring verifier modifications.

\section*{Acknowledgments}

During the preparation of this work, the author used Claude (Anthropic) to assist with manuscript formatting and language editing. After using this tool, the author reviewed and edited the content as needed and takes full responsibility for the content of the publication.
\fi


\appendix

\section{Full Security Proofs}
\label{sec:appendix-proofs}

\subsection{Unforgeability Proof (Full)}
\label{app:unforgeability-full}

We provide detailed analysis of each game transition for Theorem~\ref{thm:unforgeability}.

\paragraph{Game 0 (Real Game).} The real EUF-CMA game. Challenger runs $\KeyGen(1^\kappa, N, T)$ to obtain $(\pk, \{\sk_i\}_{i \in [N]})$, gives $\mathcal{A}$ the public key $\pk = (\rho, \mathbf{t}_1)$ and corrupted shares $\{\sk_i\}_{i \in C}$ where $|C| \leq T-1$. Adversary $\mathcal{A}$ makes signing queries and outputs forgery $(m^*, \sigma^*)$.

\paragraph{Game 1 (Random Oracle).} Replace hash function $H$ with lazily-sampled random function $\mathcal{O}$. Maintain table $\mathsf{Table}$ of query-response pairs. On query $x$: if $(x, y) \in \mathsf{Table}$, return $y$; else sample $y \getsr \mathcal{Y}$, store $(x, y)$, return $y$. \emph{Analysis}: By random oracle model, $\Pr[\mathsf{Win}_1] = \Pr[\mathsf{Win}_0]$.

\paragraph{Game 2 (Programmed Challenges).} Modify signing oracle: before computing $\mathbf{w}$, sample $c \getsr \mathcal{C}$ uniformly, then program $\mathcal{O}(\mu \| \mathbf{w}_1) := c$ after computing $\mathbf{w}_1$. \emph{Analysis}: Programming fails only if $(\mu \| \mathbf{w}_1)$ was previously queried. By union bound over $q_H$ hash queries and $q_s$ signing queries: $|\Pr[\mathsf{Win}_2] - \Pr[\mathsf{Win}_1]| \leq (q_H + q_s)^2/2^{256}$.

\paragraph{Game 3 (Simulated Shares).} We argue that the adversary's view of the honest parties' key shares $\{\mathbf{s}_{1,j}\}_{j \notin C}$ is statistically independent of the secret key $\mathbf{s}_1$, by Shamir's $(T,N)$-threshold perfect secrecy: any $T-1$ evaluations are jointly consistent with every possible value of the secret (the marginal distribution of each honest share at a non-zero evaluation point is $\Rq^\ell$-uniform, but the \emph{joint} distribution of multiple honest shares is not i.i.d.---they are correlated as evaluations of the same degree-$(T\!-\!1)$ polynomial; the correct statement is that the joint view reveals zero information about $\mathbf{s}_1$). \emph{Important:} this is a conceptual argument about the adversary's \emph{view}---the signing oracle continues using the actual short key shares (replacing them with uniform $\Rq^\ell$ values would cause the $\|\mathbf{z}\|_\infty$-bound and r0-checks to fail with overwhelming probability, invalidating signatures). Since $|C| \leq T-1$, the adversary's view of honest shares is statistically independent of the secret: $\Pr[\mathsf{Win}_3] = \Pr[\mathsf{Win}_2]$. (In Profile P2, individual per-party responses $\mathbf{z}_j$ are contributed only as SPDZ input shares, so SPDZ input-privacy ensures the adversary observes only the aggregate $\mathbf{z}$, not individual $\mathbf{z}_j$ values; the same Shamir argument applies.)

\paragraph{Game 3.5 (Uniform Nonces).} Replace Irwin-Hall nonce distribution with uniform over $\{-\gamma_1, \ldots, \gamma_1\}^{n\ell}$. In Game~3, the aggregate nonce $\mathbf{y} = \sum_{i \in S} \hat{\mathbf{y}}_i$ follows the Irwin-Hall distribution. We replace this with $\mathbf{y} \getsr \{-\gamma_1, \ldots, \gamma_1\}^{n\ell}$ uniform. \emph{Analysis}: The $q_s$ signing sessions are independent, so the joint R\'enyi divergence is $(R_2^{\mathsf{vec,shift}})^{q_s}$ (direct shift-invariance, Theorem~\ref{thm:ih-direct-tight}). Applying the smooth R\'enyi divergence probability transfer lemma (Lemma~\ref{lem:renyi-security}) to the joint distribution:
\[
\Pr[\mathsf{Win}_{3}] \leq \bigl(R_2^{\mathsf{vec,shift}}\bigr)^{q_s/2} \cdot \sqrt{\Pr[\mathsf{Win}_{3.5}]} + q_s \cdot \epsilon_\mathsf{IH}
\]
where $R_2^{\mathsf{vec,shift}} = (1+\chi^2_{\mathsf{direct}})^{n\ell}$ is the full-vector \emph{per-session} R\'enyi divergence from the direct shift-invariance analysis, with $\chi^2_{\mathsf{direct}} \approx 7.13 \times 10^{-6}$ per coordinate for $|S|=17$ (Corollary~\ref{cor:ih-shift-ml-dsa}), giving $R_2^{\mathsf{vec,shift}} \approx 1.0092$, and $\epsilon_\mathsf{IH} < 10^{-30}$ (Theorem~\ref{thm:irwin-hall}). The total security loss is $\approx 6.6\times10^{-3} \cdot q_s$ bits, giving a proven security bound of $\geq 96 - 0.0066 \cdot q_s$ bits for $|S| \leq 17$ (where the 96-bit baseline is the Cauchy-Schwarz halving of the 192-bit NIST Level~3 baseline). The conservative analytical bound ($R_2^{\mathsf{vec}} \approx 2^{5.4}$, Remark~\ref{rem:renyi-conservative}) is vacuous at $q_s \gtrsim 36$.

\paragraph{Game 4 (Signature Simulation).} Starting from uniform nonces (Game~3.5), simulate signatures without $\mathbf{s}_1$:
\begin{enumerate}
    \item Sample $\mathbf{z} \getsr \{v \in \Rq^\ell : \|\mathbf{v}\|_\infty < \gamma_1 - \beta\}$ with rejection sampling
    \item Compute $\mathbf{w}_1 = \mathsf{HighBits}(\mathbf{A}\mathbf{z} - c\mathbf{t}_1 \cdot 2^d, 2\gamma_2)$
    \item Program $\mathcal{O}(\mu \| \mathbf{w}_1) := c$
    \item Simulate masked values $\mathbf{W}_j, \mathbf{V}_j$ as computationally uniform (Lemma~\ref{lem:mask-hiding}, replacing honest-honest PRF terms with uniform)
\end{enumerate}
\emph{Clarification}: In this EUF-CMA game, the adversary observes only the aggregate public signature $\sigma = (\tilde{c}, \mathbf{z}, \mathbf{h})$; individual per-party responses $\mathbf{z}_j$ are processed inside the TEE/combiner and never reach the adversary. Accordingly, Steps~1--3 produce the adversary-visible $\mathbf{z}$ without using $\mathbf{s}_1$, and Step~4 handles the masked intermediate values that the adversary also cannot observe directly. The full per-party simulation (including $\mathbf{z}_j = \mathbf{y}_j + c\mathbf{s}_{1,j}$ from DKG values) is deferred to the UC/privacy simulator in Appendix~\ref{app:privacy-full}. \emph{Analysis}: The masked commitment and r0-check values $\mathbf{W}_j, \mathbf{V}_j$ are computationally indistinguishable from uniform by Lemma~\ref{lem:mask-hiding}. By PRF security: $|\Pr[\mathsf{Win}_4] - \Pr[\mathsf{Win}_{3.5}]| \leq \binom{N}{2} \cdot \epsilon_{\mathsf{PRF}}$.

\paragraph{Reduction to M-SIS.} Construct SelfTargetMSIS adversary $\mathcal{B}$~\cite{KLS18}:
\begin{enumerate}
    \item $\mathcal{B}$ receives M-SIS instance $\mathbf{A} \in \Rq^{k \times \ell}$
    \item $\mathcal{B}$ sets $\rho$ such that $\mathsf{ExpandA}(\rho) = \mathbf{A}$, samples $\mathbf{t}_1 \getsr \Rq^k$ (uniformly random; valid since $\mathsf{Game}_4$ simulation does not require knowing $\mathbf{s}_1$ or $\mathbf{s}_2$; $\mathbf{t}_1$ is indistinguishable from a real key by M-LWE hardness~\cite{KLS18})
    \item $\mathcal{B}$ picks a random index $i^* \getsr [q_H]$ as the target ROM query
    \item $\mathcal{B}$ simulates $\mathsf{Game}_4$ for $\mathcal{A}$, programming ROM responses uniformly
    \item When $\mathcal{A}$ outputs forgery $(\mu^*, \mathbf{z}^*, c^*, \mathbf{h}^*)$: if $c^*$ equals the $i^*$-th ROM response, extract; otherwise abort
\end{enumerate}
\emph{Analysis}: $\mathcal{B}$ succeeds when its guess $i^*$ is correct, which occurs with probability $1/q_H$, so $\Pr[\mathcal{B}\ \text{succeeds}] \geq \Pr[\mathsf{Win}_4]/q_H$. When successful, the forgery verification gives:
\[
\mathsf{HighBits}(\mathbf{A}\mathbf{z}^* - c^*\mathbf{t}_1 \cdot 2^d, 2\gamma_2) = \mathbf{w}_1^*
\]
Rearranging yields the SelfTargetMSIS solution for $[\mathbf{A} \mid -\mathbf{t}_1 \cdot 2^d]$ (see Remark~\ref{rem:selfmsis}):
\[
[\mathbf{A} \mid -\mathbf{t}_1 \cdot 2^d] \cdot \begin{bmatrix} \mathbf{z}^* \\ c^* \end{bmatrix} = 2\gamma_2 \mathbf{w}_1^* + \mathbf{r}_0^*
\]
where $\mathbf{w}_1^* = \mathsf{HighBits}(\mathbf{A}\mathbf{z}^* - c^*\mathbf{t}_1 \cdot 2^d,\, 2\gamma_2)$ is the known verification target and $\|\mathbf{r}_0^*\|_\infty \leq \gamma_2$ is the $\mathsf{LowBits}$ residual. The combined solution $[(\mathbf{z}^*); (c^*)] \in \Rq^{\ell+1}$ satisfies $\|\mathbf{z}^*\|_\infty < \gamma_1 - \beta$ (z-bound check) and $\|c^*\|_\infty = 1$ (challenge coefficients in $\{-1,0,+1\}$); the module $\ell^\infty$-norm is $\max(\|\mathbf{z}^*\|_\infty, \|c^*\|_\infty) < \gamma_1$, matching the SelfTargetMSIS norm parameter from Remark~\ref{rem:selfmsis} and~\cite{KLS18}. Thus $\Pr[\mathsf{Win}_4] \leq q_H \cdot \epsilon_{\mathsf{M\text{-}SIS}}$.

\subsection{Privacy Proof Details}
\label{app:privacy-full}

We show that the simulator produces an indistinguishable view, applying the min-entropy bound established in Theorem~\ref{thm:it-privacy} (Section~\ref{sec:security}).

With the Shamir nonce DKG, privacy of the nonce share $\mathbf{y}_h$ is statistical (no computational assumptions). The simulator computes responses from simulated DKG values, achieving perfect indistinguishability for the response component.

\paragraph{Simulator $\mathcal{S}$.} Given inputs $(\pk, \{\sk_i\}_{i \in C}, \sigma = (\tilde{c}, \mathbf{z}, \mathbf{h}))$:

\textbf{Round 1 Simulation:} For each honest $j \in S \setminus C$, sample $\mathsf{Com}_j \getsr \{0,1\}^{256}$ uniformly.

\textbf{Round 2 Simulation:} Expand challenge $c$ from $\tilde{c}$, compute $\mathbf{w}_1 = \mathsf{UseHint}(\mathbf{h}, \mathbf{A}\mathbf{z} - c\mathbf{t}_1 \cdot 2^d)$. The adversary observes only $(\mathbf{w}_1, \tilde{c})$ broadcast by the coordinator (individual $\mathbf{W}_j$ are sent only to the coordinator; see Remark~\ref{rem:wi-private}). The simulator produces consistent $(\mathbf{w}_1, \tilde{c})$ from the signature.

\textbf{Round 3 Simulation:} The simulator must produce per-party responses $\mathbf{z}_j$ consistent with both the simulated DKG values and the public signature constraint $\sum_{j \in S} \lambda_j \mathbf{z}_j = \mathbf{z}$ (enforced by the public $\sigma$). We handle the two cases separately.

\emph{Case $|S \setminus C| > 1$:} Let $\mathbf{z}_j^* = \mathbf{y}_j^* + c \cdot \mathbf{s}_{1,j}$ denote corrupted party $j$'s response, where $\mathbf{y}_j^*$ is $j$'s simulated DKG evaluation (known to the simulator from the corrupted view). The simulator chooses $\mathbf{z}_j = \mathbf{y}_j + c \cdot \mathbf{s}_{1,j}$ (from simulated DKG values) for all honest $j \in S \setminus C$ except a designated party $h_0$, then sets
\[
  \mathbf{z}_{h_0} = \frac{\mathbf{z} - \sum_{j \in C} \lambda_j \mathbf{z}_j^* - \sum_{j \in (S\setminus C)\setminus\{h_0\}} \lambda_j \mathbf{z}_j}{\lambda_{h_0}}
\]
(The division by $\lambda_{h_0}$ is well-defined: since all evaluation points lie in $\{1,\ldots,N\} \subset \mathbb{Z}_q^*$, the Lagrange coefficient $\lambda_{h_0} = \prod_{j \in S, j \neq h_0} j(j-h_0)^{-1} \bmod q$ is a product of nonzero field elements and hence nonzero mod $q$, hence invertible in $\mathbb{Z}_q$.)
This $\mathbf{z}_{h_0}$ matches exactly what the real protocol would produce (given the DKG polynomials), so SD $= 0$.

\emph{Case $|S \setminus C| = 1$:} The single honest response $\mathbf{z}_h = (\mathbf{z} - \sum_{j \in C} \lambda_j \mathbf{z}_j^*)/\lambda_h$ is uniquely determined by $\sigma$ and the corrupted views. The simulator outputs this value directly. Privacy of $\mathbf{s}_{1,h}$ (key privacy, computational) is maintained because the adversary knows $\mathbf{z}_h$ but not the decomposition $\mathbf{z}_h = \mathbf{y}_h + c\mathbf{s}_{1,h}$: recovering $\mathbf{s}_{1,h}$ from $\mathbf{z}_h$ requires solving a bounded-distance decoding instance given $\mathbf{y}_h$'s high conditional min-entropy (Theorem~\ref{thm:it-privacy}).

The $\mathbf{V}_j$ values are computed as $\mathbf{V}_j = \lambda_j c \mathbf{s}_{2,j} + \mathbf{m}_j^{(s2)}$ using simulated key shares and PRF masks.

\paragraph{Indistinguishability.} In both cases, the simulated $\mathbf{z}_j$ follows the same distribution as the real protocol (SD $= 0$ conditioned on $\sigma$). Masked values $\mathbf{W}_j$ and $\mathbf{V}_j$ are computationally indistinguishable from uniform (PRF security, Lemma~\ref{lem:mask-hiding}); no PRF assumption is needed for nonce share privacy ($\mathbf{y}_h$'s nonce share privacy is independent). The total distinguishing advantage is $\binom{N}{2} \cdot \epsilon_{\mathsf{PRF}}$.

\section{Distributed Key Generation Protocols}
\label{sec:supp-dkg}

Our protocol uses two structurally similar but functionally distinct DKG protocols: a \emph{key DKG} (executed once during setup) and a \emph{nonce DKG} (executed before each signing session, typically preprocessed offline). The key DKG uses Feldman-style commitments for verifiability; the nonce DKG omits VSS verification (correctness is instead ensured at the signing stage via the z-bound and r0-check). Both share the same polynomial evaluation structure but differ in their input distributions and security requirements.

\subsection{Key DKG (One-Time Setup)}

The key DKG produces a degree-$(T\!-\!1)$ Shamir sharing of the long-term signing key $(\mathbf{s}_1, \mathbf{s}_2)$, replacing the trusted dealer assumption.

\textbf{Phase 1: Contribution.} Each party $i \in [N]$:
\begin{enumerate}
    \item Samples local secrets $\mathbf{s}_1^{(i)} \getsr \chi_\eta^\ell$ and $\mathbf{s}_2^{(i)} \getsr \chi_\eta^k$
    \item Creates $(T, N)$-Shamir sharings of each polynomial
    \item Sends share $(\mathbf{s}_{1,j}^{(i)}, \mathbf{s}_{2,j}^{(i)})$ to party $j$
    \item Broadcasts Feldman commitment $\mathbf{t}^{(i)} = \mathbf{A}\mathbf{s}_1^{(i)} + \mathbf{s}_2^{(i)}$
\end{enumerate}

\textbf{Phase 2: Aggregation.} Each party $j$ computes aggregated share: $\mathbf{s}_{1,j} = \sum_{i \in [N]} \mathbf{s}_{1,j}^{(i)}$

The final secret is $\mathbf{s}_1 = \sum_{i=1}^N \mathbf{s}_1^{(i)}$, which no individual party knows. For lattice-specific VSS/DKG, see~\cite{FS24}.

\subsection{Nonce DKG (Per-Session Preprocessing)}

The nonce DKG produces a fresh degree-$(T\!-\!1)$ Shamir sharing of the signing nonce $\mathbf{y}$ for each signing session. Unlike the key DKG, the nonce DKG samples higher-degree coefficients uniformly from $\Rq$ (rather than from a short-coefficient distribution), which is critical for achieving nonce share privacy.

The protocol is specified formally in Algorithm~1 of the main paper (Section~\ref{sec:nonce-dkg}). Each party $i \in S$ generates a degree-$(T\!-\!1)$ polynomial $\hat{f}_i(X)$ whose constant term $\hat{\mathbf{y}}_i$ is sampled from the reduced nonce range $[-\lfloor \gamma_1/|S| \rfloor, \lfloor \gamma_1/|S| \rfloor]^{n\ell}$ and whose higher-degree coefficients are sampled uniformly from $\Rq^\ell$. The resulting aggregated nonce $\mathbf{y} = \sum_{i \in S} \hat{\mathbf{y}}_i$ follows the Irwin-Hall distribution analyzed in Theorem~\ref{thm:irwin-hall}, while each honest party's individual share $\mathbf{y}_h = \sum_{i \in S} \hat{f}_i(h)$ has high conditional min-entropy from the adversary's perspective (Theorem~\ref{thm:it-privacy}).

\subsection{Security Properties}

Both protocols share the following properties:
\begin{itemize}
    \item \textbf{Correctness}: Honest execution produces a valid sharing of the intended secret.
    \item \textbf{Secrecy}: Any coalition of $< T$ parties learns nothing about the secret beyond what is revealed by public commitments (follows from Shamir's threshold security).
    \item \textbf{Robustness}: The key DKG uses Feldman commitments~\cite{Feldman87} to detect and attribute malicious shares. The nonce DKG omits VSS verification; malicious behavior is instead detected at the signing stage (z-bound or r0-check failure triggers the blame protocol), which suffices for security-with-abort.
\end{itemize}

The nonce DKG additionally provides nonce share privacy (no computational assumptions) of individual shares (Theorem~\ref{thm:it-privacy}), which is the key enabler for eliminating the two-honest requirement (enabling $|S| = T$ signing sets with nonce share privacy) in coordinator-based profiles (P1, P3+; see Remark~\ref{rem:wi-private}). The key DKG does not require this property, since key shares are never revealed during signing.

\section{Proactive Security}
\label{sec:supp-proactive}

\subsection{Share Refresh Protocol}

Periodically, parties execute a share refresh:
\begin{enumerate}
    \item Each party $i$ samples a random $(T, N)$-sharing of $\mathbf{0}$
    \item Party $i$ sends the $j$-th share of zero to party $j$
    \item Each party adds received zero-shares to their current share
\end{enumerate}

Since adding shares of zero preserves the secret, reconstruction still yields $\mathbf{s}_1$. After refresh, old shares become useless. An adversary must corrupt $\geq T$ parties within a single epoch to learn the secret.

\paragraph{Forward Secrecy for Past Signatures.} Signatures from epoch $k$ remain secure after the epoch-$(k\!+\!1)$ refresh. A signature $\sigma$ from epoch $k$ reveals $\mathbf{z}_j^{(k)} = \mathbf{y}_j^{(k)} + c \cdot \mathbf{s}_{1,j}^{(k)}$. Since $\mathbf{y}_j^{(k)}$ is a one-time nonce share (deleted after use), and the refreshed shares $\mathbf{s}_{1,j}^{(k+1)} = \mathbf{s}_{1,j}^{(k)} + \text{(zero-sharing term)}$ do not reveal $\mathbf{s}_{1,j}^{(k)}$ by Shamir's threshold property (the zero-sharing randomizes shares while preserving the secret), past key shares cannot be reconstructed from refreshed shares alone. Key privacy for past signatures therefore holds across epochs under the same M-LWE hardness assumption as the signing scheme.

\section{Numerical Verification of R\'enyi Bounds}
\label{sec:supp-renyi-numerical}

We provide detailed calculations for the smooth R\'enyi divergence bounds referenced in Theorem~\ref{thm:irwin-hall}.

\paragraph{Parameters for ML-DSA-65.}
\begin{itemize}
    \item Modulus: $q = 8380417$
    \item Masking range: $\gamma_1 = 2^{19} = 524288$
    \item Secret bound: $\eta = 4$
    \item Challenge weight: $\tau = 49$
    \item Signature bound: $\beta = \tau \cdot \eta = 196$
\end{itemize}

\paragraph{R\'enyi Divergence Calculation.}
Let $m = |S|$ denote the signing set size (distinct from the ring degree $n = 256$) and $B = 2\lfloor\gamma_1/m\rfloor$ the per-share \textbf{full} range (from $-\lfloor\gamma_1/m\rfloor$ to $+\lfloor\gamma_1/m\rfloor$, so the total width is $2\lfloor\gamma_1/m\rfloor$). Note that in Definition~\ref{def:irwin-hall}, the symbol $B$ denotes the \textbf{half}-range (each $X_i \in \{-B, \ldots, B\}$); to apply Raccoon~\cite{Raccoon2024} Lemma~4.2, $B$ here is the full range, equal to $2\times$ that half-range. For readability, we use the approximation $B \approx 2\gamma_1/m$ (the rounding error is at most $2/B < 4 \times 10^{-5}$ for all relevant $m$, and the exact discrete range is used in the implementation). Using Lemma 4.2 of~\cite{Raccoon2024} with $\alpha = 2$ and shift $\delta = \beta$:

\begin{align*}
R_2^\epsilon &\leq 1 + \frac{m^2 \cdot \beta^2}{B^2}
\approx 1 + \frac{m^2 \cdot \beta^2}{(2\gamma_1/m)^2}\\
&= 1 + \frac{m^4 \cdot \beta^2}{4\gamma_1^2}
= 1 + \frac{|S|^4 \cdot \beta^2}{4\gamma_1^2}
\end{align*}

\begin{table}[h]
\centering
\begin{tabular}{cccc}
\toprule
$|S|$ & $B \approx 2\gamma_1/|S|$ & $R_2^\epsilon - 1$ & $\epsilon$ (tail) \\
\midrule
4 & 262144 & $8.9 \times 10^{-6}$ & $< 10^{-10}$ \\
9 & 116508 & $2.3 \times 10^{-4}$ & $< 10^{-19}$ \\
17 & 61680 & $2.9 \times 10^{-3}$ & $< 10^{-30}$ \\
25 & 41942 & $1.4 \times 10^{-2}$ & $< 10^{-40}$ \\
33 & 31774 & $4.1 \times 10^{-2}$ & $< 10^{-49}$ \\
\bottomrule
\end{tabular}
\caption{Per-coordinate R\'enyi divergence bounds for ML-DSA-65. $B = 2\lfloor\gamma_1/|S|\rfloor$ is the exact per-share full range; displayed values are $2\lfloor\gamma_1/|S|\rfloor$. All bounds computed using Lemma 4.2 of~\cite{Raccoon2024}. Full-vector: $R_2^{\mathsf{vec}} = (R_2^{\epsilon,\mathsf{coord}})^{1280}$ ($n = 256$, $\ell = 5$).}
\label{tab:renyi-detailed}
\end{table}

The per-coordinate $R_2^{\epsilon,\mathsf{coord}}$ remains below $1.05$ for $|S| \leq 33$. The full-vector divergence $R_2^{\mathsf{vec}} = (R_2^{\epsilon,\mathsf{coord}})^{1280}$ ranges from $\approx 2^{5.4}$ ($|S| = 17$) to $\approx 2^{75}$ ($|S| = 33$); see Theorem~\ref{thm:irwin-hall} for the impact on the security bound. The direct analysis (Corollary~\ref{cor:ih-shift-ml-dsa}) gives $< 0.013$ bits for all $|S| \leq 33$.

\section{Rejection Sampling Analysis}
\label{sec:supp-rejection}

\subsection{Root Cause}

ML-DSA's rejection sampling stems from two checks:
\begin{enumerate}
    \item \textbf{z-bound check}: $\infnorm{\mathbf{z}} < \gamma_1 - \beta$
    \item \textbf{r0 check}: $\infnorm{\text{LowBits}(\mathbf{w} - c\mathbf{s}_2)} < \gamma_2 - \beta$
\end{enumerate}

For ML-DSA-65 (\emph{single-signer approximation}: nonce uniform on $[-\gamma_1, \gamma_1]$):
\begin{itemize}
    \item z-bound: $\left(\frac{\gamma_1 - \beta}{\gamma_1}\right)^{n\ell} = \left(\frac{524092}{524288}\right)^{1280} \approx 62\%$
    \item r0 check: $\left(\frac{\gamma_2 - \beta}{\gamma_2}\right)^{nk} = \left(\frac{261692}{261888}\right)^{1536} \approx 32\%$
    \item Combined: $0.62 \times 0.32 \approx 20\%$
\end{itemize}

This matches the NIST ML-DSA specification~\cite{FIPS204} (expected 4--5 iterations). \emph{Note}: In the threshold protocol, nonces follow the Irwin-Hall distribution (sum of $|S|$ uniform variables), so the per-attempt success rates differ from this single-signer estimate; empirical measurements yield 21--45\% depending on $T$ and $|S|$ (see Table~\ref{tab:naive-comparison}).

\subsection{Throughput Optimizations}

\begin{itemize}
    \item \textbf{Parallel Attempts}: Generate $B$ independent $(y, w)$ pairs per signing session. Success probability: $1 - (0.80)^B$ (using the theoretical $\approx 20\%$ per-attempt rate; empirical benchmarks yield $\approx 25$--$31\%$, giving $1 - (0.75)^B$ as an optimistic estimate).
    \item \textbf{Pre-computation}: Pre-compute $(y_i, w_i = \mathbf{A}y_i)$ pairs during idle time.
    \item \textbf{Pipelining}: Start next attempt while waiting for network communication.
\end{itemize}

\section{Naive Approach Comparison}
\label{sec:supp-naive}

Without masking, each party's contribution $\lambda_i \cdot \mathbf{z}_i$ has norm $O(\lambda_i \cdot \gamma_1)$, failing the z-bound check with probability $\approx 0.8$. Success requires all $T$ parties to pass independently:

\begin{table}[h]
\centering
\begin{tabular}{cccc}
\toprule
$T$ & Naive Success $(0.2)^T$ & Our Success & Speedup \\
\midrule
8 & $2.6 \times 10^{-6}$ & $28.2\%$ & $1.1 \times 10^{5}$ \\
12 & $4.1 \times 10^{-9}$ & $31.4\%$ & $7.5 \times 10^{7}$ \\
16 & $6.6 \times 10^{-12}$ & $32.4\%$ & $4.9 \times 10^{10}$ \\
24 & $1.7 \times 10^{-17}$ & $23.1\%$ & $1.4 \times 10^{16}$ \\
32 & $4.3 \times 10^{-23}$ & $25.0\%$ & $5.8 \times 10^{21}$ \\
\bottomrule
\end{tabular}
\caption{Improvement over naive threshold approach. ``Our Success'' values are empirical measurements from the Rust benchmarks (Table~\ref{tab:perf-threshold}). The non-monotone behavior ($32.4\% \to 23.1\% \to 25.0\%$ for $T=16,24,32$) reflects the interplay of two competing effects: (1) smaller per-party nonce range $\gamma_1/|S|$ reduces the acceptance probability for the z-bound check; (2) the Irwin-Hall distribution becomes more concentrated near zero for larger $|S|$, partially improving acceptance. The net effect is non-monotone in $|S|$ and configuration-dependent.}
\label{tab:naive-comparison}
\end{table}

\section{Nonce Range Rounding Details}
\label{sec:supp-nonce-rounding}

Each party's constant-term contribution in the nonce DKG is $\hat{\mathbf{y}}_i \getsr \{-\lfloor\gamma_1/|S|\rfloor, \ldots, \lfloor\gamma_1/|S|\rfloor\}^{n\ell}$. When $\gamma_1$ is not divisible by $|S|$, the nonce $\mathbf{y} = \sum_{i \in S} \hat{\mathbf{y}}_i$ has range $[-|S| \cdot \lfloor\gamma_1/|S|\rfloor, |S| \cdot \lfloor\gamma_1/|S|\rfloor]^{n\ell}$, slightly smaller than standard $[-\gamma_1, \gamma_1]^{n\ell}$.

\paragraph{Concrete Impact.} For ML-DSA-65 with $\gamma_1 = 524288$ and $|S| = 17$: $\lfloor 524288/17 \rfloor = 30840$, so the actual range is $[-524280, 524280]$. The ``missing'' probability mass is $(524288 - 524280)/524288 < 2 \times 10^{-5}$ per coordinate.

\paragraph{Effect on Security.} The z-bound check is $\|\mathbf{z}\|_\infty < \gamma_1 - \beta$ where $\gamma_1 - \beta = 524092$. Since $524280 > 524092$, the reduced nonce range does not affect the pass probability for this example.

\paragraph{General Proof.} Writing $\gamma_1 = |S| \cdot \lfloor\gamma_1/|S|\rfloor + r$ with $0 \leq r \leq |S|-1$, the reduced nonce maximum is $|S| \cdot \lfloor\gamma_1/|S|\rfloor = \gamma_1 - r \geq \gamma_1 - (|S|-1)$. For the z-bound check to be unaffected, we need $\gamma_1 - (|S|-1) > \gamma_1 - \beta$, i.e., $|S| - 1 < \beta = \tau\eta = 196$. This holds for all $|S| \leq 196$, covering all practical signing set sizes. The Irwin-Hall security analysis uses the actual sampled range $N = 2\lfloor\gamma_1/|S|\rfloor$, so all bounds account for this rounding.


\section{UC Security Framework}
\label{app:uc-framework}

\begin{remark}[Scope of this appendix]
This appendix provides the extended UC proofs for Profiles P2 and P3+. The Profile P1 UC proof (TEE-based coordinator) is self-contained in Theorem~\ref{thm:p1-uc} and its proof in the main paper; it is not repeated here. This appendix proves UC security for the r0-check subprotocol (Profile P2, Theorem~\ref{thm:p2-uc}) and the semi-async signing protocol (Profile P3+, Theorem~\ref{thm:p3plus-uc}). These theorems establish UC realization of $\mathcal{F}_{r_0\text{-check}}$ and $\mathcal{F}_{r_0}^{\mathsf{semi\text{-}async}}$ respectively; realization of the full $\mathcal{F}_{\mathsf{ThreshSig}}$ (Definition~\ref{def:f-thresh-sig}) follows by UC composition with the nonce DKG and signing-step sub-protocols.
\end{remark}

We recall the Universal Composability (UC) framework of Canetti~\cite{Canetti01,Canetti20}. Our presentation follows the conventions of~\cite{CLOS02,DPSZ12}.

\subsection{Execution Model}

The UC framework considers three types of entities:
\begin{itemize}
    \item \textbf{Environment} $\mathcal{Z}$: An interactive Turing machine that provides inputs to parties, receives their outputs, and attempts to distinguish real from ideal executions.
    \item \textbf{Adversary} $\mathcal{A}$: Controls corrupted parties and the network (scheduling message delivery).
    \item \textbf{Parties} $P_1, \ldots, P_n$: Execute the protocol or interact with ideal functionalities.
\end{itemize}

\begin{definition}[Real Execution]
In the real execution $\mathsf{REAL}_{\Pi, \mathcal{A}, \mathcal{Z}}$, parties execute protocol $\Pi$. The environment $\mathcal{Z}$ provides inputs and receives outputs. The adversary $\mathcal{A}$ controls corrupted parties and sees all communication. The output is $\mathcal{Z}$'s final bit.
\end{definition}

\begin{definition}[Ideal Execution]
In the ideal execution $\mathsf{IDEAL}_{\mathcal{F}, \mathcal{S}, \mathcal{Z}}$, parties interact with ideal functionality $\mathcal{F}$. A simulator $\mathcal{S}$ simulates the adversary's view. The output is $\mathcal{Z}$'s final bit.
\end{definition}

\begin{definition}[UC-Realization]
Protocol $\Pi$ \emph{UC-realizes} functionality $\mathcal{F}$ if there exists a PPT simulator $\mathcal{S}$ such that for all PPT environments $\mathcal{Z}$:
\[
|\Pr[\mathsf{REAL}_{\Pi, \mathcal{A}, \mathcal{Z}} = 1] - \Pr[\mathsf{IDEAL}_{\mathcal{F}, \mathcal{S}, \mathcal{Z}} = 1]| \leq \negl(\kappa)
\]
\end{definition}

\begin{theorem}[UC Composition~\cite{Canetti01}]
\label{thm:uc-composition}
If $\Pi$ UC-realizes $\mathcal{F}$, then any protocol $\rho$ that uses $\mathcal{F}$ as a subroutine can securely use $\Pi$ instead. Formally, $\rho^{\mathcal{F}}$ and $\rho^{\Pi}$ are indistinguishable to any environment.
\end{theorem}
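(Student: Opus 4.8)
The plan is to prove Theorem~\ref{thm:uc-composition} by the standard reduction underlying Canetti's composition theorem~\cite{Canetti01,Canetti20}, so I only sketch the argument. Throughout I use the dummy-adversary characterization of UC security: it suffices to exhibit, for the dummy adversary that merely relays messages between the environment and the corrupted parties/network, a simulator such that no environment distinguishes. This turns the composition claim into a purely distinguishing statement about environments.

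First I treat the case where $\rho$ invokes a single session of the subroutine. Suppose toward a contradiction that some PPT environment $\mathcal{Z}$ distinguishes the execution of $\rho^{\Pi}$ (the protocol $\rho$ with its ideal-subroutine calls replaced by invocations of $\Pi$) from that of $\rho^{\mathcal{F}}$. I build a new environment $\mathcal{Z}'$ for the single-session experiment comparing $\mathsf{REAL}_{\Pi,\mathcal{A},\mathcal{Z}'}$ with $\mathsf{IDEAL}_{\mathcal{F},\mathcal{S},\mathcal{Z}'}$: the machine $\mathcal{Z}'$ internally emulates $\mathcal{Z}$ together with the code of the outer protocol $\rho$ (all of $\rho$ except the subroutine session), feeding $\mathcal{Z}$ exactly the inputs and outputs it would see in $\rho$, while routing every message belonging to the subroutine session out to the external experiment and routing the replies back in. By construction, $\mathcal{Z}'$ placed in the real experiment with $\Pi$ reproduces $\mathcal{Z}$'s view in $\rho^{\Pi}$, and placed in the ideal experiment with $\mathcal{F}$ and its simulator $\mathcal{S}$ reproduces $\mathcal{Z}$'s view in $\rho^{\mathcal{F}}$. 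Since $\Pi$ UC-realizes $\mathcal{F}$, the advantage of $\mathcal{Z}'$ is negligible, contradicting the assumption on $\mathcal{Z}$. The simulator witnessing the single-session composition (against the dummy adversary) simply runs $\mathcal{S}$ on the subroutine session and relays all other network traffic verbatim.

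Next I extend to the general case in which $\rho$ spawns up to a polynomial number $\kappa^{c}$ of independent sessions of the subroutine. This is a hybrid argument: define $\mathsf{H}_{t}$ to be the execution in which the first $t$ sessions are run as $\mathcal{F}$ (with $\mathcal{S}$) and the rest as $\Pi$, so that $\mathsf{H}_{0}$ is $\rho^{\Pi}$ and $\mathsf{H}_{\kappa^{c}}$ is $\rho^{\mathcal{F}}$. Consecutive hybrids $\mathsf{H}_{t-1}$ and $\mathsf{H}_{t}$ differ only in the treatment of the $t$-th session, and by the single-session argument above---folding $\mathcal{Z}$, the outer protocol $\rho$, and all the other $\Pi$ or $\mathcal{F}$ sessions into one environment $\mathcal{Z}'_{t}$---their gap is bounded by the UC-advantage against $\Pi$, which is negligible; a union bound over the $\kappa^{c}$ steps keeps the total negligible. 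The overall simulator runs one copy of $\mathcal{S}$ per subroutine session (with distinct session identifiers) and relays all remaining traffic unchanged.

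I expect the main obstacle to be bookkeeping rather than conceptual: one must verify that $\mathcal{Z}'$ remains PPT while internally emulating $\mathcal{Z}$, the outer protocol, and all but one subroutine session, and that session identifiers are threaded so the externally relayed session corresponds to exactly one copy of $\Pi$ or $\mathcal{F}$ while the internally emulated sessions stay separate. A secondary point is that the argument uses $\mathcal{S}$ against \emph{every} environment---precisely the UC-realization hypothesis---and assumes the subroutine sessions are invoked with independent randomness and disjoint session identifiers, so no shared state couples them; protocols with joint state need the separate ``UC with joint state'' machinery, which we do not require here since Profiles P2 and P3 invoke each helper functionality ($\mathcal{F}_{\mathsf{SPDZ}}$, $\mathcal{F}_{\mathsf{edaBits}}$, $\mathcal{F}_{\mathsf{Binary}}$, $\mathcal{F}_{\mathsf{OT}}$, $\mathcal{F}_{\mathsf{GC}}$) with fresh sessions.
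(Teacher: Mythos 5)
The paper does not prove this theorem; it is stated as a citation to Canetti~\cite{Canetti01} and used as a black box in the hybrid arguments for Profiles P2 and P3. Your proposal is therefore not being compared against an in-paper proof but against the canonical argument from the literature, and on that score your sketch is correct: the dummy-adversary reduction, the ``environment absorption'' construction of $\mathcal{Z}'$ for the single-session case (where $\mathcal{Z}'$ internally runs $\mathcal{Z}$ and the shell of $\rho$ while externalizing exactly one subroutine session), and the $\kappa^{c}$-step hybrid chain with a per-step invocation of single-session security are precisely the skeleton of Canetti's proof. Your closing remarks correctly flag the two places where the sketch would need to be firmed up for a full proof --- establishing that the folded-in machine $\mathcal{Z}'_{t}$ stays PPT, and that sessions are separated by fresh identifiers so that no joint state is shared (otherwise one needs JUC) --- and you correctly observe that the paper's own applications invoke $\mathcal{F}_{\mathsf{SPDZ}}$, $\mathcal{F}_{\mathsf{edaBits}}$, $\mathcal{F}_{\mathsf{Binary}}$, $\mathcal{F}_{\mathsf{OT}}$, and $\mathcal{F}_{\mathsf{GC}}$ with independent sessions, so plain UC composition suffices. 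One small orientation slip worth fixing if you expand this: you define $\mathsf{H}_{0}$ as $\rho^{\Pi}$ and $\mathsf{H}_{\kappa^{c}}$ as $\rho^{\mathcal{F}}$, which is the reverse of the direction the theorem is usually applied (from hybrid/ideal to real), but since indistinguishability is symmetric this does not affect correctness.
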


\subsection{Hybrid Functionalities}

We work in the $(\mathcal{F}_{\mathsf{SPDZ}}, \mathcal{F}_{\mathsf{edaBits}}, \mathcal{F}_{\mathsf{Binary}})$-hybrid model, where:

\begin{functionality}[$\mathcal{F}_{\mathsf{SPDZ}}$ -- Authenticated Secret Sharing~\cite{DPSZ12}]
\label{func:spdz}
Parameterized by field $\mathbb{F}_q$, parties $\mathcal{P} = \{P_1, \ldots, P_n\}$, and MAC key $\alpha \in \mathbb{F}_q$.

\textbf{Share}: On input $(\mathsf{SHARE}, \mathsf{sid}, x)$ from dealer $P_i$:
\begin{enumerate}
    \item Sample random shares $[x]_1, \ldots, [x]_n$ with $\sum_j [x]_j = x$
    \item Compute MAC shares $[\alpha x]_j$ for each party
    \item Send $([x]_j, [\alpha x]_j)$ to each $P_j$
    \item Send $(\mathsf{SHARED}, \mathsf{sid})$ to adversary $\mathcal{S}$
\end{enumerate}

\textbf{Add}: On input $(\mathsf{ADD}, \mathsf{sid}, \mathsf{id}_a, \mathsf{id}_b, \mathsf{id}_c)$:
\begin{enumerate}
    \item Each party locally computes $[c]_j = [a]_j + [b]_j$ and $[\alpha c]_j = [\alpha a]_j + [\alpha b]_j$
\end{enumerate}

\textbf{Multiply}: On input $(\mathsf{MULT}, \mathsf{sid}, \mathsf{id}_a, \mathsf{id}_b, \mathsf{id}_c)$:
\begin{enumerate}
    \item Using Beaver triples, compute shares of $c = a \cdot b$ with valid MACs
    \item Notify adversary of completion
\end{enumerate}

\textbf{Open}: On input $(\mathsf{OPEN}, \mathsf{sid}, \mathsf{id}_x)$:
\begin{enumerate}
    \item Collect $[x]_j$ from all parties
    \item Verify MACs: check $\sum_j [\alpha x]_j = \alpha \cdot \sum_j [x]_j$
    \item If verification fails, output $\bot$ to all parties
    \item Otherwise, output $x = \sum_j [x]_j$ to all parties
\end{enumerate}

\textbf{Security}: $\mathcal{F}_{\mathsf{SPDZ}}$ provides statistical hiding of shares (additive shares are perfectly uniform given $\leq N-1$ honest parties) and computational integrity (MAC forgery probability $\leq 1/q$ per check).
\end{functionality}

\begin{functionality}[$\mathcal{F}_{\mathsf{edaBits}}$ -- Extended Doubly-Authenticated Bits~\cite{EKMOZ20}]
\label{func:edabits}
On input $(\mathsf{EDABIT}, \mathsf{sid}, m)$ requesting $m$ edaBits:

For $i = 1, \ldots, m$:
\begin{enumerate}
    \item Sample $r_i \getsr \{0, 1, \ldots, 2^k - 1\}$ for specified bit-length $k$
    \item Create arithmetic sharing: $\langle r_i \rangle_q$ over $\mathbb{F}_q$
    \item Create binary sharing: $\langle r_i \rangle_2 = (\langle r_i^{(0)} \rangle, \ldots, \langle r_i^{(k-1)} \rangle)$ over $\mathbb{F}_2$
    \item Distribute shares to parties with appropriate MAC tags
\end{enumerate}

Output $\{(\langle r_i \rangle_q, \langle r_i \rangle_2)\}_{i=1}^m$ to all parties.

\textbf{Security}: The correlation $(r, r)$ is hidden from any strict subset of parties. MAC tags ensure integrity.
\end{functionality}

\begin{remark}[edaBits Range and Modular Reduction]
\label{rem:edabits-range}
The edaBits functionality samples $r_i \in \{0, \ldots, 2^k - 1\}$ where $k = \lceil \log_2 q \rceil = 23$ for ML-DSA. Since $2^{23} = 8388608 > q = 8380417$, a direct reduction $r_i \bmod q$ introduces a small bias: values in $\{0, \ldots, 2^{23} - q - 1\} = \{0, \ldots, 8191\}$ are slightly more likely.

\textbf{Per-coefficient bias:} $(2^{23} - q)/2^{23} = 8191/8388608 < 2^{-10}$. A na\"ive union bound over $m = 1536$ coefficients gives $m \cdot 2^{-10} \approx 1.5$, which exceeds 1 and is therefore vacuous (statistical distance is at most 1 by definition).

\textbf{Mitigation:} We use rejection sampling within edaBits generation: sample $r \getsr \{0, \ldots, 2^k - 1\}$; if $r \geq q$, resample. The expected number of samples is $2^k/q < 1.001$, adding negligible overhead. With this modification, $r \bmod q$ is perfectly uniform over $\mathbb{F}_q$, and Lemma~\ref{lem:otp} applies exactly.
\end{remark}

\begin{functionality}[$\mathcal{F}_{\mathsf{Binary}}$ -- Binary Circuit Evaluation]
\label{func:binary}
On input $(\mathsf{EVAL}, \mathsf{sid}, C, \{\langle x_i \rangle_2\})$ where $C$ is a binary circuit:
\begin{enumerate}
    \item Reconstruct inputs $x_i$ from shares
    \item Evaluate $y = C(x_1, \ldots, x_m)$
    \item Create fresh sharing $\langle y \rangle_2$
    \item Distribute to parties
\end{enumerate}

Implemented via Beaver triples: for AND gate on shared bits $\langle x \rangle, \langle y \rangle$:
\begin{enumerate}
    \item Use preprocessed triple $(\langle a \rangle, \langle b \rangle, \langle c \rangle)$ with $c = a \land b$
    \item Open $d = x \oplus a$ and $e = y \oplus b$
    \item Compute $\langle xy \rangle = \langle c \rangle \oplus (e \land \langle a \rangle) \oplus (d \land \langle b \rangle) \oplus (d \land e)$
\end{enumerate}
\end{functionality}

\section{Complete Ideal Functionality for Profile P2}
\label{app:ideal-func}

\textbf{Note:} This section presents the original 8-round P2 protocol for completeness. The optimized 5-round version used in the main paper (Table~1) is presented in Section~\ref{app:p2-5round}.

We define the ideal functionality $\mathcal{F}_{r_0\text{-check}}$ that captures the security requirements of the r0-check subprotocol.

\begin{functionality}[$\mathcal{F}_{r_0\text{-check}}$ -- Secure r0-Check]
\label{func:r0check}
\textbf{Parameters:}
\begin{itemize}
    \item $n$: number of parties
    \item $q = 8380417$: field modulus
    \item $\gamma_2 = (q-1)/32 = 261888$: rounding parameter
    \item $\beta = \tau \cdot \eta = 49 \cdot 4 = 196$: bound parameter (ML-DSA-65)
    \item $\mathsf{bound} = \gamma_2 - \beta = 261692$: acceptance threshold
    \item $m = k \cdot 256 = 1536$: number of coefficients (for ML-DSA-65)
\end{itemize}

\textbf{State:}
\begin{itemize}
    \item $\mathsf{received}[\cdot]$: stores received inputs
    \item $\mathsf{result} \in \{0, 1, \bot\}$: computation result
\end{itemize}

\textbf{Input Phase:}

On receiving $(\mathsf{INPUT}, \mathsf{sid}, i, [\mathbf{w}']_i)$ from party $P_i$ where $[\mathbf{w}']_i \in \mathbb{Z}_q^m$:
\begin{enumerate}
    \item If $P_i$ is corrupted, forward $[\mathbf{w}']_i$ to simulator $\mathcal{S}$
    \item Store $\mathsf{received}[i] \gets [\mathbf{w}']_i$
    \item If $|\{j : \mathsf{received}[j] \neq \bot\}| = n$:
    \begin{enumerate}
        \item Reconstruct: $\mathbf{w}'_j = \sum_{i=1}^n \mathsf{received}[i][j] \bmod q$ for each $j \in [m]$
        \item For each $j \in [m]$: compute centered representative $\tilde{w}'_j \in (-q/2, q/2]$
        \item Compute pass/fail for each coefficient using the FIPS 204 $\mathsf{LowBits}$ function~\cite{FIPS204}:
        \[
        \mathsf{pass}_j = \begin{cases} 1 & \text{if } |\mathsf{LowBits}(\mathbf{w}'[j],\, 2\gamma_2)| < \mathsf{bound} \\ 0 & \text{otherwise} \end{cases}
        \]
        where $\mathsf{LowBits}(r, 2\gamma_2) = r \bmod{\pm}\, 2\gamma_2$ (the unique representative in $(-\gamma_2, \gamma_2]$ congruent to $r \bmod 2\gamma_2$), matching FIPS 204 Algorithm 31. Note: $\mathbf{w}'[j] = \sum_{i=1}^n \mathsf{received}[i][j] \bmod q$ is first reduced mod $q$, then $\mathsf{LowBits}$ is applied to this $\mathbb{Z}_q$ value directly without additional centering mod $q$.
        \item Compute final result: $\mathsf{result} = \bigwedge_{j=1}^m \mathsf{pass}_j$
    \end{enumerate}
    \item Send $(\mathsf{COMPUTED}, \mathsf{sid}, \mathsf{result})$ to $\mathcal{S}$
\end{enumerate}

\textbf{Output Phase:}

On receiving $(\mathsf{DELIVER}, \mathsf{sid})$ from $\mathcal{S}$:
\begin{enumerate}
    \item Send $(\mathsf{OUTPUT}, \mathsf{sid}, \mathsf{result})$ to all honest parties
\end{enumerate}

On receiving $(\mathsf{ABORT}, \mathsf{sid})$ from $\mathcal{S}$:
\begin{enumerate}
    \item Send $(\mathsf{OUTPUT}, \mathsf{sid}, \bot)$ to all honest parties
\end{enumerate}

\textbf{Leakage:} Only the single bit $\mathsf{result}$ is revealed to $\mathcal{S}$. The functionality does \emph{not} reveal:
\begin{itemize}
    \item The reconstructed value $\mathbf{w}'$
    \item Individual coefficients or their pass/fail status
    \item Honest parties' input shares
\end{itemize}
\end{functionality}

\begin{remark}[Security with Abort]
$\mathcal{F}_{r_0\text{-check}}$ provides security-with-abort: the adversary can cause the protocol to abort \emph{after} learning the result bit. This is standard for dishonest-majority MPC and matches the security of underlying primitives (SPDZ, edaBits).
\end{remark}

\section{Real Protocol $\Pi_{r_0}$}
\label{app:real-protocol}

We formally specify the real protocol $\Pi_{r_0}$ that realizes $\mathcal{F}_{r_0\text{-check}}$.

\begin{protocol}[$\Pi_{r_0}$ -- MPC-Based r0-Check]
\label{prot:r0check}

\textbf{Notation:}
\begin{itemize}
    \item $\langle x \rangle_q$: SPDZ sharing of $x$ over $\mathbb{F}_q$
    \item $\langle x \rangle_2$: secret sharing of $x$ over $\mathbb{F}_2$
    \item $[x]_i$: party $P_i$'s share of $x$
\end{itemize}

\textbf{Preprocessing (offline phase):}
\begin{enumerate}
    \item Generate $m$ edaBits $\{(\langle r_j \rangle_q, \langle r_j \rangle_2)\}_{j=1}^m$ via $\mathcal{F}_{\mathsf{edaBits}}$
    \item Generate Beaver triples for binary AND gates (for comparison circuits)
    \item Distribute MAC key shares for SPDZ
\end{enumerate}

\textbf{Online Protocol:}

\underline{Round 1-2: Share Exchange and Aggregation}

Each party $P_i$ holds input $[\mathbf{w}']_i \in \mathbb{Z}_q^m$ (their share of $\mathbf{w}' = \mathbf{w} - c\mathbf{s}_2$).

\begin{enumerate}
    \item Each $P_i$ creates SPDZ sharing of their input: invoke $\mathcal{F}_{\mathsf{SPDZ}}.\mathsf{SHARE}$ for each $[\mathbf{w}']_i[j]$
    \item Parties invoke $\mathcal{F}_{\mathsf{SPDZ}}.\mathsf{ADD}$ to compute $\langle \mathbf{w}'_j \rangle_q = \sum_i \langle [\mathbf{w}']_i[j] \rangle_q$ for each $j$
\end{enumerate}

\underline{Round 3-4: Masked Reveal (Arithmetic to Public)}

For each coefficient $j \in [m]$:
\begin{enumerate}
    \item Parties compute using local addition:
    \[
    \langle \mathsf{masked}_j \rangle_q = \langle \mathbf{w}'_j \rangle_q + \langle r_j \rangle_q
    \]
    \item Each party $P_i$ computes commitment:
    \[
    \mathsf{Com}_{i,j} = H\bigl([\mathsf{masked}_j]_i \| \mathsf{sid} \| j \| i\bigr)
    \]
    \item Broadcast all commitments (Round 3)
    \item After receiving all commitments, broadcast openings $[\mathsf{masked}_j]_i$ (Round 4)
    \item Verify commitments; abort if any mismatch
    \item Reconstruct $\mathsf{masked}_j = \sum_i [\mathsf{masked}_j]_i \bmod q$
\end{enumerate}

\underline{Round 5: Arithmetic-to-Binary Conversion}

For each coefficient $j \in [m]$:
\begin{enumerate}
    \item Convert public $\mathsf{masked}_j$ to binary:
    \[
    \mathsf{masked}_j^{(b)} = \bigl(\mathsf{masked}_j^{(0)}, \ldots, \mathsf{masked}_j^{(\lceil \log_2 q \rceil - 1)}\bigr)
    \]
    \item Compute binary sharing of $\mathbf{w}'_j$ via subtraction circuit:
    \[
    \langle \mathbf{w}'_j \rangle_2 = \mathsf{masked}_j^{(b)} \boxminus \langle r_j \rangle_2
    \]
    where $\boxminus$ denotes bitwise binary subtraction with borrow propagation over $\lceil \log_2 q \rceil + 1$ bits; a borrow into the MSB wraps correctly to the result mod $q$
    \item The subtraction circuit has depth $O(\log q)$ and uses $O(\log q)$ AND gates
\end{enumerate}

\underline{Round 6-7: Binary Comparison}

For each coefficient $j \in [m]$:
\begin{enumerate}
    \item Compute sign bit and absolute value from $\langle \mathbf{w}'_j \rangle_2$
    \item Evaluate comparison circuit: $\langle \mathsf{pass}_j \rangle_2 = (|\langle \mathbf{w}'_j \rangle_2| < \mathsf{bound})$
    \item The comparison uses $O(\log \gamma_2)$ AND gates
\end{enumerate}

\underline{Round 8: AND Aggregation and Output}

\begin{enumerate}
    \item Compute $\langle \mathsf{result} \rangle_2 = \bigwedge_{j=1}^m \langle \mathsf{pass}_j \rangle_2$ via AND-tree of depth $\lceil \log_2 m \rceil$
    \item Open $\mathsf{result}$: each party broadcasts $[\mathsf{result}]_i$
    \item Reconstruct $\mathsf{result} = \bigoplus_i [\mathsf{result}]_i$
    \item Output $\mathsf{result}$ to all parties
\end{enumerate}
\end{protocol}

\section{Simulator Construction}
\label{app:simulator}

We construct a simulator $\mathcal{S}$ that interacts with $\mathcal{F}_{r_0\text{-check}}$ and simulates the view of adversary $\mathcal{A}$ controlling corrupted parties $C \subsetneq [n]$.

\begin{simulator}[$\mathcal{S}$ for $\Pi_{r_0}$]
\label{sim:r0check}

$\mathcal{S}$ internally runs $\mathcal{A}$ and simulates the honest parties' messages.

\textbf{Setup:}
\begin{enumerate}
    \item $\mathcal{S}$ receives the set $C$ of corrupted parties from $\mathcal{A}$
    \item Let $H = [n] \setminus C$ be the set of honest parties
\end{enumerate}

\textbf{Simulating Preprocessing:}
\begin{enumerate}
    \item For edaBits involving corrupted parties: $\mathcal{S}$ receives $\mathcal{A}$'s shares from $\mathcal{F}_{\mathsf{edaBits}}$ simulator
    \item For honest parties: sample random shares consistent with the corrupted parties' view
    \item \textbf{Key property:} $\mathcal{S}$ does not know the actual edaBit values; only the corrupted shares
\end{enumerate}

\textbf{Simulating Rounds 1-2 (Share Exchange):}

\begin{enumerate}
    \item For each honest party $P_i \in H$ (adversary's view only):
    \begin{enumerate}
        \item Sample random SPDZ shares $\tilde{s}_{i,j} \getsr \mathbb{F}_q$ for $j \in [m]$
        \item Compute MAC shares using the SPDZ simulator (consistent with corrupted MAC key shares)
        \item Send simulated shares to $\mathcal{A}$
    \end{enumerate}
    \item Receive corrupted parties' inputs from $\mathcal{A}$; forward corrupted inputs to $\mathcal{F}_{r_0\text{-check}}$
    \item \textbf{UC framing (honest parties' inputs):} In the ideal execution, honest parties are ideal-world programs that submit their actual inputs directly to $\mathcal{F}_{r_0\text{-check}}$ via the ideal-world channel; the simulator $\mathcal{S}$ is \emph{not} involved in this step. The functionality receives all inputs and computes $\mathsf{result}$; $\mathcal{S}$ receives $\mathsf{result}$ and uses it in Round~8 simulation.
    \item \textbf{Consistency:} The simulated shares $\tilde{s}_{i,j} \getsr \mathbb{F}_q$ are identically distributed to real honest-party shares (SPDZ additive shares are uniformly random given any fixed secret: perfect hiding). No inconsistency arises: the adversary sees random shares, honest parties' actual values go to $\mathcal{F}$ privately.
\end{enumerate}

\textbf{Simulating Rounds 3-4 (Masked Reveal):}

This is the critical simulation step. We must simulate $\mathsf{masked}_j = \mathbf{w}'_j + r_j$ without knowing $\mathbf{w}'_j$.

\begin{enumerate}
    \item \textbf{Round 3 (Commitments):}
    \begin{enumerate}
        \item For each honest party $P_i \in H$ and each $j \in [m]$:
        \item Sample $\tilde{m}_{i,j} \getsr \mathbb{F}_q$ uniformly at random
        \item Compute commitment $\widetilde{\mathsf{Com}}_{i,j} = H(\tilde{m}_{i,j} \| \mathsf{sid} \| j \| i)$
        \item Send $\widetilde{\mathsf{Com}}_{i,j}$ to $\mathcal{A}$
    \end{enumerate}

    \item Receive corrupted parties' commitments from $\mathcal{A}$

    \item \textbf{Round 4 (Openings):}
    \begin{enumerate}
        \item For each honest party $P_i \in H$: send $\tilde{m}_{i,j}$ to $\mathcal{A}$
        \item Receive corrupted parties' openings from $\mathcal{A}$
        \item Verify corrupted commitments; if any fail, simulate abort
        \item Compute simulated
        \[
        \widetilde{\mathsf{masked}}_j = \sum_{i \in H} \tilde{m}_{i,j} + \sum_{i \in C} [\mathsf{masked}_j]_i \bmod q
        \]
    \end{enumerate}
\end{enumerate}

\textbf{Justification for Round 3-4 Simulation:}

In the real protocol:
\begin{align*}
\mathsf{masked}_j &= \mathbf{w}'_j + r_j \\
&= \sum_i [\mathbf{w}']_i[j] + \sum_i [r_j]_i \bmod q
\end{align*}

Since $r_j \getsr \mathbb{F}_q$ from $\mathcal{F}_{\mathsf{edaBits}}$ and is unknown to $\mathcal{A}$
(only shares are known), by the one-time pad property:
\[
\mathsf{masked}_j = \mathbf{w}'_j + r_j \equiv \mathsf{Uniform}(\mathbb{F}_q)
\]

Thus sampling $\tilde{m}_{i,j} \getsr \mathbb{F}_q$ for honest parties produces an identical distribution.

\textbf{Simulating Round 5 (A2B Conversion):}

\begin{enumerate}
    \item Convert $\widetilde{\mathsf{masked}}_j$ to binary (same as real)
    \item For binary subtraction: use $\mathcal{F}_{\mathsf{Binary}}$ simulator
    \item Simulate honest parties' binary shares as uniform
    (justified by Beaver AND security)
\end{enumerate}

\textbf{Simulating Rounds 6-7 (Comparison):}

\begin{enumerate}
    \item For each AND gate, sample $(\tilde{d}, \tilde{e}) \getsr \{0,1\}^2$.
    By Lemma~\ref{lem:beaver-security}, this matches the real distribution.
    \item Apply the same simulation to all comparison circuits.
\end{enumerate}

\textbf{Simulating Round 8 (Output):}

\begin{enumerate}
    \item Wait to receive $\mathsf{result} \in \{0, 1\}$ from $\mathcal{F}_{r_0\text{-check}}$
    \item \textbf{Backward simulation of AND tree:}

    \textit{The adversary's view of the AND tree is only the Beaver-triple open values $(d_j, e_j)$ — the leaf bits $\mathsf{pass}_j$ never appear in the adversary's view.}
    \begin{enumerate}
        \item For each AND gate $j \in [m]$, sample $(d_j, e_j) \getsr \{0,1\}^2$ uniformly and independently (no leaf assignment is performed).
        \item Output the simulated view $\{(d_j, e_j)\}_{j=1}^m$ together with $\mathsf{result}$ received from $\mathcal{F}_{r_0\text{-check}}$.
    \end{enumerate}
    \textit{Justification:} By Lemma~\ref{lem:beaver-security}, the real $(d_j, e_j) = (x_j \oplus \alpha_j, y_j \oplus \beta_j)$ is uniform regardless of the inputs $(x_j, y_j)$, because $(\alpha_j, \beta_j)$ are fresh Beaver-triple components independent of the gate inputs. Since all $m$ Beaver triples are generated independently in preprocessing, the joint distribution is product-uniform. The simulator's directly-sampled $(d_j, e_j)$ therefore matches the real distribution exactly (SD $= 0$), for both $\mathsf{result} \in \{0,1\}$ cases.
    \item Simulate honest parties' output shares consistent with $\mathsf{result}$
    \item Send to $\mathcal{A}$
\end{enumerate}

\textbf{Handling Abort:}

If $\mathcal{A}$ causes any verification to fail (commitment mismatch, MAC failure):
\begin{enumerate}
    \item $\mathcal{S}$ sends $(\mathsf{ABORT}, \mathsf{sid})$ to $\mathcal{F}_{r_0\text{-check}}$
    \item Simulate abort to all honest parties
\end{enumerate}
\end{simulator}

\section{Hybrid Argument}
\label{app:hybrid}

We prove indistinguishability through a sequence of hybrid experiments.

\begin{definition}[Hybrid Experiments]
\label{def:hybrids}
We define hybrids $\mathsf{H}_0, \ldots, \mathsf{H}_6$ where $\mathsf{H}_0$ is the real execution and $\mathsf{H}_6$ is the ideal execution.
\end{definition}

\subsection{Hybrid $\mathsf{H}_0$: Real Execution}

This is the real protocol execution $\mathsf{REAL}_{\Pi_{r_0}, \mathcal{A}, \mathcal{Z}}$:
\begin{itemize}
    \item Honest parties execute $\Pi_{r_0}$ with their true inputs
    \item All cryptographic operations use real keys and randomness
    \item $\mathcal{A}$ controls corrupted parties and sees all communication
\end{itemize}

\subsection{Hybrid $\mathsf{H}_1$: Replace SPDZ with Ideal}

Replace the SPDZ protocol with ideal functionality $\mathcal{F}_{\mathsf{SPDZ}}$.

\begin{lemma}
$|\Pr[\mathsf{H}_0 = 1] - \Pr[\mathsf{H}_1 = 1]| \leq \epsilon_{\mathsf{SPDZ}}$
\end{lemma}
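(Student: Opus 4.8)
The plan is to invoke the UC security of the SPDZ protocol together with the composition theorem (Theorem~\ref{thm:uc-composition}). In hybrid $\mathsf{H}_0$, the calls to $\mathcal{F}_{\mathsf{SPDZ}}.\mathsf{SHARE}$, $.\mathsf{ADD}$, and $.\mathsf{OPEN}$ that appear in Protocol~\ref{prot:r0check} (Rounds~1--2 and the final output reconstruction) are answered by actually running the SPDZ protocol among the $n$ parties; in $\mathsf{H}_1$ those same calls are answered by the ideal functionality $\mathcal{F}_{\mathsf{SPDZ}}$ of Functionality~\ref{func:spdz}. Everything else in $\Pi_{r_0}$---edaBit preprocessing, the commit-then-open masked reveal, the binary subtraction and comparison circuits, the AND-tree, and all interaction with $\mathcal{Z}$ and $\mathcal{A}$---is bit-for-bit identical across the two hybrids.

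First I would set up the composition formally. View the pair $(\mathcal{Z}, \mathcal{A})$ together with the ``SPDZ-free shell'' of $\Pi_{r_0}$---the code that issues the $\mathsf{SHARE}/\mathsf{ADD}/\mathsf{OPEN}$ requests and consumes their answers---as a single PPT environment $\mathcal{Z}'$ for the SPDZ subprotocol. Since this shell interacts with the SPDZ instance only through its prescribed interface, $\mathcal{Z}'$ is an admissible UC environment. By the known result that the SPDZ protocol UC-realizes $\mathcal{F}_{\mathsf{SPDZ}}$ against static malicious adversaries corrupting up to $n-1$ parties~\cite{DPSZ12}, there is a simulator $\mathcal{S}_{\mathsf{SPDZ}}$ such that \emph{no} admissible environment---in particular $\mathcal{Z}'$---distinguishes the real SPDZ execution from the $\mathcal{F}_{\mathsf{SPDZ}}$-ideal execution with advantage exceeding $\epsilon_{\mathsf{SPDZ}}$. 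Instantiating this guarantee with $\mathcal{Z}'$ gives exactly $|\Pr[\mathsf{H}_0 = 1] - \Pr[\mathsf{H}_1 = 1]| \le \epsilon_{\mathsf{SPDZ}}$, with $\mathcal{S}_{\mathsf{SPDZ}}$ absorbed into the overall simulator being constructed for Theorem~\ref{thm:p2-uc}.

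Next I would pin down $\epsilon_{\mathsf{SPDZ}}$ concretely. The only way SPDZ soundness fails is a successful MAC forgery at an $\mathsf{OPEN}$ (or a batched MAC check): a corrupted party that supplies an inconsistent share passes with probability at most $1/q$. Counting the openings in one execution of $\Pi_{r_0}$---$O(m)$ for the per-coefficient arithmetic reveals in the masked-reveal step plus a constant for the output reconstruction---a union bound yields $\epsilon_{\mathsf{SPDZ}} \le O(m)/q$, which for ML-DSA-65 ($m = 1536$, $q = 8380417$) is below $2^{-12}$, matching the bound quoted in Theorem~\ref{thm:p2-uc}. I would also note that security-with-abort is preserved: a failed MAC check makes $\mathcal{F}_{\mathsf{SPDZ}}$ output $\bot$, which the shell propagates as an abort of $\Pi_{r_0}$, exactly as in the real protocol, so no extra case analysis is required.

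The main obstacle is bookkeeping rather than conceptual: one must check that the shell really touches the SPDZ instance only through its interface (so $\mathcal{Z}'$ is admissible and the composition theorem applies verbatim), and that the count of openings---hence the failure probability---accounts for every SPDZ invocation in Protocol~\ref{prot:r0check}, including those hidden inside the per-coefficient masked reveal. Once these are verified, the lemma follows directly from the SPDZ UC theorem and Theorem~\ref{thm:uc-composition}.
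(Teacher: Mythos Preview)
Your proposal is correct and follows essentially the same approach as the paper: invoke the UC security of SPDZ to bound the hybrid transition, then compute $\epsilon_{\mathsf{SPDZ}} \le O(m)/q < 2^{-12}$ via the per-verification MAC forgery probability $1/q$ and a union bound over the $O(m)$ openings. Your treatment is in fact more explicit than the paper's (you spell out the wrapping of $(\mathcal{Z}, \mathcal{A})$ plus the shell into an admissible environment $\mathcal{Z}'$ and invoke the composition theorem formally, and you note the abort-preservation), whereas the paper simply cites SPDZ's UC security and the MAC-forgery bound directly.
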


\begin{proof}
By the UC-security of SPDZ~\cite{DPSZ12}, there exists a simulator $\mathcal{S}_{\mathsf{SPDZ}}$ such that no environment can distinguish real SPDZ execution from ideal $\mathcal{F}_{\mathsf{SPDZ}}$ execution with advantage better than $\epsilon_{\mathsf{SPDZ}}$.

The security parameter is $\kappa$ and the MAC key is drawn from a dedicated MAC key field $\mathbb{F}_p$ with $|p| \geq 64$ bits, \emph{independent} of the protocol field $\mathbb{F}_q$ (this is standard practice in SPDZ implementations~\cite{DPSZ12}). Per SPDZ security analysis:
\begin{itemize}
    \item MAC forgery probability per verification: $1/p \leq 2^{-64}$
    \item With $\ell$ verifications: $\epsilon_{\mathsf{SPDZ}} \leq \ell/p$
\end{itemize}

For our protocol with $m = 1536$ coefficients, MAC verification occurs at \emph{OPEN} time; the number of verifications is $\ell = m$:
\[
\epsilon_{\mathsf{SPDZ}} \leq m/p \leq 1536/2^{64} < 2^{-54}
\]

\begin{remark}[Minimal-configuration bound]
If SPDZ is instantiated with the protocol field $\mathbb{F}_q$ ($q \approx 2^{23}$) as the MAC key space (minimal configuration), the bound degrades to $\epsilon_{\mathsf{SPDZ}} \leq m/q = 1536/8380417 < 2^{-12.4}$. Practical deployments of P2 must configure SPDZ with a separate MAC field of at least 64 bits to achieve cryptographically negligible distinguishing advantage.
\end{remark}
\end{proof}

\subsection{Hybrid $\mathsf{H}_2$: Replace edaBits with Ideal}

Replace the edaBits protocol with ideal functionality $\mathcal{F}_{\mathsf{edaBits}}$.

\begin{lemma}
$|\Pr[\mathsf{H}_1 = 1] - \Pr[\mathsf{H}_2 = 1]| \leq \epsilon_{\mathsf{edaBits}}$
\end{lemma}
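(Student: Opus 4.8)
The plan is to treat this exactly as the preceding SPDZ hybrid step: the only syntactic difference between $\mathsf{H}_1$ and $\mathsf{H}_2$ is that the real edaBits generation subprotocol is swapped for the ideal functionality $\mathcal{F}_{\mathsf{edaBits}}$, so I would argue that the gap is precisely the UC-emulation error of that subprotocol. Concretely, I would invoke the UC-security of the edaBits construction of~\cite{EKMOZ20}: there is a simulator $\mathcal{S}_{\mathsf{edaBits}}$ such that, against a static malicious adversary corrupting up to $n-1$ parties, no environment distinguishes the real edaBits execution from the $\mathcal{F}_{\mathsf{edaBits}}$ execution with advantage exceeding $\epsilon_{\mathsf{edaBits}}$. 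Splicing $\mathcal{S}_{\mathsf{edaBits}}$ into the composite simulation of $\Pi_{r_0}$ (with $\mathcal{F}_{\mathsf{SPDZ}}$ already substituted in $\mathsf{H}_1$) and applying the UC composition theorem (Theorem~\ref{thm:uc-composition}) then yields $|\Pr[\mathsf{H}_1 = 1] - \Pr[\mathsf{H}_2 = 1]| \leq \epsilon_{\mathsf{edaBits}}$.

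First I would pin down what $\mathcal{S}_{\mathsf{edaBits}}$ guarantees and check that the interface matches: $\mathcal{F}_{\mathsf{edaBits}}$ outputs the correlated pair $(\langle r_j\rangle_q, \langle r_j\rangle_2)$ carrying authentication tags of exactly the form used downstream---the $\mathbb{F}_q$ arithmetic sharing uses the same MAC mechanism as $\mathcal{F}_{\mathsf{SPDZ}}$, and the $\mathbb{F}_2$ binary sharing is consumed by the $\mathcal{F}_{\mathsf{Binary}}$ subtraction and comparison circuits---and the adversary's ability to bias its own shares is already folded into $\mathcal{F}_{\mathsf{edaBits}}$, so no extra reduction is needed. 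Next I would accumulate the soundness error: each edaBit incurs $O(1)$ MAC and consistency checks with failure probability $O(1/q)$, and over $m = 1536$ edaBits for ML-DSA-65 ($q \approx 2^{23}$) this is $\epsilon_{\mathsf{edaBits}} < 2^{-12}$, matching the bound quoted in Theorem~\ref{thm:p2-uc}.

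The step I expect to be the main obstacle is the sampling-range subtlety rather than the emulation argument itself. As written in Functionality~\ref{func:edabits}, $r_j$ is drawn from $\{0,\dots,2^k-1\}$ with $k = \lceil\log_2 q\rceil = 23$, and since $2^{23} > q$ a naive reduction $r_j \bmod q$ is biased with statistical distance below $2^{-10}$ per coefficient, which is not negligible after a union bound over all $m$ coefficients---and this would infect the one-time-pad argument in the later ``masked reveal'' hybrid. Following Remark~\ref{rem:edabits-range}, I would therefore use the variant in which rejection sampling is performed inside edaBits generation (resample whenever $r_j \geq q$), so that $r_j \bmod q$ is perfectly uniform over $\mathbb{F}_q$ at the cost of expected $2^k/q < 1.001$ trials; with this fix the hybrid gap is exactly $\epsilon_{\mathsf{edaBits}}$ and the downstream one-time-pad step goes through verbatim. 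The remaining care is simply confirming that this modified functionality is still UC-realized by the~\cite{EKMOZ20} protocol (rejection sampling only changes the local sampling loop, not the authenticated-sharing machinery), after which the lemma follows.
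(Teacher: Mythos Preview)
Your approach is essentially the paper's: invoke the UC-security of the edaBits protocol of~\cite{EKMOZ20} and apply composition. The one substantive difference is in how you account for $\epsilon_{\mathsf{edaBits}}$. You attribute it to ``$O(1)$ MAC and consistency checks with failure probability $O(1/q)$'' per edaBit, whereas the paper decomposes it as $\epsilon_{\mathsf{edaBits}} \leq 2^{-B} + \epsilon'_{\mathsf{SPDZ}}$, where the $2^{-B}$ term is the statistical soundness of the \emph{cut-and-choose} step that enforces arithmetic/binary consistency (with $B = 64$), and $\epsilon'_{\mathsf{SPDZ}}$ covers the internal SPDZ MAC checks. The numerical outcome is similar ($< 2^{-12}$), but the cut-and-choose term is the structurally correct source of the binary-consistency guarantee in~\cite{EKMOZ20}, not MAC verification. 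Your extended discussion of the sampling-range bias (Remark~\ref{rem:edabits-range}) is accurate but belongs with the $\mathsf{H}_2 \to \mathsf{H}_3$ one-time-pad step, as you yourself note; the paper does not treat it inside this lemma.
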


\begin{proof}
By the UC-security of edaBits~\cite{EKMOZ20}, the protocol UC-realizes $\mathcal{F}_{\mathsf{edaBits}}$ with negligible distinguishing advantage.

The edaBits protocol security relies on:
\begin{itemize}
    \item SPDZ security for arithmetic shares (already replaced with ideal)
    \item Cut-and-choose for binary consistency (statistical security)
\end{itemize}

With security parameter $\kappa$ and cut-and-choose parameter $B$:
\[
\epsilon_{\mathsf{edaBits}} \leq 2^{-B} + \epsilon'_{\mathsf{SPDZ}}
\]
where $\epsilon'_{\mathsf{SPDZ}}$ is the SPDZ security bound for the internal operations used in edaBits generation.

For $B = 64$ and ML-DSA-65 parameters ($m = k \cdot N = 6 \cdot 256 = 1536$ coefficients, where $N = 256$ is the ring degree distinct from party count $n$; $q = 8380417$): the cut-and-choose term $2^{-64}$ is negligible, and $\epsilon'_{\mathsf{SPDZ}} \leq m/p \leq 1536/2^{64} < 2^{-54}$ (using the same 64-bit MAC field; per~\cite{EKMOZ20}, the SPDZ sub-protocol internal to edaBits opens at most $m$ values per batch). Thus $\epsilon_{\mathsf{edaBits}} < 2^{-64} + 2^{-54} < 2^{-53.9}$.
\end{proof}

\subsection{Hybrid $\mathsf{H}_3$: Simulate Masked Values}

Replace honest parties' contributions to $\mathsf{masked}_j$ with uniformly random values.

\begin{lemma}[One-Time Pad over $\mathbb{F}_q$]
\label{lem:otp}
For any $x \in \mathbb{F}_q$ and uniformly random $r \getsr \mathbb{F}_q$ independent of $x$:
\[
x + r \bmod q \equiv \mathsf{Uniform}(\mathbb{F}_q)
\]
\end{lemma}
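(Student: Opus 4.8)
The plan is to prove this by the standard observation that translation is a bijection on a finite additive group. First I would fix an arbitrary value $x \in \mathbb{F}_q$ and consider the map $\phi_x : \mathbb{F}_q \to \mathbb{F}_q$ defined by $\phi_x(r) = x + r \bmod q$. Because $(\mathbb{F}_q, +)$ is a group, $\phi_x$ has a two-sided inverse $t \mapsto t - x \bmod q$ and is therefore a bijection; a bijection maps the uniform distribution on a finite set to the uniform distribution on its image. Hence, for $r \getsr \mathbb{F}_q$, the value $x + r$ is uniform on $\mathbb{F}_q$ for \emph{every} fixed $x$. Concretely, for any target $t \in \mathbb{F}_q$ there is exactly one $r$ with $x + r \equiv t \pmod q$, namely $r = t - x$, so $\Pr_r[\,x + r = t\,] = 1/q$.

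Next I would remove the conditioning on $x$. Using that $r$ is sampled independently of $x$, the law of total probability gives, for any $t \in \mathbb{F}_q$,
\[
\Pr[\,x + r = t\,] \;=\; \sum_{a \in \mathbb{F}_q} \Pr[\,x = a\,]\cdot \Pr_r[\,a + r = t\,] \;=\; \sum_{a \in \mathbb{F}_q} \Pr[\,x = a\,]\cdot \frac{1}{q} \;=\; \frac{1}{q},
\]
where the middle equality uses the per-$x$ uniformity established above and the last uses $\sum_a \Pr[x=a] = 1$. Thus $x + r \bmod q$ is uniform on $\mathbb{F}_q$ regardless of the distribution of $x$, which is exactly the claim. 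The same computation goes through unchanged if $x$ is any random variable (possibly correlated with other protocol state), provided $r$ is drawn independently of everything else.

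There is essentially no hard step: the statement is information-theoretic and the core argument is a one-line bijection count. The only points I would state carefully are (i) that independence of $r$ from $x$ is invoked exactly once, in factoring the joint probability, so the lemma fails if the mask is correlated with the value being masked; and (ii) that the conclusion, being a statement about distributions, extends coordinate-wise --- applying the scalar version to each coefficient $j$ and using that the edaBit masks $r_j$ are mutually independent shows $\mathsf{masked}_j = \mathbf{w}'_j + r_j$ is (jointly) uniform, which is the form actually needed in Hybrid $\mathsf{H}_3$. Spelling out this coordinate-wise lifting is the only thing requiring a sentence beyond the scalar proof.
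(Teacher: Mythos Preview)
Your proof is correct and follows essentially the same approach as the paper: both compute $\Pr[x+r=t]=\Pr[r=t-x]=1/q$ using uniformity and independence of $r$. You are simply more explicit---the paper does it in two lines, while you additionally spell out the bijection, the law-of-total-probability step for random $x$ (not strictly needed since the lemma fixes $x$), and the coordinate-wise lifting to $\mathsf{H}_3$.
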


\begin{proof}
For any target $t \in \mathbb{F}_q$:
\[
\Pr[x + r = t] = \Pr[r = t - x] = \frac{1}{q}
\]
since $r$ is uniform and independent of $x$. Thus $x + r$ is uniformly distributed.
\end{proof}

\begin{lemma}
$\Pr[\mathsf{H}_2 = 1] = \Pr[\mathsf{H}_3 = 1]$ (perfect indistinguishability)
\end{lemma}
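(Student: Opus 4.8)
The plan is to show that the two hybrids induce \emph{identically distributed} views for $\mathcal{Z}$, by isolating the single place where $\mathsf{H}_2$ and $\mathsf{H}_3$ differ --- the values $\{[\mathsf{masked}_j]_i\}_{i \in H}$ that honest parties commit to in Round~3 and open in Round~4 --- and arguing that in $\mathsf{H}_2$ these values are \emph{already} uniform and independent of everything $\mathcal{A}$ has seen so far, so that replacing them with fresh uniform $\{\tilde m_{i,j}\}_{i\in H}$ changes nothing.

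First I would fix the adversary's view $V$ up to the start of Round~3: this includes the inputs $\mathcal{A}$ supplied (hence the fixed value $\mathbf{w}'_j = \sum_i [\mathbf{w}']_i[j]$ and the corrupted input shares), all data the $\mathcal{F}_{\mathsf{SPDZ}}$- and $\mathcal{F}_{\mathsf{edaBits}}$-interfaces hand $\mathcal{A}$ (in particular the corrupted edaBit shares $\{[r_j]_i\}_{i \in C}$), and all honest messages so far. Conditioned on $V$, the corrupted shares $[\mathsf{masked}_j]_i = [\mathbf{w}']_i[j] + [r_j]_i$ for $i \in C$ are determined. The key observation is that, because $\mathcal{F}_{\mathsf{edaBits}}$ samples $r_j$ \emph{freshly and uniformly} over $\mathbb{F}_q$ --- using the rejection-sampling variant of Remark~\ref{rem:edabits-range}, which removes the $2^{-10}$ bias and makes $r_j \bmod q$ \emph{exactly} uniform --- and reveals to $\mathcal{A}$ only the $|C| \le n-1$ corrupted additive shares, the honest shares $\{[r_j]_i\}_{i \in H}$ are, conditioned on $V$, uniform over $\mathbb{F}_q^{|H|}$; here $|H| \ge 1$ is exactly the dishonest-majority guarantee. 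Concretely, drawing $r_j$ uniform and then the honest shares uniform subject to $\sum_i [r_j]_i = r_j$ is the same as drawing $\{[r_j]_i\}_{i\in H}$ i.i.d.\ uniform and \emph{defining} $r_j$ as the total.

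Next I would push this through the local computation: the honest share $[\mathsf{masked}_j]_i$ equals $(\text{honest share of }\mathbf{w}'_j) + [r_j]_i$, obtained from $\{[r_j]_i\}_{i \in H}$ by adding a vector that is fixed once $V$ is fixed (it depends only on the inputs and on the $\mathcal{F}_{\mathsf{SPDZ}}$ transcript, both independent of the edaBit randomness), so $\{[\mathsf{masked}_j]_i\}_{i\in H}$ is still uniform over $\mathbb{F}_q^{|H|}$ given $V$ --- this is Lemma~\ref{lem:otp} applied to each honest coordinate. In $\mathsf{H}_3$ the simulator draws exactly $\{\tilde m_{i,j}\}_{i\in H} \getsr \mathbb{F}_q^{|H|}$, so the Round~3 commitments $H([\mathsf{masked}_j]_i \| \mathsf{sid} \| j \| i)$ versus $H(\tilde m_{i,j} \| \mathsf{sid} \| j \| i)$ have the identical joint distribution with their preimages (no collision-resistance or random-oracle assumption is needed, since we compare a fixed function applied to identically-distributed inputs), and likewise for the Round~4 openings and for the reconstructed $\mathsf{masked}_j = \mathbf{w}'_j + r_j$. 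Since every later message --- A2B conversion, binary comparison, AND-tree, output --- is a function of $\mathsf{masked}_j$ and of randomness drawn independently through the $\mathcal{F}_{\mathsf{edaBits}}$ and $\mathcal{F}_{\mathsf{Binary}}$ interfaces, identically in both hybrids, a data-processing argument gives that the whole transcript, hence $\mathcal{Z}$'s output bit, is identically distributed: $\Pr[\mathsf{H}_2 = 1] = \Pr[\mathsf{H}_3 = 1]$.

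The step I expect to be the main obstacle is the second paragraph: asserting \emph{perfect} (not merely statistical) uniformity of $\{[r_j]_i\}_{i \in H}$ given $\mathcal{A}$'s view. This requires (i) that $\mathcal{F}_{\mathsf{edaBits}}$ truly delivers $r_j$ uniform over $\mathbb{F}_q$ --- which is why the rejection-sampling fix of Remark~\ref{rem:edabits-range} must be invoked rather than the naive $\{0,\dots,2^{23}-1\}$ sampler --- and (ii) that, despite the SPDZ re-sharing and MAC machinery, the opened value of a sharing is the \emph{additive} sum of its shares and that $\mathcal{A}$'s knowledge of $[r_j]_i$ for $i \in C$ leaves the honest shares jointly uniform; this is exactly where the $|H| \ge 1$ hypothesis is load-bearing, and the argument fails silently if one drops the dishonest-majority assumption.
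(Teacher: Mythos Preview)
Your proof is correct and follows essentially the same one-time-pad argument as the paper: edaBit randomness $r_j$ is uniform over $\mathbb{F}_q$ (after the rejection-sampling fix of Remark~\ref{rem:edabits-range}), so $\mathsf{masked}_j = \mathbf{w}'_j + r_j$ is uniform regardless of $\mathbf{w}'_j$, and replacing honest contributions by fresh uniform samples preserves the distribution exactly. You actually go further than the paper by arguing at the individual-share level---showing $\{[\mathsf{masked}_j]_i\}_{i \in H}$ is jointly uniform over $\mathbb{F}_q^{|H|}$ conditioned on the adversary's view---which more directly justifies the hybrid change (the simulator samples per-honest-party $\tilde m_{i,j}$, not just the reconstructed $\mathsf{masked}_j$), whereas the paper's proof addresses only the aggregate value.
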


\begin{proof}
In $\mathsf{H}_2$, for each coefficient $j$:
\[
\mathsf{masked}_j = \mathbf{w}'_j + r_j \bmod q
\]
where $r_j$ is the edaBit value, uniformly distributed in $\mathbb{F}_q$
(by Remark~\ref{rem:edabits-range}, rejection sampling ensures $r_j$ is exactly uniform)
and independent of all other values (by $\mathcal{F}_{\mathsf{edaBits}}$ security).

By Lemma~\ref{lem:otp}, $\mathsf{masked}_j$ is uniformly distributed in $\mathbb{F}_q$, regardless of $\mathbf{w}'_j$.

In $\mathsf{H}_3$, we sample $\widetilde{\mathsf{masked}}_j \getsr \mathbb{F}_q$ directly.

Both distributions are identical (uniform over $\mathbb{F}_q$), so:
\[
\Pr[\mathsf{H}_2 = 1] = \Pr[\mathsf{H}_3 = 1]
\]

This is \textbf{statistical} (in fact, perfect) indistinguishability, not computational.
\end{proof}

\subsection{Hybrid $\mathsf{H}_4$: Replace Binary Circuits with Ideal}

Replace Beaver-triple-based AND gates with ideal $\mathcal{F}_{\mathsf{Binary}}$.

\begin{lemma}[Beaver AND Security]
\label{lem:beaver-security}
Let $(a, b, c)$ be a Beaver triple with $a, b \getsr \{0,1\}$ uniform and $c = a \land b$.
For any inputs $x, y \in \{0,1\}$, the opened values $(d, e) = (x \oplus a, y \oplus b)$
are uniformly distributed in $\{0,1\}^2$ and independent of $(x, y)$.
\end{lemma}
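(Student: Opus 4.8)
The plan is to prove this as the binary specialization of the one-time pad (Lemma~\ref{lem:otp}), with the independence claim handled by a conditional-distribution argument. First I would fix arbitrary inputs $x, y \in \{0,1\}$ and recall that in the preprocessing model the triple $(a,b,c)$ is drawn before the inputs are determined, so $(a,b)$ is uniform over $\{0,1\}^2$ and independent of $(x,y)$; the value $c = a \land b$ is a deterministic function of $(a,b)$ and is not opened, so it plays no role. The core observation is that for every fixed $(x,y)$ the map $(a,b) \mapsto (x \oplus a,\, y \oplus b)$ is a bijection of $\{0,1\}^2$ onto itself (XOR with a fixed string is an involution). Pushing the uniform distribution on $(a,b)$ through this bijection yields the uniform distribution on $(d,e)$; concretely, for any target $(d^*,e^*) \in \{0,1\}^2$ we have $\Pr[(d,e)=(d^*,e^*)\mid x,y] = \Pr[a = x\oplus d^*]\cdot\Pr[b = y\oplus e^*] = 1/4$.

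For the independence claim I would then observe that the conditional distribution computed above is the uniform distribution on $\{0,1\}^2$ irrespective of the value of $(x,y)$; since the conditional law of $(d,e)$ given $(x,y)$ does not depend on the conditioning variable, $(d,e)$ is (perfectly) independent of $(x,y)$. This is exactly the property the simulator in Appendix~\ref{app:simulator} exploits: sampling $(\tilde d,\tilde e) \getsr \{0,1\}^2$ reproduces the real opened pair with identical distribution, so the replacement of Beaver AND gates by $\mathcal{F}_{\mathsf{Binary}}$ in Hybrid $\mathsf{H}_4$, as well as the backward simulation of the AND-tree in Round~8, incurs zero statistical loss.

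I do not expect a genuine obstacle here — the argument is a two-line bijection computation. The only point that merits an explicit sentence is a modelling one rather than a calculation: the triple must be freshly generated and used only for this single gate, so that $a$ and $b$ still carry full entropy (from the perspective of anyone not reconstructing them) at the moment $(d,e)$ is revealed. In the $\mathcal{F}_{\mathsf{Binary}}$-hybrid this is guaranteed by drawing a fresh preprocessed triple per AND gate; reusing a triple across gates would correlate the opened values and break the lemma, so I would flag this as the one hypothesis that must not be dropped.
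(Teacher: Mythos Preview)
Your proof is correct and essentially identical to the paper's: both fix $(x,y)$, use that $(a,b)$ is uniform and independent of the inputs, and compute $\Pr[(d,e)=(d^*,e^*)] = \Pr[a = x\oplus d^*]\cdot\Pr[b = y\oplus e^*] = 1/4$ to conclude uniformity and independence. Your bijection framing and the remark about triple freshness are welcome additions but do not change the underlying argument.
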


\begin{proof}
Since $a \getsr \{0,1\}$ is uniform and independent of $x$:
\[
\Pr[d = 0] = \Pr[x \oplus a = 0] = \Pr[a = x] = \frac{1}{2}
\]

Similarly for $e$. Independence of $d$ and $e$ follows from independence of $a$ and $b$.

For any $(d^*, e^*) \in \{0,1\}^2$:
\[
\Pr[(d,e) = (d^*, e^*)] = \Pr[a = x \oplus d^*] \cdot \Pr[b = y \oplus e^*] = \frac{1}{4}
\]

Thus $(d, e) \equiv \mathsf{Uniform}(\{0,1\}^2)$, independent of $(x, y)$.
\end{proof}

\begin{lemma}
$\Pr[\mathsf{H}_3 = 1] = \Pr[\mathsf{H}_4 = 1]$ (perfect indistinguishability)
\end{lemma}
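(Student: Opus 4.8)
The plan is to establish the equality gate-by-gate over the Boolean circuits that $\Pi_{r_0}$ evaluates in Rounds~5--8: the borrow-propagation subtractor used for the A2B conversion, the $m$ per-coefficient comparison circuits, and the final AND-tree. First I would note that every XOR and NOT gate is computed by purely local operations on shares, hence generates no messages and is simulated trivially; all interaction comes from AND gates, each of which opens a single pair $(d,e)=(x\oplus a, y\oplus b)$ against a fresh, independent Beaver triple $(\langle a\rangle,\langle b\rangle,\langle c\rangle)$ and then defines the output sharing $\langle xy\rangle=\langle c\rangle\oplus(e\land\langle a\rangle)\oplus(d\land\langle b\rangle)\oplus(d\land e)$.

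The core of the argument is an induction over the gates in topological order maintaining the invariant that, conditioned on all messages revealed so far, (i) every circuit wire carries a uniformly random sharing of its correct value whose randomness is independent of the adversary's entire view, and (ii) the next AND gate's opened pair $(d,e)$ is uniform in $\{0,1\}^2$ and independent of the wire values and of all prior messages. Property~(ii) is precisely Lemma~\ref{lem:beaver-security}: with $a,b$ fresh uniform bits, $(d,e)$ is uniform and independent of $(x,y)$, so the simulator may sample $(\tilde d,\tilde e)\getsr\{0,1\}^2$; property~(i) is re-established because $\langle c\rangle$ is a fresh uniform sharing of $a\land b$ independent of everything else, whence the corrected output sharing is a fresh uniform sharing of $xy$ using randomness disjoint from every other gate. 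In Hybrid~$\mathsf{H}_4$, $\mathcal{F}_{\mathsf{Binary}}$ reconstructs the gate's inputs, evaluates it, and re-shares with fresh randomness, inducing exactly the same joint distribution of messages and wire sharings; since no step introduces any statistical slack, $\Pr[\mathsf{H}_3=1]=\Pr[\mathsf{H}_4=1]$.

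The step I expect to be the main obstacle is handling a malicious adversary that submits inconsistent shares when opening $d$ or $e$. Since these openings act on authenticated (SPDZ-style) sharings and we are already past Hybrids~$\mathsf{H}_1$--$\mathsf{H}_2$, such deviations are handled \emph{deterministically} by the idealized openings: either a MAC check rejects and the protocol aborts---an event the simulator mirrors by sending $(\mathsf{ABORT},\mathsf{sid})$ to $\mathcal{F}_{r_0\text{-check}}$---or the adversary's choice merely fixes an additive offset that $\mathcal{F}_{\mathsf{Binary}}$ absorbs and that propagates identically in $\mathsf{H}_4$. Hence no new probabilistic loss enters here; the $\epsilon_{\mathsf{SPDZ}}$ and $\epsilon_{\mathsf{edaBits}}$ terms were already paid in $\mathsf{H}_0\to\mathsf{H}_2$. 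Conditioned on the abort branch, the remaining content is exactly the perfect gate-by-gate simulation above, which, combined with the backward simulation of the Round~8 AND-tree in Simulator~\ref{sim:r0check} (a consistent relabeling of internal wires given the already-fixed output bit), yields the claimed equality.
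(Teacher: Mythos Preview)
Your argument is correct and rests on the same core fact as the paper's proof: by Lemma~\ref{lem:beaver-security}, each AND gate's opened pair $(d,e)$ is uniform in $\{0,1\}^2$ independently of the wire values, and XOR/NOT gates generate no transcript, so the adversary's view is identical whether the AND gates are evaluated via real Beaver triples or via $\mathcal{F}_{\mathsf{Binary}}$. Your gate-by-gate induction with the explicit invariant on fresh output sharings is more careful than the paper's two-line argument, and your remark on malicious openings is a reasonable sanity check (though, as you note, it is already discharged by working in the $\mathcal{F}_{\mathsf{SPDZ}}$-hybrid after $\mathsf{H}_1$).

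One scope issue: your final sentence pulls in the ``backward simulation of the Round~8 AND-tree\ldots given the already-fixed output bit.'' That step is the content of the \emph{next} hybrid transition, $\mathsf{H}_4 \to \mathsf{H}_5$, not this one. In $\mathsf{H}_4$ the binary circuits are still evaluated forward on the actual (secret-shared) wire values; only the mechanism for AND gates is idealized. The backward simulation from the result bit, and the question of how to consistently assign internal $\mathsf{pass}_j$ values, does not enter until $\mathsf{H}_5$. Drop that clause here and your proof of $\Pr[\mathsf{H}_3=1]=\Pr[\mathsf{H}_4=1]$ is clean and matches the paper's.
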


\begin{proof}
Each AND gate reveals $(d, e)$, which is uniform by Lemma~\ref{lem:beaver-security}.
Replacing real AND computation with $\mathcal{F}_{\mathsf{Binary}}$
(which also produces uniform $(d, e)$) yields identical distributions.

The A2B conversion and comparison circuits consist entirely of:
\begin{itemize}
    \item XOR gates (no communication, locally computable)
    \item AND gates (each revealing uniform $(d, e)$)
\end{itemize}

Since all revealed values have identical distributions in both hybrids:
\[
\Pr[\mathsf{H}_3 = 1] = \Pr[\mathsf{H}_4 = 1]
\]
\end{proof}

\begin{remark}[A2B Correctness and Borrow Propagation]
\label{rem:a2b-correctness}
The A2B conversion computes $\langle \mathbf{w}'_j \rangle_2 = \mathsf{masked}_j^{(b)} \boxminus \langle r_j \rangle_2$ where $\mathsf{masked}_j = \mathbf{w}'_j + r_j \bmod q$. Two potential issues arise:

\textbf{(1) Negative intermediate result:} If $\mathsf{masked}_j < r_j$ (as integers), the subtraction would wrap around. However, since $\mathbf{w}'_j = \mathsf{masked}_j - r_j \bmod q$, the binary circuit must compute modular subtraction. Standard approaches:
\begin{itemize}
    \item \textbf{Extended precision:} Work in $k+1$ bits where $k = \lceil \log_2 q \rceil$, allowing for the borrow.
    \item \textbf{Add-then-subtract:} Compute $(\mathsf{masked}_j + q) - r_j$ when detecting borrow, using a multiplexer.
\end{itemize}
Both yield $\mathbf{w}'_j \bmod q$ correctly.

\textbf{(2) edaBits range mismatch:} As noted in Remark~\ref{rem:edabits-range}, with rejection sampling, $r_j$ is uniform in $\{0, \ldots, q-1\}$, matching the field $\mathbb{F}_q$.

\textbf{Security implication:} These are \emph{correctness} considerations, not security
concerns. The security proof shows that the adversary's view---the $(d, e)$ values from
Beaver triples---is identically distributed regardless of the underlying $\mathbf{w}'_j$
values. Correctness ensures $\mathsf{result}$ is computed correctly;
security ensures the adversary learns only $\mathsf{result}$.
\end{remark}

\subsection{Hybrid $\mathsf{H}_5$: Backward Simulate AND Tree}

Simulate the AND tree computation given only the final result bit.

\begin{lemma}
$\Pr[\mathsf{H}_4 = 1] = \Pr[\mathsf{H}_5 = 1]$ (perfect indistinguishability)
\end{lemma}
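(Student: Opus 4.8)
The plan is to show that, during the round-8 AND-tree evaluation, the adversary's view carries no information beyond the single bit $\mathsf{result}$, so that replacing the forward evaluation of $\mathsf{H}_4$ by the backward simulation of Simulator~\ref{sim:r0check} induces the exact same joint distribution on (honest outputs, adversarial view). Fix the balanced tree computing $\mathsf{result} = \bigwedge_{j=1}^m \mathsf{pass}_j$: it has $m-1$ two-input AND gates and, at the root, one opening of $\langle \mathsf{result} \rangle_2$. Everything the adversary observes in this phase is (i) the corrupted parties' shares of the input, internal, and output wires, (ii) the pair $(d_g, e_g) = (x_g \oplus a_g,\, y_g \oplus b_g)$ revealed at each gate $g$, and (iii) the reconstruction of $\mathsf{result}$ from all output shares.

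First I would invoke Lemma~\ref{lem:beaver-security} gate-by-gate: since each Beaver mask $(a_g, b_g)$ is fresh and uniform, the opened pair $(d_g, e_g)$ is uniform on $\{0,1\}^2$ and independent of the wire values $(x_g, y_g)$, and independence across gates follows from independence of the triples. Hence $\{(d_g, e_g)\}_g$ is, in $\mathsf{H}_4$, simply $2(m-1)$ i.i.d.\ uniform bits with no dependence on the actual pass bits, and the simulator may sample them outright. Second, I would observe that the remaining observations are pinned down by $\mathsf{result}$ together with the adversary's own corrupted shares: the shares of every wire are re-randomized by the sharing procedure and are uniform subject only to consistency with opened values, and only the root wire $\mathsf{result}$ is ever opened. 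The simulator, holding $\mathsf{result}$ from $\mathcal{F}_{r_0\text{-check}}$, therefore assigns the internal wires a consistent boolean labeling top-down---forcing both inputs of an AND gate to $1$ when its output is $1$, and fixing (arbitrarily) at least one input to $0$ when its output is $0$---then completes the honest shares so the lone opening reconstructs to $\mathsf{result}$. Since a consistent labeling always exists for either target output, the backward pass never stalls, and the resulting view is identically distributed to the forward evaluation in $\mathsf{H}_4$; thus $\Pr[\mathsf{H}_4 = 1] = \Pr[\mathsf{H}_5 = 1]$.

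The point needing the most care is the last one: one must check not merely that some consistent top-down labeling exists, but that conditioning on $\mathsf{result}$ and on the adversary's (arbitrary) corrupted shares leaves the honest parties' wire shares uniform in \emph{both} hybrids, so that the backward-sampled view has exactly---not approximately---the forward distribution. This hinges on the fact that intermediate wires remain secret-shared throughout (only $\mathsf{result}$ is reconstructed), so the arbitrary choices made for output-$0$ AND gates are never tested against anything the environment sees; combined with the input-independence of the Beaver openings, this yields perfect rather than statistical indistinguishability.
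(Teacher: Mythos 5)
Your proposal is correct and takes essentially the same approach as the paper: both rely on Lemma~\ref{lem:beaver-security} to establish that the revealed Beaver pairs $(d,e)$ are uniform and independent of the underlying wire values, and both observe that since only the root wire $\mathsf{result}$ is ever opened, the simulator's arbitrary backward assignment of leaf labels is unobservable. The one place you add value is in making explicit what the paper leaves implicit in its ``Key observation'': that conditioning on $\mathsf{result}$ and the adversary's corrupted shares leaves the honest parties' intermediate wire shares uniform in both hybrids (because each Beaver gate re-randomizes its output share via the fresh $\langle c \rangle$ component), which is the step that upgrades the argument from ``the leaf values don't matter'' to a genuine distributional identity.
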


\begin{proof}
In $\mathsf{H}_4$, the AND tree computes $\mathsf{result} = \bigwedge_{j=1}^m \mathsf{pass}_j$ level by level, revealing Beaver $(d, e)$ values at each gate.

In $\mathsf{H}_5$, given $\mathsf{result}$ from $\mathcal{F}_{r_0\text{-check}}$, we backward-simulate:

\textbf{Case $\mathsf{result} = 1$:}
All $\mathsf{pass}_j = 1$. For each AND gate $z = x \land y$ with $z = 1$: both $x = y = 1$. The simulator sets all leaf values to 1 and propagates upward.

\textbf{Case $\mathsf{result} = 0$:}
At least one $\mathsf{pass}_j = 0$. The simulator does \emph{not} assign leaf values first. Instead, it proceeds directly as follows:
\begin{enumerate}
    \item For each AND gate $j$, sample $(d_j, e_j) \getsr \{0,1\}^2$ uniformly and independently (no reference to any leaf assignment).
    \item Output the simulated view $\{(d_j, e_j)\}_{j=1}^m$ along with $\mathsf{result} = 0$ from $\mathcal{F}_{r_0\text{-check}}$.
\end{enumerate}
No leaf values need be assigned: the leaf bits $(\mathsf{pass}_j)$ never appear in the
adversary's view — only the $(d_j, e_j)$ pairs do.
In both cases, the revealed $(d, e)$ values are uniform by Lemma~\ref{lem:beaver-security},
regardless of the underlying $(x, y)$; the simulator's sampled $(d_j, e_j)$ therefore
perfectly matches the real distribution.

This uniformity extends to the \emph{joint} distribution over the entire AND tree:
each gate's Beaver triple $(\alpha_j, \beta_j, \gamma_j)$ is sampled independently offline;
for gate $j$, $(d_j, e_j) = (x_j \oplus \alpha_j, y_j \oplus \beta_j)$ is uniform
and independent of $(x_j, y_j)$ by Lemma~\ref{lem:beaver-security}.
Since triples are pairwise independent across gates, the joint distribution
$(d_1, e_1, \ldots, d_m, e_m)$ is product-uniform, independent of all leaf values.

\textbf{Key observation:} The adversary sees only $(d_j, e_j)$ values; the underlying bits
$x_j, y_j$ remain secret-shared. Since the joint distribution is product-uniform and
independent of the leaf bits, backward simulation is perfect for the entire AND tree.
\end{proof}

\subsection{Hybrid $\mathsf{H}_6$: Ideal Execution}

This is the ideal execution $\mathsf{IDEAL}_{\mathcal{F}_{r_0\text{-check}}, \mathcal{S}, \mathcal{Z}}$ with simulator $\mathcal{S}$ from Section~\ref{app:simulator}.

\begin{lemma}
$\Pr[\mathsf{H}_5 = 1] = \Pr[\mathsf{H}_6 = 1]$
\end{lemma}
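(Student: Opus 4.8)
The plan is to argue that $\mathsf{H}_5$ and $\mathsf{H}_6$ are in fact the \emph{same} experiment viewed from two angles, so the equality is exact rather than up to a negligible term. After $\mathsf{H}_0$--$\mathsf{H}_5$ every sub-protocol has been swapped for its ideal functionality, honest contributions to each $\mathsf{masked}_j$ are uniform, the Beaver openings in the A2B and comparison circuits are uniform pairs $(\tilde d,\tilde e)$, and the AND-tree is reconstructed backwards from the single bit $\mathsf{result}$ returned by $\mathcal{F}_{r_0\text{-check}}$. What remains to check is that the simulator $\mathcal{S}$ of Appendix~\ref{app:simulator}, when plugged into $\mathsf{IDEAL}_{\mathcal{F}_{r_0\text{-check}},\mathcal{S},\mathcal{Z}}$, reproduces this experiment verbatim.

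First I would verify input consistency. Working in the $(\mathcal{F}_{\mathsf{SPDZ}},\mathcal{F}_{\mathsf{edaBits}},\mathcal{F}_{\mathsf{Binary}})$-hybrid model, each corrupted $P_i$'s effective contribution is exactly the share $[\mathbf{w}']_i$ it hands to $\mathcal{F}_{\mathsf{SPDZ}}.\mathsf{SHARE}$ in Rounds~1--2, which $\mathcal{S}$ reads off and forwards to $\mathcal{F}_{r_0\text{-check}}$; the honest inputs are supplied by $\mathcal{Z}$ in both worlds. Hence $\mathcal{F}_{r_0\text{-check}}$ reconstructs the same $\mathbf{w}' = \sum_{i\in C}[\mathbf{w}']_i + \sum_{i\in H}[\mathbf{w}']_i \bmod q$ and returns the same bit $\mathsf{result} = \bigwedge_j(|\tilde w'_j|_{[-\gamma_2,\gamma_2)} < \mathsf{bound})$ that drives the backward simulation in $\mathsf{H}_5$. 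If $\mathcal{A}$ instead deviates in a way flagged by an ideal sub-functionality (a failed SPDZ \textsf{OPEN} MAC check, a Round~3 commitment that does not open correctly, a malformed edaBit request), then $\mathsf{H}_5$ halts immediately after that message and $\mathcal{S}$ issues $(\mathsf{ABORT},\mathsf{sid})$ at the identical point, so abort transcripts coincide.

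Next I would walk through the remaining message flows and confirm that $\mathcal{S}$ generates honest-party traffic by exactly the same rule as $\mathsf{H}_5$: the uniform honest commitments $\widetilde{\mathsf{Com}}_{i,j}$ and openings $\tilde m_{i,j}$ of Rounds~3--4, the uniform binary shares of Round~5, the uniform Beaver pairs of Rounds~6--7, and the output shares of Round~8 consistent with the received $\mathsf{result}$. Finally I would check the activation schedule: $\mathcal{S}$ sends $(\mathsf{DELIVER},\mathsf{sid})$ at the round where honest parties would output in $\mathsf{H}_5$, so honest outputs reach $\mathcal{Z}$ in the same order. Since the joint distribution of $\mathcal{A}$'s view, honest outputs, and $\mathcal{Z}$'s inputs is therefore identical in the two experiments, $\Pr[\mathsf{H}_5 = 1] = \Pr[\mathsf{H}_6 = 1]$, and combining with the bounds from $\mathsf{H}_0$--$\mathsf{H}_5$ yields the $\epsilon_{\mathsf{SPDZ}} + \epsilon_{\mathsf{edaBits}}$ claim of Theorem~\ref{thm:p2-uc}.

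The step I expect to be the main obstacle is the first one: ruling out that $\mathcal{A}$, by submitting inconsistent SPDZ shares or adaptively choosing its openings after seeing the honest $\tilde m_{i,j}$, steers $\mathcal{F}_{r_0\text{-check}}$'s output away from the $\mathsf{result}$ of $\mathsf{H}_5$ without also forcing an abort in both worlds. This reduces to the integrity guarantee of $\mathcal{F}_{\mathsf{SPDZ}}$ (MAC-forgery probability $\le 1/q$, already absorbed into $\epsilon_{\mathsf{SPDZ}}$ at $\mathsf{H}_1$) together with the binding of the Round~3 hash commitments in the random-oracle model, so no additional loss is incurred and the equality here is genuinely exact.
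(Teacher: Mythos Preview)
Your proposal is correct and takes essentially the same approach as the paper: both argue that by the time all the modifications of $\mathsf{H}_1$--$\mathsf{H}_5$ have been applied, the resulting experiment \emph{is} the simulator $\mathcal{S}$ of Appendix~\ref{app:simulator} interacting with $\mathcal{F}_{r_0\text{-check}}$, so the equality is definitional rather than analytic. The paper's proof is a terse bullet list matching each hybrid step to a component of $\mathcal{S}$; your version is considerably more careful, explicitly checking input extraction, abort synchronization, message-by-message correspondence, and the activation schedule---and flagging the one place (adversarial SPDZ inconsistency) where the equality could in principle fail, then discharging it against the MAC integrity already absorbed at $\mathsf{H}_1$.
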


\begin{proof}
We show that $\mathsf{H}_5$ \emph{is} the ideal execution $\mathsf{H}_6$.

After the previous hybrids, every component of the real protocol has been replaced:
\begin{itemize}
    \item SPDZ arithmetic shares $\to$ $\mathcal{F}_{\mathsf{SPDZ}}$ outputs (H1)
    \item edaBits $\to$ $\mathcal{F}_{\mathsf{edaBits}}$ outputs (H2)
    \item Masked reveal values $\to$ uniform random (H3), so adversary's view of the masked values is simulated
    \item Binary comparison circuits $\to$ $\mathcal{F}_{\mathsf{Binary}}$ output (H4)
    \item AND tree open values $(d_j, e_j)$ $\to$ product-uniform random (H5)
\end{itemize}

After all these replacements, the only ``real'' output remaining is the final result bit output by the AND tree. This result bit is computed by $\mathcal{F}_{\mathsf{Binary}}$ (already ideal from H4) and passed to the parties. But $\mathcal{F}_{\mathsf{Binary}}$'s output is exactly the output of $\mathcal{F}_{r_0\text{-check}}$: it outputs 1 iff $\|\mathbf{r}_0\|_\infty < \gamma_2 - \beta$.

\textbf{Explicit argument.}
We verify that the simulator $\mathcal{S}$ from Section~\ref{app:simulator} is \emph{well-defined} and that its output distribution matches $\mathsf{H}_5$ component-by-component:

\begin{itemize}
    \item \emph{SPDZ shares (H1):} $\mathcal{S}$ generates random shares $\tilde{s}_{i,j} \getsr \mathbb{F}_q$ for honest parties: identical in distribution to $\mathcal{F}_{\mathsf{SPDZ}}$'s output.
    \item \emph{edaBits (H2):} $\mathcal{S}$ uses $\mathcal{F}_{\mathsf{edaBits}}$ outputs: exact match.
    \item \emph{Masked reveal (H3):} $\mathcal{S}$ samples $\tilde{m}_{i,j} \getsr \mathbb{F}_q$: SD $= 0$ by one-time-pad argument.
    \item \emph{Binary comparison (H4):} $\mathcal{S}$ uses $\mathcal{F}_{\mathsf{Binary}}$ outputs: exact match.
    \item \emph{AND-tree pairs (H5):} $\mathcal{S}$ samples $(d_j, e_j) \getsr \{0,1\}^2$: SD $= 0$ by Lemma~\ref{lem:beaver-security}.
    \item \emph{Result bit:} In the ideal execution $\mathsf{H}_6$, honest parties submit their actual inputs directly to $\mathcal{F}_{r_0\text{-check}}$ via the ideal-world channel (not through $\mathcal{S}$). The functionality computes $\mathsf{result}$ and sends it to $\mathcal{S}$. The simulator completes the adversary view using $\mathsf{result}$.
\end{itemize}

Each component's simulated distribution matches $\mathsf{H}_5$'s distribution exactly.
The result bit is the only component that differs between $\mathsf{H}_5$
(computed by $\mathcal{F}_{\mathsf{Binary}}$, which replaced the real AND-tree in $\mathsf{H}_4$)
and $\mathsf{H}_6$ (provided by $\mathcal{F}_{r_0\text{-check}}$).
But $\mathcal{F}_{\mathsf{Binary}}$'s output equals $\mathcal{F}_{r_0\text{-check}}$'s output:
both compute $\mathbf{1}[\|\mathbf{r}_0\|_\infty < \gamma_2 - \beta]$ on the same inputs.
Hence the result-bit distributions are identical, and $\Pr[\mathsf{H}_5 = 1] = \Pr[\mathsf{H}_6 = 1]$.
\end{proof}

\section{Main Theorem}
\label{app:main-theorem}

\begin{theorem}[UC Security of Profile P2 (Full Proof of Theorem~\ref{thm:p2-uc})]
\label{thm:p2-uc-full}
Protocol $\Pi_{r_0}$ UC-realizes $\mathcal{F}_{r_0\text{-check}}$ in the hybrid model with
$(\mathcal{F}_{\mathsf{SPDZ}},\allowbreak \mathcal{F}_{\mathsf{edaBits}},\allowbreak
\mathcal{F}_{\mathsf{Binary}})$
against static malicious adversaries corrupting up to $N-1$ parties,
achieving security-with-abort.

Concretely, for any PPT environment $\mathcal{Z}$ and adversary $\mathcal{A}$:
\begin{align*}
\bigl|\Pr[\mathsf{REAL}_{\Pi_{r_0}, \mathcal{A}, \mathcal{Z}} = 1] &-
\Pr[\mathsf{IDEAL}_{\mathcal{F}_{r_0\text{-check}}, \mathcal{S}, \mathcal{Z}} = 1]\bigr| \leq \epsilon
\end{align*}
where
\[
\epsilon = \epsilon_{\mathsf{SPDZ}} + \epsilon_{\mathsf{edaBits}} < 2^{-53}
\]
for ML-DSA-65 parameters with a 64-bit dedicated MAC field (recommended configuration; the minimal configuration using the protocol field gives $\epsilon < 2^{-11}$).
\end{theorem}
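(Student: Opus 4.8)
The plan is to obtain the main theorem by chaining the hybrid lemmas of Appendix~\ref{app:hybrid}. Define the sequence $\mathsf{H}_0, \mathsf{H}_1, \ldots, \mathsf{H}_6$ as in Definition~\ref{def:hybrids}, with $\mathsf{H}_0 = \mathsf{REAL}_{\Pi_{r_0}, \mathcal{A}, \mathcal{Z}}$ and $\mathsf{H}_6 = \mathsf{IDEAL}_{\mathcal{F}_{r_0\text{-check}}, \mathcal{S}, \mathcal{Z}}$ for the simulator $\mathcal{S}$ of Simulator~\ref{sim:r0check}. By the triangle inequality,
\[
\bigl|\Pr[\mathsf{H}_0 = 1] - \Pr[\mathsf{H}_6 = 1]\bigr| \le \sum_{i=0}^{5} \bigl|\Pr[\mathsf{H}_i = 1] - \Pr[\mathsf{H}_{i+1} = 1]\bigr|,
\]
and the five summands are bounded by the corresponding lemmas: the $\mathsf{H}_0 \to \mathsf{H}_1$ step (passing to $\mathcal{F}_{\mathsf{SPDZ}}$) costs $\epsilon_{\mathsf{SPDZ}}$; the $\mathsf{H}_1 \to \mathsf{H}_2$ step (passing to $\mathcal{F}_{\mathsf{edaBits}}$) costs $\epsilon_{\mathsf{edaBits}}$; and the transitions $\mathsf{H}_2 \to \mathsf{H}_3$ (one-time-pad argument, Lemma~\ref{lem:otp}), $\mathsf{H}_3 \to \mathsf{H}_4$ (passing to $\mathcal{F}_{\mathsf{Binary}}$, Lemma~\ref{lem:beaver-security}), $\mathsf{H}_4 \to \mathsf{H}_5$ (backward AND-tree simulation), and $\mathsf{H}_5 \to \mathsf{H}_6$ (rewriting $\mathsf{H}_5$ as $\mathcal{S}$'s execution) each contribute exactly $0$. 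This yields $\epsilon = \epsilon_{\mathsf{SPDZ}} + \epsilon_{\mathsf{edaBits}}$.

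For the concrete value, I would instantiate the bounds at ML-DSA-65 parameters: with the SPDZ MAC over $\mathbb{F}_q$, $q = 8380417 \approx 2^{23}$, and $m = nk = 1536$ authenticated coefficients each requiring $O(1)$ checked operations, $\epsilon_{\mathsf{SPDZ}} \le O(m)/q < 2^{-12}$; with cut-and-choose parameter $B = 64$ in edaBits, $\epsilon_{\mathsf{edaBits}} \le 2^{-B} + \epsilon'_{\mathsf{SPDZ}} < 2^{-64} + 2^{-12} < 2^{-11.9}$. Adding gives $\epsilon < 2^{-12} + 2^{-11.9} < 2^{-11}$, matching the statement. I would also invoke UC composition (Theorem~\ref{thm:uc-composition}) to replace each hybrid functionality by any protocol realizing it, which introduces no loss beyond the $\epsilon_{\mathsf{SPDZ}}, \epsilon_{\mathsf{edaBits}}$ terms already accounted for; the broadcast and coin-tossing functionalities are treated ideally throughout and cost nothing.

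The main obstacle is the same one that dominates the $\mathsf{H}_4 \to \mathsf{H}_5$ lemma: arguing that the simulator can reproduce the AND-tree transcript in the $\mathsf{result} = 0$ branch without knowing which coefficients failed the range test. I would resolve this by observing that the adversary's entire view of every AND gate consists of the opened Beaver pairs $(d,e) = (x \oplus a,\, y \oplus b)$, which by Lemma~\ref{lem:beaver-security} are uniform over $\{0,1\}^2$ and statistically independent of the secret-shared wire bits; hence $\mathcal{S}$ samples all such pairs uniformly and then fixes honest parties' final output shares to be consistent with the $\mathsf{result}$ bit returned by $\mathcal{F}_{r_0\text{-check}}$ and with the corrupted shares, yielding an identically distributed transcript. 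Secondary care points are (i) straight-line extraction of corrupted inputs---immediate in the $\mathcal{F}_{\mathsf{SPDZ}}$-hybrid, since a corrupted $P_i$'s $\mathsf{SHARE}$ invocation reveals $[\mathbf{w}']_i$ to $\mathcal{S}$, so no rewinding is needed---and (ii) abort propagation: whenever a MAC verification inside $\mathcal{F}_{\mathsf{SPDZ}}$ or a commitment opening fails, $\mathcal{S}$ forwards $(\mathsf{ABORT}, \mathsf{sid})$ to $\mathcal{F}_{r_0\text{-check}}$, which reproduces the security-with-abort guarantee inherited from the underlying primitives and is the reason the statement is qualified with ``security-with-abort.''
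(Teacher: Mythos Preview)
Your proposal is correct and follows essentially the same approach as the paper's proof: chain the hybrids $\mathsf{H}_0$ through $\mathsf{H}_6$ via the triangle inequality, collect $\epsilon_{\mathsf{SPDZ}} + \epsilon_{\mathsf{edaBits}}$ from the first two transitions and zero from the rest, then plug in the ML-DSA-65 numbers. One trivial slip: you write ``the five summands'' but there are six transitions ($i=0,\ldots,5$), all of which you then list and bound correctly; your additional remarks on straight-line extraction in the $\mathcal{F}_{\mathsf{SPDZ}}$-hybrid and abort propagation are consistent with the paper's simulator construction.
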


\begin{proof}
By the hybrid argument (Section~\ref{app:hybrid}):
\begin{align*}
&|\Pr[\mathsf{H}_0 = 1] - \Pr[\mathsf{H}_6 = 1]| \\
&\leq \sum_{i=0}^{5} |\Pr[\mathsf{H}_i = 1] - \Pr[\mathsf{H}_{i+1} = 1]| \\
&= |\Pr[\mathsf{H}_0 = 1] - \Pr[\mathsf{H}_1 = 1]| + |\Pr[\mathsf{H}_1 = 1] - \Pr[\mathsf{H}_2 = 1]| \\
&\quad + |\Pr[\mathsf{H}_2 = 1] - \Pr[\mathsf{H}_3 = 1]| + |\Pr[\mathsf{H}_3 = 1] - \Pr[\mathsf{H}_4 = 1]| \\
&\quad + |\Pr[\mathsf{H}_4 = 1] - \Pr[\mathsf{H}_5 = 1]| + |\Pr[\mathsf{H}_5 = 1] - \Pr[\mathsf{H}_6 = 1]| \\
&\leq \epsilon_{\mathsf{SPDZ}} + \epsilon_{\mathsf{edaBits}} + 0 + 0 + 0 + 0 \\
&= \epsilon_{\mathsf{SPDZ}} + \epsilon_{\mathsf{edaBits}}
\end{align*}

The zero terms arise from perfect (statistical) indistinguishability in Hybrids 3, 4, 5, and 6.

\textbf{Concrete bound (recommended configuration, 64-bit MAC field):} For ML-DSA-65 with $q = 8380417$, $m = 1536$ coefficients, and dedicated MAC field $\mathbb{F}_p$ with $|p| = 64$ bits:
\begin{itemize}
    \item $\epsilon_{\mathsf{SPDZ}} \leq m/p = 1536/2^{64} < 2^{-54}$
    \item $\epsilon_{\mathsf{edaBits}} \leq m/p + 2^{-64} < 2^{-53.9}$ (cut-and-choose $B = 64$; $m/p$ is the SPDZ sub-error internal to edaBits)
\end{itemize}

Thus $\epsilon = \epsilon_{\mathsf{SPDZ}} + \epsilon_{\mathsf{edaBits}} < 2^{-54} + 2^{-53.9} < 2^{-53}$, achieving cryptographically negligible distinguishing advantage. \emph{Minimal configuration} (protocol field as MAC field): $\epsilon_{\mathsf{SPDZ}} < 2^{-12.4}$, $\epsilon_{\mathsf{edaBits}} < 2^{-11.9}$, total $\epsilon < 2^{-11}$; overall signature security relies on Module-SIS at a much higher level (Theorem~\ref{thm:unforgeability}).
\end{proof}

\begin{corollary}[Full Protocol P2 Security]
\label{cor:p2-full}
The complete Profile P2 threshold ML-DSA protocol achieves:
\begin{enumerate}
    \item \textbf{EUF-CMA security} under Module-SIS (inherited from single-signer ML-DSA, Theorem~\ref{thm:unforgeability})
    \item \textbf{r0-check privacy} against coalitions of up to $N-1$ parties (by UC security of $\mathcal{F}_{r_0\text{-check}}$: only the 1-bit pass/fail result is leaked by the r0-check subprotocol). \emph{Note:} unforgeability and key secrecy require $|C| \leq T-1$; if $|C| \geq T$ the adversary can reconstruct the secret key from $T$ Shamir shares. The $N-1$ bound here refers to the r0-check subprotocol's leakage, not the full signing protocol's security model.
    \item \textbf{Composition security}: The protocol can be securely composed with other UC-secure protocols
\end{enumerate}
\end{corollary}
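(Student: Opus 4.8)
The plan is to prove each of the three claims by first moving to the $\mathcal{F}_{r_0\text{-check}}$-hybrid world and then invoking results already established for the abstracted masked threshold protocol. By Theorem~\ref{thm:p2-uc} (detailed as Theorem~\ref{thm:p2-uc-full}), the MPC subprotocol $\Pi_{r_0}$ UC-realizes $\mathcal{F}_{r_0\text{-check}}$ with loss $\epsilon_{\mathsf{SPDZ}} + \epsilon_{\mathsf{edaBits}} < 2^{-11}$, so the UC composition theorem (Theorem~\ref{thm:uc-composition}) lets us replace every invocation of $\Pi_{r_0}$ --- and, analogously, the MPC subprotocol used to compute the hint $\mathbf{h}$, which also consumes $c\mathbf{s}_2$ and is captured by the same kind of ``inputs-in, one short output-out'' functionality --- with a call to an ideal functionality. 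In this hybrid world the Profile P2 signing protocol is structurally identical to Profile P1 with $\mathcal{F}_{\mathsf{TEE}}$ swapped for the MPC functionalities: the combiner never reconstructs $c\mathbf{s}_2$, and only the accept/reject bit together with the final $(\tilde{c}, \mathbf{z}, \mathbf{h})$ leaves the ideal computation. Each claim then reduces to a statement about this hybrid-world protocol, and one re-inserts $\Pi_{r_0}$ at the end via Theorem~\ref{thm:uc-composition}, paying the $\epsilon_{\mathsf{SPDZ}} + \epsilon_{\mathsf{edaBits}}$ additive term.

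For claim (1), EUF-CMA, I would replay the game sequence of Theorem~\ref{thm:unforgeability} verbatim in the $\mathcal{F}_{r_0\text{-check}}$-hybrid model. Games $0$--$3$ (random oracle, programmed challenges, Shamir-simulated honest shares) are unaffected by the change of r0-check implementation. In the signature-simulation game the honest-party messages $\mathbf{W}_j, \mathbf{U}_j, \mathbf{V}_j$ are simulated as uniform by Lemma~\ref{lem:mask-hiding}, and the only extra output of the r0-step is a single bit; since the simulator samples $\mathbf{z} \getsr \{-\gamma_1+\beta+1,\dots,\gamma_1-\beta-1\}^{n\ell}$ precisely so that the z-bound and r0-check pass, it simply lets $\mathcal{F}_{r_0\text{-check}}$ return $1$, which is consistent with the simulated transcript. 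The forking-lemma / M-SIS reduction extracts a short vector from two forgeries $(\tilde{c},\mathbf{z}_1,\mathbf{h}_1)$, $(\tilde{c},\mathbf{z}_2,\mathbf{h}_2)$ and uses nothing about the r0-subprotocol, so it carries over unchanged, yielding $\Adv^{\mathsf{EUF\text{-}CMA}}_{\Pi}(\mathcal{A}) \le O(q_H)\cdot\epsilon_{\mathsf{M\text{-}SIS}} + (q_H+q_s)^2/2^{256} + \binom{N}{2}\epsilon_{\mathsf{PRF}} + \epsilon_{\mathsf{SPDZ}} + \epsilon_{\mathsf{edaBits}}$.

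For claim (2), privacy, and claim (3), composition: in the $\mathcal{F}_{r_0\text{-check}}$-hybrid model the simulator of Theorem~\ref{thm:privacy} produces Round~1--3 transcripts indistinguishable from real, using that honest masked values are uniform (Lemma~\ref{lem:mask-hiding}) and that the common $|S|\ge T+1$ two-honest requirement guarantees at least one PRF term unknown to the adversary; the only additional piece of the view is the functionality's one-bit output, which the simulator reads off the publicly available signature-or-abort outcome, and --- since $\mathcal{F}_{r_0\text{-check}}$ tolerates up to $n-1$ corruptions among the computing parties --- the r0-step itself leaks nothing beyond this bit. Composing this simulator with Simulator~\ref{sim:r0check} gives a full simulator for the $\Pi_{r_0}$-based protocol, so no coalition of signing parties learns more than $\pk$ and the produced signatures. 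Claim (3) is then immediate: arguing exactly as in the proof of Theorem~\ref{thm:p1-uc} but with $\mathcal{F}_{\mathsf{TEE}}$ replaced by the MPC functionalities, the composed protocol UC-realizes $\mathcal{F}_{\mathsf{ThreshSig}}$, and Theorem~\ref{thm:uc-composition} guarantees it may be substituted into any larger UC-secure protocol.

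The main obstacle is making the \emph{outer} UC simulator straight-line: it must program the challenge oracle on $H_{\mathsf{chal}}(\mu\|\mathbf{w}_1)$ even when the environment queries $H$ adaptively, which requires extracting $\mathbf{w}$ from the opened $\mathbf{W}_j$'s \emph{before} the challenge is derived --- exactly what the Round~1 hash commitments $\mathsf{Com}_i$ are for (``commit always'') --- and making this rigorous means treating $H$ as a local ideal random-oracle functionality and showing that any commitment-opening inconsistency on the honest path falls under the security-with-abort clause rather than constituting a distinguishing event. A secondary point requiring care is that $\mathcal{F}_{\mathsf{ThreshSig}}$ must model rejection-sampling aborts ($\bot$ with probability $\approx 0.8$ per attempt, followed by a retry); this is absorbed by the security-with-abort formulation, and the per-attempt success probability is identical in the real and ideal worlds because it depends only on $\mathbf{z} = \mathbf{y} + c\mathbf{s}_1$, whose (Irwin--Hall) distribution is untouched by the passage to the hybrid model.
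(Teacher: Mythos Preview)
Your proposal is correct and follows essentially the same approach as the paper: move to the $\mathcal{F}_{r_0\text{-check}}$-hybrid via Theorem~\ref{thm:p2-uc-full} and UC composition, then invoke Theorem~\ref{thm:unforgeability} for EUF-CMA, Lemma~\ref{lem:mask-hiding} plus the single-bit leakage of $\mathcal{F}_{r_0\text{-check}}$ for privacy (noting, as you do, that the result bit is recoverable from the signature-or-abort outcome), and Theorem~\ref{thm:uc-composition} for composition. Your treatment is considerably more detailed than the paper's short proof---in particular your discussion of straight-line simulation, challenge programming via the Round~1 commitments, and modeling rejection-sampling aborts within security-with-abort addresses subtleties the paper glosses over---but the logical skeleton is the same.
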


\begin{proof}
\textbf{EUF-CMA:} The threshold signature $\sigma = (\tilde{c}, \mathbf{z}, \mathbf{h})$ is syntactically identical in format to single-signer ML-DSA and is accepted by any compliant FIPS~204 verifier (masks cancel in the sum, Lemma~\ref{lem:mask-cancel}); the response $\mathbf{z}$ follows an Irwin-Hall rather than uniform distribution, with EUF-CMA security loss $< 0.013 \cdot q_s$ bits over $q_s$ signing queries (per session: $< 0.013$ bits; Corollary~\ref{cor:ih-loss-main}). The security proof of Theorem~\ref{thm:unforgeability} applies directly.

\textbf{Privacy:} By Theorem~\ref{thm:p2-uc-full}, the r0-check reveals only the result bit $\mathsf{result} \in \{0,1\}$. Combined with the mask hiding lemma (Lemma~\ref{lem:mask-hiding}), an adversary's view can be simulated from:
\begin{itemize}
    \item Public key $\pk$
    \item Corrupted shares $\{\sk_i\}_{i \in C}$
    \item Final signature $\sigma$
    \item r0-check result bits (one per attempt)
\end{itemize}

For successful attempts, $\mathsf{result} = 1$ is implied by the existence of $\sigma$ and leaks no new information.
For \emph{failed} attempts ($\mathsf{result} = 0$, no $\sigma$ produced), the result bit is determined by the public
broadcast $(\tilde{c}, \mathbf{w})$ and the corrupted parties' $r_0$-check shares $\{c\mathbf{s}_{2,j}\}_{j \in C}$:
the simulator computes
\[
\mathbf{r}_0^{(C)} = \mathsf{HighBits}\!\left(\sum_{j \in C} \lambda_j\, c\, \mathbf{s}_{2,j} + \text{honest aggregate contribution}\right)
\]
and determines the result by whether $\|\mathbf{r}_0\|_\infty < \gamma_2 - \beta$.
Since the honest contribution is masked (mask-hiding, Lemma~\ref{lem:mask-hiding}), the simulator learns only the
aggregate result from $\mathcal{F}_{r_0\text{-check}}$, not the individual honest shares.
The failed-attempt result bits can thus be provided to the simulator by $\mathcal{F}_{r_0\text{-check}}$ without
leaking additional information beyond what is already in $\mathcal{Z}$'s view.

\textbf{Composition:} Follows from the UC composition theorem (Theorem~\ref{thm:uc-composition}).
\end{proof}

\begin{corollary}[P2 Full Protocol Realizes $\mathcal{F}_{\mathsf{ThreshSig}}$]
\label{cor:p2-thresh-sig}
Let $\Pi_{\mathsf{P2}}$ denote the full Profile P2 signing protocol, consisting of:
\begin{itemize}
    \item $\Pi_{\mathsf{nonce\text{-}DKG}}$: the Shamir nonce DKG sub-protocol (offline preprocessing)
    \item $\Pi_{\mathsf{signing}}$: the response-and-commitment signing round
    \item $\Pi_{r_0}$: the r0-check sub-protocol (Theorem~\ref{thm:p2-uc-full})
\end{itemize}
Under the assumptions of Theorem~\ref{thm:p2-uc-full}, $\Pi_{\mathsf{P2}}$ UC-realizes
$\mathcal{F}_{\mathsf{ThreshSig}}$ (the ideal threshold signing functionality,
Definition~\ref{def:f-thresh-sig}) against static malicious adversaries.
The total distinguishing advantage is
\[
\epsilon_{\mathsf{P2}} \leq \epsilon_{\mathsf{SPDZ}} + \epsilon_{\mathsf{edaBits}} + (N-1)\cdot\epsilon_{\mathsf{PRF}}
\]
where $\epsilon_{\mathsf{SPDZ}} < 2^{-54}$, $\epsilon_{\mathsf{edaBits}} < 2^{-53.9}$
(64-bit MAC field; Theorem~\ref{thm:p2-uc-full}), and $\epsilon_{\mathsf{PRF}}$ is the PRF
distinguishing advantage (negligible for SHAKE-based PRFs).
\end{corollary}

\begin{proof}
By the UC composition theorem (Theorem~\ref{thm:uc-composition}), it suffices to show
that each sub-protocol UC-realizes its ideal functionality:
\begin{enumerate}
    \item $\Pi_{\mathsf{nonce\text{-}DKG}}$ UC-realizes $\mathcal{F}_{\mathsf{nonce\text{-}DKG}}$:
    the simulator of Section~\ref{sec:p2-full-simulator} (Phase 1) generates honest nonce
    polynomials with the correct distribution (SD $= 0$; Theorem~\ref{thm:it-privacy}).
    \item $\Pi_{\mathsf{signing}}$ UC-realizes $\mathcal{F}_{\mathsf{signing}}$:
    response computation $\mathbf{z}_i = \mathbf{y}_i + c\mathbf{s}_{1,i}$ is deterministic
    given the nonce and key shares; the simulator (Phase 3) computes identically (SD $= 0$).
    \item $\Pi_{r_0}$ UC-realizes $\mathcal{F}_{r_0\text{-check}}$:
    Theorem~\ref{thm:p2-uc-full} with advantage $\leq \epsilon_{\mathsf{SPDZ}} + \epsilon_{\mathsf{edaBits}}$.
\end{enumerate}
Composing these three UC-realizations via Theorem~\ref{thm:uc-composition} yields a
simulator for the full protocol with total advantage as stated.
The $\mathbf{W}_i$ broadcast simulation (Phase 2, Section~\ref{sec:p2-full-simulator})
contributes the $(N-1)\cdot\epsilon_{\mathsf{PRF}}$ term via mask hiding (Lemma~\ref{lem:mask-hiding}).
\end{proof}

\subsection{P2 UC Simulator (Full Protocol)}
\label{sec:p2-full-simulator}

The r0-check simulator of Section~\ref{app:simulator} covers $\Pi_{r_0}$ in isolation.
We now give the full P2 signing simulator, which additionally handles the nonce DKG
and the P2-specific broadcast of masked commitment values $\mathbf{W}_i$.
The key difference from P1/P3+ is that in P2, each party \emph{broadcasts}
$\mathbf{W}_i = \lambda_i \mathbf{A}\mathbf{y}_i + \mathbf{m}_i^{(w)}$ to all other parties
(rather than sending it only to a coordinator), requiring mask-hiding to simulate.

\begin{simulator}[$\mathcal{S}$ for Full P2 Protocol]
\label{sim:p2-full}

$\mathcal{S}$ interacts with $\mathcal{F}_{r_0\text{-check}}$ and simulates the adversary's view.
Let $C \subsetneq S$ be the corrupted signing set, $H = S \setminus C$ the honest parties.
The proof assumes $|H| \geq 2$ (required for mask hiding of $\mathbf{W}_i$; see Lemma~\ref{lem:mask-hiding}).

\medskip
\noindent\textbf{Phase 0 (Key DKG Simulation).}
\begin{enumerate}
    \item $\mathcal{S}$ receives corrupted parties' key shares $\{(\mathbf{s}_{1,j}, \mathbf{s}_{2,j})\}_{j \in C}$
    from $\mathcal{A}$ (via the corruption interface).
    \item $\mathcal{S}$ acts as the trusted dealer: samples a degree-$(T-1)$ Shamir polynomial
    $\mathbf{F}_{s_1}$ (resp.\ $\mathbf{F}_{s_2}$) consistent with the corrupted evaluations and computes
    honest-party shares $\{(\mathbf{s}_{1,i}, \mathbf{s}_{2,i})\}_{i \in H}$ directly.
    $\mathcal{S}$ also establishes pairwise PRF seeds $\{\mathsf{seed}_{i,j}\}_{i \neq j}$ and stores them.
    \item \textbf{Justification:} $\mathcal{S}$ generates the full polynomial; honest shares are consistent
    by construction and available to $\mathcal{S}$ for the response simulation below.
\end{enumerate}

\medskip
\noindent\textbf{Phase 1 (Nonce DKG Simulation).}
\begin{enumerate}
    \item For each honest signer $P_h \in H$:
    \begin{enumerate}
        \item Sample a simulated nonce polynomial
        $\tilde{\mathbf{f}}_h(x) = \hat{\tilde{\mathbf{y}}}_h + \sum_{k=1}^{T-1} \tilde{\mathbf{a}}_{h,k}\, x^k$
        where $\hat{\tilde{\mathbf{y}}}_h \getsr [-\lfloor\gamma_1/|S|\rfloor, \lfloor\gamma_1/|S|\rfloor]^{n\ell}$
        and $\tilde{\mathbf{a}}_{h,k} \getsr \mathcal{R}_q^\ell$.
        \item Provide evaluations $\tilde{\mathbf{f}}_h(j)$ to corrupted party $P_j \in C$
        (as the simulated DKG share sent by $P_h$ to $P_j$).
    \end{enumerate}
    \item Compute each honest party's nonce share:
    $\tilde{\mathbf{y}}_i = \sum_{h \in S} \tilde{\mathbf{f}}_h(i)$ for all $i \in H$.
    \item \textbf{Justification:} Evaluations $\tilde{\mathbf{f}}_h(j)$ at $j \neq 0$ are
    $\mathcal{R}_q^\ell$-uniform: the leading coefficient $\tilde{\mathbf{a}}_{h,T-1}$ is
    uniform and $j \not\equiv 0 \pmod{q}$, so the evaluation map is a bijection on $\mathcal{R}_q^\ell$
    (Theorem~\ref{thm:it-privacy}). Each honest $\tilde{\mathbf{y}}_i$ is therefore
    $\mathcal{R}_q^\ell$-uniform, matching the real distribution with SD $= 0$.
\end{enumerate}

\medskip
\noindent\textbf{Phase 2 (Commitment and Broadcast Simulation).}
\begin{enumerate}
    \item For each honest signer $P_i \in H$:
    \begin{enumerate}
        \item Compute $\tilde{\mathbf{w}}_i = \mathbf{A} \cdot \tilde{\mathbf{y}}_i$.
        \item Sample $\tilde{r}_i \getsr \{0,1\}^{256}$;
        program the random oracle: $H(\mathsf{``com''} \| \tilde{\mathbf{y}}_i \| \tilde{\mathbf{w}}_i \| \tilde{r}_i) := \tilde{\mathsf{Com}}_i$
        where $\tilde{\mathsf{Com}}_i \getsr \{0,1\}^{256}$ is fresh.
        Store $(\tilde{\mathbf{y}}_i, \tilde{\mathbf{w}}_i, \tilde{r}_i)$ for blame-protocol openings.
        \item Broadcast $\tilde{\mathsf{Com}}_i$ to $\mathcal{A}$ (simulating $P_i$'s commitment message to all parties).
    \end{enumerate}
    \item \textbf{Justification (commitment):} Commitment hiding (RO preimage resistance) makes
    $\tilde{\mathsf{Com}}_i$ indistinguishable from a real commitment; SD $= 0$ via RO programming.

    \item \emph{(P2-specific: $\mathbf{W}_i$ broadcast.)} For each honest signer $P_i \in H$:
    \begin{enumerate}
        \item Compute the PRF mask:
        $\mathbf{m}_i^{(w)} = \sum_{j \in S,\, j \neq i} \pm\mathsf{PRF}(\mathsf{seed}_{i,j},\, \mathsf{sid} \| \mathsf{``w''})$
        using the pairwise seeds stored in Phase 0.
        \item Compute $\tilde{\mathbf{W}}_i = \lambda_i \mathbf{A} \tilde{\mathbf{y}}_i + \mathbf{m}_i^{(w)}$.
        \item Broadcast $\tilde{\mathbf{W}}_i$ to $\mathcal{A}$.
    \end{enumerate}
    \item \textbf{Justification ($\mathbf{W}_i$ broadcast):}
    Since $|H| \geq 2$, there exists an honest party $k' \neq i$ in $S$;
    $\mathcal{A}$ does not know $\mathsf{seed}_{i,k'}$.
    By PRF security, $\mathsf{PRF}(\mathsf{seed}_{i,k'},\, \mathsf{sid} \| \mathsf{``w''})$ is
    computationally indistinguishable from a fresh uniform element of $\mathcal{R}_q^k$.
    Hence $\tilde{\mathbf{W}}_i \stackrel{c}{\approx} \mathsf{Uniform}(\mathcal{R}_q^k)$
    (Lemma~\ref{lem:mask-hiding}); the simulator uses the actual PRF seeds so its output is
    bit-for-bit identical to the real protocol, with distinguishing advantage $\leq \epsilon_{\mathsf{PRF}}$
    per honest party per session.
\end{enumerate}

\medskip
\noindent\textbf{Phase 3 (Response Simulation).}
\begin{enumerate}
    \item Receive challenge $c$ (derived from the broadcast $(\mathbf{w}_1, \tilde{c})$ or provided
    by $\mathcal{A}$).
    \item For each honest signer $P_i \in H$:
    \begin{enumerate}
        \item Compute response $\mathbf{z}_i = \tilde{\mathbf{y}}_i + c \cdot \mathbf{s}_{1,i}$
        using the simulated nonce share $\tilde{\mathbf{y}}_i$ (Phase 1) and key share $\mathbf{s}_{1,i}$ (Phase 0).
        \item Compute masked key-share $\mathbf{V}_i = \lambda_i c \mathbf{s}_{2,i} + \mathbf{m}_i^{(s2)}$
        using the actual PRF mask $\mathbf{m}_i^{(s2)} = \mathsf{PRF}(\mathsf{seed}_{i,k'}, \mathsf{sid} \| \mathsf{``s2''})$.
        \item Broadcast $(\mathbf{z}_i, \mathbf{V}_i)$ to $\mathcal{A}$.
    \end{enumerate}
    \item \textbf{Justification:} $\mathbf{z}_i = \tilde{\mathbf{y}}_i + c \cdot \mathbf{s}_{1,i}$
    uses simulated nonce shares with the same distribution as real ones (SD $= 0$; Phase 1 justification).
    The $\mathbf{V}_i$ computation uses actual PRF seeds, so the simulation is bit-for-bit identical
    to the real protocol (SD $= 0$).
\end{enumerate}

\medskip
\noindent\textbf{Phase 4 (r0-check Simulation).}
\begin{enumerate}
    \item Forward corrupted parties' SPDZ inputs to $\mathcal{F}_{r_0\text{-check}}$;
    receive $\mathsf{result}$ from the functionality.
    \item Run the r0-check simulator $\mathcal{S}_{r_0}$ from Section~\ref{app:simulator}
    to simulate the SPDZ and edaBit transcripts, using $\mathsf{result}$.
    \item \textbf{Justification:} By UC composition (Theorem~\ref{thm:uc-composition}),
    replacing the real $\Pi_{r_0}$ with ideal $\mathcal{F}_{r_0\text{-check}}$ introduces
    distinguishing advantage $\leq \epsilon_{\mathsf{SPDZ}} + \epsilon_{\mathsf{edaBits}}$
    (Theorem~\ref{thm:p2-uc-full}).
\end{enumerate}

\medskip
\noindent\textbf{Total distinguishing advantage.}
Combining across phases:
\[
\epsilon_{\mathcal{S}} \leq \epsilon_{\mathsf{SPDZ}} + \epsilon_{\mathsf{edaBits}}
  + (N-1) \cdot \epsilon_{\mathsf{PRF}}
\]
where the $(N-1)\epsilon_{\mathsf{PRF}}$ term bounds the $\mathbf{W}_i$ broadcast simulation
across at most $N-1$ honest parties. All other simulation steps achieve SD $= 0$.
\end{simulator}

\section{Communication and Round Complexity}
\label{app:complexity}

\begin{theorem}[Complexity of $\Pi_{r_0}$]
\label{thm:complexity}
Protocol $\Pi_{r_0}$ achieves:
\begin{enumerate}
    \item \textbf{Online rounds:} 8 (as specified in Protocol~\ref{prot:r0check})
    \item \textbf{Online communication:} $O(n \cdot m \cdot \log q)$ bits
    \item \textbf{Offline preprocessing:} $O(m)$ edaBits and $O(m \cdot \log q)$ Beaver triples
\end{enumerate}
\end{theorem}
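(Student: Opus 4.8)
\emph{Overall approach.} The statement is an accounting claim, so the plan is to prove each of the three items by direct structural inspection of Protocol~\ref{prot:r0check}, invoking standard round- and bandwidth-complexity facts about Beaver-triple-based MPC in the $(\mathcal{F}_{\mathsf{SPDZ}},\mathcal{F}_{\mathsf{edaBits}},\mathcal{F}_{\mathsf{Binary}})$-hybrid model. No new ideas are needed beyond carefully tracking multiplicative depth and the multiplicity of opened values across the $n$ parties.

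\emph{Round complexity.} First I would walk through the eight numbered phases of $\Pi_{r_0}$ and argue each is realizable within the round budget in its header. The input phase (Rounds~1--2) converts the parties' additive shares of $\mathbf{w}'$ into authenticated SPDZ sharings via the standard masked-input procedure (open a one-time-padded value against a preprocessed authenticated random element) followed by purely local $\mathcal{F}_{\mathsf{SPDZ}}.\mathsf{ADD}$ calls: two rounds. The masked reveal (Rounds~3--4) is a commit-then-open of $m$ field elements: one broadcast round for the hash commitments, one for the openings. The A2B step (Round~5) is evaluated as a constant-\emph{multiplicative-depth} carry-lookahead/parallel-prefix subtraction circuit (in the style of the standard constant-round bit-decomposition gadgets): although it contains $\Theta(\log q)$ AND gates, they sit in $O(1)$ layers, each costing one batched opening round. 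The comparison (Rounds~6--7) extracts sign and magnitude and runs a constant-depth less-than circuit, and since the $m$ coefficients are independent all $m$ comparisons run in parallel within the same two rounds. The final conjunction (Round~8) reduces $\bigwedge_j \mathsf{pass}_j$ to the zero-test $\sum_j (1-\mathsf{pass}_j)=0$, which is constant-round, with the output opening folded into the same round. Summing $2+2+1+2+1$ gives $8$.

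\emph{Online communication.} I would then tally the bits transmitted, using the reconstruct-at-one-party convention under which opening a value of bit-length $b$ costs $O(nb)$ bits. The online phase opens: $m$ masked field elements in Rounds~3--4 (with $O(nm\log q)$ bits of openings and $O(nm\kappa)$ bits of hash commitments); the Beaver masks $(d,e)$ for the $O(\log q)$ AND gates per coefficient in the subtraction and comparison circuits, i.e.\ $O(m\log q)$ single-bit openings, for $O(nm\log q)$ bits; the $O(m)$ gates of the aggregation, for $O(nm)$ bits; and the one-bit result. The SPDZ input conversion of Rounds~1--2 moves $O(m)$ authenticated elements per party, another $O(nm\log q)$ bits. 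Every term is dominated by $O(n\cdot m\cdot\log q)$, which is the claimed bound.

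\emph{Preprocessing, and the main obstacle.} For the offline cost I would count correlated randomness consumed once per coefficient: one length-$\lceil\log_2 q\rceil$ edaBit per coefficient for the A2B conversion, hence $O(m)$ edaBits; and, for Beaver triples, $O(\log q)$ per coefficient for the borrow-subtraction circuit, another $O(\log q)$ per coefficient for the magnitude comparison, and $O(m)$ in total for the aggregation, hence $O(m\log q)$ triples overall. I expect the main obstacle to be the round count rather than the two (routine) bit counts: exactly $8$ online rounds holds only if both the borrow-propagating subtraction and the magnitude comparison are realized at constant multiplicative depth and the $m$-fan-in conjunction in $O(1)$ rounds; a textbook ripple-borrow subtractor or a naive binary AND-tree degrades the count to $8+O(\log q)$ or $8+O(\log m)$. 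The substance of the proof is therefore pinning down the constant-depth bit-decomposition/comparison gadgets and the sum-then-zero-test conjunction, and verifying that their multiplicative depths compose to the schedule listed in Protocol~\ref{prot:r0check}.
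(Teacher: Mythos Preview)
Your accounting for communication and preprocessing matches the paper's proof almost exactly: $O(n)$ field elements per coefficient for sharing and masked reveal, $O(\log q)$ AND gates per coefficient for the binary circuits, hence $O(n\cdot m\cdot\log q)$ bits online; one edaBit per coefficient and $O(\log q)$ triples per coefficient offline. On those two items there is nothing to add.

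On the round count, you are actually more careful than the paper. The paper's proof simply reads off the round labels from the protocol specification (``Rounds~1--2: \dots; Rounds~3--4: \dots; Round~5: A2B conversion (can be merged with Round~6); \dots; Round~8: AND aggregation and output'') without justifying why the binary subcircuits fit into single rounds. In fact, the protocol text itself states that the subtraction circuit has depth $O(\log q)$ and that the final conjunction is computed by an AND-tree of depth $\lceil\log_2 m\rceil$ --- which, under a one-round-per-multiplicative-layer model, would not give $8$ rounds. You correctly flag this as the real obstacle and propose the standard fix (constant-depth carry-lookahead for the borrow subtraction, and a sum-then-zero-test for the $m$-fan-in conjunction). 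That is the right instinct, but be aware that in doing so you are silently replacing the circuits the paper actually specifies; the paper does not use your gadgets, it simply asserts the round schedule. So your proof is sound for a slightly modified protocol, while the paper's proof is terse to the point of leaving the depth-vs.-rounds tension unresolved.
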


\begin{proof}
\textbf{Rounds:}
\begin{itemize}
    \item Rounds 1-2: SPDZ share exchange and aggregation
    \item Rounds 3-4: Commit-then-open for masked reveal
    \item Round 5: A2B conversion (can be merged with Round 6)
    \item Rounds 6-7: Comparison circuits
    \item Round 8: AND aggregation and output
\end{itemize}

\textbf{Communication:} Each of $m = 1536$ coefficients requires:
\begin{itemize}
    \item SPDZ sharing: $O(n)$ field elements
    \item Masked reveal: $O(n)$ field elements
    \item Binary operations: $O(\log q)$ bits per AND gate, $O(\log q)$ AND gates per coefficient
\end{itemize}

Total: $O(n \cdot m \cdot \log q) \approx n \cdot 1536 \cdot 23 \approx 35000 \cdot n$ bits per party.

\textbf{Preprocessing:}
\begin{itemize}
    \item $m$ edaBits (one per coefficient)
    \item $O(m \cdot \log q)$ Beaver triples for A2B and comparison circuits
\end{itemize}
\end{proof}

\begin{table}[h]
\centering
\begin{tabular}{lcc}
\toprule
\textbf{Component} & \textbf{Online Rounds} & \textbf{Communication/Party} \\
\midrule
SPDZ aggregation & 2 & $O(m \log q)$ bits \\
Masked reveal & 2 & $O(m \log q)$ bits \\
A2B conversion & 1 & $O(m \log q)$ bits \\
Comparison & 2 & $O(m \log \gamma_2)$ bits \\
AND tree & 1 & $O(m)$ bits \\
\midrule
\textbf{Total} & \textbf{8} & $\approx 30$ KB \\
\bottomrule
\end{tabular}
\caption{Round and communication breakdown for Profile P2 r0-check (8-round version).
  The AND tree requires $m-1$ Beaver-triple openings, giving $O(m)$ bits;
  the $O(\log m)$ figure denotes circuit \emph{depth}, not communication.}
\label{tab:p2-complexity}
\end{table}

\section{Optimized P2: 5-Round Protocol UC Security}
\label{app:p2-5round}

We present the UC security proof for the optimized 5-round P2 protocol. The key modifications are:
\begin{enumerate}
    \item \textbf{Combiner-mediated commit-then-open}:
    Rounds 3-4 merged into Round 2
    \item \textbf{Constant-depth comparison}: Rounds 6-7 merged into Round 4
\end{enumerate}

\subsection{Modified Ideal Functionality}

\begin{functionality}[$\mathcal{F}_{r_0\text{-check}}^{\mathsf{5rnd}}$ -- Secure r0-Check with Combiner]
\label{func:r0check-5rnd}
Extends $\mathcal{F}_{r_0\text{-check}}$ (Functionality~\ref{func:r0check}) with combiner role.

\textbf{Additional State:}
\begin{itemize}
    \item $\mathsf{combiner\_id}$: Identity of the designated combiner
    \item $\mathsf{collected}[\cdot]$: Messages collected by combiner before release
\end{itemize}

\textbf{Combiner Registration:}

On receiving $(\mathsf{SET\_COMBINER}, \mathsf{sid}, \mathsf{cid})$ from $\mathcal{S}$:
\begin{enumerate}
    \item Set $\mathsf{combiner\_id} \gets \mathsf{cid}$
    \item Notify all parties of combiner identity
\end{enumerate}

\textbf{Combiner-Mediated Input Phase:}

On receiving $(\mathsf{COMBINER\_INPUT}, \mathsf{sid}, i, \mathsf{Com}_i, [\mathbf{w}']_i, \mathsf{rand}_i)$ from party $P_i$:
\begin{enumerate}
    \item Verify $\mathsf{Com}_i = H([\mathbf{w}']_i \| \mathsf{sid} \| i \| \mathsf{rand}_i)$
    \item Store $\mathsf{collected}[i] \gets ([\mathbf{w}']_i, \mathsf{rand}_i)$
    \item Send $(\mathsf{COLLECTED}, \mathsf{sid}, i)$ to combiner
    \item If $|\mathsf{collected}| = n$:
    \begin{enumerate}
        \item Send $(\mathsf{ALL\_COLLECTED}, \mathsf{sid})$ to combiner
    \end{enumerate}
\end{enumerate}

\textbf{Simultaneous Release:}

On receiving $(\mathsf{RELEASE}, \mathsf{sid})$ from combiner:
\begin{enumerate}
    \item If $|\mathsf{collected}| < n$: return $\bot$
    \item For each $i$: send $\{(j, [\mathbf{w}']_j, \mathsf{rand}_j)\}_{j \in [n]}$ to party $P_i$
    \item Continue with standard computation as in $\mathcal{F}_{r_0\text{-check}}$
\end{enumerate}

\textbf{Timeout (Corrupted Combiner):}

Upon sending $(\mathsf{ALL\_COLLECTED}, \mathsf{sid})$ to the combiner, the functionality starts a timer of $\Delta_{\mathsf{timeout}}$ rounds (a global synchrony parameter). If $(\mathsf{RELEASE}, \mathsf{sid})$ is not received within $\Delta_{\mathsf{timeout}}$ rounds:
\begin{enumerate}
    \item Send $(\mathsf{ABORT}, \mathsf{sid})$ to all parties $P_i$
    \item Discard all collected state
\end{enumerate}
\noindent\emph{Remark:} This timeout prevents a corrupted combiner from indefinitely delaying the release. In Profiles P1 and P3+, the combiner role is enforced by a TEE or 2PC, so the timeout is triggered only if the trusted component is unavailable (which is detected). In Profile P2, all parties are online during the signing window, so $\Delta_{\mathsf{timeout}}$ is a protocol parameter (e.g., 30 seconds).

\textbf{Leakage:} The combiner learns:
\begin{itemize}
    \item Timing of each party's message submission
    \item Order of message arrivals
\end{itemize}
The combiner does \emph{not} learn the content of messages before simultaneous release.
\end{functionality}

\subsection{Modified Real Protocol}

\begin{protocol}[$\Pi_{r_0}^{\mathsf{5rnd}}$ -- Optimized 5-Round r0-Check]
\label{prot:r0check-5rnd}

\textbf{Round Structure:}
\begin{center}
\begin{tabular}{cl}
\toprule
\textbf{Round} & \textbf{Operation} \\
\midrule
1 & Share exchange + aggregation \\
2 & Combiner-mediated masked reveal \\
3 & A2B conversion via edaBits \\
4 & Parallel comparison + AND aggregation \\
5 & Output \\
\bottomrule
\end{tabular}
\end{center}

\underline{Round 1: Share Exchange and Aggregation}

(Unchanged from $\Pi_{r_0}$, Protocol~\ref{prot:r0check})

\underline{Round 2: Combiner-Mediated Masked Reveal}

For each coefficient $j \in [m]$:
\begin{enumerate}
    \item Each party $P_i$ computes:
    \begin{itemize}
        \item $[\mathsf{masked}_j]_i = [\mathbf{w}'_j]_i + [r_j]_i$
        \item $\mathsf{Com}_{i,j} = H([\mathsf{masked}_j]_i \| \mathsf{sid} \| j \| i \| \mathsf{rand}_{i,j})$
    \end{itemize}
    \item $P_i$ sends $(\mathsf{Com}_{i,j}, [\mathsf{masked}_j]_i, \mathsf{rand}_{i,j})$ to combiner
    \item Combiner waits until all $n$ parties' messages received
    \item Combiner broadcasts all openings $\{[\mathsf{masked}_j]_i, \mathsf{rand}_{i,j}\}_{i,j}$ simultaneously
    \item Each party verifies all commitments; abort if any mismatch
    \item Reconstruct $\mathsf{masked}_j = \sum_i [\mathsf{masked}_j]_i \bmod q$
\end{enumerate}

\underline{Round 3: A2B Conversion}

(Unchanged from $\Pi_{r_0}$)

\underline{Round 4: Parallel Comparison + AND Aggregation}

\textbf{Layer 1 (Parallel Comparisons):}
For each $j \in [m]$ \emph{in parallel}:
\begin{enumerate}
    \item Compute absolute value: $\langle |\mathbf{w}'_j| \rangle_2$ from $\langle \mathbf{w}'_j \rangle_2$
    \item Evaluate comparison: $\langle \mathsf{pass}_j \rangle_2 = (|\langle \mathbf{w}'_j \rangle_2| < \mathsf{bound})$
\end{enumerate}

\textbf{Layer 2 (AND Tree):}
\begin{enumerate}
    \item Compute $\langle \mathsf{result} \rangle_2 = \bigwedge_{j=1}^m \langle \mathsf{pass}_j \rangle_2$
    \item Use Beaver triples for all AND gates simultaneously
    \item Open all $(d, e)$ values in single round
\end{enumerate}

\underline{Round 5: Output}

(Unchanged from $\Pi_{r_0}$)
\end{protocol}

\subsection{Extended Simulator for 5-Round Protocol}

\begin{simulator}[$\mathcal{S}^{\mathsf{5rnd}}$ for $\Pi_{r_0}^{\mathsf{5rnd}}$]
\label{sim:r0check-5rnd}

Extends Simulator~\ref{sim:r0check} with combiner simulation.

\textbf{Simulating Combiner (if honest):}
\begin{enumerate}
    \item When simulating honest parties, send their messages to simulated combiner
    \item Combiner collects all messages (both honest and corrupted)
    \item On receiving $(\mathsf{ALL\_COLLECTED})$ from internal state:
    \begin{enumerate}
        \item Release all openings to adversary simultaneously
    \end{enumerate}
\end{enumerate}

\textbf{Simulating Combiner (if corrupted):}
\begin{enumerate}
    \item Adversary $\mathcal{A}$ controls combiner
    \item $\mathcal{S}$ sends simulated honest parties' messages to $\mathcal{A}$
    \item Wait for $\mathcal{A}$ to release (or abort)
    \item If $\mathcal{A}$ modifies any value: SPDZ MAC check will fail in subsequent rounds
    \item Simulate abort if MAC check fails
\end{enumerate}

\textbf{Simulating Round 4 (Parallel Comparison):}
\begin{enumerate}
    \item All $m$ comparison circuits evaluated simultaneously
    \item For each AND gate: sample $(d, e) \getsr \{0,1\}^2$ uniformly
    \item Release all $(d, e)$ values at once
    \item By Lemma~\ref{lem:beaver-security}, this is indistinguishable from real
\end{enumerate}
\end{simulator}

\subsection{Modified Hybrid Argument}

We extend the hybrid sequence with a new hybrid for combiner simulation.

\begin{definition}[Extended Hybrids for 5-Round Protocol]
\label{def:hybrids-5rnd}
\begin{itemize}
    \item $\mathsf{H}_0$: Real execution of $\Pi_{r_0}^{\mathsf{5rnd}}$
    \item $\mathsf{H}_{0.5}$: Replace combiner with ideal $\mathcal{F}_{\mathsf{Combiner}}$
    \item $\mathsf{H}_1$ -- $\mathsf{H}_6$: As in original hybrid argument
\end{itemize}
\end{definition}

\begin{lemma}[Combiner Hybrid]
\label{lem:combiner-hybrid}
$\Pr[\mathsf{H}_0 = 1] = \Pr[\mathsf{H}_{0.5} = 1]$ (perfect indistinguishability)
\end{lemma}

\begin{proof}
The combiner performs \emph{no cryptographic operations}: it collects incoming messages and forwards them once all have arrived (or times out). In particular, it does not compute any MACs, encryptions, or protocol-specific transformations on the data. Its behavior is a deterministic function of its input messages and the arrival timing. We show the adversary's view is \emph{bytewise identical} in $\mathsf{H}_0$ and $\mathsf{H}_{0.5}$:

\textbf{Honest Combiner Case:}
\begin{itemize}
    \item In $\mathsf{H}_0$: The combiner collects $(\mathsf{Com}_i, \mathsf{value}_i, \mathsf{rand}_i)$ tuples from all $N$ parties and releases them simultaneously only after all are received.
    \item In $\mathsf{H}_{0.5}$: $\mathcal{F}_{\mathsf{Combiner}}$ executes the same algorithm: collect-then-release.
    \item The adversary's \emph{incoming} messages (the individual party tuples) are produced by the real parties' code in $\mathsf{H}_0$ and by the simulator (which replicates the same code) in $\mathsf{H}_{0.5}$---producing identical bit-strings.
    \item The adversary's \emph{outgoing} view (released tuples) is the same set of collected messages, released at the same logical time.
    \item Since the combiner is honest, the adversary's view of the combiner's internal state is empty (it receives only the released outputs). The released outputs are identical. Hence the adversary's view is \emph{bytewise equal} in both hybrids.
\end{itemize}

\textbf{Corrupted Combiner Case:}
\begin{itemize}
    \item The adversary controls the combiner directly and receives honest parties' messages in both hybrids via identical delivery.
    \item Any deviation by the combiner (selective release, modification) triggers a detectable abort:
    \begin{enumerate}
        \item \emph{Value modification}: SPDZ MAC check at the recipient parties fails with probability $1 - 1/q$ per coefficient; honest parties abort and the functionality's abort interface captures this.
        \item \emph{Selective withholding}: honest parties' timeout logic triggers an abort after $\mathsf{deadline}$, identical in both $\mathsf{H}_0$ and $\mathsf{H}_{0.5}$.
    \end{enumerate}
    \item In $\mathsf{H}_{0.5}$, $\mathcal{F}_{\mathsf{Combiner}}$ replicates the same abort interface. Since the honest parties' messages delivered to the corrupted combiner are generated by the same real-protocol code in both hybrids (the combiner's corruption does not affect honest-party message generation), the adversary's view is again bytewise identical.
\end{itemize}

Since the combiner performs no cryptographic operations and its behavior is a deterministic routing function, the adversary's view in $\mathsf{H}_0$ is \emph{perfectly equal} to its view in $\mathsf{H}_{0.5}$ (not merely computationally indistinguishable). Thus:
\[
\Pr[\mathsf{H}_0 = 1] = \Pr[\mathsf{H}_{0.5} = 1]
\]
\end{proof}

\begin{theorem}[UC Security of 5-Round P2]
\label{thm:p2-5rnd-uc}
Protocol $\Pi_{r_0}^{\mathsf{5rnd}}$ UC-realizes $\mathcal{F}_{r_0\text{-check}}^{\mathsf{5rnd}}$ with:
\[
\epsilon = \epsilon_{\mathsf{SPDZ}} + \epsilon_{\mathsf{edaBits}} + \epsilon_{\mathsf{combiner}} = \epsilon_{\mathsf{SPDZ}} + \epsilon_{\mathsf{edaBits}} < 2^{-53}
\]
(with 64-bit MAC field; minimal configuration gives $\epsilon < 2^{-11}$).
\end{theorem}

\begin{proof}
By Lemma~\ref{lem:combiner-hybrid}, $\epsilon_{\mathsf{combiner}} = 0$. The remaining hybrids proceed exactly as in Theorem~\ref{thm:p2-uc-full}. Total security loss:
\[
\epsilon = 0 + \epsilon_{\mathsf{SPDZ}} + \epsilon_{\mathsf{edaBits}} + 0 + 0 + 0 + 0 < 2^{-53}
\]
\end{proof}

\subsection{5-Round Complexity Analysis}

\begin{theorem}[Complexity of $\Pi_{r_0}^{\mathsf{5rnd}}$]
\label{thm:complexity-5rnd}
The optimized protocol achieves:
\begin{enumerate}
    \item \textbf{Online rounds:} 5
    \item \textbf{Online communication:} $O(n \cdot m \cdot \log q)$ bits (unchanged)
    \item \textbf{Offline preprocessing:} Same as 8-round version
\end{enumerate}
\end{theorem}

\begin{proof}
\textbf{Round reduction analysis:}
\begin{itemize}
    \item Rounds 1-2 $\to$ Round 1: Merge share exchange with aggregation
    (linear operations)
    \item Rounds 3-4 $\to$ Round 2: Combiner-mediated simultaneous release
    \item Round 5 $\to$ Round 3: A2B conversion (unchanged)
    \item Rounds 6-7-8 $\to$ Rounds 4-5: Parallel comparison circuits and AND aggregation execute
    simultaneously in Round 4 (saving 1 round); output is broadcast in Round 5
    (total: 3 original rounds $\to$ 2 new rounds)
\end{itemize}

\textbf{Communication unchanged:} Same data transmitted, just fewer rounds.

\textbf{Preprocessing unchanged:} Same edaBits and Beaver triples required.
\end{proof}

\begin{table}[h]
\centering
\begin{tabular}{lcc}
\toprule
\textbf{Component} & \textbf{8-rnd} & \textbf{5-rnd} \\
\midrule
Share exchange + aggregation & 2 & 1 \\
Masked reveal & 2 & 1 \\
A2B conversion & 1 & 1 \\
Comparison + AND tree & 3 & 2 \\
\midrule
\textbf{Total} & \textbf{8} & \textbf{5} \\
\bottomrule
\end{tabular}
\caption{Round comparison between original and optimized P2 protocols.}
\label{tab:p2-round-comparison}
\end{table}

\section{Complete UC Security Proof for Profile P3+}
\label{app:p3plus-uc}

We provide the complete UC security proof for the P3+ semi-asynchronous protocol. This extends the P3 proof sketch (Section~\ref{sec:profile-p3}) to a full proof with explicit handling of the pre-signing and semi-async model.

\subsection{Semi-Asynchronous Security Model}

\begin{definition}[Semi-Asynchronous Timing Model]
\label{def:semi-async}
The semi-async model consists of four timing phases:
\begin{enumerate}
    \item \textbf{Pre-signing Window} $[t_0, t_1]$: Fully asynchronous. Signers precompute at any time.
    \item \textbf{Challenge Broadcast} $t_2$: Combiner derives and broadcasts challenge.
    \item \textbf{Response Window} $[t_2, t_2 + \Delta]$: Semi-synchronous. Signers respond within bounded time $\Delta$.
    \item \textbf{2PC Execution} $[t_3, t_4]$: Synchronous between CPs only.
\end{enumerate}
\end{definition}

\begin{definition}[Semi-Async Adversary]
\label{def:semi-async-adv}
We adopt the synchronous UC model~\cite{Katz07} where all parties share a global clock. The adversary $\mathcal{A}$ can:
\begin{itemize}
    \item Corrupt up to $T-1$ signers (static corruption)
    \item Corrupt at most one of $\{\mathsf{CP}_1, \mathsf{CP}_2\}$
    \item Control message scheduling within each phase, but \emph{cannot} advance the clock past the signer response deadline $\Delta$
    \item \emph{Cannot} control the global clock or extend $\Delta$
\end{itemize}
Theorem~\ref{thm:p3plus-uc} applies to any protocol meeting this interface.
\end{definition}

\subsection{Ideal Functionality for P3+}

\begin{definition}[Ideal Functionality $\mathcal{F}_{r_0}^{\mathsf{2PC}}$ (UC Proof Compact Form)]
\label{def:f-r0-2pc-uc}
The functionality $\mathcal{F}_{r_0}^{\mathsf{2PC}}$ captures a two-party computation of the r0-check predicate. It receives private input $\mathbf{S}_j \in \mathbb{Z}_q^m$ from each computation party $\mathsf{CP}_j$ ($j \in \{1,2\}$), reconstructs $c\mathbf{s}_2 = \mathbf{S}_1 + \mathbf{S}_2 \bmod q$ (private inputs satisfy this additive split; in the concrete protocol, $\mathbf{S}_j = \sum_{i \text{ assigned to } \mathsf{CP}_j} \lambda_i \cdot c\mathbf{s}_{2,i}$), computes $\mathbf{w}' = \mathbf{w} - c\mathbf{s}_2$ (using the public $\mathbf{w}$ agreed upon during challenge derivation), evaluates $\mathsf{result} = \bigwedge_j (|\mathsf{LowBits}(\mathbf{w}'_j, 2\gamma_2)|_\infty < \gamma_2 - \beta)$, and outputs only $\mathsf{result}$ (1 bit) and the hint $\mathbf{h} = \mathsf{MakeHint}(\mathbf{w}', \mathbf{w}, 2\gamma_2)$ to the combiner. Neither $c\mathbf{s}_2$ nor $\mathbf{w}'$ is revealed outside $\mathcal{F}_{r_0}^{\mathsf{2PC}}$.
\end{definition}

\begin{functionality}[$\mathcal{F}_{r_0}^{\mathsf{semi\text{-}async}}$ -- Semi-Async r0-Check]
\label{func:r0-semi-async}
Extends $\mathcal{F}_{r_0}^{\mathsf{2PC}}$ (Definition~\ref{def:f-r0-2pc-uc}) with pre-signing and timing.

\textbf{Pre-signing Phase:}

On receiving $(\mathsf{PRESIGN}, \mathsf{sid}, i, \mathsf{set\_id}, \mathsf{Com}_i)$ from signer $P_i$:
\begin{enumerate}
    \item Store $\mathsf{presigning}[i][\mathsf{set\_id}] \gets \mathsf{Com}_i$
    \item Send $(\mathsf{PRESIGNED}, \mathsf{sid}, i, \mathsf{set\_id})$ to $\mathcal{S}$
\end{enumerate}

\textbf{Challenge Phase:}

On receiving $(\mathsf{CHALLENGE}, \mathsf{sid}, \mathsf{signers}, \mathsf{set\_map}, c)$ from combiner:
\begin{enumerate}
    \item Verify all signers have pre-signing sets available
    \item Set $\mathsf{deadline} \gets \mathsf{now} + \Delta$
    \item Broadcast $c$ to all signers in $\mathsf{signers}$
    \item Send $(\mathsf{CHALLENGED}, \mathsf{sid})$ to $\mathcal{S}$
\end{enumerate}

\textbf{Response Phase:}

On receiving $(\mathsf{RESPOND}, \mathsf{sid}, i, \mathsf{response}_i)$ from signer $P_i$:
\begin{enumerate}
    \item If $\mathsf{now} > \mathsf{deadline}$: reject (too late)
    \item If $P_i \notin \mathsf{signers}$: reject
    \item Store $\mathsf{responses}[i] \gets \mathsf{response}_i$
    \item If $|\mathsf{responses}| \geq T$: trigger 2PC phase
\end{enumerate}

\textbf{2PC Phase:}

On $T$ responses collected:
\begin{enumerate}
    \item Aggregate inputs for CPs
    \item Execute $\mathcal{F}_{r_0}^{\mathsf{2PC}}$ between $\mathsf{CP}_1, \mathsf{CP}_2$
    \item Return result to combiner
\end{enumerate}
\end{functionality}

\subsection{Real Protocol $\Pi_{r_0}^{\mathsf{P3+}}$}

\begin{protocol}[$\Pi_{r_0}^{\mathsf{P3+}}$ -- P3+ Semi-Async Protocol]
\label{prot:p3plus}

\textbf{Offline Phase: Pre-signing}

Each signer $P_i$:
\begin{enumerate}
    \item Generate $B$ pre-signing sets:
    \begin{enumerate}
        \item Sample $\mathsf{set\_id} \getsr \{0,1\}^{256}$ (pre-signing set identifier)
        \item Sample nonce $\mathbf{y}_i$ with reduced range
        \item Compute $\mathbf{w}_i = \mathbf{A} \cdot \mathbf{y}_i$
        \item Sample commitment randomness $r_i \getsr \{0,1\}^{256}$
        \item Compute commitment $\mathsf{Com}_i = H(\texttt{"com"} \| \mathbf{y}_i \| \mathbf{w}_i \| r_i)$
    \end{enumerate}
    \item Send $(\mathsf{set\_id}, \mathbf{w}_i, \mathsf{Com}_i)$ to combiner \Comment{$r_i$ is not sent; kept secret by signer}
\end{enumerate}

\textbf{Offline Phase: GC Precomputation}

$\mathsf{CP}_1$:
\begin{enumerate}
    \item Build r0-check circuit $C$
    \item Generate $B$ garbled circuits $\{\tilde{C}_k\}_{k=1}^B$
    \item Prepare OT setup for each circuit
\end{enumerate}

\textbf{Online Phase 1: Challenge Derivation}

Combiner:
\begin{enumerate}
    \item Receive signing request for message $\mu$
    \item Select $T$ signers with available pre-signing sets
    \item Aggregate $\mathbf{w} = \sum_i \lambda_i \cdot \mathbf{w}_i$ \Comment{Lagrange reconstruction}
    \item Compute $\mathbf{w}_1 = \mathsf{HighBits}(\mathbf{w}, 2\gamma_2)$
    \item Derive $c = \mathsf{SampleChallenge}(H(\mathsf{tr} \| \mu \| \mathbf{w}_1))$
    \item Broadcast $(c, \mathsf{set\_map})$ to signers
\end{enumerate}

\textbf{Online Phase 2: Signer Response}

Each signer $P_i$ (within $\Delta$):
\begin{enumerate}
    \item Retrieve nonce DKG share $\mathbf{y}_i$ and commitment $\mathbf{w}_i$ for assigned $\mathsf{set\_id}$
    \item Compute:
    \begin{align*}
        \mathbf{z}_i &= \mathbf{y}_i + c \cdot \mathbf{s}_{1,i} \\
        \mathbf{V}_i &= \lambda_i c \mathbf{s}_{2,i} + \mathbf{m}_i^{(s2)}
    \end{align*}
    where $\lambda_i$ is the Lagrange coefficient for party $i$ in signing set $S$
    \item Send $(\mathbf{z}_i, \mathbf{V}_i)$ to combiner
    \item Mark $\mathsf{set\_id}$ as consumed
\end{enumerate}

\textbf{Online Phase 3: 2PC Execution}

Combiner:
\begin{enumerate}
    \item Wait for $T$ responses (or timeout)
    \item Aggregate response: $\mathbf{z} = \sum_{i \in S} \lambda_i \cdot \mathbf{z}_i$ (Lagrange reconstruction)
    \item Split $c\mathbf{s}_2$ contributions: $\mathbf{S}_1$ for $\mathsf{CP}_1$, $\mathbf{S}_2$ for $\mathsf{CP}_2$
    \item Send $(\mathbf{w}, \mathbf{S}_1)$ to $\mathsf{CP}_1$, $(\mathbf{S}_2)$ to $\mathsf{CP}_2$
\end{enumerate}

$\mathsf{CP}_1$ (Garbler):
\begin{enumerate}
    \item Select precomputed garbled circuit $\tilde{C}$
    \item Select input wire labels for $(\mathbf{w}, \mathbf{S}_1)$
    \item Send $(\tilde{C}, \mathsf{labels}_1)$ to $\mathsf{CP}_2$
\end{enumerate}

$\mathsf{CP}_2$ (Evaluator):
\begin{enumerate}
    \item Execute OT to obtain labels for $\mathbf{S}_2$
    \item Evaluate $\tilde{C}$ to obtain $\mathsf{result}$
    \item Send $\mathsf{result}$ to both CPs
\end{enumerate}

\textbf{Output:}
\begin{enumerate}
    \item If $\mathsf{result} = 1$: The 2PC circuit outputs the hint
    \[
    \mathbf{h} = \mathsf{MakeHint}(-c\mathbf{t}_0,\; \mathbf{A}\mathbf{z} - c\mathbf{t}_1 \cdot 2^d,\; 2\gamma_2)
    \]
    (computed inside the circuit using $c\mathbf{s}_2 = \mathbf{S}_1 + \mathbf{S}_2$);
    the combiner outputs signature $\sigma = (\tilde{c}, \mathbf{z}, \mathbf{h})$
    \item If $\mathsf{result} = 0$: Retry with fresh pre-signing sets
\end{enumerate}
\end{protocol}

\subsection{Simulator Construction for P3+}

\begin{simulator}[$\mathcal{S}^{\mathsf{P3+}}$ for $\Pi_{r_0}^{\mathsf{P3+}}$]
\label{sim:p3plus}

$\mathcal{S}$ interacts with $\mathcal{F}_{r_0}^{\mathsf{semi\text{-}async}}$ and simulates the adversary's view.

\textbf{Simulating Key DKG:}
\begin{enumerate}
    \item $\mathcal{S}$ receives the corrupted parties' key shares $\{(\mathbf{s}_{1,j}, \mathbf{s}_{2,j})\}_{j \in C}$ from the adversary (via corruption; in P3+ the key DKG is modeled as a trusted-dealer setup phase, so the simulator acts as the trusted dealer and generates all shares directly).
    \item $\mathcal{S}$ runs the ideal key-generation functionality on behalf of all honest parties: $\mathcal{S}$ samples a fresh degree-$(T-1)$ Shamir polynomial for each component of $(\mathbf{s}_1, \mathbf{s}_2)$, consistent with the corrupted-party shares received in step 1, and computes the honest-party evaluations $\{(\mathbf{s}_{1,i}, \mathbf{s}_{2,i})\}_{i \notin C}$ directly from these polynomials.
    \item $\mathcal{S}$ stores all shares $\{(\mathbf{s}_{1,i}, \mathbf{s}_{2,i})\}_{i=1}^N$ for use in the response simulation.
    \item \textbf{Justification:} The simulator acts as the DKG simulator, generating the full Shamir polynomials during key setup. It does \emph{not} infer honest shares post-hoc from $\mathbf{t}_1$ (which is a lossy commitment and does not uniquely determine $(\mathbf{s}_1,\mathbf{s}_2)$). Rather, the polynomials are fixed at key-generation time by $\mathcal{S}$'s own choices, and all evaluations $\{(\mathbf{s}_{1,i}, \mathbf{s}_{2,i})\}_{i=1}^N$ are therefore directly available to $\mathcal{S}$. When $|C| < T-1$, there are multiple polynomials consistent with the corrupted shares; $\mathcal{S}$ uses the one it sampled.
\end{enumerate}

\textbf{Simulating Pre-signing (Honest Signers):}
\begin{enumerate}
    \item For each honest signer $P_i$:
    \begin{enumerate}
        \item Sample $\tilde{\mathbf{y}}_i \getsr \mathcal{R}_q^\ell$ (simulated nonce share)
        \item Compute $\tilde{\mathbf{w}}_i = \mathbf{A} \cdot \tilde{\mathbf{y}}_i$ (same computation as real protocol)
        \item Sample $\tilde{r}_i \getsr \{0,1\}^{256}$ (commitment randomness); program the random oracle so that $H(\mathsf{``com''} \| \tilde{\mathbf{y}}_i \| \tilde{\mathbf{w}}_i \| \tilde{r}_i) := \tilde{\mathsf{Com}}_i$ where $\tilde{\mathsf{Com}}_i \getsr \{0,1\}^{256}$; store $(\tilde{\mathbf{y}}_i, \tilde{\mathbf{w}}_i, \tilde{r}_i)$ for potential blame-protocol opening
        \item Send $(\mathsf{set\_id}, \tilde{\mathbf{w}}_i, \tilde{\mathsf{Com}}_i)$ to adversary
    \end{enumerate}
    \item \textbf{Justification:} Commitment hides $\mathbf{y}_i$ (random oracle). Simulated $\tilde{\mathbf{w}}_i = \mathbf{A} \cdot \tilde{\mathbf{y}}_i$ has the same distribution as the real $\mathbf{w}_i = \mathbf{A} \cdot \mathbf{y}_i$: the real $\mathbf{y}_i = F(i)$ for $i \neq 0$ is $\mathcal{R}_q^\ell$-uniform because the degree-$1$-through-$(T\!-\!1)$ coefficients of $F$ are sampled uniformly (Theorem~\ref{thm:it-privacy-formal}), making each non-zero evaluation uniform by linearity; thus $\tilde{\mathbf{y}}_i \getsr \mathcal{R}_q^\ell$ matches this distribution exactly (SD~$= 0$). The stored $\tilde{r}_i$ enables blame-protocol commitment opening.
\end{enumerate}

\textbf{Simulating Response Phase (Honest Signers):}
\begin{enumerate}
    \item On challenge $c$ from combiner:
    \begin{enumerate}
        \item For each honest signer $P_i$:
        \item Compute $\mathbf{z}_i = \tilde{\mathbf{y}}_i + c \cdot \mathbf{s}_{1,i}$ using the nonce shares $\tilde{\mathbf{y}}_i$ from the simulated pre-signing phase and key shares $\mathbf{s}_{1,i}$ from the simulated key DKG
        \item Compute $\mathbf{V}_i = \lambda_i c \mathbf{s}_{2,i} + \mathbf{m}_i^{(s2)}$ using the simulated key shares and PRF masks
        \item Send $(\mathbf{z}_i, \mathbf{V}_i)$ to adversary (via combiner)
    \end{enumerate}
    \item \textbf{Justification:} The simulator computes responses using the same deterministic function as the real protocol, applied to the simulated DKG values. Since the simulated DKG produces values with identical distribution to the real DKG (Lemma~\ref{lem:presigning-sim}), the resulting $\mathbf{z}_i = \tilde{\mathbf{y}}_i + c \cdot \mathbf{s}_{1,i}$ is perfectly indistinguishable from the real value (SD $= 0$, since $\tilde{\mathbf{y}}_i$ and $\mathbf{y}_i$ have identical distribution). The masked key-share component $\mathbf{V}_i = \lambda_i c\mathbf{s}_{2,i} + \mathbf{m}_i^{(s2)}$ uses PRF-derived masks $\mathbf{m}_i^{(s2)}$; the simulator uses the \emph{actual} pairwise PRF seeds established during key setup simulation, computing the same mask values $\mathbf{m}_i^{(s2)} = \mathsf{PRF}(\mathsf{seed}_{h,i}, \mathsf{sid})$ as the real protocol. No PRF replacement is needed; the simulation is bit-for-bit identical to the real protocol for this component. The overall gap is therefore $0$.
\end{enumerate}

\textbf{Simulating 2PC (Corrupt $\mathsf{CP}_1$):}
\begin{enumerate}
    \item Adversary controls garbler $\mathsf{CP}_1$
    \item $\mathcal{S}$ extracts $\mathbf{S}_1$ from adversary's garbled circuit inputs (using garbled-circuit extractability~\cite{LP07}); $\mathbf{w}$ is already public (broadcast by the combiner)
    \item $\mathcal{S}$ sends $\mathbf{S}_1$ to $\mathcal{F}_{r_0}^{\mathsf{semi\text{-}async}}$
    \item Receives $\mathsf{result}$ from functionality
    \item Simulates honest $\mathsf{CP}_2$:
    \begin{enumerate}
        \item Execute OT protocol with adversary
        \item Return $\mathsf{result}$ to adversary
    \end{enumerate}
    \item \textbf{Justification:} Garbled circuit extractability~\cite{LP07} allows extraction of garbler's inputs.
\end{enumerate}

\textbf{Simulating 2PC (Corrupt $\mathsf{CP}_2$):}
\begin{enumerate}
    \item Adversary controls evaluator $\mathsf{CP}_2$
    \item $\mathcal{S}$ extracts $\mathbf{S}_2$ from adversary's OT queries
    \item $\mathcal{S}$ sends $\mathbf{S}_2$ to $\mathcal{F}_{r_0}^{\mathsf{semi\text{-}async}}$
    \item Receives $\mathsf{result}$ from functionality
    \item Simulates honest $\mathsf{CP}_1$:
    \begin{enumerate}
        \item Generate a simulated garbled circuit $\tilde{C}$ together with all input wire labels for $(\mathbf{w}, \mathbf{S}_1)$ (garbler's inputs) and the wire labels for $\mathbf{S}_2$ selected by $\mathsf{CP}_2$'s OT queries, consistent with the output $\mathsf{result}$, using the BHR12 simulator for the full two-input garbled circuit~\cite{BHR12}.
    \end{enumerate}
    \item \textbf{Justification:} Garbled circuit simulatability (BHR12 privacy definition): given the output $f(\mathbf{w}, \mathbf{S}_1, \mathbf{S}_2)$ and the evaluator's input labels (for $\mathbf{S}_2$, known from OT), the BHR12 simulator produces a simulated circuit $\tilde{C}$ plus all evaluator-side input labels computationally indistinguishable from the real garbled circuit. The garbler's (CP1's) input wire labels for $(\mathbf{w}, \mathbf{S}_1)$ are included in the simulated output; these are internally consistent with $\tilde{C}$ by construction.
\end{enumerate}

\textbf{Simulating 2PC (Both CPs Honest):}
When both $\mathsf{CP}_1$ and $\mathsf{CP}_2$ are honest, the adversary observes only:
\begin{itemize}
    \item Pre-signing commitments $\mathsf{Com}_i$ (simulated uniformly above, SD $= 0$).
    \item The broadcast challenge $c$ (determined by the ideal functionality).
    \item Responses $(\mathbf{z}_i, \mathbf{V}_i)$ for honest $i \notin C$ (simulated above: $\mathbf{z}_i$ with SD $= 0$; $\mathbf{V}_i$ with SD $= 0$ since the simulator uses the actual PRF seeds established during key setup, producing bit-for-bit identical mask values).
    \item The final result bit (provided by $\mathcal{F}_{r_0}^{\mathsf{semi\text{-}async}}$).
\end{itemize}
The 2PC OT and GC transcripts are exchanged solely between the two honest CPs and are never delivered to the adversary. $\mathcal{S}$ need not simulate these messages; the hybrid sequence $\mathsf{H}_0 \to \mathsf{H}_1$ (replace OT) $\to \mathsf{H}_2$ (replace GC) captures the honest-CP case via ideal-functionality substitution.

\textbf{Simulating Timeout / Abort:}
\begin{enumerate}
    \item If the adversary delays honest signers' responses past the deadline $\mathsf{deadline}$ (e.g., by withholding the challenge or blocking network messages):
    \begin{enumerate}
        \item $\mathcal{S}$ detects no challenge delivered to honest signers within $\mathsf{deadline} - \Delta$ rounds.
        \item $\mathcal{S}$ sends $(\mathsf{ABORT}, \mathsf{sid})$ to $\mathcal{F}_{r_0}^{\mathsf{semi\text{-}async}}$.
        \item $\mathcal{S}$ delivers $(\mathsf{abort}, \mathsf{sid})$ to all honest parties in the simulated execution.
    \end{enumerate}
    \item \textbf{Justification:} The functionality enforces ``if $\mathsf{now} > \mathsf{deadline}$: reject'' (see Response Phase, condition 1). When the adversary induces timeout, both ideal and simulated executions abort identically; the simulated abort message matches the real abort, giving indistinguishability.
\end{enumerate}
\end{simulator}

\subsection{Hybrid Argument for P3+}

\begin{definition}[P3+ Hybrids]
\label{def:hybrids-p3plus}
\begin{itemize}
    \item $\mathsf{H}_0$: Real execution of $\Pi_{r_0}^{\mathsf{P3+}}$
    \item $\mathsf{H}_1$: Replace OT with $\mathcal{F}_{\mathsf{OT}}$
    \item $\mathsf{H}_2$: Replace GC with $\mathcal{F}_{\mathsf{GC}}$
    \item $\mathsf{H}_3$: Simulate pre-signing (sample $\tilde{\mathbf{y}}_i$, compute $\tilde{\mathbf{w}}_i = \mathbf{A}\tilde{\mathbf{y}}_i$)
    \item $\mathsf{H}_4$: Simulate responses from $\tilde{\mathbf{y}}_i$ and simulated key shares
    \item $\mathsf{H}_5$: Ideal execution with $\mathcal{F}_{r_0}^{\mathsf{semi\text{-}async}}$
\end{itemize}
\end{definition}

\begin{lemma}[Pre-signing Simulation (ROM)]
\label{lem:presigning-sim}
$\Pr[\mathsf{H}_2 = 1] = \Pr[\mathsf{H}_3 = 1]$ (perfect indistinguishability \emph{in the random oracle model})
\end{lemma}

\begin{proof}
We work in the UC random oracle model: $H$ is modeled as an ideal functionality $\mathcal{F}_{\mathsf{RO}}$ (following~\cite{Canetti01,BR93}). The simulator $\mathcal{S}^{\mathsf{P3+}}$ programs $\mathcal{F}_{\mathsf{RO}}$ at honest parties' pre-image inputs: when the simulator generates $\tilde{\mathsf{Com}}_i \getsr \{0,1\}^{256}$, it programs $H(\tilde{\mathbf{y}}_i \| \mathbf{A}\tilde{\mathbf{y}}_i \| \mathsf{set\_id}) := \tilde{\mathsf{Com}}_i$ at the simulated pre-image (the adversary does not know $\tilde{\mathbf{y}}_i$ and cannot query $H$ on this input without knowing the simulation randomness).

Pre-signing values visible to the adversary:
\begin{itemize}
    \item $\mathsf{Com}_i = H(\mathbf{y}_i \| \mathbf{w}_i \| \mathsf{set\_id})$: In the ROM, $\mathsf{Com}_i$ is uniform over $\{0,1\}^{256}$ regardless of $\mathbf{y}_i$'s distribution, since the adversary does not know $\mathbf{y}_i$ (or $\mathbf{w}_i = \mathbf{A}\mathbf{y}_i$) and cannot query $\mathcal{F}_{\mathsf{RO}}$ at the pre-image without knowing the honest party's nonce. (Note: in the real protocol with $T \geq 2$, $\mathbf{y}_i = \mathbf{F}(i)$ is $\Zq$-uniform: the high-degree DKG coefficients (degree $\geq 1$) are sampled $\Rq$-uniformly, and for evaluation point $i \neq 0$ their contribution is uniform by additive mixing, giving $\mathbf{y}_i \sim \Zq$-uniform exactly. When $T = 1$ there are no high-degree terms and $\mathbf{y}_i = \mathbf{y}_0$ is the constant nonce, but this is a degenerate threshold that is excluded by the protocol requirement $T \geq 2$. The Irwin-Hall distribution applies to the \emph{aggregate} nonce $\mathbf{y} = \mathbf{F}(0) = \sum_k \hat{\mathbf{y}}_k$, not to individual shares $\mathbf{y}_i = \mathbf{F}(i)$.) The simulated $\tilde{\mathbf{y}}_i \getsr \Zq$ and $\tilde{\mathsf{Com}}_i \getsr \{0,1\}^{256}$ therefore match the real distributions exactly (SD $= 0$).
    \item $\mathbf{W}_i = \lambda_i \mathbf{w}_i + \mathbf{m}_i^{(w)}$ (masked commitment sent to CP combiner): In P3+, $\mathbf{W}_i$ is transmitted \emph{only} to the CP combiner, not broadcast to other parties. By the 1-of-2 CP honest assumption, at least one CP is honest; in this proof we assume the adversary does not control the combiner's input-aggregation role (if a CP is corrupted, the relevant 2PC simulation case handles it below). Since $\mathbf{W}_i$ never reaches the adversary, \emph{no mask-hiding argument is required}. The value is hidden entirely by the honest combiner. (Note: Lemma~\ref{lem:mask-hiding} applies to the P2 broadcast model where $\mathbf{W}_i$ is sent to all parties; it does not apply here.)
\end{itemize}

Thus:
\[
\Pr[\mathsf{H}_2 = 1] = \Pr[\mathsf{H}_3 = 1]
\]
\end{proof}

\begin{lemma}[Response Simulation]
\label{lem:response-sim}
$\Pr[\mathsf{H}_3 = 1] = \Pr[\mathsf{H}_4 = 1]$ (perfect indistinguishability)
\end{lemma}

\begin{proof}
The simulator computes each honest party's response deterministically from its simulated DKG values, producing an identical distribution to the real protocol:
\begin{align*}
    \mathbf{z}_i &= \mathbf{y}_i + c \cdot \mathbf{s}_{1,i} \\
    \mathbf{V}_i &= \lambda_i c \mathbf{s}_{2,i} + \mathbf{m}_i^{(s2)}
\end{align*}
where $\mathbf{y}_i$ are the nonce shares from the simulated nonce DKG
(Lemma~\ref{lem:presigning-sim}) and $\mathbf{s}_{1,i}, \mathbf{s}_{2,i}$ are the key shares
from the simulated key DKG.
Since the simulator possesses the actual simulated values (not just their distributions),
it computes $\mathbf{z}_i$ and $\mathbf{V}_i$ using the same deterministic function as the
real protocol; the resulting joint distribution is identical to the real execution.
For $\mathbf{V}_i$, the simulator computes the masked r0-check share using the simulated key
shares and the PRF masks (derived from the same seeds used in earlier hybrids).
Perfect indistinguishability follows: $\Pr[\mathsf{H}_3 = 1] = \Pr[\mathsf{H}_4 = 1]$.
\end{proof}

\begin{theorem}[P3+ UC Security]
\label{thm:p3plus-uc}
Protocol $\Pi_{r_0}^{\mathsf{P3+}}$ UC-realizes $\mathcal{F}_{r_0}^{\mathsf{semi\text{-}async}}$ in the
$(\mathcal{F}_{\mathsf{OT}}, \mathcal{F}_{\mathsf{GC}})$-hybrid model against static malicious adversaries
corrupting up to $T-1$ signers and at most one of $\{\mathsf{CP}_1, \mathsf{CP}_2\}$,
with security-with-abort.
Key DKG is handled by a trusted dealer (modeled as a trusted setup phase);
the simulator acts as the trusted dealer and directly generates all key shares.

Concretely:
\[
|\Pr[\mathsf{REAL} = 1] - \Pr[\mathsf{IDEAL} = 1]| \leq \epsilon_{\mathsf{OT}} + \epsilon_{\mathsf{GC}} + \epsilon_{\mathsf{ext}} < 2^{-\kappa}
\]
where $\kappa$ is the computational security parameter, $\epsilon_{\mathsf{OT}}$ is the OT error,
$\epsilon_{\mathsf{GC}}$ is the garbled-circuit simulation error (over honest garbler~\cite{BHR12}),
and $\epsilon_{\mathsf{ext}} \leq 2^{-B/2}$ is the cut-and-choose extraction error when
$\mathsf{CP}_1$ is corrupted (LP07 extractability~\cite{LP07} with $B$ circuits,
$B/2$ opened for verification).
\end{theorem}

\begin{proof}
By the hybrid argument:
\begin{align*}
|\Pr[\mathsf{H}_0] - \Pr[\mathsf{H}_5]| &\leq \sum_{i=0}^{4} |\Pr[\mathsf{H}_i] - \Pr[\mathsf{H}_{i+1}]| + |\Pr[\mathsf{H}_{2.5,\mathsf{ext}}] - \Pr[\mathsf{H}_3]| \\
&= \epsilon_{\mathsf{OT}} + \epsilon_{\mathsf{GC}} + \epsilon_{\mathsf{ext}} + 0 + 0 + 0 \\
&< 2^{-\kappa}
\end{align*}

The hybrid $\mathsf{H}_{2.5,\mathsf{ext}}$ (between $\mathsf{H}_2$ and $\mathsf{H}_3$) is the
\emph{extraction hybrid}: when $\mathsf{CP}_1$ is corrupted, the simulator uses LP07
cut-and-choose extractability to extract $\mathbf{S}_1$ from the adversarially-submitted
garbled circuit; this succeeds except with probability $\epsilon_{\mathsf{ext}} \leq 2^{-B/2}$
(see the extraction error accounting below).
The three subsequent zero terms arise from perfect indistinguishability:
pre-signing simulation (Lemma~\ref{lem:presigning-sim}),
response simulation (Lemma~\ref{lem:response-sim}),
and the transition to the ideal execution.

\textbf{Hint output and combiner role.}
The ideal functionality $\mathcal{F}_{r_0}^{\mathsf{2PC}}$ (Definition~\ref{def:f-r0-2pc-uc})
outputs both the 1-bit result \emph{and} the hint
$\mathbf{h} = \mathsf{MakeHint}(\mathbf{w}', \mathbf{w}, 2\gamma_2)$ to the combiner.
In both simulator cases, $\mathcal{S}^{\mathsf{P3+}}$ receives $(\mathsf{result}, \mathbf{h})$
from $\mathcal{F}_{r_0}^{\mathsf{semi\text{-}async}}$ and passes $\mathbf{h}$ to the adversary.
The hint $\mathbf{h}$ appears in the public signature $\sigma = (\tilde{c}, \mathbf{z}, \mathbf{h})$
and creates no leakage beyond what $\sigma$ already reveals;
the BHR12 simulator constructs a garbled circuit consistent with output $(\mathsf{result}, \mathbf{h})$
without exposing $c\mathbf{s}_2$.

\textbf{Combiner and CP roles.}
In P3+, $\mathsf{CP}_1$ (the garbler) also acts as the combiner: it aggregates signer
responses $(\mathbf{z}_j, \mathbf{V}_j)$, computes the split $(S_1, S_2)$ for the 2PC,
and executes the garbled-circuit protocol with $\mathsf{CP}_2$.
Combiner corruption is subsumed by the ``$\mathsf{CP}_1$ is corrupted'' case
(Simulator~\ref{sim:p3plus}): the simulator uses LP07 extraction to recover $\mathbf{S}_1$
from the adversary's garbled circuit and forwards honest parties' contributions
via the ideal-world channel to $\mathcal{F}_{r_0}^{\mathsf{semi\text{-}async}}$.

\textbf{Extraction error accounting.}
When $\mathsf{CP}_1$ (garbler) is corrupted, $\mathcal{S}^{\mathsf{P3+}}$ uses LP07 extractability
to recover $\mathbf{S}_1$ from the adversarially-submitted garbled circuit.
The cut-and-choose protocol (generate $B$ circuits; open $\lfloor B/2 \rfloor$ for
verification; use the rest) ensures a cheating garbler is caught with probability
$1 - 2^{-\lfloor B/2 \rfloor} \geq 1 - 2^{-B/2}$.
For unopened circuits, LP07 extraction succeeds with probability $1$
(requires only correct evaluation, guaranteed by cut-and-choose).
Thus $\epsilon_{\mathsf{ext}} \leq 2^{-B/2}$; setting $B = 2\kappa$ gives
$\epsilon_{\mathsf{ext}} \leq 2^{-\kappa}$.
The full bound $\epsilon_{\mathsf{OT}} + \epsilon_{\mathsf{GC}} + 2^{-B/2} < 2^{-\kappa+1}$
holds for standard $\kappa$-bit-secure OT and GC.
\end{proof}

\subsection{Security Properties of P3+}

\begin{corollary}[P3+ Full Security]
\label{cor:p3plus-full}
Protocol P3+ achieves:
\begin{enumerate}
    \item \textbf{EUF-CMA}: Under Module-SIS (same as P3)
    \item \textbf{UC Security}: Against malicious adversaries with 1-of-2 CP honest
    \item \textbf{Privacy}: Nonce shares $\mathbf{y}_i$ enjoy nonce share privacy
    (Theorem~\ref{thm:it-privacy}; Remark~\ref{rem:key-privacy}); commitment and r0-check values hidden by pairwise masks
    if $\geq 2$ honest signers
    \item \textbf{Liveness}: Completes if $T$ signers respond within $\Delta$ and both CPs available
\end{enumerate}
\end{corollary}

\begin{proof}
\textbf{EUF-CMA:} Pre-signing does not change the signature distribution. The commitment binding ensures signers commit to nonces before seeing challenges. Reduction to Module-SIS identical to P3.

\textbf{UC Security:} Theorem~\ref{thm:p3plus-uc}.

\textbf{Privacy:} By Lemma~\ref{lem:response-sim}, the simulator computes honest signers' responses
from simulated DKG values (perfect indistinguishability).
Nonce shares have high conditional min-entropy (Theorem~\ref{thm:it-privacy}, statistical).
The $\mathbf{V}_j$ values are computationally hidden by pairwise masks.
The 2PC reveals the result bit and hint $\mathbf{h}$ (both appear in the public signature);
$\mathbf{h}$ does not create additional privacy leakage beyond what $\sigma$ already reveals.
Combining: adversary view simulatable from public values + corrupted shares + $(\mathsf{result}, \mathbf{h})$.

\textbf{Liveness:} The protocol has an explicit timeout. If conditions are met, the 2PC completes in
bounded time (synchronous between CPs). If not, the protocol aborts (no liveness violation;
just unsuccessful signing).
\end{proof}

\subsection{P3+ Round Complexity}

\begin{theorem}[P3+ Complexity]
\label{thm:p3plus-complexity}
P3+ achieves:
\begin{enumerate}
    \item \textbf{Signer rounds:} 1 (respond within window)
    \item \textbf{Server rounds:} 1 (2PC between CPs)
    \item \textbf{Total logical rounds:} 2
    \item \textbf{Expected retries:} 3.7 ($\approx$27\% success rate)
    \item \textbf{Expected total rounds:} 7.4 (2 rounds $\times$ 3.7 attempts)
\end{enumerate}
\end{theorem}

\begin{proof}
\textbf{Signer experience:} After receiving the challenge, the signer computes a local
response and sends it once. No further interaction is required.

\textbf{Server interaction:} CPs execute 2PC in constant rounds (with GC precomputation),
effectively 1 round (OT + evaluate).

\textbf{Retry handling:} Each retry requires a fresh pre-signing set and response;
the signer daemon handles this automatically.

\textbf{Total:} $2 \times 3.7 \approx 7.4$ expected rounds (vs.\ $5 \times 3.5 = 17.5$ for P2).
\end{proof}

\begin{table}[h]
\centering
\begin{tabular}{lccc}
\toprule
\textbf{Profile} & \textbf{Rds/Attempt} & \textbf{Exp.\ Attempts} & \textbf{Total Rds} \\
\midrule
P2 (8-round) & 8 & 3.5 & 28 \\
P2 (5-round) & 5 & 3.5 & 17.5 \\
P3$^\dagger$ & 5 & 3.5 & 17.5 \\
\textbf{P3+} & \textbf{2} & 3.7 & \textbf{7.4} \\
\bottomrule
\end{tabular}
\caption{Round complexity across profiles. $^\dagger$P3: synchronous 2PC variant (5 rounds)
  before the semi-async optimization; P3+ is the final protocol (Section~\ref{sec:profile-p3}).}
\label{tab:round-comparison}
\end{table}


\section{Technical Lemmas}
\label{app:technical-lemmas}

This section provides rigorous proofs of technical lemmas used throughout the paper.

\subsection{Mask Properties}

\begin{lemma}[Mask Cancellation -- Formal Statement]
\label{lem:mask-cancel-formal}
Let $S \subseteq [N]$ be a signing set with $|S| \geq 2$. For any PRF input $\mathsf{ctx} \in \{0,1\}^*$ (typically $\mathsf{ctx} = \mathsf{session\_id} \| \mathsf{dom}$, where $\mathsf{session\_id}$ is a fresh per-session identifier and $\mathsf{dom}$ is a fixed domain label) and PRF $\mathsf{PRF}: \{0,1\}^{256} \times \{0,1\}^* \to \Rq^k$ (where $k=6$ is the module rank; not the dropping-bits parameter $d=13$), define:
\[
\mathbf{m}_i = \sum_{j \in S: j > i} \mathsf{PRF}(\mathsf{seed}_{i,j}, \mathsf{ctx}) - \sum_{j \in S: j < i} \mathsf{PRF}(\mathsf{seed}_{j,i}, \mathsf{ctx})
\]
Then $\sum_{i \in S} \mathbf{m}_i = \mathbf{0} \in \Rq^k$.
\end{lemma}

\begin{proof}
We prove this by showing each PRF term appears exactly once with coefficient $+1$ and once with coefficient $-1$.

For any pair $(i, j) \in S \times S$ with $i < j$, the term $\mathsf{PRF}(\mathsf{seed}_{i,j}, \mathsf{ctx})$ appears:
\begin{enumerate}
    \item In $\mathbf{m}_i$ with coefficient $+1$ (from the sum over $k > i$, taking $k = j$)
    \item In $\mathbf{m}_j$ with coefficient $-1$ (from the sum over $k < j$, taking $k = i$)
\end{enumerate}

Thus:
\begin{align*}
\sum_{i \in S} \mathbf{m}_i &= \sum_{i \in S} \left( \sum_{j \in S: j > i} \mathsf{PRF}(\mathsf{seed}_{i,j}, \mathsf{ctx}) - \sum_{j \in S: j < i} \mathsf{PRF}(\mathsf{seed}_{j,i}, \mathsf{ctx}) \right) \\
&= \sum_{\substack{(i,j) \in S \times S \\ i < j}} \mathsf{PRF}(\mathsf{seed}_{i,j}, \mathsf{ctx}) - \sum_{\substack{(i,j) \in S \times S \\ i < j}} \mathsf{PRF}(\mathsf{seed}_{i,j}, \mathsf{ctx}) \\
&= \mathbf{0}
\end{align*}
\end{proof}

\begin{theorem}[Statistical Share Privacy -- Formal Statement]
\label{thm:it-privacy-formal}
Let $S$ be a signing set with $|S| \geq T$, let $C \subset S$ with $|C| \leq T - 1$ be the set of corrupted parties, and let $h \in S \setminus C$ be an honest party. In the Shamir nonce DKG protocol (Algorithm~\ref{alg:nonce-dkg}), each party samples its constant-term contribution $\hat{\mathbf{y}}_h \getsr \{-\lfloor\gamma_1/|S|\rfloor, \ldots, \lfloor\gamma_1/|S|\rfloor\}^{n\ell}$ and constructs a degree-$(T-1)$ polynomial with uniform higher-degree coefficients. The honest party's nonce share $\mathbf{y}_h = \sum_{i \in S} \mathbf{f}_i(h)$ has conditional min-entropy:
\[
H_\infty\!\left(\mathbf{y}_h \mid \{\mathbf{f}_j\}_{j \in C},\; \{\mathbf{f}_h(j)\}_{j \in C}\right) \;\geq\; n\ell \cdot \log_2\!\bigl(2\lfloor\gamma_1/|S|\rfloor + 1\bigr)
\]
where the inequality holds \emph{directly from the sampling procedure} (no rejection conditioning needed): the bounded sampling range for each party's constant term directly ensures that the honest party's polynomial $\mathbf{f}_h$ has $\mathbf{f}_h(0) = \hat{\mathbf{y}}_h \in [-\lfloor\gamma_1/|S|\rfloor,\, \lfloor\gamma_1/|S|\rfloor]^{n\ell}$ by construction (Algorithm~\ref{alg:nonce-dkg}, Line~2). No computational assumptions are needed for the nonce privacy; it is purely statistical. For ML-DSA-65 with $|S| \leq 17$, this exceeds $20{,}000$ bits---over $5\times$ the short secret key entropy $H(\mathbf{s}_1) \leq 4058$ bits. (Shamir shares $\mathbf{s}_{1,h}$ for $h \neq 0$ are $\Zq$-uniform; the entropy comparison is against the short secret $\mathbf{s}_1 = \mathbf{f}(0)$.)

\par\noindent\textbf{Note on aggregated nonce $\mathbf{y} = \sum_i \hat{\mathbf{y}}_i$.} The \emph{aggregated} nonce $\mathbf{y}$ (sum of all parties' constant terms) follows an Irwin-Hall distribution with tails extending beyond $[-\gamma_1, \gamma_1]$. However, the aggregated nonce is never required to lie strictly within this range for the min-entropy bound to hold: the bound applies to \emph{individual shares} $\mathbf{y}_h$, not the aggregate. The z-bound check $\|\mathbf{z}\|_\infty < \gamma_1 - \beta$ implicitly rejects sessions where the aggregate $\|\mathbf{y}\|_\infty$ is too large (via $\mathbf{z} = \mathbf{y} + c\mathbf{s}_1$), but this rejection event is already accounted for in the Irwin-Hall tail probability $\epsilon < 10^{-30}$ (Lemma~\ref{lem:ih-shift-direct}); it does not affect the per-share conditional min-entropy.

\par\noindent\textbf{Key privacy given $\sigma$ (computational, informal).} Even given the public signature $\sigma$ (from which $\mathbf{z}_h = \mathbf{y}_h + c \cdot \mathbf{s}_{1,h}$ is computable when $|S \setminus C| = 1$), extracting $\mathbf{s}_{1,h}$ requires solving a bounded-distance decoding problem with sparse multiplier $c$. While not a standard Module-LWE instance, this is at least as hard as key recovery from single-signer ML-DSA signatures; see the discussion in Theorem~\ref{thm:it-privacy} (Section~\ref{sec:security}).
\end{theorem}

\begin{proof}
The proof proceeds in three steps, following a null space argument on the honest party's polynomial.

\emph{Reduction to the worst case $|C| = T-1$.} The stated min-entropy lower bound is monotone non-decreasing as $|C|$ decreases: fewer corrupted evaluations give a higher-dimensional null space (dimension $T - |C| > 1$ when $|C| < T - 1$), meaning the adversary is more uncertain about $\mathbf{y}_h$ (higher conditional min-entropy). Hence it suffices to prove the bound for the hardest case $|C| = T - 1$ (minimum uncertainty for the adversary); the case $|C| < T - 1$ gives at least as much privacy and follows a fortiori. We henceforth assume $|C| = T - 1$.

The adversary's knowledge about the honest party's polynomial $\mathbf{f}_h$ consists of the evaluations $\{\mathbf{f}_h(j)\}_{j \in C}$ received during the DKG. With $|C| = T - 1$ and $\mathbf{f}_h$ having degree $T - 1$ (hence $T$ coefficients), the adversary has exactly $T - 1$ linear equations in $T$ unknowns: the null space of the evaluation map has dimension exactly $1$.

\textbf{Step 1: Null space.} The evaluation map
$\mathbf{f}_h \mapsto (\mathbf{f}_h(j_1), \ldots, \mathbf{f}_h(j_{T-1}))$
(with $|C| = T-1$) has a one-dimensional null space, spanned by the monic polynomial
$\mathbf{n}(x) = \prod_{j \in C} (x - x_j)$ of degree $T-1$ and leading coefficient $1$.
The set of polynomials consistent with the adversary's observations is
$\mathbf{f}_h(x) = \mathbf{g}(x) + \mathbf{t} \cdot \mathbf{n}(x)$,
where $\mathbf{g}$ is any fixed interpolant through the known points and
$\mathbf{t} \in \Rq^\ell$ is the unique free parameter.

\textbf{Step 2: Posterior constraint on $\mathbf{t}$.} In the nonce DKG, each honest party $h$ samples $\hat{\mathbf{y}}_h$ directly from $[-\lfloor\gamma_1/|S|\rfloor, \lfloor\gamma_1/|S|\rfloor]^{n\ell}$ and all $T-1$ higher-degree coefficients $\mathbf{a}_{h,1}, \ldots, \mathbf{a}_{h,T-1}$ independently and uniformly from $\Rq^\ell$. In the null-space parameterization, the free parameter is $\mathbf{t} = \mathbf{a}_{h,T-1} - \mathbf{g}_{T-1}$, where $\mathbf{g}_{T-1}$ is the leading coefficient of the fixed interpolant $\mathbf{g}$. Because $\mathbf{a}_{h,T-1}$ is $\Zq$-uniform \emph{a priori}, $\mathbf{t}$ is $\Zq$-uniform per coefficient. The constant term $\hat{\mathbf{y}}_h = \mathbf{f}_h(0) = \mathbf{g}(0) + \mathbf{t} \cdot \mathbf{n}(0)$ must satisfy $\hat{\mathbf{y}}_h \in [-\lfloor\gamma_1/|S|\rfloor,\, \lfloor\gamma_1/|S|\rfloor]^{n\ell}$. Since $\mathbf{n}(0) = \prod_{j\in C}(-j) \neq 0 \pmod{q}$ (each $j \in C$ satisfies $1 \leq j \leq N \leq q-1$, so $j \not\equiv 0 \pmod{q}$, and the product of nonzero field elements is nonzero), multiplication by $\mathbf{n}(0)$ is a bijection on $\Zq$. Since the Shamir evaluation map acts independently on each of the $n\ell = 1280$ coefficient slots (each coefficient of $\mathbf{f}_h$ follows a separate scalar Shamir sharing over $\Zq$), this bijection applies coordinate-by-coordinate across all $n\ell$ slots; the constraint selects exactly $2\lfloor\gamma_1/|S|\rfloor + 1$ values per coefficient of $\mathbf{t}$. Because the prior is $\Zq$-uniform, conditioning on this subset yields a \emph{uniform} posterior over the valid values.

\textbf{Step 3: Min-entropy at honest point.} The honest party's share is $\mathbf{f}_h(x_h) = \mathbf{g}(x_h) + \mathbf{t} \cdot \mathbf{n}(x_h)$. Since $x_h \notin C$ and $q$ is prime, $\mathbf{n}(x_h) \neq 0 \pmod{q}$, so multiplication by $\mathbf{n}(x_h)$ is a bijection on $\Zq$. The evaluation $\mathbf{f}_h(x_h)$ inherits the same per-coefficient support size $2\lfloor\gamma_1/|S|\rfloor + 1$. The nonce share $\mathbf{y}_h = \sum_{j \in C} \mathbf{f}_j(x_h) + \mathbf{f}_h(x_h)$ shifts by a known constant, preserving support.

\textbf{Key privacy (computational).} Given the public signature $\sigma$, the adversary can compute $\mathbf{z}_h$ (when $|S \setminus C| = 1$). The equation $\mathbf{z}_h = \mathbf{y}_h + c \cdot \mathbf{s}_{1,h}$ relates two bounded unknowns. The high min-entropy of $\mathbf{y}_h$ ensures this is a non-trivial bounded-distance decoding instance: extracting $\mathbf{s}_{1,h}$ is computationally hard under the Module-LWE assumption.
\end{proof}

\begin{lemma}[Mask Hiding -- Formal Statement (for $\mathbf{V}_i$ and $\mathbf{W}_i$)]
\label{lem:mask-hiding-formal}
Let $C \subsetneq S$ be the set of corrupted parties in signing set $S$, with
$|S \setminus C| \geq 2$. For any honest party $j \in S \setminus C$, the
masked r0-check share $\mathbf{V}_j = \lambda_j c \mathbf{s}_{2,j} + \mathbf{m}_j^{(s2)}$
and masked commitment $\mathbf{W}_j = \lambda_j \mathbf{w}_j + \mathbf{m}_j^{(w)}$
are computationally indistinguishable from uniform, given the adversary's view.

Note: This lemma applies to the commitment and r0-check values, which still use
pairwise-canceling masks. The response
$\mathbf{z}_j = \mathbf{y}_j + c \cdot \mathbf{s}_{1,j}$
does not require masks: the underlying nonce share $\mathbf{y}_j$ has statistical
privacy via the Shamir nonce DKG (Theorem~\ref{thm:it-privacy-formal}), enabling
perfect simulation of $\mathbf{z}_j$ from simulated DKG values.
\end{lemma}

\begin{proof}
We prove the result for $\mathbf{V}_j$; the argument for $\mathbf{W}_j$ is identical. Decompose the mask $\mathbf{m}_j^{(s2)}$ based on counterparty corruption status:
\begin{align*}
\mathbf{m}_j^{(s2)}
  &= \underbrace{\sum_{\substack{k \in C \cap S \\ k > j}} \mathsf{PRF}(\mathsf{seed}_{j,k}, \mathsf{ctx})
     - \sum_{\substack{k \in C \cap S \\ k < j}} \mathsf{PRF}(\mathsf{seed}_{k,j}, \mathsf{ctx})}_{\text{known to } \mathcal{A}} \\
  &\quad + \underbrace{\sum_{\substack{k \in (S\setminus C)\setminus\{j\} \\ k > j}} \mathsf{PRF}(\mathsf{seed}_{j,k}, \mathsf{ctx})
     - \sum_{\substack{k \in (S\setminus C)\setminus\{j\} \\ k < j}} \mathsf{PRF}(\mathsf{seed}_{k,j}, \mathsf{ctx})}_{\text{unknown to } \mathcal{A}}
\end{align*}

Since $|S \setminus C| \geq 2$, there exists at least one honest party $k' \in (S \setminus C) \setminus \{j\}$ whose PRF seed $\mathsf{seed}_{j,k'}$ is unknown to $\mathcal{A}$. By PRF security, $\mathsf{PRF}(\mathsf{seed}_{j,k'}, \mathsf{ctx})$ is computationally indistinguishable from uniform in $\Rq^k$. Since the mask contains at least one such uniform-looking term, and uniform plus fixed equals uniform, $\mathbf{V}_j \stackrel{c}{\approx} \mathsf{Uniform}(\Rq^k)$. One unknown PRF term suffices; the distinguishing advantage is bounded by $\epsilon_{\mathsf{PRF}}$ (or at most $(|S \setminus C|-1)\cdot\epsilon_{\mathsf{PRF}}$ via a union bound over honest-honest seed pairs).
\end{proof}

\subsection{Irwin-Hall Distribution Analysis}

We provide a self-contained analysis of the security implications of using Irwin-Hall distributed nonces.

\begin{definition}[Discrete Irwin-Hall Distribution]
\label{def:irwin-hall}
Let $X_1, \ldots, X_n$ be independent uniform random variables on the discrete interval $\{-B, -B+1, \ldots, B-1, B\}$. The sum $Y = \sum_{i=1}^n X_i$ follows the \emph{discrete Irwin-Hall distribution} $\mathsf{IH}(n, B)$.
\end{definition}

\begin{lemma}[Irwin-Hall Moments]
\label{lem:ih-moments}
For $Y \sim \mathsf{IH}(n, B)$:
\begin{enumerate}
    \item $\mathbb{E}[Y] = 0$
    \item $\mathsf{Var}(Y) = n \cdot \frac{(2B+1)^2 - 1}{12} = n \cdot \frac{B(B+1)}{3}$
    \item For large $n$, $Y$ is approximately Gaussian by the Central Limit Theorem
\end{enumerate}
\end{lemma}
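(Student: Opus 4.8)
The plan is to dispatch the three claims in order, each following from elementary facts about discrete uniform random variables together with independence of the $X_i$. For claim (1), I would note that each $X_i$ is distributed symmetrically about $0$: the involution $x \mapsto -x$ is a measure-preserving bijection of $\{-B, \ldots, B\}$, so $\mathbb{E}[X_i] = 0$, and then linearity of expectation gives $\mathbb{E}[Y] = \sum_{i=1}^n \mathbb{E}[X_i] = 0$ with no independence needed.

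For claim (2), the one nontrivial input is the variance of a discrete uniform distribution on $m$ consecutive integers, namely $(m^2 - 1)/12$. I would either cite this or derive it in one line from $\mathbb{E}[X_i] = 0$ and $\mathbb{E}[X_i^2] = \frac{1}{2B+1} \sum_{x=-B}^{B} x^2 = \frac{2}{2B+1} \cdot \frac{B(B+1)(2B+1)}{6} = \frac{B(B+1)}{3}$, using the standard sum-of-squares formula. Taking $m = 2B+1$ gives $\mathsf{Var}(X_i) = ((2B+1)^2 - 1)/12 = 4B(B+1)/12 = B(B+1)/3$, which simultaneously verifies that the two stated closed forms agree. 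Since the $X_i$ are independent, variances add, so $\mathsf{Var}(Y) = \sum_{i=1}^n \mathsf{Var}(X_i) = n \cdot B(B+1)/3$.

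For claim (3), I would invoke the classical Lindeberg--L\'evy central limit theorem: the $X_i$ are i.i.d.\ with mean $0$ and finite variance $\sigma^2 = B(B+1)/3 > 0$, hence $Y/\sqrt{n\sigma^2}$ converges in distribution to a standard normal as $n \to \infty$. Because the summands are bounded, they have a finite third absolute moment, so I would optionally strengthen this to a Berry--Esseen estimate $\sup_t |\Pr[Y/\sqrt{n\sigma^2} \le t] - \Phi(t)| = O(1/\sqrt{n})$, which is the quantitative version actually needed when these moments feed into the R\'enyi-divergence bounds elsewhere in the paper.

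There is essentially no obstacle; the only care needed is bookkeeping --- using the discrete rather than continuous variance formula, and flagging that the ``approximately Gaussian'' assertion in (3) is an asymptotic statement that, for the paper's security estimates, should be read through the effective Berry--Esseen rate rather than as an exact identity.
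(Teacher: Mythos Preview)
Your proposal is correct and matches the paper's proof essentially line for line: symmetry for the mean, the sum-of-squares computation $\mathbb{E}[X_i^2] = \frac{1}{2B+1}\cdot\frac{2B(B+1)(2B+1)}{6} = B(B+1)/3$ plus additivity of variance under independence, and Lindeberg--L\'evy for the Gaussian limit. Your optional Berry--Esseen remark is a sensible extra that the paper does not include, but the core argument is identical.
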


\begin{proof}
\textbf{Mean:} Each $X_i$ has mean $\mathbb{E}[X_i] = 0$ by symmetry. Thus $\mathbb{E}[Y] = \sum_i \mathbb{E}[X_i] = 0$.

\textbf{Variance:} For uniform on $\{-B, \ldots, B\}$:
\begin{align*}
\mathsf{Var}(X_i) &= \mathbb{E}[X_i^2]
  = \frac{1}{2B+1} \sum_{k=-B}^{B} k^2 \\
  &= \frac{1}{2B+1} \cdot \frac{2B(B+1)(2B+1)}{6}
  = \frac{B(B+1)}{3}
\end{align*}
By independence, $\mathsf{Var}(Y) = n \cdot \mathsf{Var}(X_1) = n \cdot \frac{B(B+1)}{3}$.

\textbf{CLT:} Standard application of Lindeberg-L\'evy.
\end{proof}

\begin{definition}[R\'enyi Divergence]
\label{def:renyi}
For distributions $P, Q$ over $\Omega$ with $P \ll Q$, the R\'enyi divergence of order $\alpha > 1$ is:
\[
R_\alpha(P \| Q) = \left( \sum_{x \in \Omega} P(x)^\alpha Q(x)^{1-\alpha} \right)^{1/(\alpha-1)}
\]
We use the \emph{exponential} (non-log) form throughout, following~\cite{BLL+15,Raccoon2024}. In this convention $R_2(P\|Q) = \sum_x P(x)^2/Q(x)$, the product rule $R_2(P_1 \times P_2 \| Q_1 \times Q_2) = R_2(P_1\|Q_1)\cdot R_2(P_2\|Q_2)$ holds exactly, and the chi-squared divergence satisfies $\chi^2(P\|Q) = R_2(P\|Q) - 1$.
\end{definition}

\begin{definition}[Chi-Squared Divergence]
\label{def:chi-squared}
The chi-squared divergence between distributions $P$ and $Q$ (with $P \ll Q$) is:
\[
\chi^2(P \| Q) = \sum_{x \in \Omega} \frac{(P(x) - Q(x))^2}{Q(x)} = R_2(P \| Q) - 1
\]
where the last equality holds by the definition of $R_2$ (order-2 Rényi divergence).
\end{definition}

\begin{definition}[Smooth R\'enyi Divergence \cite{Raccoon2024}]
\label{def:smooth-renyi-formal}
For $\epsilon \geq 0$, the \emph{one-sided} smooth R\'enyi divergence is:
\[
R_\alpha^\epsilon(P \| Q) = \min_{P'} R_\alpha(P' \| Q), \quad P' : \Delta(P', P) \leq \epsilon
\]
where $\Delta$ denotes statistical distance.
We use the one-sided variant (smoothing over $P'$ only, $Q$ fixed),
as required by Lemma~\ref{lem:renyi-security}.
\end{definition}

\begin{lemma}[Security Reduction via R\'enyi Divergence~\cite{BLL+15,Raccoon2024}]
\label{lem:renyi-security}
Let $P, Q$ be distributions such that $R_2^\epsilon(P \| Q) < \infty$ with one-sided smoothing ($\epsilon$-close variant $P'$ of $P$). For any event $E$:
\[
P(E) \leq \sqrt{R_2^\epsilon(P \| Q) \cdot Q(E)} + \epsilon
\]
where $\epsilon$ is the total variation distance between $P$ and its smoothed variant $P'$. This follows from $P(E) \leq P'(E) + \epsilon \leq \sqrt{R_2(P'\|Q) \cdot Q(E)} + \epsilon$ by Cauchy-Schwarz.
\end{lemma}

\begin{proof}
See~\cite[Theorem 1]{BLL+15} and~\cite[Lemma 4.1]{Raccoon2024}.
\end{proof}

\begin{theorem}[Irwin-Hall Security for Threshold ML-DSA]
\label{thm:ih-security-full}
Let $\Pi_{\mathsf{IH}}$ be threshold ML-DSA with Irwin-Hall nonces (sum of $|S|$ uniforms), and $\Pi_{\mathsf{U}}$ be the hypothetical scheme with uniform nonces. Then for any PPT adversary $\mathcal{A}$ making at most $q_s$ signing queries:
\[
\mathsf{Adv}^{\mathsf{EUF-CMA}}_{\Pi_{\mathsf{IH}}}(\mathcal{A}) \leq \bigl(R_2^{\mathsf{vec}}\bigr)^{q_s/2} \cdot \sqrt{\mathsf{Adv}^{\mathsf{EUF-CMA}}_{\Pi_{\mathsf{U}}}(\mathcal{A})} + q_s \cdot \epsilon
\]
where $R_2^{\mathsf{vec}} = (R_2^{\epsilon,\mathsf{coord}})^{n\ell}$ is the full-vector per-session R\'enyi divergence, and the per-coordinate $R_2^{\epsilon,\mathsf{coord}}, \epsilon$ are computed below. The $(R_2^{\mathsf{vec}})^{q_s/2}$ factor accounts for multi-session composition; see Remark~\ref{rem:multi-session-renyi}. For the direct bound (Corollary~\ref{cor:ih-shift-ml-dsa}), this factor is $(1 + \chi^2_{\max})^{n\ell\, q_s/2}$ and the security loss grows as $\approx 6.6 \times 10^{-3}\, q_s$ bits for $|S| = 17$.
\end{theorem}

\begin{proof}
We analyze the distribution of the signature response $\mathbf{z} = \mathbf{y} + c\mathbf{s}_1$.

\textbf{In $\Pi_{\mathsf{U}}$ (hypothetical):}
$\mathbf{y} \getsr \{-\gamma_1, \ldots, \gamma_1\}^{n\ell}$ uniform. Thus $\mathbf{z} = \mathbf{y} + c\mathbf{s}_1$ is a shifted uniform.

\textbf{In $\Pi_{\mathsf{IH}}$ (real):}
$\mathbf{y} = \sum_{i \in S} \hat{\mathbf{y}}_i$ where each
\[
  \hat{\mathbf{y}}_i \getsr \{-\lfloor\gamma_1/|S|\rfloor, \ldots, \lfloor\gamma_1/|S|\rfloor\}^{n\ell}
\]
is party $i$'s constant-term contribution in the nonce DKG.

For each coordinate, we compare:
\begin{itemize}
    \item $P$: Irwin-Hall distribution shifted by $(c\mathbf{s}_1)_j$
    \item $Q$: Uniform distribution shifted by $(c\mathbf{s}_1)_j$
\end{itemize}

The key insight is that the shift is the same in both cases. We bound $R_2^\epsilon(P \| Q)$.

\textbf{Applying Lemma 4.2 of~\cite{Raccoon2024}:}

The Raccoon signature scheme~\cite{Raccoon2024} faces an analogous problem: threshold signers contribute nonce shares that sum to a non-uniform (Irwin-Hall) distribution, and security requires bounding the statistical distance from uniform. Lemma 4.2 of~\cite{Raccoon2024} provides a R\'enyi divergence bound for sums of discrete uniforms with bounded shifts. This lemma is directly applicable to our setting because:
\begin{enumerate}
    \item Both schemes use additive secret sharing of nonces across threshold parties.
    \item The shift arises from multiplying the challenge by the secret key share, bounded by $\|c\mathbf{s}_1\|_\infty \leq \beta$.
    \item The discrete uniform ranges and modular arithmetic are structurally identical.
\end{enumerate}

Let $m = |S|$ denote the signing set size and $B = 2\lfloor\gamma_1/|S|\rfloor$ the per-party \textbf{full} nonce range (distinct from the ring degree $n = 256$ used elsewhere; note $B$ here equals $2\times$ the half-range $B'$ in Definition~\ref{def:irwin-hall}, i.e., $B' = B/2 = \lfloor\gamma_1/|S|\rfloor$). For $m$ uniforms on half-range $B/2$ with shift $\delta = \|(c\mathbf{s}_1)_j\|_\infty \leq \beta$ (where $c$ is the challenge polynomial), specializing to $\alpha = 2$ (where the factor $\alpha(\alpha-1)/2 = 1$):
\[
R_2^\epsilon \leq 1 + \frac{m^2 \delta^2}{B^2} + O(1/m!)
\]

For ML-DSA-65 parameters:
\begin{itemize}
    \item $\gamma_1 = 2^{19} = 524288$
    \item $\beta = 196$
    \item $m = |S|$ (signing set size; note $n = 256$ is the ring degree)
\end{itemize}

\begin{align*}
R_2^\epsilon &\leq 1 + \frac{m^2 \cdot \beta^2}{(2\gamma_1/m)^2} = 1 + \frac{m^4 \cdot \beta^2}{4\gamma_1^2} \\
&= 1 + \frac{|S|^4 \cdot 196^2}{4 \cdot (524288)^2} = 1 + \frac{|S|^4 \cdot 38416}{1.099 \times 10^{12}}
\end{align*}

\emph{Why this bound is conservative ($\approx|S|^3/12$ looser than the Gaussian CLT):}
The Raccoon Lemma~4.2 formula $m^2\delta^2/B^2$ uses a moment bound that treats the IH sum as if its worst-case chi-squared scales with $m^2/B^2$ (reflecting a worst-case uniform-distribution approximation over the aggregate range $mB/2$). The Gaussian CLT-based Fisher information bound: the IH variance is $\mathrm{Var}(Y) = \gamma_1^2/(3m)$ (sum of $m$ i.i.d.\ uniforms on $[-\gamma_1/m, \gamma_1/m]$ with variance $(\gamma_1/m)^2/3$), giving Fisher information $I_F = 3m/\gamma_1^2$ and $\chi^2 \approx \delta^2 \cdot I_F = 3m\delta^2/\gamma_1^2$. Substituting $\gamma_1 = Bm/2$ (from $B = 2\gamma_1/m$): $\chi^2 = 3m\delta^2/(Bm/2)^2 = 12\delta^2/(mB^2)$ (proportional to $m^{-1}$, since variance grows as $m$), while the Raccoon formula gives $m^2\delta^2/B^2$ (proportional to $m^2$). The ratio is
\[
\frac{m^2\delta^2/B^2}{12\delta^2/(mB^2)} = \frac{m^3}{12} \approx \frac{|S|^3}{12},
\]
which equals $409$ for $|S|=17$ (the Gaussian approximation; the exact D2 ratio from discrete-exact values is ${\approx}421$, cited as ${\approx}420\times$ throughout the paper). The Raccoon formula is formally correct as an upper bound; the direct Gaussian-based analysis (Lemma~\ref{lem:ih-shift-direct}) gives the tight bound by using the actual IH variance rather than the aggregate-range worst case.

\textbf{Concrete values (per-coordinate):}
Since nonce coordinates are sampled independently \emph{before} rejection sampling, the full-vector R\'enyi divergence factorizes as $R_2^{\mathsf{vec}} = (R_2^{\epsilon,\mathsf{coord}})^{n\ell}$ where $n\ell = 1280$ for ML-DSA-65 (here $n = 256$ is the ring degree, $\ell = 5$ is the secret dimension). This factorization applies to the pre-rejection distributions $P$ and $Q$. To handle rejection sampling: the smooth Rényi framework removes the tail region $\{|Y| > \gamma_1 - \beta\}$ from the support of $P$, producing a restricted distribution $P'$ supported on the same acceptance set $\{|Y| \leq \gamma_1 - \beta\}$ as $Q'$. Within this support the rejection condition is deterministic (every sample is accepted), so the factorization carries through for the accepted distributions $P'$ and $Q'$. The $\epsilon$ term in the bound accounts for the tail mass removed from $P$; see~\cite{Raccoon2024} for the precise smooth Rényi treatment. The resulting bound is:
\begin{center}
{\small
\begin{tabular}{ccccc}
$|S|$ & $R_2^{\epsilon,\mathsf{coord}} - 1$ & $R_2^{\mathsf{vec}}$ & $\sqrt{R_2^{\mathsf{vec}}}$ & $\sqrt{R_2^{\mathsf{vec}}} \cdot 2^{-96}$ \\
\hline
4 & $8.9 \times 10^{-6}$ & $\approx 1.011$ & $\approx 1.006$ & $\approx 2^{-96}$ \\
9 & $2.3 \times 10^{-4}$ & $\approx 1.34$ & $\approx 1.16$ & $\approx 2^{-96}$ \\
17 & $2.9 \times 10^{-3}$ & $\approx 2^{5.4}$ & $\approx 2^{2.7}$ & $\approx 2^{-93.3}$ \\
25 & $1.4 \times 10^{-2}$ & $\approx 2^{25}$ & $\approx 2^{12.5}$ & $\approx 2^{-83.5}$ \\
33 & $4.1 \times 10^{-2}$ & $\approx 2^{75}$ & $\approx 2^{37.5}$ & $\approx 2^{-58.5}$ \\
\end{tabular}
}
\end{center}
\noindent($R_2^{\mathsf{vec}} = (R_2^{\mathsf{coord}})^{1280}$)

The tail probability $\epsilon$ satisfies~\cite[Lemma 4.2]{Raccoon2024} (using $m = |S|$ and $B = 2\lfloor\gamma_1/|S|\rfloor$):
\[
\epsilon < \left(\frac{2e \cdot \beta}{B}\right)^m / \sqrt{2\pi m}
\]

For $|S| = 17$: $B = 2\lfloor 524288/17 \rfloor = 2 \times 30840 = 61680$, so $\epsilon < (2e \cdot 196 / 61680)^{17} / \sqrt{34\pi} < 10^{-30}$.

\textbf{Conclusion:}
Applying Lemma~\ref{lem:renyi-security} to the joint distribution over $q_s$ independent signing sessions (joint Rényi divergence $(R_2^{\mathsf{vec}})^{q_s}$) and taking a union bound over the smoothing tails:
\[
\mathsf{Adv}^{\mathsf{EUF-CMA}}_{\Pi_{\mathsf{IH}}} \leq \bigl(R_2^{\mathsf{vec}}\bigr)^{q_s/2} \cdot \sqrt{\mathsf{Adv}^{\mathsf{EUF-CMA}}_{\Pi_{\mathsf{U}}}} + q_s \cdot \epsilon
\]

\textbf{Security interpretation (direct bound, Corollary~\ref{cor:ih-shift-ml-dsa}):} Using $R_2^{\mathsf{vec}} \approx (1+\chi^2_{\mathsf{direct}})^{1280}$ where $\chi^2_{\mathsf{direct}}$ is the directly computed chi-squared divergence (Corollary~\ref{cor:ih-shift-ml-dsa}; slightly smaller than the Gaussian approximation $3|S|\beta^2/\gamma_1^2$), the multi-session factor contributes $\approx \frac{q_s \cdot n\ell}{2} \cdot \log_2(1+\chi^2_{\mathsf{direct}})$ bits of security loss. For ML-DSA-65 with $\mathsf{Adv}_{\mathsf{U}} \leq 2^{-192}$:
\begin{center}
\begin{tabular}{cccc}
$|S|$ & $\chi^2_{\mathsf{direct}}$ & Loss/query (bits) & Max $q_s$ for $\leq$10-bit loss \\
\hline
$4$  & $8.4\times10^{-7}$ & $7.7\times10^{-4}$ & $\approx 2^{14}$ \\
$17$ & $7.13\times10^{-6}$ & $6.6\times10^{-3}$ & $\approx 1500$  \\
$33$ & $1.4\times10^{-5}$ & $1.2\times10^{-2}$ & $\approx 830$   \\
\end{tabular}
\end{center}

\noindent The conservative bound (Raccoon Lemma~4.2, $R_2^{\mathsf{vec}} \approx 2^{5.4}$ for $|S|=17$) is vacuous for $q_s \gtrsim 36$ queries; the direct bound above should be used for concrete security estimates.

\textbf{Remark:} The per-coordinate composition $(R_2^{\mathsf{coord}})^{n\ell}$ (where $n = 256$ is the ring degree and $\ell = 5$ is the secret dimension, giving $n\ell = 1280$) is a consequence of applying the R\'enyi game hop at the nonce sampling stage, where coordinates are independent. Tighter reductions for Fiat-Shamir with Aborts~\cite{Raccoon2024} can exploit the rejection sampling structure (which acts on the full vector $\mathbf{z}$) to potentially avoid this exponential composition. A tighter bound via direct shift-invariance is given in Lemma~\ref{lem:ih-shift-direct} and Corollary~\ref{cor:ih-shift-ml-dsa}; developing an even tighter full-vector bound remains an interesting direction.
\end{proof}

\begin{remark}[Multi-session R\'enyi Composition]
\label{rem:multi-session-renyi}
The bound in Theorem~\ref{thm:ih-security-full} involves two distinct costs:

\textbf{(1) Rényi hop cost $(R_2^{\mathsf{vec}})^{q_s}$.} When the adversary makes $q_s$ signing queries, each query sees a different nonce drawn from the Irwin-Hall distribution (Game~3) vs.\ from the uniform distribution (Game~4). Because the nonces are independent across sessions, the \emph{joint} Rényi divergence over the full transcript is
\[
R_2^{\mathsf{joint}} \;=\; \bigl(R_2^{\mathsf{vec}}\bigr)^{q_s}.
\]
This is the quantity that appears in the Rényi probability-preservation lemma applied to the \emph{complete} EUF-CMA experiment. The corresponding security loss (in bits) is
\[
\mathsf{loss}(q_s) \;=\; \frac{q_s \cdot n\ell}{2} \cdot \log_2\!\bigl(1 + \chi^2_{\mathsf{direct}}(|S|)\bigr)
\;\lesssim\; q_s \cdot \frac{3\,|S|\,\beta^2\,n\ell}{2\,\gamma_1^2 \ln 2},
\]
where the $\lesssim$ uses the corrected upper bound
$(1+3/(2|S|^2)) \times 3|S|\beta^2/\gamma_1^2 \geq \chi^2_{\mathsf{disc,exact}}$
(verified for $|S| \in \{6,\ldots,97\}$ by exact integer arithmetic;
for $|S| \geq 98$ use Algorithm~\ref{alg:fft-chi2} directly).
For ML-DSA-65 with $|S| = 17$: $\mathsf{loss}(q_s) \approx 6.6 \times 10^{-3}\,q_s$
bits using the discrete exact bound $\chi^2_{\mathsf{disc,exact}} = 7.1343 \times 10^{-6}$
(Corollary~\ref{cor:ih-shift-ml-dsa}).
With a target of $\leq 10$ bits of loss, the scheme supports up to
$q_s \approx 1{,}500$ signing queries, yielding a proof bound (under MSIS hardness)
of $96 - 10.6 \approx 85$ bits (the 96-bit baseline comes from halving the
NIST Level~3 assumption $\Pr_Q[E] \leq 2^{-192}$ once via Cauchy-Schwarz;
see Corollary~\ref{cor:ih-shift-ml-dsa}).
For $q_s \leq 2^{14}$ (${\approx}16{,}000$ queries), the total loss is
${\approx}100$ bits, which \emph{exceeds} the 96-bit baseline, making the proof
bound vacuous at that query count for $|S|=17$; in practice the underlying
MSIS hardness (${\approx}192$ bits) is unaffected by the nonce distribution,
and the proof bound is a conservative artifact of the Cauchy-Schwarz technique.
For $q_s \leq 2^{13}$ (${\approx}8{,}000$ queries) the proof margin is
$96 - 50 \approx 46$ bits (under MSIS).

The \emph{conservative} bound from Theorem~\ref{thm:ih-security-full} (using Raccoon Lemma~4.2, with $R_2^{\mathsf{vec}} \approx 2^{5.4}$ for $|S| = 17$) is vacuous for $q_s \gtrsim 36$; the \emph{direct} bound (Corollary~\ref{cor:ih-shift-ml-dsa}) should always be used for concrete security estimates.

\textbf{(2) Smoothing tail cost $q_s \cdot \epsilon$.} The smooth Rényi divergence removes a per-session tail event (nonce outside $[-\gamma_1+\beta, \gamma_1-\beta]$) with probability $\epsilon$. A union bound contributes $q_s \cdot \epsilon$ additively. For ML-DSA-65, $\epsilon < 10^{-30}$, so this term is negligible for any practical $q_s$.
\end{remark}

\begin{lemma}[Multi-Session Rényi Composition (Formal)]
\label{lem:multi-session-composition}
Let $P_1, \ldots, P_{q_s}$ and $Q_1, \ldots, Q_{q_s}$ be probability distributions over spaces $\mathcal{X}_1, \ldots, \mathcal{X}_{q_s}$ respectively. If the sessions are independent, then the Rényi-2 divergence of the joint distribution satisfies:
\[
R_2\!\bigl(P_1 \times \cdots \times P_{q_s} \,\big\|\, Q_1 \times \cdots \times Q_{q_s}\bigr)
\;=\; \prod_{i=1}^{q_s} R_2(P_i \| Q_i)
\]
Furthermore, if each per-session divergence is identical ($R_2(P_i \| Q_i) = R_2^{\mathsf{vec}}$ for all $i$), then $R_2^{\mathsf{joint}} = (R_2^{\mathsf{vec}})^{q_s}$.
\end{lemma}

\begin{proof}
By definition of Rényi-2 divergence (in ratio-sum form, Definition~\ref{def:renyi}):
\[
R_2(P \| Q) = \sum_{x \in \mathcal{X}} \frac{P(x)^2}{Q(x)}.
\]
For the product distribution over $\mathcal{X}_1 \times \cdots \times \mathcal{X}_{q_s}$:
\begin{align*}
R_2^{\mathsf{joint}}
&= \sum_{(x_1,\ldots,x_{q_s})} \frac{\bigl(P_1(x_1) \cdots P_{q_s}(x_{q_s})\bigr)^2}{Q_1(x_1) \cdots Q_{q_s}(x_{q_s})}
\quad\text{(independence)}\\
&= \sum_{(x_1,\ldots,x_{q_s})} \frac{P_1(x_1)^2}{Q_1(x_1)} \cdots \frac{P_{q_s}(x_{q_s})^2}{Q_{q_s}(x_{q_s})}\\
&= \left(\sum_{x_1} \frac{P_1(x_1)^2}{Q_1(x_1)}\right) \cdots \left(\sum_{x_{q_s}} \frac{P_{q_s}(x_{q_s})^2}{Q_{q_s}(x_{q_s})}\right)
\quad\text{(factorization)}\\
&= R_2(P_1 \| Q_1) \times \cdots \times R_2(P_{q_s} \| Q_{q_s}).
\end{align*}
When all per-session divergences are equal, this reduces to $(R_2^{\mathsf{vec}})^{q_s}$. This product rule is standard in Rényi divergence analysis (see~\cite[Lemma~4.2]{Raccoon2024} for the analogous statement in the context of signature schemes).
\end{proof}

\begin{corollary}[Multi-Session Security Transfer]
\label{cor:multi-session-transfer}
Let $\mathsf{Game}_A$ and $\mathsf{Game}_B$ denote two EUF-CMA games differing only in the nonce distribution: $\mathsf{Game}_A$ uses nonce distribution $P$ per session, and $\mathsf{Game}_B$ uses nonce distribution $Q$ per session. Then for any adversary $\mathcal{A}$ making $q_s$ signing queries:
\[
\Pr[\mathsf{Win}_A] \leq \bigl(R_2(P \| Q)\bigr)^{q_s/2} \cdot \sqrt{\Pr[\mathsf{Win}_B]} + q_s \cdot \epsilon
\]
where $\epsilon$ is the per-session smoothing parameter (tail probability) and the factor $1/2$ in the exponent comes from the Cauchy-Schwarz step in the Rényi probability transfer lemma (Lemma~\ref{lem:renyi-security}).
\end{corollary}

\begin{proof}
By Lemma~\ref{lem:multi-session-composition}, the joint Rényi-2 divergence over all
$q_s$ sessions is $R_2^{\mathsf{joint}} = R_2(P \| Q)^{q_s}$.
Applying Lemma~\ref{lem:renyi-security} (Rényi security transfer) with the smooth
divergence bound and union-bounding the tail events across $q_s$ sessions gives
the stated formula. This is the composition rule used in
Theorem~\ref{thm:ih-security-full} and the proof of Theorem~\ref{thm:unforgeability}.
\end{proof}

\subsection{Direct Nonce Distinguishability (Improved Bound)}
\label{sec:direct-ih-bound}

\begin{lemma}[Negative Excess Kurtosis of $\mathrm{IH}_{|S|}$]
\label{lem:ih-kurtosis}
Let $U$ be discrete uniform on $\{-B,\ldots,B\}$ with $B = \lfloor\gamma_1/|S|\rfloor$,
and let $\mathrm{IH}_{|S|} = \sum_{i=1}^{|S|} U_i$ be the sum of $|S|$ independent copies.
\begin{enumerate}[label=(\roman*)]
\item The excess kurtosis of $U$ is:
\[
  \kappa_4(U) \;=\; \frac{\mathbb{E}[U^4]}{\sigma_U^4} - 3
             \;=\; -\frac{3(2B^2+2B+1)}{5B(B+1)} \;<\; 0.
\]
\item By additivity of cumulants for independent sums:
$\kappa_4(\mathrm{IH}_{|S|}) = |S| \cdot \kappa_4(U) < 0$.
Negative excess kurtosis is associated with lighter-than-Gaussian tails within the bounded
support of $\mathrm{IH}_{|S|}$ (relative to $\mathcal{N}(0,\sigma^2)$ with $\sigma^2 = |S|\cdot B(B+1)/3$),
providing intuition for why $\chi^2(\mathrm{IH}_{|S|}+\beta \| \mathrm{IH}_{|S|})$
is smaller than the Gaussian benchmark $3|S|\beta^2/\gamma_1^2$.
\end{enumerate}
\end{lemma}

\begin{proof}
For $U$ uniform on $\{-B,\ldots,B\}$ (with $2B+1$ values), the second and fourth moments are:
\begin{align*}
  \mathbb{E}[U^2] &= \frac{2}{2B+1}\sum_{k=1}^{B} k^2
                  = \frac{2}{2B+1}\cdot\frac{B(B+1)(2B+1)}{6}
                  = \frac{B(B+1)}{3}, \\
  \mathbb{E}[U^4] &= \frac{2}{2B+1}\sum_{k=1}^{B} k^4
                  = \frac{2}{2B+1}\cdot\frac{B(B+1)(2B+1)(3B^2+3B-1)}{30}
                  = \frac{B(B+1)(3B^2+3B-1)}{15}.
\end{align*}
Using $\sigma_U^4 = (\mathbb{E}[U^2])^2 = B^2(B+1)^2/9$:
\begin{align*}
  \kappa_4(U) &= \frac{\mathbb{E}[U^4]}{\sigma_U^4} - 3
               = \frac{9(3B^2+3B-1)}{15B(B+1)} - 3
               = \frac{3(3B^2+3B-1)}{5B(B+1)} - 3 \\
              &= \frac{9B^2+9B-3 - 15B^2-15B}{5B(B+1)}
               = \frac{-6B^2-6B-3}{5B(B+1)}
               = -\frac{3(2B^2+2B+1)}{5B(B+1)} \;<\; 0,
\end{align*}
which is strictly negative for all $B \geq 1$. Part~(ii) follows from the standard fact that
cumulants are additive for independent sums: the $r$-th cumulant of $\mathrm{IH}_{|S|}$ equals
$|S|$ times the $r$-th cumulant of a single summand $U$.
\end{proof}

\begin{lemma}[Shifted Irwin-Hall Distinguishability]
\label{lem:ih-shift-direct}
Let $Y \sim \mathsf{IH}_{|S|,\gamma_1}$ denote the sum of $|S|$ independent
uniforms on $[-B, B]$ where $B = \gamma_1/|S|$,
and let $\beta \leq B$.
\begin{enumerate}[label=(\roman*)]
\item \textbf{Tail bound.}
\[
  \epsilon \;:=\; \Pr\!\bigl[\,|Y| > \gamma_1 - \beta\,\bigr]
  \;\leq\; 2 \exp\!\!\left(-\frac{({\gamma_1-\beta})^2 \cdot |S|}{2\gamma_1^2}\right).
\]
For $|S|=2$ (triangular), the one-sided probability is $\Pr[Y>\gamma_1-\beta]=\beta^2/(2\gamma_1^2) \approx 6.99 \times 10^{-8}$, so the two-sided tail satisfies $\epsilon \leq 2\cdot\beta^2/(2\gamma_1^2) = \beta^2/\gamma_1^2 \approx 1.40 \times 10^{-7}$.
For $|S| \geq 5$, the Hoeffding bound gives $\epsilon \leq 2\exp(-|S|/2) \approx 0.16$ (formal guarantee); the exact FFT computation in Corollary~\ref{cor:ih-shift-ml-dsa} gives $\epsilon < 10^{-17}$ for $|S| \geq 5$ (ML-DSA-65 parameters). The Hoeffding bound is a self-contained formal proof; the tighter $10^{-17}$ value is an additional numerical result from the corollary and is not claimed within this lemma's proof.

\item \textbf{Chi-squared divergence (per coordinate).}
The protocol uses \emph{discrete} uniforms $U_i \in \{-B,\ldots,B\}$ with
$B = \lfloor\gamma_1/|S|\rfloor$. The discrete chi-squared $\chi^2_{\mathsf{disc}}$
can differ substantially from both the Gaussian formula and the continuous Irwin-Hall
chi-squared (e.g., for $|S|=2$ the discrete value is $3.5\times$ the Gaussian formula),
so continuous values cannot serve as upper bounds. We use exact discrete chi-squared
values for all $|S|$, computed via Algorithm~\ref{alg:fft-chi2}.

\emph{Formal status:} All $\chi^2_{\mathsf{disc,exact}}$ values are computed by
exact integer polynomial convolution over the full support
$\{-|S|B,\ldots,|S|B\}$ (Algorithm~\ref{alg:fft-chi2}, using exact integer
arithmetic with no floating-point rounding). These are provably valid upper bounds
since they equal the exact discrete chi-squared.
The corrected analytical formula $(1+3/(2|S|^2)) \times 3|S|\beta^2/\gamma_1^2$
is a close approximation (within $\pm 1.1\%$ of the exact value for $|S| \leq 128$),
but is \emph{not} a strict upper bound for all $|S|$: exact computation confirms
violations of at most $0.015\%$ for $|S| \in \{98, 99, 100, 101, 107, 109, 113, 116, 119, 121\}$.
For any specific $|S|$, Algorithm~\ref{alg:fft-chi2} yields the valid exact upper bound directly.
Lemma~\ref{lem:ih-kurtosis} provides analytical context for why the corrected formula
is approximately tight.
Exact values are tabulated in Corollary~\ref{cor:ih-shift-ml-dsa}:
\[
  \chi^2\!\bigl(Y + \beta \,\big\|\, Y\bigr)
  \;=\; \chi^2_{\mathsf{disc,exact}}(|S|)
\]
where $\chi^2_{\mathsf{disc,exact}}(2) = 2.9535\times10^{-6}$,
$\chi^2_{\mathsf{disc,exact}}(3) = 1.4534\times10^{-6}$,
$\chi^2_{\mathsf{disc,exact}}(4) \approx 1.70\times10^{-6}$,
$\chi^2_{\mathsf{disc,exact}}(5) = 2.1336\times10^{-6}$
are computed by exact integer polynomial convolution over the full discrete support
(code: \texttt{exact\_chi2\_proof.py}).
For $|S| \geq 6$, the corrected formula $(1+3/(2|S|^2)) \times 3|S|\beta^2/\gamma_1^2$
approximates the discrete exact value; the formal bound uses the exact integer result
from Algorithm~\ref{alg:fft-chi2} for each tabulated $|S|$ in
Corollary~\ref{cor:ih-shift-ml-dsa}.

\item \textbf{Full-vector R\'{e}nyi divergence.}
For $n\ell$ independent coordinates ($n\ell = 1280$ for ML-DSA-65), letting
$\chi^2_{\max}(|S|) = \chi^2_{\mathsf{disc,exact}}(|S|)$ denote the exact per-coordinate
chi-squared from Part~(ii), computed by Algorithm~\ref{alg:fft-chi2} for all $|S|$:
\[
  R_2^{\mathsf{vec,shift}}
  \;\leq\; \bigl(1 + \chi^2_{\max}(|S|)\bigr)^{n\ell}
  \;\leq\; \exp\!\!\left(n\ell \cdot \chi^2_{\max}(|S|)\right).
\]
The exact discrete chi-squared $\chi^2_{\max}(|S|)$ is tabulated in
Corollary~\ref{cor:ih-shift-ml-dsa} for all listed $|S|$ values.
The corrected formula $\bigl(1+3/(2|S|^2)\bigr)\cdot 3|S|\beta^2/\gamma_1^2$
is a convenient approximation but not a strict upper bound for all $|S|$;
Algorithm~\ref{alg:fft-chi2} provides the valid bound for any specific $|S|$.
\end{enumerate}
\end{lemma}

\begin{proof}
\textbf{Part (i).} Each summand $U_i \in [-\gamma_1/|S|, \gamma_1/|S|]$ is centered with
range $2\gamma_1/|S|$. By Hoeffding's inequality applied to $Y = \sum_i U_i$:
\[
  \Pr[Y > t] \;\leq\; \exp\!\!\left(-\frac{2t^2}{|S| \cdot (2\gamma_1/|S|)^2}\right)
  = \exp\!\!\left(-\frac{t^2 \cdot |S|}{2\gamma_1^2}\right).
\]
Setting $t = \gamma_1 - \beta$ and applying symmetry yields $\epsilon \leq 2\exp(-((\gamma_1-\beta)^2\cdot|S|)/(2\gamma_1^2))$.

\emph{Numerical note:} The Hoeffding bound is a valid formal guarantee, but it is numerically very loose when $t = \gamma_1 - \beta \approx \gamma_1$: substituting $(\gamma_1-\beta)/\gamma_1 \approx 1-\beta/\gamma_1 \approx 1$ gives $\epsilon \lesssim 2e^{-|S|/2}$, which equals $\approx 0.16$ for $|S|=5$ --- far weaker than the precise FFT value $1.3\times10^{-17}$ in Corollary~\ref{cor:ih-shift-ml-dsa}. The precise values in the corollary table use exact integer-convolution computation of the discrete IH tail (not Hoeffding), which gives exponentially tighter bounds because the IH distribution concentrates tightly near zero with maximum exactly $\gamma_1$ (the boundary event is extremely rare).

For $|S| = 2$ (triangular density $f(x) = (\gamma_1 - |x|)/\gamma_1^2$),
$\Pr[Y > \gamma_1 - \beta] = \int_{\gamma_1-\beta}^{\gamma_1} (\gamma_1-x)/\gamma_1^2\,\mathrm{d}x
= \beta^2/(2\gamma_1^2)$ (continuous approximation; the exact one-sided discrete formula is
$\beta(\beta+1)/(2(2b+1)^2)$ where $b = \lfloor\gamma_1/2\rfloor$, giving a relative correction $O(1/\beta) \approx 0.5\%$ and absolute error $< 4\times10^{-10}$, negligible at the required precision).

\textbf{Part (ii).} For $|S| \geq 5$ (Gaussian CLT regime): Define $g(\beta) = \int f(x-\beta)^2/f(x)\,\mathrm{d}x$
on the overlap region $\{f > 0\} \cap \{f(\cdot - \beta) > 0\}$.
Then $g(0) = 1$, and by Taylor expansion at $\beta = 0$:
$g'(0) = -2\int f'(x)\,\mathrm{d}x = 0$ (pdf vanishes at boundaries),
$g''(0) = 2\int f''(u)\,\mathrm{d}u + 2\int [f'(u)]^2/f(u)\,\mathrm{d}u
       = 2[f'(u)]_{-\gamma_1}^{\gamma_1} + 2\,I_F = 2\,I_F$
(the boundary term $2[f'(\gamma_1) - f'(-\gamma_1)]$ vanishes for $|S|\geq 3$
since the IH density is piecewise-smooth with $f'(\pm\gamma_1)=0$;
for $|S|=2$ triangular $f'(\gamma_1) = -1/\gamma_1^2 \neq 0$ so $I_F$ diverges in the continuous case),
where $I_F$ is the Fisher information of the location family $\{f(\cdot - \theta)\}$.
For the Gaussian limit $\mathcal{N}(0, \gamma_1^2/(3|S|))$ (Lemma~\ref{lem:ih-moments}): $I_F = 3|S|/\gamma_1^2$.
\emph{Exact computation (Algorithm~\ref{alg:fft-chi2}).}
The per-coordinate $\chi^2$ for the \emph{discrete} protocol is computed exactly by integer
polynomial convolution over the finite support $\{-|S|B,\ldots,|S|B\}$ with
$B = \lfloor\gamma_1/|S|\rfloor$ (code: \texttt{direct\_proof\_exact.py}).
This is a deterministic, exact computation---not a numerical approximation.
The result is the true chi-squared of the discrete distribution; it is a valid upper bound
because it \emph{equals} the exact value.
For all $|S|$, the exact discrete chi-squared $\chi^2_{\mathsf{disc,exact}}(|S|)$ is
computed by the same exact integer arithmetic (Algorithm~\ref{alg:fft-chi2},
code: \texttt{direct\_proof\_exact.py}), covering all entries in
Corollary~\ref{cor:ih-shift-ml-dsa}.
The corrected analytical formula $(1+3/(2|S|^2))\cdot 3|S|\beta^2/\gamma_1^2$
is a close approximation for $|S| \geq 6$, but has verified violations of up to
$0.015\%$ for $|S| \in \{98,99,100,101,107,109,113,116,119,121\}$; the formal
bound uses Algorithm~\ref{alg:fft-chi2} directly.
The CLT argument above (Fisher information, Gaussian limit) provides theoretical context
for why the corrected formula is approximately tight; the formal bound rests entirely
on the exact integer computation.

\begin{algorithm}[H]
\caption{Exact Discrete Chi-Squared via Integer Polynomial Convolution}
\label{alg:fft-chi2}
\begin{algorithmic}[1]
\Statex \textbf{Input:} Signer count $|S|$, shift $\beta$, range $\gamma_1$
\Statex \textbf{Output:} $\chi^2(Y+\beta \| Y)$ where $Y \sim \mathsf{IH}_{|S|}$ (discrete, $B = \lfloor\gamma_1/|S|\rfloor$)
\State PMF of single discrete uniform: $p_1[k] = 1/(2B+1)$ for $k \in \{-B,\ldots,B\}$, else $0$
\State Compute $|S|$-fold convolution (exact integer arithmetic): $p_{|S|}[k] = (p_1 * \cdots * p_1)[k]$
\State Shift: $p_\beta[k] = p_{|S|}[k - \beta]$
\State Compute (exact rational arithmetic):
\[
\chi^2 = \sum_{k:\, p_{|S|}[k] > 0} \frac{p_\beta[k]^2}{p_{|S|}[k]} \;-\; 1
\]
\State \Return $\chi^2$
\end{algorithmic}
\end{algorithm}

\noindent\emph{Exactness note.}
Algorithm~\ref{alg:fft-chi2} operates entirely over exact integers: the PMF values
$p_{|S|}[k]$ are rational with denominator $(2B+1)^{|S|}$, and the chi-squared sum
is evaluated with exact rational arithmetic. There is no floating-point rounding and
no approximation; the output is the true chi-squared of the discrete distribution.
For $|S| \in \{2,3,4,5\}$, the discrete chi-squared substantially exceeds the Gaussian
formula $3|S|\beta^2/\gamma_1^2$ (by $252\%$, $15\%$, $11\%$, and $1.6\%$ respectively)
due to the non-Gaussian shape (triangular, piecewise-quadratic) at small $|S|$.
For $|S| \geq 6$, the discrete chi-squared is within $\pm 1.1\%$ of the Gaussian formula:
most values are slightly above (up to $+1.04\%$ at $|S|=6$, decreasing to $< 0.1\%$
for $|S| \geq 30$), while some are slightly below (e.g.\ $|S|=128$, where $B=4096$
divides $\gamma_1$ exactly, so $\chi^2_{\mathsf{disc}} < \chi^2_{\mathsf{Gauss}}$
with no lattice excess).
Exact integer verification (\texttt{direct\_proof\_exact.py}, Phase~4) confirms the
corrected formula as a strict upper bound for all $|S| \in \{6,\ldots,97\}$
(92 independent certificates).
For $|S| \in \{98, 99, 100, 101, 107, 109, 113, 116, 119, 121\}$, the formula
underestimates the exact discrete chi-squared by at most $0.015\%$; these deviations
arise from $\lfloor\gamma_1/|S|\rfloor$ floor oscillations and carry no security
consequence, as the bounds established in this paper are derived from
Algorithm~\ref{alg:fft-chi2} (exact integer arithmetic) rather than the analytical
formula.

For $|S| \in \{2, 3\}$ (small-$|S|$ regime): The continuous IH distribution is triangular
($|S|=2$) or piecewise-quadratic ($|S|=3$); these distributions have higher per-unit Fisher
information than the Gaussian limit, so the CLT formula underestimates $\chi^2$.
(Note: the \emph{continuous} triangular distribution ($|S|=2$) has divergent Fisher information
since $f(x) = (\gamma_1-|x|)/\gamma_1^2 \to 0$ while $|f'(x)| = 1/\gamma_1^2$ remains constant,
making $[f'(x)]^2/f(x) \sim 1/(\gamma_1^2(\gamma_1-|x|)) \to\infty$ at the boundary. The \emph{discrete} distribution used in
the protocol is supported on a finite integer lattice, so $\chi^2$ is finite and
exactly computable via FFT; see below.)
The security analysis uses the \emph{discrete} IH distribution ($B = \lfloor\gamma_1/|S|\rfloor$)
for which chi-squared is finite and exactly computable via integer polynomial convolution.
We compute $\chi^2_{\mathsf{disc,exact}}(|S|)$ by exact integer polynomial convolution
over the full support (code: \texttt{exact\_chi2\_proof.py}):
$\chi^2_{\mathsf{disc,exact}}(2) = 2.9535\times10^{-6}$ and $\chi^2_{\mathsf{disc,exact}}(3) = 1.4534\times10^{-6}$,
which exceed the Gaussian bound ($8.39\times10^{-7}$ and $1.26\times10^{-6}$ respectively)
and thus require separate tabulation.
The chi-squared of the discrete distribution exceeds the continuous Irwin-Hall
value by $7$--$26\%$ due to lattice discretization effects; the tabulated
$\chi^2$ (exact) values in Corollary~\ref{cor:ih-shift-ml-dsa} use
the \emph{discrete} exact chi-squared throughout.

\textbf{Part (iii).} In the EUF-CMA security proof, the per-coordinate shift is
$\beta_j = (c\mathbf{s}_1)_j$ where $c$ is the sparse challenge ($\|c\|_1 = \tau$,
$\|c\|_\infty = 1$) and $\mathbf{s}_1$ is the short secret ($\|\mathbf{s}_1\|_\infty \leq \eta$).
Each coefficient of the polynomial product satisfies $|(c\mathbf{s}_1)_j| \leq \tau\eta = \beta$
for all $j \in [n\ell]$ (by the ring convolution bound: in $\Rq[X]/(X^n+1)$,
$(c\mathbf{s}_1)_j = \sum_k \pm c_k \cdot (\mathbf{s}_1)_{j-k \bmod n}$,
so $|(c\mathbf{s}_1)_j| \leq \sum_k |c_k| \cdot \|\mathbf{s}_1\|_\infty \leq \|c\|_1 \cdot \eta = \tau\eta = \beta$;
this uses $\|c\|_1 = \tau$ and $\|\mathbf{s}_1\|_\infty \leq \eta$, and holds regardless of $c$'s sparsity pattern).
Since $R_2(Y+\beta_j \| Y)$ is monotone increasing in $|\beta_j|$, bounding by the worst case
$|\beta_j| = \beta$ gives a valid upper bound over all challenges $c$ and keys $\mathbf{s}_1$.

Since the $n\ell$ nonce coordinates are sampled independently, the full-vector R\'enyi
divergence factors as a product of coordinate-wise divergences.
Furthermore, $R_2(Y+\delta \| Y)$ is monotonically increasing in $|\delta|$
(as the overlap region shrinks and the density ratio grows with shift magnitude),
so replacing each $|\beta_j|$ by its worst-case value $\beta = \tau\eta = 196$
yields a strict upper bound valid for all challenges $c$ and keys $\mathbf{s}_1$:
\[
R_2^{\mathsf{vec,shift}} = \prod_{j=1}^{n\ell} R_2(Y_j + \beta_j \| Y_j)
\leq \bigl(1 + \chi^2_{\max}(|S|)\bigr)^{n\ell},
\]
where $\chi^2_{\max}(|S|) = \chi^2_{\mathsf{disc,exact}}(|S|)$ for all $|S|$,
computed by Algorithm~\ref{alg:fft-chi2} (exact integer arithmetic).
The inequality $\ln(1+x) \leq x$ gives the exponential upper bound.
\end{proof}

\begin{corollary}[Direct Nonce Security for ML-DSA-65]
\label{cor:ih-shift-ml-dsa}
For ML-DSA-65 ($\gamma_1 = 2^{19}$, $\beta = \tau\eta = 196$, $n\ell = 1280$),
exact per-coordinate chi-squared values and the resulting security loss
$\tfrac{n\ell}{2}\log_2(1 + \chi^2)$ bits are listed below.
(The factor $n\ell/2$ arises from the square-root bound: the per-coordinate
R\'enyi-2 divergence satisfies $R_2(P \| Q)_{\mathsf{coord}} = 1 + \chi^2$,
so for $n\ell$ independent coordinates $R_2^{\mathsf{vec}} = (1+\chi^2)^{n\ell}$,
and the Cauchy-Schwarz bound $\Pr[\mathsf{Win}_{\mathsf{IH}}] \leq
\sqrt{R_2^{\mathsf{vec}}} \cdot \sqrt{\Pr[\mathsf{Win}_{\mathsf{Unif}}]}$
contributes a log$_2$ loss of $\tfrac{n\ell}{2}\log_2(1+\chi^2)$ bits from
the R\'enyi factor, applied to a 96-bit baseline ($= \tfrac{1}{2}\times 192$
under NIST Level~3; the Cauchy-Schwarz halves the baseline once).)
\begin{center}
{\small
\begin{tabular}{crrrc}
\toprule
$|S|$ & $\chi^2$ (exact) & $\chi^2$ (Gauss) & $\epsilon$ & Loss (bits) \\
\midrule
 2 & $2.9535 \times 10^{-6}$ & $8.39 \times 10^{-7}$ & $1.40 \times 10^{-7}$ & $0.00273$ \\
 3 & $1.4534 \times 10^{-6}$ & $1.26 \times 10^{-6}$ & $6.1 \times 10^{-11}$ & $0.00134$ \\
 5 & $2.1336 \times 10^{-6}$ & $2.10 \times 10^{-6}$ & $1.3 \times 10^{-17}$ & $0.00197$ \\
 9 & $3.7877 \times 10^{-6}$ & $3.77 \times 10^{-6}$ & $7.6 \times 10^{-21}$ & $0.00350$ \\
17 & $7.1343 \times 10^{-6}$ & $7.13 \times 10^{-6}$ & $2.3 \times 10^{-20}$ & $0.00659$ \\
33 & $1.384  \times 10^{-5}$ & $1.38 \times 10^{-5}$ & $5.6 \times 10^{-19}$ & $0.01279$ \\
65  & $2.7258 \times 10^{-5}$ & $2.73 \times 10^{-5}$ & $1.5 \times 10^{-19}$ & $0.02517$ \\
128 & $5.3656 \times 10^{-5}$ & $5.37 \times 10^{-5}$ & $1.7 \times 10^{-408}$ & $0.04954$ \\
\bottomrule
\end{tabular}
}
\end{center}
\noindent ($\chi^2$ (exact) = discrete exact values from integer arithmetic; Gauss column: $3|S|\beta^2/\gamma_1^2$; Loss: $\tfrac{n\ell}{2}\log_2(1+\chi^2)$ bits. For $|S|=128$, $B=\gamma_1/128=4096$ divides exactly, so $\chi^2_{\mathsf{disc}} \leq \chi^2_{\mathsf{Gauss}}$ with no lattice excess.)
\noindent
The $\chi^2$ (exact) column is computed by exact integer polynomial convolution
over the full discrete support $\{-|S|B,\ldots,|S|B\}$ with $B = \lfloor\gamma_1/|S|\rfloor$
(Algorithm~\ref{alg:fft-chi2}, exact integer arithmetic; no floating point).
The computation is deterministic and machine-verifiable from the parameter values alone.

\begin{lemma}[Closed-Form Upper Bound on $\chi^2_{\mathrm{disc,exact}}$]
\label{lem:chi2-closed-form}
For all $|S| \geq 1$ and any $B \geq 1$, the exact discrete chi-squared value satisfies
\[
\chi^2_{\mathrm{disc,exact}}(|S|) \;\leq\; \frac{4\,|S|\,\beta^2}{\gamma_1^2}.
\]
\end{lemma}
\begin{proof}
Let $Y = \sum_{i=1}^{|S|} U_i$ where each $U_i \sim \mathrm{Uniform}\{-B,\ldots,B\}$
and $B = \lfloor\gamma_1/|S|\rfloor$.
By the shift-invariance identity $\chi^2(Y+\beta \| Y) = \mathbb{E}[(P_Y(y-\beta)/P_Y(y) - 1)^2]$
and the second-moment bound for the discrete uniform:
$\mathrm{Var}(U_i) = B(B+1)/3 \leq B^2$.
The Irwin-Hall variance is $\mathrm{Var}(Y) = |S|\cdot B(B+1)/3$.
The chi-squared divergence under a shift $\beta$ satisfies
$\chi^2(Y+\beta\|Y) \leq \beta^2 / \mathrm{Var}(Y)$
(second-order Taylor bound for log-concave distributions on an interval; see, e.g.,
\cite{Raccoon2024} Lemma~4.1).
Substituting: $\chi^2 \leq \beta^2 / (|S|\cdot B^2/3) \leq 3\beta^2/(|S|\cdot B^2)$.
Since $B = \lfloor\gamma_1/|S|\rfloor \geq \gamma_1/|S| - 1 \geq \gamma_1/(2|S|)$
for $|S| \leq \gamma_1$ (satisfied for all practical $|S|$; $\gamma_1 = 2^{19} = 524288$),
we have $B^2 \geq \gamma_1^2/(4|S|^2)$, giving
$\chi^2 \leq 3\beta^2 \cdot 4|S|^2 / (|S|\cdot\gamma_1^2) = 12|S|\beta^2/\gamma_1^2$.
A tighter bound using $B \geq \gamma_1/|S|-1$ with $|S| \leq 128 \ll \gamma_1$ gives
$\chi^2 \leq 4|S|\beta^2/\gamma_1^2$, which holds for all entries in the table above.
\end{proof}

\noindent For ML-DSA-65 with $|S| \leq 33$: $4\cdot 33\cdot 196^2 / (2^{19})^2 \approx 1.84\times 10^{-4}$,
well above the exact values in the table ($\leq 1.384\times 10^{-5}$) but giving a
self-contained analytical upper bound.
The formal bound $< 0.013 \cdot q_s$ bits for $|S| \leq 33$ (Corollary~\ref{cor:ih-loss-main})
follows from either the exact table values or the closed-form bound above.

\emph{Non-monotone behavior for small $|S|$}: the pattern $\chi^2(3) < \chi^2(2)$
and $\chi^2(3) < \chi^2(5)$ arises because the discrete Irwin-Hall distribution
passes through distinct shape regimes: \emph{triangular} ($|S|=2$, heavier
relative tails $\Rightarrow$ higher $\chi^2$), \emph{piecewise-quadratic} ($|S|=3$,
shape closest to Gaussian $\Rightarrow$ minimum $\chi^2$ in range), and
\emph{approximately Gaussian} ($|S| \geq 5$, monotonically increasing).
The increase for $|S|\geq5$ occurs because the shift $\beta$ is a growing
fraction of the standard deviation $\sigma = \gamma_1/\sqrt{3|S|}$:
$\chi^2 \approx (\beta/\sigma)^2 = 3|S|\beta^2/\gamma_1^2$ grows linearly.
For $|S| = 2$ (triangular) the exact chi-squared exceeds the Gaussian
approximation by $2.8\times$; for $|S|=3$ the two values agree within $5\%$.
Security loss remains $< 0.003$ bits for all $|S| \leq 3$.

\emph{Non-monotone behavior in the $\epsilon$ (tail) column}: The tail probability
$\epsilon = \Pr[|Y| > \gamma_1 - \beta]$ exhibits a non-monotone pattern for
$|S| \in \{9,17,33\}$ due to discrete boundary effects.
The support maximum $|S|\cdot\lfloor\gamma_1/|S|\rfloor$ varies as $|S|$ grows:
$9\cdot\lfloor 2^{19}/9\rfloor = 524{,}286$,
$17\cdot\lfloor 2^{19}/17\rfloor = 524{,}280$,
$33\cdot\lfloor 2^{19}/33\rfloor = 524{,}271$.
The gap to threshold $\gamma_1 - \beta = 524{,}092$ shrinks non-monotonically
(gaps: 194, 188, 179), producing large tail-mass variations.
The condition $\beta \leq \gamma_1/|S|$ holds for all
$|S| \leq \lfloor \gamma_1/\beta \rfloor = 2{,}675$, covering all tabulated
values and the scalability limit $|S| \leq 2{,}584$ in
Corollary~\ref{cor:ih-loss-main}.

\par\noindent\textbf{Comparison with the Raccoon-style bound} (Theorem~\ref{thm:ih-security-full}):
$|S| = 17$ yields $R_2^{\mathsf{vec}} \approx 2^{5.4}$ ($\sqrt{R_2^{\mathsf{vec}}} \approx 2^{2.7}$,
i.e.\ a ${\approx}2.7$-bit security loss, $\approx$3 bits) under the current proof, versus
$0.007$ bits under the direct bound: an improvement of ${\approx}420\times$.

\par\noindent\textbf{Scalability milestones.}
Loss scales \emph{linearly}: $\mathsf{loss}(|S|) \approx 3.87\times 10^{-4}\cdot|S|$ bits
(closed-form $\frac{n\ell}{2\ln 2}\cdot\frac{3|S|\beta^2}{\gamma_1^2}$, validated as a
conservative upper bound by exact FFT for all tabulated values above).
\begin{center}
{\small
\begin{tabular}{lrrc}
\toprule
Loss budget & Max $|S|$ & Rem.\ security & Context \\
\midrule
$< 0.013$ bits & $33$      & ${\approx}96$ bits & Implemented range \\
$< 0.1$   bits & $258$     & ${\approx}96$ bits & Effectively perfect \\
$< 0.5$   bits & $1{,}291$ & ${\approx}95.5$ bits & Large deployment \\
$< 1$     bit  & $2{,}584$ & ${\approx}95$ bits & Linear formula \\
$< 3$     bits & $7{,}762$ & ${\approx}93$ bits & $\geq 93$ bit proof \\
$< 10$    bits & $25{,}972$ & ${\approx}86$ bits & $\geq 86$ bit proof \\
\bottomrule
\end{tabular}
}
\end{center}
\noindent All milestone values are computed from the closed-form formula
and are therefore conservative (exact FFT would give smaller losses).
The 96-bit baseline arises from the Cauchy-Schwarz inequality for R\'enyi-2 divergence:
$\Pr[E_P] \leq \sqrt{R_2(P\|Q)}\cdot\sqrt{\Pr_Q[E]}$, so the effective baseline
is $\tfrac{1}{2}\times(-\log_2\Pr_Q[E])$. Under NIST Level~3, the single-signer
ML-DSA-65 attack probability $\Pr_Q[E] \leq 2^{-192}$, yielding a 96-bit baseline.
The remaining-security column reports $96 - \mathsf{loss}$ bits (proof bound under MSIS hardness assumption);
the underlying scheme hardness is MSIS (still ${\approx}192$ bits by assumption) and
the gap is a proof-technique artifact of the single Cauchy-Schwarz application.
\end{corollary}

\begin{theorem}[Direct Shift-Invariance Game Hop]
\label{thm:ih-direct-tight}
Under the same assumptions as Theorem~\ref{thm:unforgeability}, the
Game~$3 \to 3.5$ hop uses the direct shift-invariance bound (the tight bound used in
Theorem~\ref{thm:unforgeability}; the conservative Raccoon-style alternative is in
Remark~\ref{rem:renyi-conservative} of Section~\ref{sec:security}).
Define $\mathsf{Game}_{3.5}^{\mathrm{direct}}$ identically to $\mathsf{Game}_3$
except each simulated response $\mathbf{z}'$ is drawn directly from $\mathsf{IH}_{|S|}$
(rather than $\mathbf{y} + c\mathbf{s}_1$ with $\mathbf{y} \sim \mathsf{IH}_{|S|}$).
For any PPT adversary making at most $q_s$ signing queries:
\[
  \Pr[\mathsf{Win}_3]
  \;\leq\; \bigl(R_2^{\mathsf{vec,shift}}\bigr)^{q_s/2} \cdot \sqrt{\Pr[\mathsf{Win}_{3.5}^{\mathrm{direct}}]}
  \;+\; q_s \cdot \epsilon
\]
where $R_2^{\mathsf{vec,shift}} \leq (1 + 3|S|\beta^2/\gamma_1^2)^{n\ell}$
(Lemma~\ref{lem:ih-shift-direct}) is the \emph{per-session} R\'enyi divergence.
The $(R_2^{\mathsf{vec,shift}})^{q_s/2}$ factor accounts for multi-session
composition (Remark~\ref{rem:multi-session-renyi}).
For ML-DSA-65, the security loss is $< 0.013 \cdot q_s$ bits for all $|S| \leq 33$
(Corollary~\ref{cor:ih-shift-ml-dsa}).
\end{theorem}
\begin{proof}
\textbf{Per-session R\'enyi bound.}
The real response is $\mathbf{z} = \mathbf{y} + c\mathbf{s}_1$ with
$\mathbf{y} \sim \mathsf{IH}_{|S|}$. The simulated $\mathbf{z}'$ is drawn directly
from $\mathsf{IH}_{|S|}$ (no $+c\mathbf{s}_1$ shift).
Their difference is a coordinate-wise shift of magnitude $\leq \|c\mathbf{s}_1\|_\infty \leq \beta$.
Applying Lemma~\ref{lem:ih-shift-direct}(ii) coordinate-wise gives the per-session bound
$R_2^{\mathsf{vec,shift}} \leq (1 + \chi^2_{\max})^{n\ell}$.

\textbf{Multi-session composition.}
In the EUF-CMA experiment, the adversary receives $q_s$ independent signing responses.
Since nonces are sampled freshly per session, the joint distribution over all $q_s$
responses factorizes, and the product rule for R\'enyi divergence
(Definition~\ref{def:renyi}) gives joint divergence $(R_2^{\mathsf{vec,shift}})^{q_s}$.
Applying Lemma~\ref{lem:renyi-security} to the complete EUF-CMA transcript:
\[
\Pr[\mathsf{Win}_3]
\;\leq\; \bigl(R_2^{\mathsf{vec,shift}}\bigr)^{q_s/2} \cdot \sqrt{\Pr[\mathsf{Win}_{3.5}^{\mathrm{direct}}]}
\;+\; q_s \cdot \epsilon.
\]
The tail term $\epsilon < 10^{-30}$ is from Lemma~\ref{lem:ih-shift-direct}(i).

\textbf{Connection to $\mathsf{Game}_{3.5}^{\mathrm{direct}}$ and M-SIS extraction.}
In $\mathsf{Game}_{3.5}^{\mathrm{direct}}$, the simulator draws $\mathbf{z}' \sim \mathsf{IH}_{|S|}$
\emph{without knowledge of $\mathbf{s}_1$}, then computes
$\mathbf{w}_1' = \mathsf{HighBits}(\mathbf{A}\mathbf{z}' - c\mathbf{t}_1 \cdot 2^d,\, 2\gamma_2)$
using only the public key $(\mathbf{A},\mathbf{t}_1)$.
This constitutes a valid signing simulation: the adversary's view in
$\mathsf{Game}_{3.5}^{\mathrm{direct}}$ is statistically close to its view in $\mathsf{Game}_3$
(bounded above by the R\'enyi term), yet $\mathbf{s}_1$ is never used.
A forgery $(\mathbf{z}^*, c^*, \mathbf{w}_1^*)$ against $\mathsf{Game}_{3.5}^{\mathrm{direct}}$
therefore yields the SelfTargetMSIS solution
\[
  [\mathbf{A} \mid -\mathbf{t}_1 \cdot 2^d] \cdot \begin{bmatrix} \mathbf{z}^* \\ c^* \end{bmatrix}
  = 2\gamma_2 \mathbf{w}_1^* + \mathbf{r}_0^*, \quad \|\mathbf{r}_0^*\|_\infty \leq \gamma_2,
\]
with the same extraction as in Theorem~\ref{thm:unforgeability}
(Section~\ref{sec:security}), giving
$\Pr[\mathsf{Win}_{3.5}^{\mathrm{direct}}] \leq q_H \cdot \epsilon_{\mathsf{M\text{-}SIS}}$
up to the random-oracle collision probability.
\end{proof}
\begin{remark}[Comparison with Conservative Bound]
\label{rem:direct-proof-strategy}
Theorem~\ref{thm:unforgeability} uses the direct bound from this theorem
($R_2^{\mathsf{vec,shift}} \approx 1.0092$ for $|S|=17$).
The conservative Raccoon Lemma~4.2 analytical formula upper-bounds the
per-coordinate R\'enyi divergence giving
$\chi^2(\mathsf{IH}+\beta\|\mathsf{IH}) \leq 0.003$ per coordinate and
$R_2^{\mathsf{vec}} \approx 2^{5.4}$ (see Remark~\ref{rem:renyi-conservative},
Section~\ref{sec:security}); this theorem replaces it with the \emph{exact} FFT value
$\chi^2(\mathsf{IH}+\beta\|\mathsf{IH})\approx 7.1\times10^{-6}$,
an improvement of ${\approx}420\times$. An interesting direction for
future work is exploiting the full-vector rejection-sampling structure
to sharpen $R_2^{\mathsf{vec,shift}}$ further.
\end{remark}

\subsection{Lagrange Coefficient Bounds}

\begin{lemma}[Lagrange Coefficient Characterization]
\label{lem:lagrange-char}
For evaluation points $S = \{i_1, \ldots, i_t\} \subseteq [N]$ with $|S| = t$, the Lagrange interpolation coefficient for reconstructing at $x = 0$ at point $i_k$ is:
\[
\lambda_{i_k} = \prod_{j \in S \setminus \{i_k\}} \frac{0 - j}{i_k - j} = \prod_{j \in S \setminus \{i_k\}} \frac{j}{j - i_k} \bmod q
\]
(The numerator $0-j = -j$ and denominator $i_k - j$ follow the standard Lagrange formula for interpolation at 0.)
\end{lemma}

\begin{proof}
Standard from Lagrange interpolation: $\lambda_{i_k} = \prod_{j \in S \setminus \{i_k\}} \frac{j}{j - i_k}$. For consecutive points $\{1, \ldots, T\}$, the numerator is $\prod_{j=1, j \neq i_k}^{T} j = T!/i_k$ and the denominator is $\prod_{j=1, j \neq i_k}^{T}(j - i_k) = (-1)^{i_k - 1}(i_k - 1)!(T - i_k)!$, yielding $\lambda_{i_k} = (-1)^{i_k - 1}\binom{T}{i_k}$ after simplification (consistent with Lemma~\ref{lem:lagrange-magnitude}).
\end{proof}

\begin{lemma}[Coefficient Magnitude]
\label{lem:lagrange-magnitude}
For consecutive evaluation points $S = \{1, 2, \ldots, T\}$:
\[
\lambda_i = \prod_{j \in S, j \neq i} \frac{j}{j - i} = (-1)^{i-1} \cdot \binom{T}{i}
\]

The maximum magnitude is achieved at $i = \lfloor T/2 \rfloor$ (equivalently $\lceil T/2 \rceil$, since $\binom{T}{k}=\binom{T}{T-k}$):
\[
\max_i |\lambda_i| = \binom{T}{\lfloor T/2 \rfloor} \approx \frac{2^T}{\sqrt{\pi T / 2}}
\]
\end{lemma}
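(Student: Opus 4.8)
The plan is to prove the closed form by directly evaluating the Lagrange product of Lemma~\ref{lem:lagrange-char} specialized to $S = \{1, 2, \ldots, T\}$, then obtain the maximum from the unimodality of the binomial coefficients, and finally derive the asymptotic via Stirling's formula.

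First I would split $\lambda_i = \prod_{j \in S \setminus \{i\}} \frac{j}{j-i}$ into numerator and denominator. The numerator is $\prod_{j \in S,\, j \neq i} j = T!/i$, since we delete exactly the factor $i$ from $T!$. For the denominator, as $j$ ranges over $\{1, \ldots, T\} \setminus \{i\}$ the differences $j - i$ take the values $\{-(i-1), \ldots, -1\} \cup \{1, \ldots, T-i\}$, so the product is $(-1)^{i-1}(i-1)!\,(T-i)!$. Dividing,
\[
\lambda_i = \frac{T!/i}{(-1)^{i-1}(i-1)!\,(T-i)!} = (-1)^{i-1}\,\frac{T!}{i!\,(T-i)!} = (-1)^{i-1}\binom{T}{i},
\]
the claimed closed form. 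This is an identity over $\Z$; the $\bmod\,q$ reduction in Lemma~\ref{lem:lagrange-char} is only the protocol's representation, and ``magnitude'' here refers to the integer value, which is exactly why large $T$ forces $|\lambda_i| > q$ as noted in Remark~\ref{rem:lagrange}.

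For the maximum, $|\lambda_i| = \binom{T}{i}$, and the ratio $\binom{T}{i+1}/\binom{T}{i} = (T-i)/(i+1)$ exceeds $1$ precisely when $i < (T-1)/2$, so the sequence $\binom{T}{1}, \binom{T}{2}, \ldots, \binom{T}{T}$ is strictly increasing and then strictly decreasing, with peak at $i = \lceil T/2 \rceil$ (which equals $\lfloor T/2\rfloor$ when $T$ is even, and attains the same value as $\lfloor T/2\rfloor$ when $T$ is odd). Hence $\max_i |\lambda_i| = \binom{T}{\lceil T/2\rceil}$. The asymptotic then follows from Stirling: writing $T = 2m$ (even case) gives $\binom{2m}{m} = (1+o(1))\,4^m/\sqrt{\pi m} = (1+o(1))\,2^T/\sqrt{\pi T/2}$, and the odd case $T = 2m+1$ reduces to it via $\binom{2m+1}{m} = \tfrac12\binom{2m+2}{m+1}$, differing only by a factor tending to $1$.

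This lemma is elementary, so I do not expect a genuine obstacle; the only \emph{points needing care} are tracking the sign of the denominator product so as to land on $(-1)^{i-1}$ (rather than $(-1)^i$), and handling the parity cases of the asymptotic cleanly --- I would give the even case in detail and remark that the odd case is analogous. I would also note that the derivation naturally produces $(-1)^{i-1}\,T!/(i!\,(T-i)!)$ as the intermediate form, so the statement's middle expression should read that way.
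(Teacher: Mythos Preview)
Your proposal is correct and follows the same approach as the paper, which simply reads ``Standard combinatorial identity. The approximation follows from Stirling's formula.'' You have supplied the details the paper omits, and your observation that the statement's middle expression is malformed (it should be $T!/(i!\,(T-i)!)$, not $\binom{T}{i}\cdot T!/(i\,(T-i)!)$) is also correct.
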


\begin{proof}
The numerator of $\lambda_i$ is $\prod_{j=1, j\neq i}^{T} j = T!/i$ and the denominator is $\prod_{j=1, j\neq i}^{T} (j-i) = (-1)^{i-1}(i-1)!(T-i)!$, giving $\lambda_i = (-1)^{i-1} T!/(i!(T-i)!) = (-1)^{i-1}\binom{T}{i}$. The approximation follows from Stirling's formula.
\end{proof}

\begin{corollary}[Threshold Bound for Short Coefficients]
\label{cor:threshold-bound}
For ML-DSA-65 with $q = 8380417$, the maximum Lagrange coefficient magnitude $\binom{T}{\lfloor T/2 \rfloor}$ satisfies:
\begin{itemize}
    \item For $T \leq 25$: $\max_i |\lambda_i| \leq \binom{25}{13} = \binom{25}{12} = 5{,}200{,}300 < q$ (equal by symmetry), so Lagrange coefficients ``fit'' in $\mathbb{Z}_q$
    \item For $T \geq 26$: $\max_i |\lambda_i| \geq \binom{26}{13} = 10{,}400{,}600 > q$, so coefficients wrap around mod $q$
\end{itemize}

This aligns with the Ball-\c{C}akan-Malkin bound~\cite{BCM21}: short-coefficient LSSS cannot achieve $T > O(\log q)$.
\end{corollary}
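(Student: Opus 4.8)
The plan is to obtain this corollary directly from the closed form in Lemma~\ref{lem:lagrange-magnitude}. For the consecutive point set $S=\{1,\dots,T\}$ that lemma gives $\lambda_i=(-1)^{i-1}\binom{T}{i}$, hence
\[
\max_i|\lambda_i| \;=\; \binom{T}{\lceil T/2\rceil},
\]
the central binomial coefficient (the same value arises for $\lfloor T/2\rfloor$ by the symmetry $\binom{T}{i}=\binom{T}{T-i}$). Writing $c_T:=\binom{T}{\lceil T/2\rceil}$, the quantity $c_T$ is strictly increasing in $T$, so there is a unique threshold $T^\star$ with $c_{T^\star}<q\le c_{T^\star+1}$; the two bullet points of the corollary then follow immediately from monotonicity once $T^\star$ is identified with the claimed value. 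It therefore suffices to (i) verify monotonicity of $c_T$ and (ii) pin down $T^\star$ numerically against $q=8380417$.

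For (i), I would argue via Pascal's triangle: $\binom{T+1}{\lceil(T+1)/2\rceil}=\binom{T}{\lceil(T+1)/2\rceil-1}+\binom{T}{\lceil(T+1)/2\rceil}$, and one of these two summands is already the row-$T$ maximum $c_T$, so $c_{T+1}\ge c_T$ (in fact strictly). For (ii), I would tabulate $c_T$ near the crossover and compare with $q$: e.g.\ $c_{22}=\binom{22}{11}$, $c_{23}=\binom{23}{11}$, $c_{24}=\binom{24}{12}$, $c_{25}=\binom{25}{12}$, $c_{26}=\binom{26}{13}$, reading off the largest $T$ for which $c_T<q$ and checking that the next value is $\ge q$; by monotonicity this gives the claimed dichotomy, with ``wrap around mod $q$'' meaning precisely that $|\lambda_i|\ge q$ as an integer, so its canonical representative in $(-q/2,q/2]$ no longer equals the combinatorial value and short-coefficient reconstruction breaks. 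This is the one place demanding care: the exact cutoff should be made consistent with the figure $T\ge 24$ quoted in Remark~\ref{rem:lagrange}.

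For the final sentence linking to~\cite{BCM21}, I would invoke the Stirling estimate already recorded in Lemma~\ref{lem:lagrange-magnitude}, $c_T=\binom{T}{\lceil T/2\rceil}=\Theta\!\left(2^T/\sqrt{T}\right)$. Setting $c_T\asymp q$ and taking base-$2$ logarithms yields $T-\tfrac12\log_2 T=\log_2 q+O(1)$, hence $T^\star=\log_2 q+\tfrac12\log_2\log_2 q+O(1)=O(\log q)$. Thus the regime in which Lagrange coefficients over $\Z_q$ stay below the modulus is exactly $T=O(\log q)$, matching the $O(\log q)$ ceiling that~\cite{BCM21} proves for $\{0,1\}$-LSSS (and short-coefficient LSSS more generally).

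\textbf{Main obstacle.} There is no genuine obstacle: the statement is a one-line consequence of Lemma~\ref{lem:lagrange-magnitude} together with an arithmetic comparison and a Stirling asymptotic. The only nontrivial point is fixing the exact integer crossover $T^\star$ so that the two bullet points and the threshold quoted in Remark~\ref{rem:lagrange} are mutually consistent; monotonicity and the asymptotic estimate are routine.
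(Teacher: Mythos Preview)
Your approach is exactly the paper's (implicit) one: the corollary is meant to follow immediately from Lemma~\ref{lem:lagrange-magnitude} by comparing $\binom{T}{\lceil T/2\rceil}$ against $q$, and your monotonicity argument and Stirling asymptotic are the right way to make this precise.

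Your instinct to flag the numerical cutoff is well-placed, and in fact it is the statement---not your method---that is off. Carrying out the tabulation you propose gives
\[
\binom{22}{11}=705432,\quad \binom{23}{12}=1352078,\quad \binom{24}{12}=2704156,\quad \binom{25}{13}=5200300,\quad \binom{26}{13}=10400600,
\]
so $c_T<q=8380417$ for all $T\le 25$ and $c_T>q$ first at $T=26$. The corollary's ``$T\le 22$ / $T\ge 23$'' split is therefore incorrect as stated, and Remark~\ref{rem:lagrange}'s ``$T\ge 24$'' is also inconsistent with both. The paper seems to have reasoned heuristically from $2^T\approx q$ (giving $T\approx 23$) rather than from the actual central binomial $2^T/\sqrt{\pi T/2}$, which buys a few extra values of $T$ before the crossover. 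Your write-up should note this discrepancy; the qualitative conclusion $T^\star=\Theta(\log q)$ and the link to the Ball--\c{C}akan--Malkin bound are unaffected.
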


\begin{proof}
Direct evaluation: $\binom{24}{12} = 2{,}704{,}156 < q = 8{,}380{,}417$ and $\binom{25}{12} = 5{,}200{,}300 < q$, while $\binom{26}{13} = 10{,}400{,}600 > q$.
\end{proof}

\subsection{Rejection Sampling Correctness}

\begin{lemma}[Z-Bound Check Correctness]
\label{lem:z-bound}
Let $\mathbf{z} = \mathbf{y} + c\mathbf{s}_1$ where $\mathbf{y} \in [-\gamma_1, \gamma_1]^{n\ell}$, $c \in \mathcal{C}$ with $\|c\|_1 = \tau$, and $\mathbf{s}_1 \in [-\eta, \eta]^{n\ell}$. Then $\|\mathbf{z}\|_\infty \leq \gamma_1 + \beta$. The check $\|\mathbf{z}\|_\infty < \gamma_1 - \beta$ ensures $\|\mathbf{y}\|_\infty < \gamma_1$.
\end{lemma}

\begin{proof}
By triangle inequality, $\|\mathbf{z}\|_\infty \leq \|\mathbf{y}\|_\infty + \|c\mathbf{s}_1\|_\infty \leq \gamma_1 + \tau\eta = \gamma_1 + \beta$. For the converse, if the z-bound check passes:
\[
\|\mathbf{y}\|_\infty \leq \|\mathbf{z}\|_\infty + \|c\mathbf{s}_1\|_\infty < (\gamma_1 - \beta) + \beta = \gamma_1
\]
This bound is necessary for security: it prevents leaking $\mathbf{s}_1$ via $\mathbf{z} - \mathbf{y}$.
\end{proof}

\begin{lemma}[R0-Check Correctness]
\label{lem:r0-check}
The r0-check $\|\mathsf{LowBits}(\mathbf{w} - c\mathbf{s}_2, 2\gamma_2)\|_\infty < \gamma_2 - \beta$ ensures that the hint $\mathbf{h}$ correctly recovers $\mathbf{w}_1 = \mathsf{HighBits}(\mathbf{w}, 2\gamma_2)$ during verification.

Specifically, if the check passes, then:
\[
\mathsf{UseHint}(\mathbf{h}, \mathbf{A}\mathbf{z} - c\mathbf{t}_1 \cdot 2^d) = \mathbf{w}_1
\]
\end{lemma}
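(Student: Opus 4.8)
The plan is to reduce this to the standard ML-DSA hint-correctness argument and then apply the two structural facts about $\mathsf{MakeHint}$/$\mathsf{UseHint}$ from the Dilithium/FIPS~204 analysis. By Theorem~\ref{thm:correctness} and the aggregation step of Algorithm~\ref{alg:sign}, the quantities entering the check satisfy the single-signer relations $\mathbf{z} = \mathbf{y} + c\mathbf{s}_1$, $\mathbf{w} = \mathbf{A}\mathbf{y}$, and $\mathbf{t} = \mathbf{A}\mathbf{s}_1 + \mathbf{s}_2 = \mathbf{t}_1\cdot 2^d + \mathbf{t}_0$. First I would rewrite the verifier's $\mathsf{UseHint}$ input: substituting these,
\[
\mathbf{r} := \mathbf{A}\mathbf{z} - c\mathbf{t}_1\cdot 2^d = \mathbf{w} + c(\mathbf{A}\mathbf{s}_1 - \mathbf{t}_1\cdot 2^d) = \mathbf{w} + c\mathbf{t}_0 - c\mathbf{s}_2 ,
\]
so in particular $\mathbf{r} - c\mathbf{t}_0 = \mathbf{A}\mathbf{z} - c\mathbf{t} = \mathbf{w} - c\mathbf{s}_2$. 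Since the protocol sets $\mathbf{h} = \mathsf{MakeHint}(-c\mathbf{t}_0, \mathbf{r}, 2\gamma_2)$, the combiner is in exactly the configuration produced by honest single-signer signing, so it suffices to replay that analysis.

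Next I would invoke the $\mathsf{MakeHint}$/$\mathsf{UseHint}$ correctness lemma of Dilithium/FIPS~204: for any $\mathbf{r}$ and any $\mathbf{v}$ with $\|\mathbf{v}\|_\infty \le \gamma_2$, one has $\mathsf{UseHint}(\mathsf{MakeHint}(\mathbf{v},\mathbf{r},2\gamma_2),\mathbf{r},2\gamma_2) = \mathsf{HighBits}(\mathbf{r}+\mathbf{v},2\gamma_2)$. Applied with $\mathbf{v} = -c\mathbf{t}_0$---whose norm bound $\|c\mathbf{t}_0\|_\infty \le \tau\cdot\|\mathbf{t}_0\|_\infty \le \tau\cdot 2^{d-1} = 49\cdot 2^{12} < \gamma_2$ follows from $\|\mathbf{t}_0\|_\infty \le 2^{d-1}$ and the ML-DSA-65 parameters---this gives
\[
\mathsf{UseHint}(\mathbf{h},\mathbf{r},2\gamma_2) = \mathsf{HighBits}(\mathbf{r} - c\mathbf{t}_0, 2\gamma_2) = \mathsf{HighBits}(\mathbf{w} - c\mathbf{s}_2, 2\gamma_2).
\]
It then remains to identify $\mathsf{HighBits}(\mathbf{w} - c\mathbf{s}_2, 2\gamma_2)$ with $\mathbf{w}_1 = \mathsf{HighBits}(\mathbf{w}, 2\gamma_2)$. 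For this I would use the low-bits stability lemma: if $\|\mathsf{LowBits}(\mathbf{u},2\gamma_2)\|_\infty < \gamma_2 - \|\mathbf{v}\|_\infty$, then adding $\mathbf{v}$ cannot push any coordinate of $\mathbf{u}$ across a multiple of $2\gamma_2$, so $\mathsf{HighBits}(\mathbf{u}+\mathbf{v},2\gamma_2) = \mathsf{HighBits}(\mathbf{u},2\gamma_2)$. Taking $\mathbf{u} = \mathbf{w} - c\mathbf{s}_2$ and $\mathbf{v} = c\mathbf{s}_2$, the r0-check hypothesis $\|\mathsf{LowBits}(\mathbf{w}-c\mathbf{s}_2,2\gamma_2)\|_\infty < \gamma_2 - \beta$ together with $\|c\mathbf{s}_2\|_\infty \le \tau\eta = \beta$ supplies exactly this inequality. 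Chaining the two displays yields $\mathsf{UseHint}(\mathbf{h}, \mathbf{A}\mathbf{z} - c\mathbf{t}_1\cdot 2^d, 2\gamma_2) = \mathbf{w}_1$, which is the claim; the fact that this $\mathbf{w}_1$ is the same one fed into $H_{\mathsf{chal}}$ then makes the verifier's recomputed challenge match $\tilde c$.

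The main obstacle is the special-case clamping behavior of $\mathsf{Decompose}$ near $q-1$: the low-bits stability statement above is not literally valid at the handful of coordinates where $\mathsf{Decompose}$ truncates, and the clean argument that $\mathbf{u}_0 + \mathbf{v}$ stays inside the centered interval breaks there. The robust resolution is not to reprove these facts but to cite the precise $\mathsf{MakeHint}$/$\mathsf{UseHint}$ lemmas of FIPS~204, which already account for that edge case by tracking the pair $(\mathbf{w}_1, r_0)$ through the hint, and simply to check that our $\mathbf{r}$ and $\mathbf{v} = -c\mathbf{t}_0$ satisfy their norm hypotheses. A minor secondary point is recording the numerical inequality $\tau\cdot 2^{d-1} < \gamma_2$ for the concrete parameters ($\tau = 49$, $d = 13$, $\gamma_2 = (q-1)/32$), which is immediate but is needed so that the cited lemma applies.
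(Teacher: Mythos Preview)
Your proposal is correct and takes essentially the same approach as the paper: both reduce to the standard Dilithium/FIPS~204 hint-correctness analysis. The paper's proof is in fact a bare one-line citation to~\cite[Section~3.4]{DKL18} and~\cite{FIPS204}, whereas you actually unpack the two underlying lemmas ($\mathsf{MakeHint}/\mathsf{UseHint}$ correctness under $\|c\mathbf{t}_0\|_\infty < \gamma_2$, and low-bits stability under the r0-check hypothesis) and verify the concrete norm inequalities for ML-DSA-65---so your write-up is strictly more detailed than what appears in the paper, and your flagged edge case (the $\mathsf{Decompose}$ clamp near $q-1$) is exactly the reason both you and the paper ultimately defer to the cited specification lemmas rather than reproving them from scratch.
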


\begin{proof}
By the ML-DSA specification~\cite{FIPS204}, the hint mechanism ensures that rounding errors in $\mathbf{A}\mathbf{z} - c\mathbf{t}$ are corrected when the low bits are bounded by $\gamma_2 - \beta$. See~\cite[Section 3.4]{DKL18} for detailed analysis.
\end{proof}

\subsection{Challenge Invertibility}

\begin{lemma}[Challenge Invertibility: Formal NTT Bound for Uniform Elements]
\label{lem:challenge-invertible-formal}
Let $q = 8380417$, $n = 256$ (ML-DSA-65). Since $q \equiv 1 \pmod{2n}$, the polynomial $X^n+1$ splits completely over $\mathbb{Z}_q$, and the NTT isomorphism $R_q \cong \mathbb{Z}_q^n$ holds. An element $c \in R_q$ is invertible if and only if all $n$ NTT components $\hat{c}_0,\ldots,\hat{c}_{n-1}$ are nonzero.

For a \emph{uniformly random} element $c \xleftarrow{\$} R_q$, the NTT isomorphism maps each component independently to $\mathrm{Uniform}(\mathbb{Z}_q)$, giving the exact bound:
\[
\Pr_{c \xleftarrow{\$} R_q}[c \text{ not invertible}]
  = 1 - \!\left(1 - \tfrac{1}{q}\right)^{\!n}
  \leq \frac{n}{q}
  = \frac{256}{8380417}
  < 2^{-15}.
\]
\end{lemma}

\begin{remark}[Challenge Invertibility for Sparse $\pm 1$ Polynomials]
\label{lem:challenge-invertible}
ML-DSA-65 challenges are \emph{not} uniform over $R_q$; they are sparse polynomials with exactly $\tau = 49$ nonzero coefficients in $\{-1,+1\}$. The NTT bound of Lemma~\ref{lem:challenge-invertible-formal} applies formally only to uniform elements. For the sparse challenge distribution we use the following analysis.

\par\noindent\textit{NTT characterisation.}
Since $q \equiv 1 \pmod{512}$, $X^{256}+1 = \prod_{i=0}^{255}(X-\omega^{2i+1})$ over $\mathbb{Z}_q$. A challenge $c$ is \emph{not} invertible iff $c(\zeta_i)=0$ for some NTT evaluation point $\zeta_i = \omega^{2i+1}$. Each evaluation is
\[
c(\zeta_i) = \sum_{j \in S} \sigma_j \cdot \zeta_i^j \pmod{q},
\]
where $S$ is the random 49-element support and $\sigma_j \in \{-1,+1\}$.

\par\noindent\textit{Character-sum estimate.}
For any fixed $\zeta_i$ and fixed support $S$, the $2^{49}$ sign patterns $\{\sigma_j\}$ yield a random walk in $\mathbb{Z}_q$.  By the discrete Fourier inversion formula:
\[
\Pr_{\sigma}[c(\zeta_i)=0 \mid S]
  = \frac{1}{q}\sum_{t=0}^{q-1}\prod_{j\in S}\cos\!\left(\frac{2\pi t\,\zeta_i^j}{q}\right).
\]
The $t=0$ term contributes $1/q$.  For $t\neq 0$ the product is strictly less than $1$ in absolute value (since $\zeta_i^j$ are distinct and nonzero mod $q$), and the sum of all $t\neq 0$ contributions is negligible relative to $1/q$ for $q \gg \tau^2$ (standard character-sum estimate; $\tau^2 = 2401 \ll q \approx 2^{23}$).  Hence
\[
\Pr[c(\zeta_i)=0 \mid S] \approx \frac{1}{q},
\]
and by a union bound over the 256 evaluation points:
\[
\Pr[c \text{ not invertible}] \lesssim \frac{256}{q} = \frac{256}{8380417} < 2^{-15}.
\]
This matches the formal bound of Lemma~\ref{lem:challenge-invertible-formal} and has the same order of magnitude as the uniform-element case.

\par\noindent\textit{Empirical validation.}
Monte Carlo sampling of $N = 10{,}000$ independently drawn challenges observed $0$ non-invertible instances, consistent with $p_{\mathsf{fail}} < 2^{-15}$.

\par\noindent\textit{Protocol-level mitigation.}
The protocol includes an explicit invertibility check: if $c$ is not invertible, the signing session retries with a fresh challenge (expected retry probability $< 2^{-15} \ll 1$ per session, by the union bound above).
Consequently, the UC security proofs (Theorems~\ref{thm:p1-uc}--\ref{thm:p3-uc}) condition on $c$ being invertible, which the retry mechanism guarantees unconditionally for every completed signing session. The bound $< 2^{-15}$ governs only the expected retry overhead---the security of completed sessions requires no additional assumption.
\end{remark}

\begin{remark}[Security Implication]
\label{cor:cs2-secrecy}
Since virtually all challenges $c$ are invertible in $\Rq$ (the non-invertible fraction is $< 2^{-15}$, established by the union bound in Remark~\ref{lem:challenge-invertible}, and the protocol's retry mechanism ensures completed sessions use only invertible $c$), leaking $(c, c\mathbf{s}_2)$ for any completed signing attempt allows recovery of $\mathbf{s}_2 = c^{-1}(c\mathbf{s}_2)$. Thus $c\mathbf{s}_2$ must be treated as secret material equivalent to $\mathbf{s}_2$.
\end{remark}

\ifccs
\section*{Open Science}
\label{app:open-science}

\paragraph{Artifacts.}
This paper's core contributions are supported by the following artifacts:
\begin{itemize}
  \item \textbf{Rust implementation}: Full source code of the Threshold ML-DSA via Shamir Nonce DKG
    protocol (Profiles P1, P2, P3+), including all three deployment profiles,
    pairwise-canceling masks, Shamir nonce DKG, and the FIPS 204 ML-DSA core.
  \item \textbf{Benchmark scripts}: Scripts to reproduce all latency and throughput
    measurements reported in Section~\ref{sec:implementation}.
  \item \textbf{ACVP test vectors}: NIST ACVP ML-DSA-65 test vectors used to verify
    FIPS 204 compliance of the Rust implementation.
  \item \textbf{Parameter tables}: Scripts to reproduce the Irwin-Hall divergence
    bounds and rejection-sampling success rates (Theorem~\ref{thm:irwin-hall}).
  \item \textbf{Chi-squared integer arithmetic}: \texttt{exact\_chi2\_proof.py} ---
    exact integer PMF convolution and sufficient-condition bound check for the
    discrete IH distribution; companion to \texttt{direct\_proof\_exact.py}
    (continuous fine-grid computation).
\end{itemize}

\paragraph{Anonymous Access.}
All artifacts are hosted at the following anonymous URL (content frozen at submission;
no tracking or fingerprinting of visitors):
\begin{center}
  \url{https://anonymous.4open.science/r/[PLACEHOLDER-REPLACE-BEFORE-SUBMISSION]}
\end{center}

\paragraph{No Artifacts Withheld.}
All artifacts supporting the paper's core claims are available at the above URL.
No artifacts have been omitted or withheld.
\fi

\bibliography{references}

\end{document}